\theoremstyle{plain}
\newtheorem{theorem}{Theorem}[section]
\newtheorem{corollary}[theorem]{Corollary}
\newtheorem{lemma}[theorem]{Lemma}
\newtheorem{proposition}[theorem]{Proposition}
\newtheorem{conjecture}[theorem]{Conjecture}
\newtheorem{question}[theorem]{Open question}
\theoremstyle{definition}
\newtheorem{definition}[theorem]{Definition}
\newtheorem{example}[theorem]{Example}
\newtheorem{remark}[theorem]{Remark}
\numberwithin{theorem}{section}
\newcommand{\ZZ}{\mathbb{Z}}
\newcommand{\RR}{\mathbb{R}}
\DeclareMathOperator{\Gr}{\mathrm{Gr}}
\DeclareMathOperator{\DC}{\mathrm{DC}}
\newcommand\Set[2]{\{\,#1\mid #2\,\}}
\newcommand\lrSet[2]{\left\{\, #1 \mid #2 \, \right\}}
\newcommand\binDyn[2]{ \binDynLetter(#1 , #2) }
\newcommand\carDyn[2]{ \carDynLetter(#1 , #2) }
\newcommand\binDynLetter{ \Phi }
\newcommand\carDynLetter{ \Psi }
\newcommand\binConf{ \mathcal{A} }
\newcommand\carConf{ \mathcal{C} }
\newcommand\cursorLetter{ c }
\newcommand\cursorLetterBin{ \cursorLetter }
\newcommand\cursorLetterCar{ \widehat\cursorLetter }
\newcommand{\front}[1]{f(#1)}
\newcommand{\nbParam}{N}
\newcommand{\nextJumpTime}{\tau}
\newcommand{\nextJumpTimeBin}{\nextJumpTime}
\newcommand{\nextJumpTimeCar}{\widehat\nextJumpTime}
\newcommand{\indexOfSpeedCar}{\widehat{i}}
\newcommand\setFixedSpeed[1]{\{z_1 = #1\}}
\title{Wall-crossing phenomenon for the liquid bin model}
\author{Sanjay Ramassamy \and Benjamin Terlat}
\date{\today}
\begin{document}

\maketitle
\begin{abstract}
We introduce the liquid bin model as a continuous-time deterministic dynamics, arising as the hydrodynamic limit of a discrete-time stochastic interacting particle system called the infinite bin model. For the liquid bin model, we prove the existence and uniqueness of a stationary evolution, to which the dynamics converges exponentially fast. The speed of the front of the system is explicitly computed as a continuous piecewise rational function of the parameters of the model, revealing an underlying wall-crossing phenomenon. We show that the regions on which the speed is rational are of non-empty interior and are naturally indexed by Dyck paths. We provide a complete description of the adjacency structure of these regions, which generalizes the Stanley lattice for Dyck paths. Finally we point out an intriguing connection to the topic of extensions of partial cyclic orders to total cyclic orders.
\end{abstract}

\section{Introduction}
\label{sec:intro}

The infinite bin model is an interacting particle system corresponding to a rank-biased discrete-time branching random walk. Introduced in 2003 by Foss and Konstantopoulos \cite{FossKonstantopoulos}, it has since then been the subject of multiple studies, see the recent survey \cite{FKMR}. In this article, we study properties of a continuous-time deterministic dynamics which arises as a certain hydrodynamic limit of the infinite bin model. We call this deterministic dynamics the liquid bin model.

\subsection{The infinite bin model and the liquid bin model}
\label{subsec:IBMLBM}

The state space of the \emph{infinite bin model} consists of infinitely many bins indexed by $\ZZ$, with each of them containing a finite number of particles. We require that there exists a non-empty bin such that all the bins to its right are empty. Such a bin is called the \emph{front} of the system. The infinite bin model is parameterized by a probability measure $\mu$ on $\ZZ_{>0}$ and an initial configuration at time $n=0$. At each time step $n\geq1$, we add a particle in the bin immediately to the right of the bin containing the $\xi_n$-th rightmost particle, where the $(\xi_n)_{n\geq1}$ are i.i.d. distributed like $\mu$. It does not matter how ties are broken within a bin to determine which particle is the $\xi_n$-th rightmost one. 
One may show using sub-additivity that the front moves to the right at a linear speed $v_\mu\in(0,1]$, a constant depending only on $\mu$ and not on the starting configuration \cite{FossKonstantopoulos,MalleinRamassamy1}.

Special cases of interest for $\mu$ include:
\begin{itemize}
\item Geometric distributions: here the infinite bin model can be coupled to Barak-Erd\H{o}s random graphs and $v_\mu$ gives the linear growth rate of the length of the longest path in such graphs \cite{FossKonstantopoulos} ;
\item The uniform distribution on $\llbracket1,k\rrbracket$ for some $k\in \ZZ_{>0}$: in that case the infinite bin model can be coupled with a branching random walk with selection of the rightmost $k$ individuals \cite{AldousPitman,MalleinRamassamy1}.
\end{itemize}

An important question for the infinite bin model is to compute $v_\mu$ for a given $\mu$. In general the answer to this question is complicated. When $\mu$ has a finite support bounded by $K\geq1$, $v_\mu$ is a rational function of the probabilities of the integers in the support of $\mu$. However the formulas quickly become very intricate as $K$ grows, even if we restrict ourselves to measures supported by just two integers. 

We introduce in this article the \emph{liquid bin model} as a continuous-time deterministic dynamics of liquid in bins: liquid gets added according to some rules that arise as the hydrodynamic limits of the rules ``Add a particle to the right of the $k$th particle with probability $\mu(k)$". In this setting, one can explicitly compute the speed of the front and some nice combinatorial structures emerge.

The parameters of the model are:
\begin{itemize}
\item an integer $N\geq1$ corresponding to the number of rules;
\item positive real numbers $a_1<\cdots<a_N$ describing the locations where liquid is added;
\item positive real numbers $p_1,\ldots,p_N$ describing the rates at which liquid is added.
\end{itemize}

The state space of the liquid bin model consists of infinitely many bins indexed by $\mathbb Z$, with each of them containing a finite volume of liquid. We again require the existence of a front bin, that is, a non-empty bin such that all the bins to its right are empty.

For every $i\in\llbracket 1,N\rrbracket$, we place the $i$-th \emph{cursor} in some bin of index $c_i\in\mathbb Z$, in such a way that below it in bin $c_i$ and in all the bins to its right, the total volume of liquid is equal to $a_i$. This property uniquely defines $c_i$. The dynamics consists in adding, for every $i\in\llbracket 1,N\rrbracket$, liquid at a rate $p_i$ in the bin of index $c_i+1$, which is immediately to the right of the bin containing the $i$-th cursor. As liquid gets added, the cursors move down inside each bin. Once a cursor reaches the bottom of a bin, it jumps to the top of the next bin to its right. As a consequence, the locations where liquid gets added evolve with time. See Figure \ref{fig:LMBdyn} for an example.

\begin{figure}
    \centering
    \includegraphics[scale=0.73]{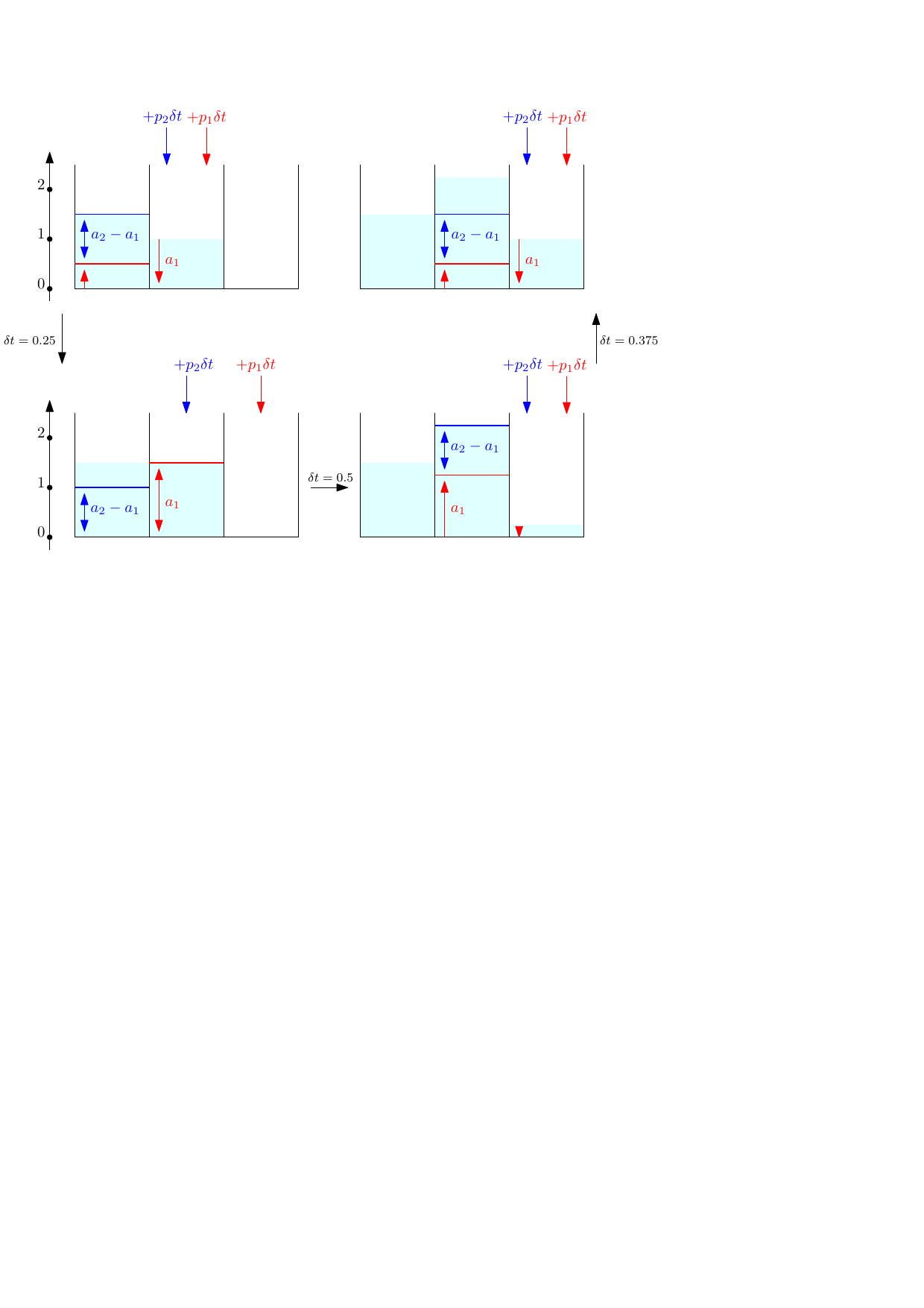}
    \caption{Illustration of the dynamics of the liquid bin model with parameters $N=2$, $a_1 = 1.5$, $p_1 =0.5$, $a_2 = 2.5$, $p_2 =1.5$. Cursor $1$ (respectively $2$) is represented in red (resp. blue): under it and to its right, there is always $a_1$ (resp. $a_2$) quantity of liquid. After a time $0.25$, starting from the configuration in the top-left, cursor $1$ goes from bin $1$ to bin $2$, yielding the configuration in the bottom-left. After an additional time $0.5$, cursor $2$ goes from bin $1$ to bin $2$ to obtain the configuration in the bottom-right. After waiting for an extra time $0.375$, the configuration in the top-right is reached. These configurations pertain to a stationary evolution with period $T=1.125$.}
    \label{fig:LMBdyn}
\end{figure}

The liquid bin model arises as a hydrodynamic limit of the infinite bin model in the following sense. Assume that $p_1 + \dots + p_{\nbParam} = 1$, and for every $s\in\mathbb R_{>0}$ such that $s\geq1/a_1$ define the probability measure $\mu^{(s)} := \sum_{i = 1 }^{\nbParam} p_i \delta_{ \lfloor s \cdot a_i \rfloor}$ on $\ZZ_{>0}$. For every $s\geq1/a_1$, let $X^{(s)} = (X^{(s)}(n))_{n \geq0}$ be the infinite bin model with some initial configuration $X^{(s)}(0)$ and with move distribution $\mu^{(s)}$. It was shown in \cite{Terlat} that, as $s$ goes to infinity, if the rescaled initial configurations $\tfrac{X^{(s)}(0)}{s}$ converge to some configuration of liquid $x(0)$, then the rescaled infinite bin models $\tfrac{X^{(s)}(\lfloor st\rfloor)}{s}$ converge in distribution to the liquid bin model with initial configuration $x(0)$ and parameters $a_1,\ldots,a_N,p_1,\ldots,p_N$. The convergence holds for the sup norm for $t$ pertaining to any compact interval. An important missing property in the study of this hydrodynamic limit is the following.

\begin{conjecture}
As $s$ goes to infinity, the speed of the rescaled infinite bin model $X^{(s)}(\lfloor st\rfloor)$ converges to the speed of the liquid bin model with parameters $a_1,\ldots,a_N,p_1,\ldots,p_N$. 
\end{conjecture}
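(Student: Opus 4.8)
The plan is to establish the two inequalities $\liminf_{s\to\infty}v_{\mu^{(s)}}\ge v_{\mathrm{liq}}$ and $\limsup_{s\to\infty}v_{\mu^{(s)}}\le v_{\mathrm{liq}}$, where $v_{\mu^{(s)}}$ denotes the speed of the infinite bin model with move law $\mu^{(s)}$ and $v_{\mathrm{liq}}$ the front speed of the liquid bin model with parameters $a_1,\dots,a_N,p_1,\dots,p_N$. I shall use the following facts. Writing $F^{(s)}(n)$ for the front of $X^{(s)}$ after $n$ steps, one has $v_{\mu^{(s)}}=\lim_n F^{(s)}(n)/n$ almost surely, and $F^{(s)}$ increases by at most $1$ at each step. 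Both speeds are independent of the starting configuration — for the liquid bin model by uniqueness of the stationary evolution, for the infinite bin model by \cite{FossKonstantopoulos,MalleinRamassamy1} — so I fix a convenient limiting configuration, say $x(0)$ with all the liquid in the front bin and total volume $a_N$, and take $X^{(s)}(0)$ with $X^{(s)}(0)/s\to x(0)$, so that $X^{(s)}(0)$ has $\Theta(s)$ particles. By \cite{Terlat}, $F^{(s)}(\lfloor st\rfloor)/s$ converges in distribution to $f(t)$, the front position of the liquid bin model at time $t$; since this quantity is deterministically bounded by $t$, the convergence holds in expectation too. Finally, the exponential convergence of the liquid bin model to its periodic stationary evolution, proved in this paper, gives $f(t)=v_{\mathrm{liq}}\,t+O(1)$, hence $f(t)/t\to v_{\mathrm{liq}}$.

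For the lower bound, I would use the superadditivity behind the subadditive argument of \cite{FossKonstantopoulos,MalleinRamassamy1}: the dynamics is monotone for the natural order on configurations, and the configuration of $X^{(s)}$ at any time $n$ dominates a single particle located at its front, so $F^{(s)}(n+m)\ge F^{(s)}(n)+Y$ with $Y$ independent of $\xi_1,\dots,\xi_n$ and distributed as the $m$-step front displacement $F^{(s)}_{\mathrm{sing}}(m)$ of the model started from a single particle. Iterating and taking expectations gives $v_{\mu^{(s)}}\ge\mathbb{E}[F^{(s)}_{\mathrm{sing}}(m)]/m$ for every $m$. It remains to compare the two starts: since $X^{(s)}(0)$ has $\Theta(s)$ particles it is dominated by the single-particle configuration evolved for $N_s=\Theta(s)$ steps, whence $\mathbb{E}[F^{(s)}(m)]\le\mathbb{E}[F^{(s)}_{\mathrm{sing}}(m+N_s)]\le\mathbb{E}[F^{(s)}_{\mathrm{sing}}(m)]+N_s$, using once more that the front moves by at most one per step. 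Therefore $v_{\mu^{(s)}}\ge(\mathbb{E}[F^{(s)}(m)]-\Theta(s))/m$; taking $m=\lfloor st\rfloor$ and letting $s\to\infty$ yields $\liminf_s v_{\mu^{(s)}}\ge f(t)/t-O(1/t)$, and letting $t\to\infty$ gives $\liminf_s v_{\mu^{(s)}}\ge v_{\mathrm{liq}}$.

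The upper bound is the real difficulty: it requires interchanging the limits $s\to\infty$ and $n\to\infty$, and superadditivity is useless for it — it only tells us that the average speed up to a fixed rescaled time, $\mathbb{E}[F^{(s)}(\lfloor st\rfloor)]/\lfloor st\rfloor$, is a \emph{lower} bound for $v_{\mu^{(s)}}$, and no workable subadditivity holds near the front, where the particle density of $X^{(s)}$ is $\asymp 1/v_{\mu^{(s)}}>1$, strictly larger than that of any fixed reference configuration. The key missing estimate is the uniform one: for every $\varepsilon>0$ there is $t_\varepsilon$ with $\mathbb{E}[F^{(s)}(\lfloor st\rfloor)]\ge(v_{\mu^{(s)}}-\varepsilon)\lfloor st\rfloor$ for all $t\ge t_\varepsilon$ and all $s$. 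Granting it, combining with $\mathbb{E}[F^{(s)}(\lfloor st\rfloor)]/s\to f(t)$ gives $\limsup_s v_{\mu^{(s)}}\le\varepsilon+f(t)/t$ for $t\ge t_\varepsilon$, and letting $t\to\infty$ then $\varepsilon\to 0$ finishes.

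To prove the uniform estimate I would use the regeneration structure of the infinite bin model from \cite{FossKonstantopoulos}: $F^{(s)}(n)$ splits along i.i.d.\ regeneration blocks with $v_{\mu^{(s)}}$ equal to the mean block displacement over the mean block length, and the elementary renewal theorem then gives $\mathbb{E}[F^{(s)}(n)]\ge v_{\mu^{(s)}}n-C\cdot(\text{mean block length for }\mu^{(s)})$; the estimate follows as soon as this mean block length is $O(s)$ with, say, uniformly controlled variance, since the resulting relative error is then $O(1/t)$ uniformly in $s$. Establishing this last point is, I expect, the crux of the whole proof: one must track how the Foss--Konstantopoulos renovation events behave when the support $\{\lfloor sa_i\rfloor\}$ of $\mu^{(s)}$ escapes to infinity, and in particular rule out that they become exponentially rare — the naive renovation event, which prescribes the $\asymp s$ moves in a window, has probability $e^{-\Theta(s)}$ and must be replaced by a robust one that ``fills'' the $\Theta(s)$ relevant bins near the front. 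An equivalent-looking route is to upgrade the hydrodynamic limit of \cite{Terlat} so that it holds over time windows $[0,T_s]$ with $T_s\to\infty$ slowly, which again demands a quantitative, uniform-in-$s$ mixing estimate for $X^{(s)}$ — together with the already-available exponential mixing of the liquid bin model — and runs into the same obstacle. By contrast, the compact-interval hydrodynamic limit and the exponential convergence of the liquid bin model are soft inputs here.
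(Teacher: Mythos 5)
The statement you are addressing is stated in the paper as an open conjecture; the authors offer no proof, so there is nothing to compare your argument against on their side. Judged on its own terms, your proposal is a plausible attack plan but not a proof, and you say so yourself: the upper bound $\limsup_s v_{\mu^{(s)}}\le v_{\mathrm{liq}}$ hinges on a uniform-in-$s$ estimate of the form $\mathbb{E}[F^{(s)}(\lfloor st\rfloor)]\ge(v_{\mu^{(s)}}-\varepsilon)\lfloor st\rfloor$ for all $t\ge t_\varepsilon$ and all $s$, which you reduce to controlling the Foss--Konstantopoulos regeneration blocks uniformly as the support $\{\lfloor sa_i\rfloor\}$ of $\mu^{(s)}$ escapes to infinity. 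You correctly observe that the naive renovation event has probability $e^{-\Theta(s)}$ and that no substitute is available. That is precisely the missing idea, and without it the interchange of the limits $s\to\infty$ and $n\to\infty$ is not justified; your diagnosis of why superadditivity only helps in one direction is accurate and is presumably why the authors left this as a conjecture.

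Two further points need repair even in the half you consider ``soft.'' First, in the hydrodynamic limit of \cite{Terlat} the bin indices are \emph{not} rescaled in space --- only the particle counts are divided by $s$ --- so the front bin index $F^{(s)}(\lfloor st\rfloor)$ converges to the front of the liquid bin model \emph{without} division by $s$, and the conjecture asserts $s\,v_{\mu^{(s)}}\to v_{\mathrm{liq}}$. Your statement that ``$F^{(s)}(\lfloor st\rfloor)/s$ converges to $f(t)$'' and the bound ``deterministically bounded by $t$'' use the wrong normalization. Second, with the correct normalization your lower-bound error term breaks: comparing $X^{(s)}(0)$ to a single particle evolved $N_s=\Theta(s)$ steps and invoking ``the front moves by at most one per step'' produces a correction of order $s$, which swamps $\mathbb{E}[F^{(s)}(\lfloor st\rfloor)]=O(t)$. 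The fix is that for $\mu^{(s)}$ the front can advance at most once every $\asymp\lfloor sa_1\rfloor$ steps (each advance requires the front bin to refill with $\ge\lfloor sa_1\rfloor$ particles), so the correction is $O(1)$ uniformly in $s$; the same observation also supplies the uniform integrability needed to pass from convergence in distribution to convergence in expectation. With these repairs the lower bound appears sound, but the conjecture as a whole remains open because of the gap in the upper bound.
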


\subsection{Properties of the liquid bin model}
\label{subsec:LBMproperties}

Throughout the article, we will fix $N\in\ZZ_{>0}$ and denote $N$-tuples using the underline notation. For example, the $N$-tuple $(a_1,\ldots,a_N)$ will be denoted by $\underline a$. Using this notation, $(\underline a,\underline p)$ will denote the $2N$-tuple $(a_1,\ldots,a_N,p_1,\ldots,p_N)$. Let $P^{2N}$ denote the set of all the $2N$-tuples $(\underline a,\underline p)$ of positive real numbers such that $a_i<a_{i+1}$ for all $1\leq i \leq N-1$.

From the point of view of the dynamics of the liquid bin model, we care only about the liquid lying below or to the right of the $N$-th cursor. It corresponds to the first $a_N$ units of liquid, counted from right to left, and within a bin from bottom to top. We will thus consider that, for a fixed value of parameters $(\underline a,\underline p)\in P^{2N}$, two configurations are equal if their first $a_N$ units of liquid are in the same position. A configuration $x(0)$ of liquid in bins is called a \emph{stationary configuration} if there exists some time $T>0$ such that, running the liquid bin model starting from $x(0)$, one obtains at time $T$ a configuration $x(T)$ satisfying the following property: the first $a_N$ units of liquid of $x(T)$ are positioned like the first $a_N$ units of liquid of $x(0)$, up to a shift by one bin to the right. See Figure~\ref{fig:LMBdyn} for an example when $N=2$. A \emph{stationary evolution} for the liquid bin model is a map $x:t\in\RR_{\geq0}\mapsto x(t)$ of configurations evolving like the liquid bin model, such that $x(t)$ is a stationary configuration for every $t\in\RR_{\geq0}$. A stationary liquid bin model may be regarded as a traveling wave. The first main result that we will prove in this article is the following.

\begin{theorem}
\label{theorem:mainstatio}
For any $(\underline a,\underline p)\in P^{2N}$, there exists a unique stationary evolution $\tilde{x}_\infty$ for the liquid bin model. Moreover, for any choice of an initial configuration $x(0)$, the liquid bin model $x(t)$ converges exponentially fast in $t$ to the stationary configuration $\tilde{x}_\infty(t)$.
\end{theorem}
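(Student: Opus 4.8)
The plan is to encode the dynamics of the liquid bin model by a finite-dimensional dynamical system and then establish convergence to a unique fixed point. First I would observe that, as remarked before the statement, a configuration is determined (for the purposes of the dynamics) by the positions of its first $a_N$ units of liquid, equivalently by the relative positions of the $N$ cursors together with the amounts of liquid sitting in the finitely many bins they straddle. Concretely, between two consecutive cursor jumps the evolution is affine: each cursor $i$ descends at the constant rate equal to the sum of those $p_j$ whose injection point $c_j+1$ lies weakly to the right of bin $c_i$, so the only nonlinearity comes from the discrete jump events. I would therefore introduce the \emph{gap vector} recording, for each $i$, how much liquid lies above cursor $i$ in its current bin (a number in $(0, \text{(bin content)}]$), renormalize so as to quotient out the overall rightward shift, and check that the resulting state lives in a compact polytope $\Delta$. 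The induced return map (or the time-$t$ flow) on $\Delta$ is then a piecewise-affine, continuous map, and a stationary evolution corresponds exactly to a fixed point of this map modulo shift.

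The second step is existence. Since $\Delta$ is compact and convex and the return map is continuous, Brouwer gives a fixed point; alternatively, and more in the spirit of the exponential-convergence claim, I would exhibit a contraction. The natural metric is the $\ell^1$ (or sup) distance between the first $a_N$ units of liquid of two configurations, which is exactly the ``sup norm'' already used in the hydrodynamic-limit statement. The key monotonicity/contraction lemma I would aim for: if $x(t)$ and $y(t)$ are two liquid bin models, then $\|x(t)-y(t)\|$ is non-increasing, and it strictly decreases by a definite factor over each time interval in which some cursor performs a jump in $x$ but not in $y$ (or vice versa) — because such a jump resets the ambiguous ``above the cursor'' coordinate and thereby collapses part of the discrepancy. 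One then needs that jumps occur with bounded frequency (the front moves at speed at most $\sum p_i$, and a cursor traverses at most $a_N$ units of liquid between jumps, so in any window of length $\asymp a_N/\sum p_i$ every cursor jumps at least once), which upgrades the non-strict contraction into a genuine exponential rate: $\|x(t)-y(t)\| \leq C e^{-\lambda t}\|x(0)-y(0)\|$ for explicit $C,\lambda$ depending only on $(\underline a,\underline p)$.

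Uniqueness and convergence then follow from the same contraction estimate. If there were two stationary evolutions $\tilde x_\infty$ and $\tilde x_\infty'$, applying the estimate along the sequence of times $t = kT$ at which both configurations return to themselves (up to shift by $k$ bins) forces $\|\tilde x_\infty(0)-\tilde x_\infty'(0)\| = 0$; one has to be slightly careful that the two stationary evolutions may a priori have different periods $T, T'$, which I would handle by comparing at times $t\to\infty$ directly via the estimate rather than at return times, or by first arguing the speed (hence the asymptotic period) is common to all evolutions. For the convergence statement, fix a stationary $\tilde x_\infty$ (now known to exist) and an arbitrary $x(0)$; aligning the bin-indexing of $x(t)$ with that of $\tilde x_\infty(t)$ (both fronts advance at the same speed, by subadditivity as for the infinite bin model, or as a consequence of the contraction applied to shifted copies), the estimate gives $\|x(t) - \tilde x_\infty(t)\| \leq C e^{-\lambda t}\|x(0)-\tilde x_\infty(0)\|$, which is exactly the claim.

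The main obstacle I anticipate is the contraction lemma itself: establishing that $\|x(t)-y(t)\|$ is non-increasing requires tracking how a cursor jump in one trajectory but not the other affects the pairwise distance, and the bookkeeping is delicate because a single jump of cursor $i$ changes the descent \emph{rates} of cursors $i'<i$ (the injection point $c_i+1$ crosses them), so the affine pieces of the two trajectories are governed by different matrices after the jump. The clean way through is to work directly with the liquid profile (the non-decreasing ``cumulative volume from the right'' function) rather than with cursor coordinates: in those coordinates the dynamics becomes an order-preserving, $1$-Lipschitz (for the relevant norm) semiflow, monotonicity is transparent, and cursor jumps are automatically encoded. Making that reformulation precise, and extracting the strict-decrease-per-jump quantitative improvement from it, is where the real work lies; everything else is soft (compactness, Banach fixed point) or parallels known arguments for the infinite bin model.
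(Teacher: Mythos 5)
Your overall skeleton is the right one, and your instinct at the end --- to work with the cumulative-volume-from-the-right profile rather than with cursor coordinates --- is exactly the paper's car model ($\Sigma x$ in Section 2), whose order-preservation is the paper's monotonicity lemma. But the central quantitative step, the contraction lemma, is false as you state it, and this is a genuine gap rather than a bookkeeping issue. The dynamics is invariant under time translation, and the theorem's uniqueness and convergence are only claimed \emph{up to a shift in time}. Consequently, if $\tilde x_\infty$ is a stationary evolution and $y(t):=\tilde x_\infty(t+\delta)$ is a phase-shifted copy, then $\|x(t)-y(t)\|$ (in $\ell^1$ on the liquid profile, or in sup norm on the cumulative profile) stays bounded away from $0$ for all $t$; so no estimate of the form $\|x(t)-y(t)\|\leq Ce^{-\lambda t}\|x(0)-y(0)\|$ can hold on configurations themselves. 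Your uniqueness argument would then ``prove'' too much (equality of phases), and your convergence statement is not the one that is true. Even the weaker claim that $\|x(t)-y(t)\|$ is non-increasing is unjustified: the speed of the $k$-th car depends on the position of the $(k-1)$-th, so a discrepancy at index $k-1$ actively opens a discrepancy at index $k$; the semiflow is monotone but not obviously non-expansive pointwise in time, and the ``definite decrease per unmatched jump'' is asserted rather than proved.

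The paper's route shows what is needed to repair this. One quotients out the time shift by re-centering each car's trajectory at its own departure time, $\bar y_k(t):=y_k(t+T_k)$, which turns the evolution into a recursion in the \emph{car index} $k$ (not in $t$): $\bar y_k$ is an explicit function of the passage times $(\bar t_{k-1}(a_j)-\bar t_{k-1}(a_1))_j$ of the previous car. Existence comes from sandwiching between explicit monotone bounding sequences $y_k^{\pm}$ (your Brouwer step is replaced by monotone convergence), and the exponential rate comes from showing that the finite-dimensional map sending $(\bar t_{k-1}(a_i))_i$ to $(\bar t_k(a_i))_i$ contracts differences of increments by the explicit factor $1-q_1/q_N$ --- with a nontrivial extra argument (interpolation along the segment between the two inputs) because the affine piece governing the map changes with $k$, precisely the difficulty you flag in your last paragraph. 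Exponential convergence in $k$ then translates into exponential convergence in $t$ because the front advances linearly. Without the re-centering and the passage to the index recursion, your fixed-point and contraction framework does not close.
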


Here the statements about uniqueness and convergence are up to a shift in time and refer to the distribution of the first $a_N$ units of liquid, as explained above. Theorem~\ref{theorem:mainstatio} will be proved in Section~\ref{sec:stationary}. Thanks to this theorem, the computation of the speed of the front of the liquid bin model for any initial configuration and any choice of parameters $(\underline a,\underline p)\in P^{2N}$ simply boils down to computing this speed for the stationary liquid bin model for these parameters.

We explain below how to explicitly compute this speed. The idea is to partition the parameter space $P^{2N}$ into regions $P_G$ labelled by some directed graphs $G$, and to associate to each such graph $G$ a rational function of the parameters $(\underline a,\underline p)$, that will be the speed of the front when $(\underline a,\underline p)$ lies in the region $P_G$.

Let $E_N$ be the collection of all the pairs of integers $(i,j)$ such that $1\leq i<j\leq N$. We will consider directed graphs with the vertex set $\llbracket 1, N\rrbracket$ and with an edge set that is a subset of $E_N$. We will represent such graphs in the plane by drawing each vertex $1\leq i\leq N$ of the graph at the point of coordinates $(2i-1,1)$ and drawing each edge $(i,j)$ of the graph as the broken line connecting $(2i-1,1)$ to $(2j-1,1)$ via the point $(i+j-1,j-i+1)$. See the left picture of Figure~\ref{fig:DCgraph-Dyckpath} for an example. The set $E_N$ possesses a partial order $\prec_E$, where $(i',j')\prec_E (i,j)$ if and only if $1\leq i\leq i'<j'\leq j$. This partial order corresponds to a nesting relationship for edges of a graph: with the above representation convention, $(i',j')\prec_E (i,j)$ whenever the edge $(i',j')$ is drawn below the edge $(i,j)$. We say that $(i',j')$ is \emph{nested} in $(i,j)$. For example, for the directed graph represented in the left picture of Figure~\ref{fig:DCgraph-Dyckpath}, the edge $(1,2)$ is nested in the edge $(1,3)$.

\begin{figure}
    \centering
    \includegraphics[scale=0.73]{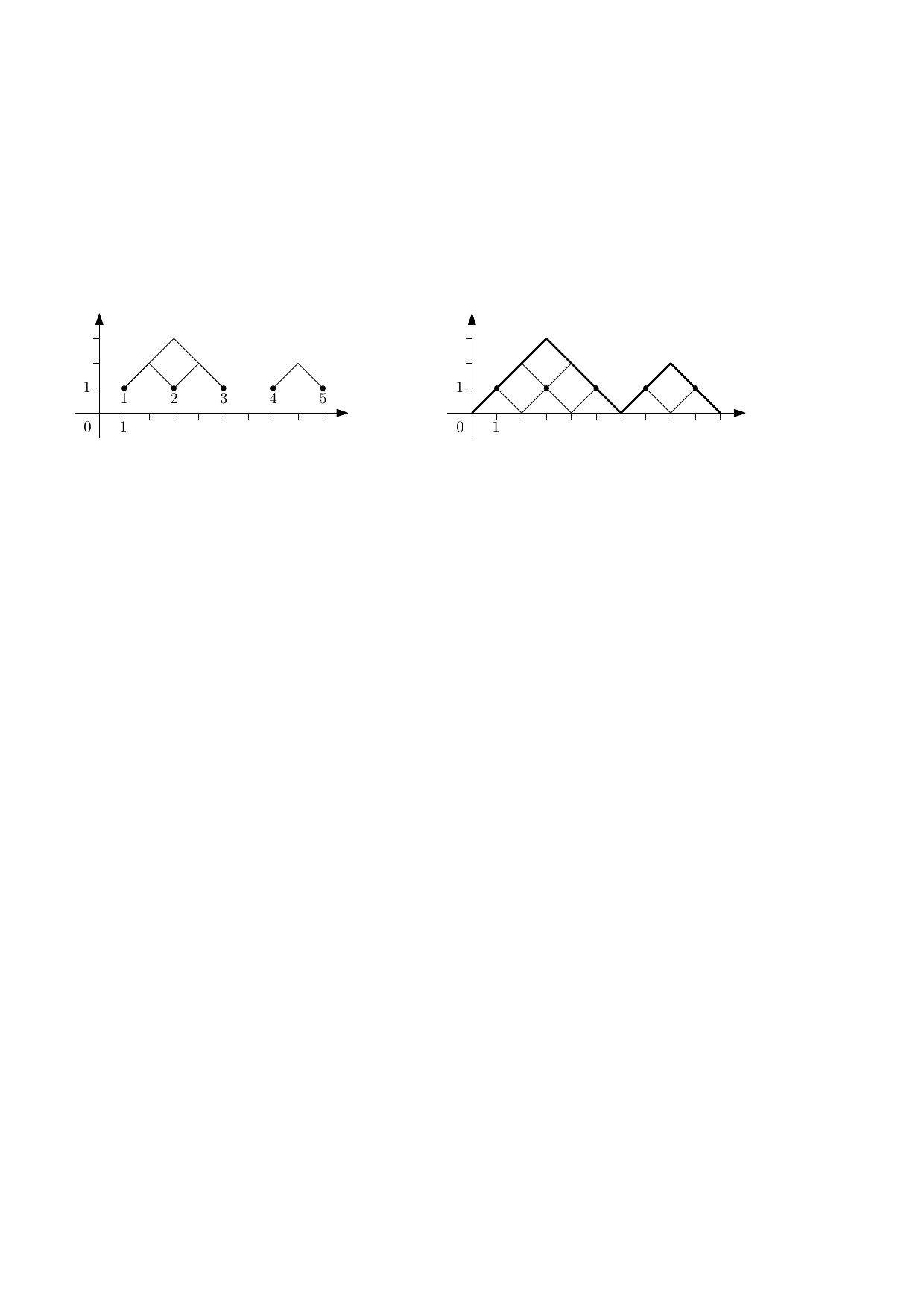}
    \caption{On the left, the downward closed graph on vertex set $\llbracket1,5\rrbracket$ with edges $(1,2)$, $(1,3)$, $(2,3)$ and $(4,5)$. Each edge $(i,j)$ is such that $i<j$ and is directed from $i$ to $j$. We omit the depiction of the direction to avoid overloading the picture. On the right, the same DC graph, where we added an extra broken line for each vertex. The supremum of all the lines present in the picture is indicated in bold. It corresponds to the Dyck path of length $10$ $+++---++--$, where $+$ (resp. $-$) denotes an ``up'' (resp. ``down'') vector.}
    \label{fig:DCgraph-Dyckpath}
\end{figure}

We associate to every choice of parameters $(\underline a,\underline p)\in P^{2N}$ the graph $\Gr(\underline a,\underline p)$ with vertex set $\llbracket 1,N\rrbracket$ and with edge set constructed as follows. For any $(i,j)\in E_N$, the pair $(i,j)$ is an edge of $\Gr(\underline a,\underline p)$ if and only if there exists a time $t\geq0$ at which the $i$-th cursor and the $j$-th cursor are both in the same bin of the stationary configuration $x_\infty(t)$. For the example shown on Figure~\ref{fig:LMBdyn}, the graph $\Gr(\underline a,\underline p)$ has vertex set $\{1,2\}$ and has a single edge $(1,2)$, since there is a time in the stationary evolution when both cursors are in the same bin. For general $(\underline a,\underline p)$, the graph $\Gr(\underline a,\underline p)$ has the following property: for any pair $(i',j')\prec_E (i,j)$ of nested elements of $E_N$, if $(i,j)$ is an edge of $\Gr(\underline a,\underline p)$, then so is $(i',j')$. Indeed, at a time when cursors $i$ and $j$ are in the same bin, then cursors $i'$ and $j'$ will be sandwiched between them, hence will also be in the same bin. In the language of partial orders, it means that the edge set of the graph is downward closed for the partial order $\prec_E$. We call graphs with such a property \emph{downward closed graphs} (or \emph{DC graphs} for short). We denote by $\DC_N$ the set of all DC graphs with vertex set $\llbracket 1,N\rrbracket$. For every $G\in\DC_N$, we denote by $P_G$ the collection of all $(\underline a,\underline p)\in P^{2N}$ such that $\Gr(\underline a,\underline p)=G$.

We may further enrich the broken line representation of a DC graph by replacing each vertex $(2i-1,1)$ for $i\in\llbracket 1,N\rrbracket$ by the broken line connecting $(2i-2,0)$ to $(2i,0)$ via the point $(2i-1,1)$. Then the supremum of all the broken lines corresponding to either edges or vertices defines a \emph{Dyck path} of length $2N$, namely a concatenation of $N$ ``up'' vectors $(1,1)$ and $N$ ``down'' vectors $(1,-1)$ that realizes an excursion above height $0$ from $(0,0)$ to $(2N,0)$. See the right picture of Figure~\ref{fig:DCgraph-Dyckpath} for an example. It is not hard to see that this provides a bijection between $\DC_N$ and the set of Dyck paths of length $2N$, which is known to be enumerated by the Catalan numbers $C_N:=\tfrac{1}{N+1}\binom{2N}{N}$  \cite{Stanley}.

Let us now associate a rational function of the parameters $(\underline a,\underline p)$ to every DC graph $G\in\DC_N$. For every $i\in\llbracket 1,N\rrbracket$, define the auxiliary variables $d_i:=a_i-a_{i-1}$ and $q_i:=p_1+\ldots+p_i$, with the convention that $a_0=0$. For every $G\in\DC_N$, denote by $b_G(i)$ the greatest vertex $j>i$ such that $(i,j)$ is an edge of $G$. If there is no vertex $j>i$ linked to $i$ in $G$, set $b_G(i):=i$. We adopt the convention that $b_G(0)=1$. For every edge $(i,j)$ of $G$, define its weight to be 
\begin{align}\label{eq:weightEdges}
    \gamma^{(G)}_{i,j} := \frac{q_{b_G(i)}-q_{\max(j-1,b_G(i-1))}}{q_{b_G(i-1)}}\geq0 .
\end{align} 
For every $1 \leq i < j \leq \nbParam$, denote by $\mathcal P^{(G)}_{i,j}$ the set of directed paths from $i$ to $j$ in $G$ and define
\begin{equation}
\label{eq:weightPaths}
\Gamma^{(G)}_{i,j} :=  \sum_{\pi\in\mathcal P^{(G)}_{i,j}} \prod_{(i',j') \in \pi }\gamma^{(G)}_{i',j'}, 
\end{equation}
Set also $\Gamma^{(G)}_{i,i} := 1$ for all $i \in \llbracket 1 , \nbParam \rrbracket$. Then we can give the following formula for the speed of the front of the stationary liquid bin model, namely the inverse of the time it takes for the first $a_N$ units of liquid of a stationary configuration to move exactly one bin to the right.

\begin{theorem}
\label{theorem:mainspeed}
Let $G$ be a DC graph and let $(\underline a,\underline p)\in P_G$. The speed of the front of the stationary liquid bin model with parameters $(\underline a,\underline p)$ is equal to
\begin{equation}
\label{eq:mainspeed}
\frac{ 1 + \sum_{j=1}^N \Gamma^{(G)}_{1,j} \frac{  q_{b_G(j)}-q_{b_G(j-1)}}{q_{b_G(j-1)}}}{\sum_{j=1}^N \Gamma^{(G)}_{1,j} \frac{ d_j }{q_{b_G(j-1)}}}.
\end{equation}
\end{theorem}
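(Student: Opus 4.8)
The plan is to reduce the statement, via Theorem~\ref{theorem:mainstatio}, to an explicit computation of the common period $T$ of the stationary evolution $x_\infty$: since $x_\infty$ moves one bin to the right per period, the speed of the front is exactly $1/T$. Write $q_l=p_1+\cdots+p_l$ and $d_i=a_i-a_{i-1}$ as in the statement, and let $V_{\geq b}(t)$ be the total volume of liquid lying in bins of index $\geq b$ at time $t$ in $x_\infty$. The first facts to record are structural consequences of Theorem~\ref{theorem:mainstatio}: over one period each cursor has net displacement $+1$, and cursors move one bin at a time, so each cursor jumps exactly once per period and occupies any given bin for exactly a time $T$; moreover $x_\infty$ is a traveling wave, $V_{\geq b}(t)=V_{\geq b+1}(t+T)$. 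Now fix a reference bin, say bin $0$, let $\tau_i$ be the time at which the $i$-th cursor jumps from bin $-1$ into bin $0$, and normalize time so that $\tau_1=0$. Then $0=\tau_1<\cdots<\tau_N$, the $i$-th cursor occupies bin $0$ exactly on $[\tau_i,\tau_i+T]$, and — this is where the hypothesis $(\underline a,\underline p)\in P_G$ enters — cursors $i<j$ lie in a common bin at some time if and only if $\tau_j-\tau_i<T$; hence $G$ is precisely the graph of this relation and $b_G(i)=\max\{\,j:\tau_j-T<\tau_i\,\}$.

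Next I would make $s\mapsto V_{\geq 0}(s)$ fully explicit. It is continuous, nondecreasing, vanishes for $s\leq\tau_1-T$, and on $[\tau_l-T,\tau_{l+1}-T]$ it has constant slope $q_l$: the rate at which liquid enters bins of index $\geq 0$ at time $s$ is the sum of the $p_m$ over the cursors $m$ that at time $s$ lie in a bin of index $\geq -1$, i.e.\ over the $m$ with $\tau_m-T\leq s$, and the $\tau_m$ are increasing. Moreover $V_{\geq 0}(\tau_i)=a_i$, because while the $i$-th cursor lies in bin $-1$ its height there equals $a_i-V_{\geq 0}$, which reaches $0$ exactly at time $\tau_i$. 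Writing $B_l:=V_{\geq 0}(\tau_l-T)$, the piecewise-linear shape gives $B_1=0$ and $B_{l+1}=B_l+q_l(\tau_{l+1}-\tau_l)$, while evaluating $V_{\geq 0}$ at $\tau_i$ (which lies in the slope-$q_{b_G(i)}$ segment) gives
\begin{equation*}
a_i \;=\; B_{b_G(i)} + q_{b_G(i)}\bigl(T+\tau_i-\tau_{b_G(i)}\bigr),\qquad 1\leq i\leq N.
\end{equation*}
Setting $\Delta_k:=\tau_{k+1}-\tau_k$ (so $\tau_i=\Delta_1+\cdots+\Delta_{i-1}$) and eliminating the $B_l$, this becomes a square linear system in the unknowns $(\Delta_1,\dots,\Delta_{N-1},T)$, which the stationary evolution must satisfy.

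The heart of the proof is to solve this system and recognize \eqref{eq:mainspeed}. Taking pairwise differences of consecutive equations (and keeping the first one), a short case analysis according to whether $(i-1,i)\in G$ shows that, introducing the convention $\Delta_0:=T$ and recalling $b_G(0):=1$, all $N$ relations acquire the uniform shape
\begin{equation*}
q_{b_G(i)}\,\Delta_i \;=\; d_{i+1}-\bigl(q_{b_G(i+1)}-q_{b_G(i)}\bigr)T+\sum_{l=i+1}^{b_G(i+1)-1}\bigl(q_{b_G(i+1)}-q_{\max(l,\,b_G(i))}\bigr)\Delta_l,\qquad 0\leq i\leq N-1.
\end{equation*}
Dividing by $q_{b_G(i)}$ and reindexing with $j:=i+1$, $\delta_j:=\Delta_{j-1}$ (so $\delta_1=T$) and $m:=l+1$, the coefficient of $\delta_m$ is exactly the edge weight $\gamma^{(G)}_{j,m}$ of \eqref{eq:weightEdges}; moreover, by downward closedness of $G$ the indices $m$ that occur are precisely those with $(j,m)$ an edge of $G$, so the system reads
\begin{equation*}
\delta_j \;=\; \frac{d_j-\bigl(q_{b_G(j)}-q_{b_G(j-1)}\bigr)T}{q_{b_G(j-1)}}+\sum_{m:\,(j,m)\in G}\gamma^{(G)}_{j,m}\,\delta_m,\qquad 1\leq j\leq N.
\end{equation*}
Since the edges of $G$ are increasing, this system is triangular and is solved from $j=N$ downward; unrolling it gives $\delta_j=\sum_{m=j}^{N}\Gamma^{(G)}_{j,m}\,\rho_m$, where $\rho_m:=\bigl(d_m-(q_{b_G(m)}-q_{b_G(m-1)})T\bigr)/q_{b_G(m-1)}$ and $\Gamma^{(G)}_{j,m}$ is the path-sum \eqref{eq:weightPaths} (equivalently $\Gamma^{(G)}=(I-\gamma^{(G)})^{-1}$). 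Taking $j=1$ yields $T=\delta_1=\sum_{m=1}^{N}\Gamma^{(G)}_{1,m}\rho_m$, a linear equation in $T$; solving it gives $1/T$ equal to the right-hand side of \eqref{eq:mainspeed}.

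I expect the main obstacle to be the middle computation: carrying out the difference-and-simplify step cleanly and spotting that the conventions $\Delta_0=T$ and $b_G(0)=1$ merge the boundary relation ($i=0$) with the bulk recursion, so that the edge weights $\gamma^{(G)}_{i,j}$ of \eqref{eq:weightEdges} appear verbatim. Some care is also needed for the traveling-wave bookkeeping of the first two paragraphs — in particular the identity $V_{\geq 0}(\tau_i)=a_i$, the fact that the slopes of $V_{\geq 0}$ change exactly at the times $\tau_l-T$, and the passage from ``cursors $i$ and $j$ share a bin at some time'' to ``$\tau_j-\tau_i<T$'' (modulo the non-generic boundary case, which can be pushed to the interior of $P_G$ and recovered by continuity) — but once the recursion for the $\delta_j$ is in hand, the remaining algebra is routine.
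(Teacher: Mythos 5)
Your proposal is correct and follows essentially the same route as the paper: the paper derives the same linear system (its \eqref{eq:systemStatCarBis} and, after differencing, \eqref{eq:systD}) from the stationary trajectory equation, with your $\delta_j$ equal to its variables $z_j=\widetilde t_\infty(a_j)-\widetilde t_\infty(a_{j-1})$ and $T=z_1$, and then performs the identical triangular solve via the weights $\gamma^{(G)}_{i,j}$ and path sums $\Gamma^{(G)}_{i,j}$ (Theorem~\ref{theorem:formulaSpeed}). The only cosmetic difference is that you set up the system by volume bookkeeping directly in the bin model, whereas the paper works in the coupled car model, where your $V_{\geq 0}$ is precisely the position of car $0$ under the bijection $\Sigma$.
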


Since the only paths in $G$ involved in \eqref{eq:mainspeed} are paths starting at vertex $1$, the speed formula in the region $P_G$ only depends on the connected component of vertex $1$ in $G$. Hence the number of possible speed formulas is equal to the number of connected DC graphs with at most $N$ vertices, namely $C_0+C_1+\cdots+C_{N-1}$. Indeed the bijection between DC graphs and Dyck paths maps the connected component of vertex $1$ to the first excursion above $0$ (the portion of the path until the first return to $0$). Removing the initial up step and the final down step from this excursion yields for every $N'\in\llbracket 1,N\rrbracket$ a bijection between connected graphs in $\DC_{N'}$ and Dyck paths of length $2N'-2$.

Theorem~\ref{theorem:mainspeed} will follow from the more general Theorem~\ref{theorem:formulaSpeed}. Even though the speed is only a piecewise rational function of the parameters $(\underline a,\underline p)\in P^{2N}$, we prove in Proposition \ref{proposition:continuityInParameters} that it is a continuous function on $P^{2N}$. Furthermore, we show in Theorem \ref{theorem:nonemptiness} that each region $P_G$ has a non-empty interior. Can we say more about the topology of each $P_G$ ?
\begin{question}
Let $G\in\DC_N$ be a DC graph. Is the region $P_G\subset P^{2N}$ connected ? Is it simply connected ?
\end{question}

We can give a precise combinatorial description of the adjacency structure of the regions $P_G$. We denote the boundary of a region $P_G$ by $\partial P_G$. If $(X,\prec)$ is a poset, recall that a subset $X'\subset X$ is a called an \emph{antichain} if no two distinct elements of $X'$ are comparable for the partial order $\prec$. Recall also that, if $A$ and $A'$ are two sets, $A\Delta A'$ denotes the symmetric difference of these two sets.

\begin{theorem}
\label{theorem:mainadjacency}
Let $G_1$ and $G_2$ be two distinct DC graphs in $\DC_N$. Then $\partial P_{G_1} \cap \partial P_{G_2}$ is non-empty if and only if $E(G_1) \Delta E(G_2)$ is an antichain for the poset $(E_N,\preceq_E)$. In this case, the codimension of $\partial P_{G_1} \cap \partial P_{G_2}$ is $ |E(G_1)\Delta E(G_2)| $.
\end{theorem}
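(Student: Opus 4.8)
The plan is to understand the region $P_G$ as the solution set of a system of strict inequalities in the parameters, so that its closure is obtained by relaxing some of these to equalities, and two closures meet precisely when the corresponding equality-sets are jointly satisfiable. First I would set up, for each $G\in\DC_N$, an explicit description of $P_G$ as a semialgebraic set. The key observation is that whether $(i,j)$ is an edge of $\Gr(\underline a,\underline p)$ is governed by a single scalar quantity: the maximal ``overlap'' between cursor $i$ and cursor $j$ during the stationary evolution. Concretely, I expect that for each pair $(i,j)\in E_N$ there is a function $\delta_{i,j}(\underline a,\underline p)$ — built from the stationary evolution of Theorem~\ref{theorem:mainstatio}, hence continuous in the parameters by Proposition~\ref{proposition:continuityInParameters} — such that $(i,j)\in E(G)$ iff $\delta_{i,j}>0$ and $(i,j)\notin E(G)$ iff $\delta_{i,j}<0$, with the boundary case $\delta_{i,j}=0$ corresponding to cursors $i$ and $j$ touching but not sharing a bin. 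Downward closedness of $\Gr(\underline a,\underline p)$ translates into the monotonicity statement that $(i',j')\preceq_E(i,j)$ implies $\delta_{i',j'}\geq\delta_{i,j}$, so on $P_G$ one has $\delta_{i,j}>0$ for $(i,j)\in E(G)$ and $\delta_{i,j}<0$ for $(i,j)\notin E(G)$, and these are the only constraints.

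Granting this, the argument becomes combinatorial. The closure $\overline{P_G}$ is $\{\delta_{i,j}\geq 0 \text{ for } (i,j)\in E(G),\ \delta_{i,j}\leq 0 \text{ for } (i,j)\notin E(G)\}$ (using continuity of the $\delta$'s plus non-emptiness of the interior, Theorem~\ref{theorem:nonemptiness}, to see that no inequalities are lost in taking the closure). Hence a point lies in $\overline{P_{G_1}}\cap\overline{P_{G_2}}$ iff it satisfies, for every $(i,j)$, the sign constraints imposed by \emph{both} graphs. For a pair $(i,j)\notin E(G_1)\Delta E(G_2)$ the two graphs agree and impose a consistent one-sided constraint; for a pair $(i,j)\in E(G_1)\Delta E(G_2)$ the two graphs impose opposite strict-in-the-interior constraints, which are jointly satisfiable only by forcing $\delta_{i,j}=0$. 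So the intersection is cut out by the closed conditions for the agreeing pairs together with the equations $\delta_{i,j}=0$ for $(i,j)\in E(G_1)\Delta E(G_2)$. It remains to show this set is non-empty \emph{exactly when} $E(G_1)\Delta E(G_2)$ is an antichain, and that when non-empty it has codimension $|E(G_1)\Delta E(G_2)|$. Non-emptiness forces the antichain condition: if $(i',j')\prec_E(i,j)$ are both in the symmetric difference, then on one side, say $G_1$, both are edges while on the other side $G_2$ they are both non-edges (downward closure forbids the mixed cases), and then any point in the intersection has $\delta_{i',j'}=\delta_{i,j}=0$; but I would argue that $\delta_{i',j'}=0$ together with $(i,j)$ being a would-be edge of $G_1$ (hence $\delta_{i,j}\geq 0$) is incompatible with being in the \emph{closure} of $P_{G_2}$ in a way that also keeps $\delta_{i',j'}$ from being forced positive — more carefully, I would show the stronger nesting inequality $\delta_{i',j'}\geq\delta_{i,j}+\text{(strictly positive gap)}$ holds on the relevant region, so the two cannot simultaneously vanish. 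Conversely, when $E(G_1)\Delta E(G_2)$ is an antichain, I would exhibit a point in the intersection by a perturbation argument: start from an interior point of $P_{G_1}$ and deform the parameters to drive each $\delta_{i,j}$ with $(i,j)\in E(G_1)\Delta E(G_2)$ down to $0$ while keeping the other $\delta$'s at their signs; the antichain condition guarantees these deformations do not conflict with downward closedness. The codimension statement then follows from checking that the $|E(G_1)\Delta E(G_2)|$ functions $\delta_{i,j}$ have linearly independent differentials at such a point, for which I would use that each $\delta_{i,j}$ can be varied independently by adjusting, say, the increments $d_j$ or the rates $p_j$ near that index.

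The main obstacle I anticipate is establishing the ``overlap functions'' $\delta_{i,j}$ with the precise sign dictionary and, especially, the strict nesting inequality $\delta_{i',j'}>\delta_{i,j}$ when $(i',j')\prec_E(i,j)$ and the edge $(i,j)$ is ``active'' — this strictness is exactly what makes the antichain condition necessary rather than merely sufficient, and it requires a genuine analysis of the stationary evolution from Section~\ref{sec:stationary} rather than soft topology. A secondary difficulty is the transversality/codimension computation: one must verify that at a boundary point where several $\delta_{i,j}$ vanish, the corresponding constraints are independent and locally behave like a coordinate subspace, which I would handle by writing $\delta_{i,j}$ in terms of the explicit piecewise-rational data underlying Theorem~\ref{theorem:mainspeed} (the quantities $q_{b_G(i)}$, $d_j$, and the period $T$ of the stationary evolution) and computing the relevant Jacobian on each region $P_G$.
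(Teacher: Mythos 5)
Your outline follows the paper's strategy: your ``overlap function'' $\delta_{i,j}$ is exactly $z_1(\underline a,\underline p)-\bigl(z_{i+1}(\underline a,\underline p)+\cdots+z_j(\underline a,\underline p)\bigr)$, and your necessity argument (two nested pairs in $E(G_1)\Delta E(G_2)$ would force $\delta_{i',j'}=\delta_{i,j}=0$, contradicting the strict positivity of each $z_l$) is precisely the paper's Lemma~\ref{lemma:step1}. However, there are two genuine gaps. First, your sign dictionary cannot be symmetric: since the regions $P_G$ partition $P^{2N}$, the case $\delta_{i,j}=0$ must be assigned to one side (the paper puts it with the non-edges, cf.\ \eqref{eq:edgeGrapz}), and consequently $\overline{P_G}=\{\delta_{i,j}\geq 0 \text{ on } E(G),\ \delta_{i,j}\leq 0 \text{ off } E(G)\}$ is \emph{not} a soft consequence of continuity plus non-emptiness of the interior. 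The nontrivial inclusion is that every point satisfying the weak inequalities is a limit of points of $P_G$; this is Proposition~\ref{proposition:interiorAndClosure}, and it already requires the perturbation machinery you defer to the end (the renormalized quantities $\tilde Z^{(G)}_{i,j}$, which are positive linear combinations of $d_{i+1},\ldots,d_N$ only, together with the injectivity of $(i,j)\mapsto i$ on the relevant set of pairs, which makes the system of equalities triangular in the $d_i$).

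Second, and more seriously, the sufficiency direction --- exhibiting a point of $\partial P_{G_1}\cap\partial P_{G_2}$ when $E(G_1)\Delta E(G_2)$ is an antichain --- is stated as ``a perturbation argument: \ldots deform the parameters to drive each $\delta_{i,j}$ down to $0$ while keeping the other $\delta$'s at their signs.'' That sentence is the theorem, not a proof of it: one must explain why such a deformation exists and why the constraints do not interfere. The paper's Lemma~\ref{lemma:step2} does this by induction on $N$: it extends a solution for the truncated graphs on $\llbracket 1,N-1\rrbracket$ by letting $p_N\to 0$ (Lemma~\ref{lemma:continuityInParameterPN}), chooses $a_N$ via the car-position quantities $\chi_i$ and Lemma~\ref{lem:Grcarinterpretation} through a three-case analysis, and then solves a triangular linear system in $\underline d$ whose triangularity again rests on the fact that distinct pairs of an antichain have distinct left endpoints. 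Your codimension count correctly anticipates that the $|E(G_1)\Delta E(G_2)|$ equations are independent and can be resolved using the $d_j$'s, but without identifying the triangular structure (and the normalization $z_1=1$ that makes it work) the independence claim is unsupported. These are the two places where the real work of Section~\ref{sec:adjacency} lies, and the proposal leaves both open.
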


\begin{figure}
    \centering
    \includegraphics[scale=0.8]{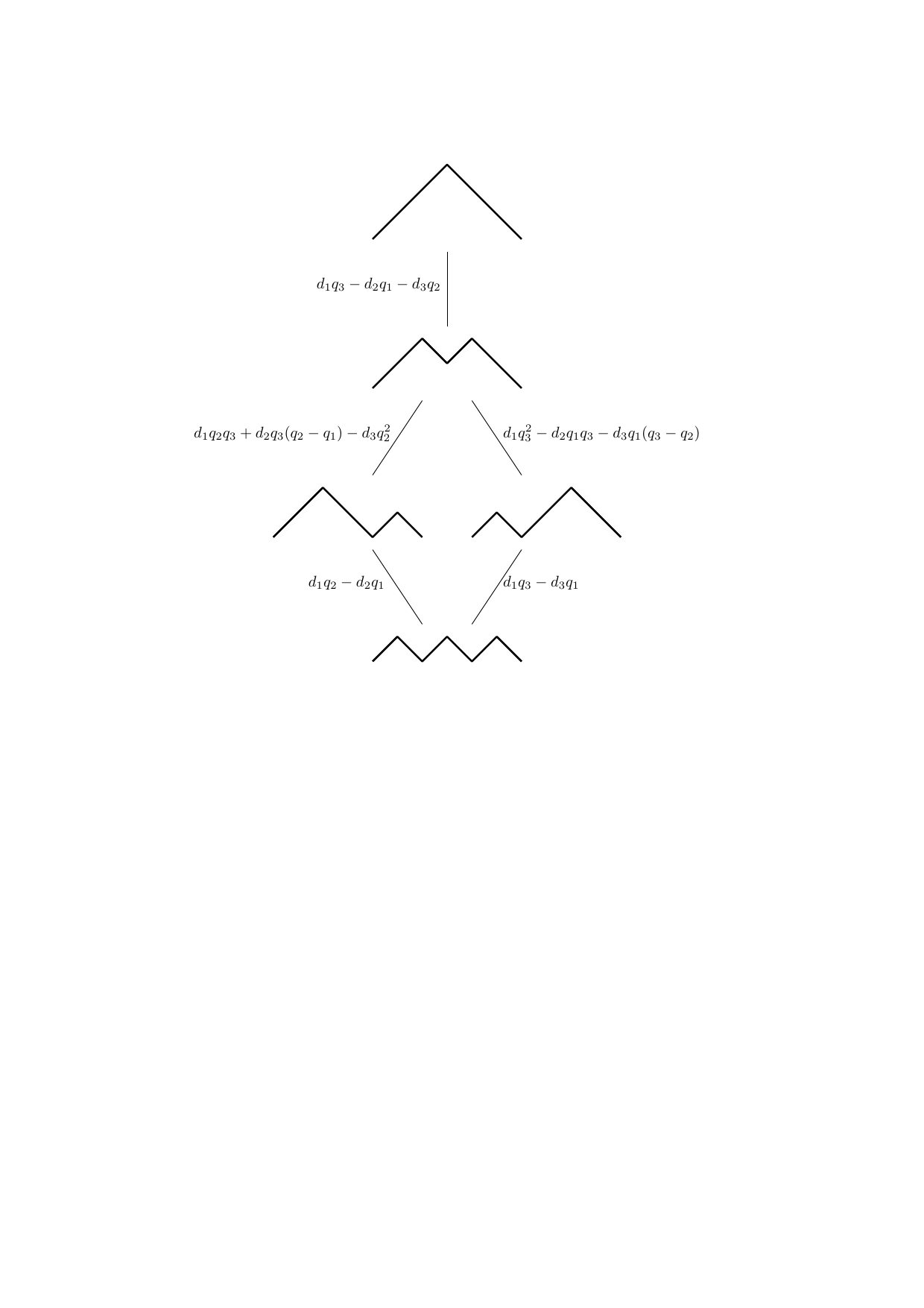}
    \caption{The Hasse diagram of the Stanley lattice for Dyck paths of length $6$. Each Dyck path stands at a node of the diagram. One Dyck path $P_1$ covers another Dyck path $P_2$ whenever $P_1$ lies above $P_2$ in the diagram and they are connected by an edge. Next to each edge is given a polynomial in the variables $(\underline d,\underline q)$. This polynomial vanishes on the boundary between the regions labeled by the two DC graphs that are in bijection with the two Dyck paths at each end of the edge. This polynomial is positive (resp. negative) on the region labeled by the graph corresponding to the Dyck path which is on the top (resp. bottom) end of the edge.}
    \label{fig:Hasse3}
\end{figure}

The notion of dimension that we use here is the Lebesgue covering dimension, also known as the topological dimension. The sets for which we compute the dimension are subsets of $P^{2N}$ and they possess the induced topology of $P^{2N}$.

We will reformulate Theorem~\ref{theorem:mainadjacency} in Proposition~\ref{prop:equivalentcharac} then prove it as Theorem~\ref{theorem:boundaryAnalysis}. The set $\DC_N$ possesses a natural partial order $\prec_G$, whereby $G_1\prec_G  G_2$ if the edge sets $E(G_1)$ and $E(G_2)$ of the DC graphs $G_1$ and $G_2$ satisfy $E(G_1)\subset E(G_2)$. The bijection with Dyck paths transports this partial order to the well-known Stanley lattice for Dyck paths, corresponding to the property that one Dyck path lies below another one. The name ``Stanley lattice'' comes from \cite{BernardiBonichon}, see also \cite{Woodcock}. See Figure~\ref{fig:Hasse3} for a picture of the Stanley lattice when $N=3$. Recall that an element $x$ is said to \emph{cover} an element $y$ in some poset $(X,\leq)$ if $y<x$ and if there exists no element $z\in X$ such that $y<z<x$. By Theorem~\ref{theorem:mainadjacency}, two graphs $G_1$ and $G_2$ have a common boundary of codimension $1$ if and only if one covers the other in the Stanley lattice. The adjacency structure of the regions $P_G$ thus generalizes the Stanley lattice for Dyck paths. Moreover, we provide in Proposition~\ref{proposition:caracPG} the explicit inequalities cutting out each region. See Figure~\ref{fig:Hasse3} for the formulas when $N=3$.

The fact that the speed of the front of the liquid bin model is a piecewise rational function on the parameter space is called a \emph{wall-crossing phenomenon}. The space is divided into regions by hypersurfaces called ``walls" and the speed formulas are different inside different regions. In the present case, each region is indexed by a Dyck path and each wall corresponds to a covering relation in the Stanley lattice. Similar wall-crossing phenomena where the regions are labeled by combinatorial objects have appeared in other fields. In enumerative geometry, double Hurwitz numbers were shown to be piecewise polynomial functions of the parts of the partitions which index them \cite{GJV,SSV,CJM10,CJM11,Johnson}. In mathematical physics, correlation functions for the quantum symmetric simple exclusion process were shown to be piecewise polynomial, with regions of polynomiality indexed by cyclic permutations \cite{BernardJin,Biane}.
 We point out that a phenomenon with a similar flavor appears in \cite{HMSS}, where the authors count equivalence classes of periodic stationary traveling wave solutions to the lattice Nagumo equation and label such equivalence classes by combinatorial objects that are words.
 
 In the case of the liquid bin model, another way to interpret this wall-crossing phenomenon is to see it as a phase transition for an out-of-equilibrium system. Let us illustrate this with the case of $N=2$.
 
 \begin{example}
 \label{ex:N=2}
 Let $N=2$. Up to rescaling space and time, one may assume that $a_2=1$ and that $p_1+p_2=1$. Then there are two free parameters left, $a_1$ and $p_1$, both in $(0,1)$. Applying Theorem \ref{theorem:formulaSpeed} and Proposition~\ref{proposition:caracPG}, we obtain the following explicit speed computation. If $p_1<\tfrac{a_1}{1-a_1}$, then the speed of the liquid bin model is given by $\frac{1}{1-p_1(1-a_1)}$, otherwise it is given by $\tfrac{p_1}{a_1}$. In particular, when $a_1\geq\tfrac{1}{2}$, the speed is given by $\frac{1}{1-p_1(1-a_1)}$ for every value of $p_1\in(0,1)$. However, when $a_1<\tfrac{1}{2}$, the speed is a continuous piecewise rational function of $p_1\in(0,1)$, with a point of non-differentiability at the critical point $p_1^c=\tfrac{a_1}{1-a_1}\in(0,1)$, yielding a phase transition for the system.
 \end{example}

The results for the liquid bin model are proved by mapping it to a more tractable model of cars driving one behind the other on a semi-infinite road. See Section \ref{sec:LBMcar} for the definition of the car model and its coupling to the liquid bin model.

\subsection{Relation to circular extensions}

Another remarkable feature of the liquid bin model is its connection to the growing field of enumerating extensions of partial cyclic orders to total cyclic orders. A \emph{cyclic order} on a set $X$ is a subset $Z$ of triples of distinct elements of $X$ satisfying the following three axioms, respectively called cyclicity, asymmetry and transitivity:
\begin{enumerate}
 \item $\forall x,y,z\in X, (x,y,z)\in Z \Rightarrow (y,z,x)\in Z$;
 \item $\forall x,y,z\in X, (x,y,z)\in Z \Rightarrow (z,y,x)\notin Z$;
 \item $\forall x,y,z,u\in X, (x,y,z)\in Z$ and $(x,z,u)\in Z \Rightarrow (x,y,u)\in Z$.
\end{enumerate}
A cyclic order $Z$ is called \emph{total} if for every triple of distinct elements $(x,y,z)\in X^3$, either $(x,y,z)\in Z$ or $(z,y,x)\in Z$. Otherwise, it is called \emph{partial}. A total cyclic order $Z$ on $X$ is a way of placing
all the elements of $X$ on a circle such that a triple $(x,y,z)$ lies in $Z$ whenever $y$ lies on the cyclic interval from $x$ to $z$ when turning around the circle in the clockwise direction. This provides a bijection between total cyclic orders on $X$ and cyclic permutations on $X$. If $Z$ (resp. $Z'$) is a total (resp. partial) cyclic order on $X$, $Z$ is called a \emph{circular extension} of $Z'$ if $Z'\subset Z$. The enumeration of particular classes of circular extensions started in \cite{Ramassamy}. In \cite{AJVR,GHMY} and \cite{PSBTW}, more classes were enumerated and related to the volumes of certain integral polytopes.  

Denote by $Q^{2N}$ the subset of all $(\underline a,\underline p)\in P^{2N}$ such that $\Gr(\underline a,\underline p)$ is connected. For any $(\underline a,\underline p)\in Q^{2N}$, one may consider the cyclic order in which the cursors jump in the stationary evolution $\tilde{x}_\infty$ associated with $(\underline a,\underline p)$. This is a total cyclic order $Z$ on the set $\llbracket 1,N\rrbracket$. Let us explain how to recover the DC graph $\Gr(\underline a,\underline p)$ from $Z$. For every $m\geq3$, the $m$-tuple $(i_1,\ldots,i_m)\in\llbracket 1,N\rrbracket^m$ is called \emph{a $Z$-chain} if for every $2\leq k\leq m-1$, $(i_1,i_k,i_{k+1})\in Z$. It intuitively means that, starting from $i_1$ and turning clockwise, we first see $i_2$, then $i_3$, etc, before returning to $i_1$. By convention, all $m$-tuples are $Z$-chains for $m\in\{1,2\}$.

\begin{proposition}
\label{prop:mapfactorialCatalan}
Let $Z$ denote the total cyclic order in which the cursors jump in the stationary evolution $\tilde{x}_\infty$ associated with $(\underline a,\underline p)\in Q^{2N}$. For every $i\in\llbracket 1,N\rrbracket$, let $\beta_Z(i)$ be the largest $m\in\llbracket i,N\rrbracket$ such that $(i,i+1,\ldots,m)$ forms a $Z$-chain. Let $G$ be the graph on the vertex set $\llbracket 1,N\rrbracket$ such that $(i,j)\in E_N$ is an edge if and only if $i\leq j\leq \beta_Z(i)$. Then $G=\Gr(\underline a,\underline p)$.
\end{proposition}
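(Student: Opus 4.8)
The plan is to describe both the edge set of $G := \Gr(\underline a,\underline p)$ and the cyclic order $Z$ in terms of the jump times of the cursors in the stationary evolution, and then to compare them. By Theorem~\ref{theorem:mainstatio} there is a unique stationary evolution $\tilde x_\infty$; let $T>0$ be its period, let $c_k(t)\in\ZZ$ be the bin index of the $k$-th cursor at time $t$, and let $t_k\in\RR/T\ZZ$ be the time at which the $k$-th cursor jumps. Since the configuration shifts by exactly one bin over a period and the bin index of a cursor is non-decreasing in time, each cursor jumps exactly once per period, so $t_k$ is well defined; working on the locus where the $t_k$ are pairwise distinct (the general case is discussed at the end), $Z$ is exactly the total cyclic order given by the cyclic arrangement of $t_1,\dots,t_N$ on $\RR/T\ZZ$. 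Because $a_1<\cdots<a_N$ one has $c_1(t)\ge c_2(t)\ge\cdots\ge c_N(t)$ at every time, so for $i<j$ the $i$-th and $j$-th cursors share a bin at time $t$ if and only if $c_i(t)=c_{i+1}(t)=\cdots=c_j(t)$; by definition, $(i,j)\in E(G)$ if and only if this happens for some $t$.

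The key local fact is: for $i<m\le N-1$ with $(i,m)\in E(G)$ and $(m,m+1)\in E(G)$, one has $(i,m+1)\in E(G)$ if and only if $(i,m,m+1)\in Z$. To prove it, I would regard $f:=c_i-c_m$ and $g:=c_m-c_{m+1}$ as functions on $\RR/T\ZZ$ — they are $T$-periodic because a shift of the configuration changes all bin indices by the same amount. Both are non-negative and integer-valued; $f$ is constant except for a jump by $+1$ at $t_i$ and a jump by $-1$ at $t_m$, with net change zero over a period. Since $(i,m)\in E(G)$ forces $\min f=0$, it follows that $f$ takes only the values $0$ and $1$, with $\{f=0\}$ equal to the forward cyclic interval from $t_m$ to $t_i$; likewise $\{g=0\}$ is the forward cyclic interval from $t_{m+1}$ to $t_m$. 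As $c_i-c_{m+1}=f+g$, the edge $(i,m+1)$ lies in $E(G)$ if and only if $f+g$ vanishes somewhere, i.e. if and only if these two cyclic intervals intersect; choosing the representative of $\RR/T\ZZ$ in which $t_m=0$, their intersection is $[0,t_i)\cap[t_{m+1},T)$, which is non-empty exactly when $t_{m+1}<t_i$ — that is, when going forward from $t_m$ one meets $t_{m+1}$ before $t_i$, which is $(m,m+1,i)\in Z$, equivalently $(i,m,m+1)\in Z$.

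With the key fact available, I would fix $i$ and prove by induction on $m\in\llbracket i+1,N\rrbracket$ that $(i,m)\in E(G)$ if and only if $(i,i+1,\dots,m)$ is a $Z$-chain. For $m=i+1$ this holds because $(i,i+1)\in E(G)$ — as $(\underline a,\underline p)\in Q^{2N}$ makes $G$ connected, and a connected DC graph on $\llbracket 1,N\rrbracket$ must contain every edge $(k,k+1)$, for otherwise downward closure would rule out any edge joining $\{1,\dots,k\}$ to $\{k+1,\dots,N\}$ — and because $(i,i+1)$ is a $Z$-chain by convention. For the inductive step: if $(i,m)\notin E(G)$ then $(i,m+1)\notin E(G)$ by downward closure and $(i,\dots,m+1)$ is not a $Z$-chain since $(i,\dots,m)$ already is not; if $(i,m)\in E(G)$ then $(i,\dots,m)$ is a $Z$-chain by the inductive hypothesis, $(m,m+1)\in E(G)$ by connectedness, and the key fact gives $(i,m+1)\in E(G)\iff(i,m,m+1)\in Z\iff(i,\dots,m+1)$ is a $Z$-chain. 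Taking the largest admissible $m$ then yields $b_G(i)=\beta_Z(i)$ for every $i$ (the case $i=N$ being immediate), and since the DC graph $G$ satisfies $(i,j)\in E(G)\iff i<j\le b_G(i)$, we conclude that $(i,j)\in E_N$ is an edge of $G$ exactly when $i\le j\le\beta_Z(i)$, i.e. $G=\Gr(\underline a,\underline p)$.

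The step I expect to be the main obstacle is handling non-generic parameters in $Q^{2N}$ at which two cursors jump simultaneously within a period: there $Z$ is not literally a total cyclic order and the function $f$ above may be identically zero. I would deal with this either by proving that such coincidences do not occur on $Q^{2N}$ (exploiting the stationarity and uniqueness from Theorem~\ref{theorem:mainstatio}, and if helpful the car model of Section~\ref{sec:LBMcar}), or else by a limiting argument: $G$ and $Z$ are locally constant on the open dense set of parameters in $Q^{2N}$ for which the $t_k$ are distinct, and one passes to the limit from there, using Proposition~\ref{proposition:continuityInParameters}. The remaining bookkeeping — half-open cyclic intervals, and the conventions around $b_G$ — is routine once the key fact is in place.
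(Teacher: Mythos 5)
Your argument is correct, but it takes a genuinely different route from the paper's. The paper's proof is a direct, global argument over one period: after cursor $i$ jumps at time $0$, the cursors $i+1,\ldots,b_{\Gr(\underline a,\underline p)}(i)$ must each jump once in $(0,T)$ and in increasing order of index (since $c_{i+1}\geq c_{i+2}\geq\cdots$ at all times and they all end up in cursor $i$'s bin just before time $T$), which immediately yields the $Z$-chain $(i,i+1,\ldots,b(i))$; connectedness is then invoked once to show that cursor $b(i)+1$ must jump \emph{before} cursor $b(i)$ does, so the chain cannot be extended. You instead set up an induction on $m$ anchored by a local equivalence --- $(i,m+1)\in E(G)$ iff $(i,m,m+1)\in Z$, given $(i,m)$ and $(m,m+1)$ are edges --- proved by identifying the zero sets of $c_i-c_m$ and $c_m-c_{m+1}$ with explicit cyclic intervals determined by the jump times. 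Your route is longer but makes fully explicit the correspondence between bin-sharing intervals and the cyclic arrangement of jump times, whereas the paper's "must jump in this order" step is left somewhat compressed; you also correctly observe and use that a connected DC graph contains all edges $(k,k+1)$, which the paper uses implicitly. Your worry about simultaneous jumps is reasonable to flag but is not really an obstacle: the proposition's hypothesis that the cursors jump according to a \emph{total} cyclic order $Z$ already presupposes the $N$ jump times are distinct modulo $T$ (and indeed two cursors cannot reach the bottom of the \emph{same} bin simultaneously since $a_i\neq a_j$), so neither proof needs the limiting argument you sketch as a fallback.
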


\begin{proof}
Denote by $T$ the duration of one stationary cycle, namely the time it takes for the first $a_N$ units of liquid to shift by one bin to the right in the stationary evolution $\tilde x_\infty$. Let $i\in\llbracket1,N\rrbracket$. Up to a translation in time, one may assume that the $i$-th cursor jumps at time $0$. Thus its next jump is at time $T$. The largest cursor present in the same bin as cursor $i$ just before time $T$ is cursor $b_{\Gr(\underline a,\underline p)}(i)$. Between times $0$ and $T$, cursors $i+1, i+2, \ldots,b_{\Gr(\underline a,\underline p)}(i)$ will have to jump in this order, hence $(i,i+1,\ldots,b_{\Gr(\underline a,\underline p)}(i))$ forms a $Z$-chain. Assume that $b_{\Gr(\underline a,\underline p)}(i)<N$. Since $\Gr(\underline a,\underline p)$ is connected, $(b_{\Gr(\underline a,\underline p)}(i),b_{\Gr(\underline a,\underline p)}(i)+1)$ has to be an edge of this graph. Thus there exists a time between $0$ and $T$ when both cursors $b_{\Gr(\underline a,\underline p)}(i)$ and $b_{\Gr(\underline a,\underline p)}(i)+1$ are in the same bin. Since they are not in the same bin at time $0$, it means that cursor $b_{\Gr(\underline a,\underline p)}(i)+1$ jumps between time $0$ and the jump time of cursor $b_{\Gr(\underline a,\underline p)}(i)$. Thus $(i,i+1,\ldots,b_{\Gr(\underline a,\underline p)}(i),b_{\Gr(\underline a,\underline p)}(i)+1)$ is not a $Z$-chain, hence $b_{\Gr(\underline a,\underline p)}(i)=\beta_Z(i)$, which concludes the proof.
\end{proof}

Proposition~\ref{prop:mapfactorialCatalan} provides a map $F_N$ from the set of cardinality $(N-1)!$ of total cyclic orders on $\llbracket 1,N\rrbracket$ to the set of cardinality $C_{N-1}$ of connected DC graphs $G\in\DC_N$. While this map looks natural, we do not know whether it has previously appeared in the literature. Every connected DC graph $G\in\DC_N$ has at least one pre-image by $F_N$, since $P_G$ is non-empty by Theorem \ref{theorem:nonemptiness}.

Conversely, to each connected DC graph $G\in\DC_N$, we associate the partial cyclic order $Z'_G$, by requiring that for every maximal edge $(i,j)$ of $G$, the tuples $(i,i+1,\ldots,j)$, $(i,i-1,j)$ and $(j,i,j+1)$ are $Z'_G$-chains. An edge $(i,j)\in E(G)$ is called \emph{maximal} if it is a maximal element in the poset $(E(G),\prec_E)$. It is not hard to check that $Z'_G$ is a well-defined partial cyclic order and that the set of all circular extensions of $Z'_G$ is precisely the fiber of $F_N$ above $G$, \textit{i.e.} for every circular extension $Z$ of $Z'_G$, $F_N(Z)=G$.

\begin{conjecture}
\label{conj:cyclicorder}
For every connected DC graph $G\in\DC_N$, every circular extension of $Z'_G$ arises as the total cyclic order of cursor jumps for some value of parameters $(\underline a,\underline p)\in P_G$.
\end{conjecture}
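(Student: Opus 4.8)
The plan is to describe the combinatorial type of a stationary evolution via its jump cyclic order $Z$, stratify $P_G$ into chambers indexed by the admissible values of $Z$, and run a wall-crossing argument showing that the chambers realise every circular extension of $Z'_G$. The first step is structural. In a stationary evolution each of the $N$ cursors jumps exactly once per period $T$, because the window of the first $a_N$ units of liquid shifts by exactly one bin per period while each $c_i(t)$ is non-decreasing with unit jumps; hence $Z$ is literally the cyclic order of the $N$ jump instants $\theta_1,\dots,\theta_N\in\RR/T\ZZ$. Using that cursor $i$ descends at the piecewise-constant rate $q_{b(i,t)}$, where $b(i,t)$ is the largest $j$ with cursors $i,\dots,j$ momentarily sharing a bin, together with the monotonicity $c_1\ge\cdots\ge c_N$, one checks for connected $G$ that $c_i-c_{i+1}\in\{0,1\}$ throughout, that cursors $i$ and $i+1$ are co-located exactly on one cyclic arc with endpoints $\theta_i$ and $\theta_{i+1}$, and hence that the whole co-occupancy pattern --- in particular the graph $G$ --- is a function of $Z$ alone; this is the geometric content of Proposition~\ref{prop:mapfactorialCatalan}. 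Since the stationary evolution depends continuously on $(\underline a,\underline p)$ (along the lines of Proposition~\ref{proposition:continuityInParameters}), on the dense open locus of $P_G$ where no two cursors jump simultaneously $Z$ is locally constant, so $P_G$ is partitioned into relatively open chambers $P_{G,Z}$, separated by walls on which exactly one pair of cyclically consecutive jump instants collides. The goal becomes: $P_{G,Z}\neq\emptyset$ for every circular extension $Z$ of $Z'_G$.

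The core is a dichotomy for wall-crossings together with a connectivity statement. Crossing a wall of $P_G$ on which two cyclically consecutive cursors $a,b$ jump simultaneously transposes $a$ and $b$ in $Z$; I would show that such a wall lies in the interior of $P^{2N}$ and is internal to $P_G$ --- as opposed to lying on $\partial P_G\cap\partial P_{G'}$ for some $G'\neq G$, a situation governed by Theorem~\ref{theorem:mainadjacency} --- precisely when the transposed order $Z'$ still satisfies $F_N(Z')=G$, i.e.\ still extends $Z'_G$; and, conversely, that every adjacent transposition of $Z$ preserving the extension-of-$Z'_G$ property occurs as such an internal wall on $\partial P_{G,Z}$. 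Granting this, the set of realised circular extensions is closed under these ``legal'' adjacent transpositions, hence is a union of connected components of the graph $\mathcal H$ on the circular extensions of $Z'_G$ whose edges are the legal transpositions. Theorem~\ref{theorem:nonemptiness} supplies one non-empty chamber $P_{G,Z_0}$, so it remains to prove $\mathcal H$ connected. This is the cyclic analogue of the classical fact that the linear extensions of a finite poset form a connected graph under adjacent transpositions; I would establish it for the specific orders $Z'_G$ directly, ``cutting'' the cycle at vertex $1$ --- whose clockwise neighbours $2,\dots,b_G(1)$ are forced by the maximal edge $(1,b_G(1))$ of $G$ --- and inducting on $N$.

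The main obstacle is the local analysis behind the dichotomy: one must show that as a parameter of $P_{G,Z}$ is pushed so that the gap between two cyclically consecutive jump instants tends to $0$, the limiting parameter stays in $P^{2N}$, the two one-sided limits are genuine stationary evolutions (Theorem~\ref{theorem:mainstatio} gives existence and uniqueness on either side), and the remaining co-occupancy intervals are stable across the crossing, so that the only combinatorial change is the prescribed transposition and $G$ does or does not change exactly as predicted; to rule out pathological accumulations of walls one should observe that, for a fixed combinatorial type, the period and the jump instants are cut out of $P^{2N}$ by an explicit linear system in $(\underline d,\underline q)$, whence the full stratification is semialgebraic and its walls are locally finite. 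Should the wall geometry prove too delicate, a cleaner route is to attack that linear system head-on: for a target circular extension $Z$ of $Z'_G$, write down the finitely many linear equations a stationary evolution of combinatorial type $(G,Z)$ must satisfy and exhibit one solution meeting all the strict inequalities --- $d_i>0$, the $q_i$ strictly increasing, the within-bin levels positive, the $\theta_i$ distinct and in cyclic order $Z$ --- since for the conjecture only existence is needed, not the chamber structure. A positive answer to the open question of whether each $P_G$ is connected, with codimension-one chamber walls, would streamline the wall-crossing step.
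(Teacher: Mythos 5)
The statement you are proving is Conjecture~\ref{conj:cyclicorder}, which the paper does \emph{not} prove: the authors explicitly leave it open and report only a numerical verification for $N\leq4$. So there is no proof in the paper to compare against, and your text has to stand entirely on its own merits.

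As it stands it is a strategy outline rather than a proof, and the two load-bearing steps are exactly the ones left unestablished. First, for your argument that the set of realized cyclic orders is a union of connected components of $\mathcal H$, you need the converse direction of your dichotomy: for \emph{every} legal adjacent transposition of a realized $Z$, the boundary $\partial P_{G,Z}$ actually contains a codimension-one internal wall of that type, and the chamber on the other side is non-empty. You state this as something you ``would show,'' but it is not a routine local analysis: a priori $P_{G,Z}$ could be bounded only by walls of $\partial P_G$ and by some proper subset of the internal walls, and nothing in the paper's machinery (Theorems~\ref{theorem:nonemptiness} and~\ref{theorem:mainadjacency} control the regions $P_G$, not the finer chambers $P_{G,Z}$) rules this out. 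This step is essentially equivalent in difficulty to the conjecture itself, since it amounts to producing parameters realizing a prescribed jump order adjacent to a given one. Second, the connectivity of $\mathcal H$ --- the graph on circular extensions of $Z'_G$ under legal transpositions --- is asserted with a one-line hint (``cutting the cycle at vertex $1$ and inducting on $N$'') but no argument; note that the legal transpositions are constrained both by $Z'_G$ and by the requirement $F_N(Z')=G$, so this is not literally the classical connectivity of linear extensions under adjacent transpositions. Your fallback route (write down the linear system for a target $(G,Z)$ and ``exhibit one solution meeting all the strict inequalities'') simply restates the existence problem to be solved. The structural observations in your first paragraph (one jump per cursor per period, $c_i-c_{i+1}\in\{0,1\}$ for connected $G$, $G$ determined by $Z$) are correct and consistent with Proposition~\ref{prop:mapfactorialCatalan}, but they do not close either gap.
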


We have numerically verified this conjecture for every $N\leq4$. When $G$ is the complete graph, there is a single circular extension, namely the only total cyclic order $Z$ for which $(1,2,\ldots,N)$ forms a $Z$-chain. When $G$ is the graph that has exactly $N-1$ edges, connecting $i$ to $i+1$ for every $1\leq i\leq N-1$, then the number of circular extensions is the Euler zigzag number \cite{Ramassamy}.

An \emph{integral polytope} in dimension $d$ is a polytope with vertices in $\ZZ^d$. Its \emph{normalized volume} is the integer obtained by multiplying its volume by $d!$. A branch of research in enumerative combinatorics is concerned with mapping certain natural families of integral polytopes to certain families of objects that are enumerated by the normalized volumes of these polytopes, see e.g. the references in the introduction of \cite{AJVR}. In \cite{Stanleyposet}, Stanley associates to each partially ordered set two integral polytopes whose normalized volumes are equal and enumerate the linear extensions of the original partially ordered set. In a similar spirit, but for circular extensions instead of linear extensions, the recent papers \cite{AJVR,GHMY,PSBTW} associate to some partial cyclic order two integral polytopes whose normalized volumes are equal and enumerate the circular extensions of the partial cyclic order. However it is not known whether such a construction is possible for every partial cyclic order. The partial cyclic orders $Z'_G$ for connected DC graphs $G$ provide new examples of partial cyclic orders on which to try this construction. A natural candidate would be the consecutive coordinate polytopes appearing in \cite{AJVR,GHMY}.

\subsection*{Organization of the paper}
In Section~\ref{sec:LBMcar}, we provide a rigorous construction of the liquid bin model and we show that it is coupled to a model of cars. In Section~\ref{sec:stationary} we prove the existence and uniqueness of a stationary trajectory for the car model and we show that the shifted trajectories of cars starting from an arbitrary initial configuration converge exponentially fast to the stationary trajectory. Section~\ref{sec:speed} is dedicated to the proof of Theorem~\ref{theorem:mainspeed}, obtained by deriving a more refined result for the car model. In that section we partition the parameter space into regions $P_G$, we associate to each region a linear system and we solve it. We also show that each region is non-empty. Finally in Section~\ref{sec:adjacency} we prove Theorem~\ref{theorem:mainadjacency} about the adjacency structure of the regions $P_G$.

\section{The liquid bin model}
\label{sec:LBMcar}

In the introduction we provided a heuristic definition of the liquid bin model. The aim of Subsection~\ref{subsec:defLBM} is to give a rigourous construction of the liquid bin model. In Subsection~\ref{subsec:car} we describe its coupling with a model of cars, the study of which will yield in further sections the proofs of all the results announced in the introduction.

\subsection{Definition of the liquid bin model}
\label{subsec:defLBM}

Say that a \emph{configuration} is an element $x = ( x_k )_{k \in \ZZ} $ of $(\RR_{\geq0})^{\ZZ}$. One may interpret $x$ as a configuration of liquid in bins indexed by $\ZZ$, where $x_k\geq0$ represents the quantity of liquid in the $k$-th bin, for all $ k \in \ZZ$.  A configuration $x$ is called \emph{admissible} if the following two conditions are satisfied:
\begin{itemize}
    \item There is an infinite quantity of liquid in the configuration: $\sum_{k \in \ZZ} x_k = +\infty$.
    \item There exists $f(x) \in \ZZ$ such that for all $k \in \ZZ$, $x_k$ is positive if and only if $k \leq f(x)$. $f(x)$ is called the \emph{front} of $x$.
\end{itemize}
Denote the set of admissible configurations of bins by $\binConf$.

Let $N\in\ZZ_{>0}$. Recall from Subsection~\ref{subsec:LBMproperties} the definitions of the parameter space $P^{2N}$ and the alternative parameters $\underline d$ and $\underline q$. We adopt the conventions that $a_0:=0$, $a_{N+1}:=\infty$ and $q_0:=0$.

Let us introduce a deterministic dynamics associated to the parameters $(\underline a,\underline p)\in P^{2N}$ on the set of configurations $\binConf$. 
For $x \in \binConf$ and $ 1 \leq i \leq\nbParam$, the \emph{$i$-th cursor} $\cursorLetterBin_i(x)$ is defined as the highest index among bins, such that the total amount of liquid in that bin and to its right is at least $a_i$. 
More precisely, 
\begin{equation*}
    \cursorLetterBin_i(x) := \max\lrSet{m \in \ZZ}{\sum_{k \geq m } x_k \geq a_i}.
\end{equation*}
for all $i \in \llbracket 1 , {\nbParam} \rrbracket$. Notice that $ -\infty < \cursorLetterBin_i(x) < +\infty$ for all $x \in \binConf$ since there is an infinite amount of liquid in $x$ and the front of $x$ is non-trivial.  By definition of the $i$-th cursor, 
\[\sum_{k \geq \cursorLetterBin_i(x)+1 } x_k <  a_i \leq \sum_{k \geq \cursorLetterBin_i(x) } x_k. \]

In Subsection~\ref{subsec:IBMLBM} we gave a heuristic definition of the liquid bin model. Let us now construct the dynamics $\binDynLetter$ of the liquid bin model in more rigorous terms. 
More precisely, denote by $e(j) \in (\RR_{\geq0})^{\ZZ}$ the sequence such that $e_k(j) = \mathds{1}_{k=j}.$ Define $\binDynLetter^{(0)}$ as the map from $\RR_{\geq0} \times \binConf$ to $\binConf$, such that for all $x \in \binConf$ and $t \in \RR_{\geq0}$, 
\begin{align}\label{eq:defDynBin0}
    \binDynLetter^{(0)}(t,x) := x +  t \sum_{i = 1} ^{\nbParam} p_i \cdot e(\cursorLetterBin_i(x) + 1).
\end{align}
Let $\nextJumpTimeBin_1 (x)$ be the first positive time at which a cursor changes position starting from the configuration $x$ for the dynamics $\binDynLetter^{(0)}$: 
\begin{align}\label{eq:defTimeBin1}
\nextJumpTimeBin_1 (x) := \inf\lrSet{t \in \RR_{>0}}{ \exists i \in \llbracket 1 , \nbParam \rrbracket ,\,  \sum_{k \geq \cursorLetterBin_i(x)+1}(\binDynLetter^{(0)} (t,x))_{k} = a_i } .
\end{align}

\begin{remark}
By definition of $\nextJumpTimeBin_1$, notice that \[\nextJumpTimeBin_1 (x) := \min_{1\leq i \leq \nbParam}\left(\frac{a_i - \sum_{k \geq \cursorLetterBin_i(x) + 1} x_k}{\sum_{j, \, \cursorLetterBin_j(x) \geq \cursorLetterBin_i(x)} p_j}\right).\]
\end{remark}

For all times $t \leq \nextJumpTimeBin_1 (x)$, $\binDynLetter^{(0)}(t, x)$ corresponds to the dynamics of the bin model. After time $\nextJumpTimeBin_1 (x)$, at least one cursor position changed. Therefore, $\binDynLetter^{(0)}(t,x)$ no longer aligns with the dynamics of the liquid bin model as previously heuristically defined.
Then, set the map $\binDynLetter $ such that for all $x \in \binConf$ and $0 \leq  t \leq \nextJumpTimeBin_1(x)$,  
\begin{equation*}
    \binDyn{t}{x}:=\binDynLetter^{(0)} (t,x).
\end{equation*}
By induction on $l$, set $\nextJumpTimeBin_l(x) = \nextJumpTimeBin_{l-1}(x) + \nextJumpTimeBin_{1}( \binDyn{\nextJumpTimeBin_{l-1}(x)}{x})$ for all $l \geq 2$, with the convention that $\nextJumpTimeBin_0(x)=0$. Let us construct the dynamics $\binDynLetter_t$ by induction, resetting the dynamics according to the cursors at each time $\nextJumpTimeBin_l(x)$.
We define $\binDyn{t}{x}$ for all values of $t \in [ \nextJumpTimeBin_l(x) , \nextJumpTimeBin_{l+1}(x) ]$ by induction on $l$ as  
\begin{equation*}
    \binDyn{t}{x} := \binDynLetter^{(0)}(t-\nextJumpTimeBin_l(x), \binDyn{\nextJumpTimeBin_l(x)}{x}).
\end{equation*}

\begin{remark}\label{remark:timesToInftyBin}
    With this definition, $\binDyn{t}{x}$ is well-defined for all $t \in \RR_{\geq0}$ for all $x \in \binConf$. To prove this, let us show that $\nextJumpTimeBin_l(x)$ converges to $+\infty$ as $l$ tends to $+\infty$ for every configuration $x \in \binConf$.
    Consider $(t_l)_{l \geq 1}$ the subsequence of $(\nextJumpTimeBin_l(x))_{l \geq 1}$ consisting  of all times at which cursor $1$ changes its position. Notice that at all times, the rate of liquid added to the right of cursor $1$ is bounded below by $q_1 = p_1$ and bounded above by $q_{\nbParam} = p_1 + \dots + p_{\nbParam}$. Therefore, for all $l \geq 1$, $\frac{a_1}{q_{\nbParam}} \leq t_l -t_{l-1} \leq \frac{a_1}{q_1}$. Since $(\nextJumpTimeBin_l(x))_{l \geq 1}$ is increasing, it tends to infinity as $l$ goes to infinity. 
\end{remark}

\begin{example}
Let $x^{(0)}$ be the configuration such $x^{(0)}_k = 1.5 $ for $k\leq1$, $x^{(0)}_2 = 1 $ and  $x^{(0)}_k = 0 $ for $k\geq3$, as illustrated on the top-left picture of Figure \ref{fig:LMBdyn}. Then the three bins represented on the bottom-left picture of Figure \ref{fig:LMBdyn} correspond to the bins $1$, $2$ and $3$ of the configuration $x^{(1)} := \binDynLetter_{0.25}(x^{(0)})$. The three bins on the bottom-right picture of Figure \ref{fig:LMBdyn} correspond to the configuration $\binDynLetter_{0.5}(x^{(1)}) = \binDynLetter_{0.75}(x^{(0)}) $.  Note that $\nextJumpTimeBin_1(x^{(0)}) = 0.25$ and $\nextJumpTimeBin_1(x^{(1)}) = 0.5$, as these are the first two times at which a cursor changes bins. 
\end{example}

\subsection{Coupling with a car model}
\label{subsec:car}

In this subsection, we give a coupling of the liquid bin model with a model of cars evolving on $\RR_{\geq0}\cup\{ +\infty\}$. Heuristically, the main idea is to draw the bins with a fixed unit height, rather than with a fixed unit width. We associate a car to each wall separating two consecutive bins, so that the amount of liquid in a bin corresponds to the distance between two consecutive cars. See Figure \ref{fig:couplingLBMcar} for an illustration.

\begin{figure}
    \centering
    \includegraphics[scale=0.71]{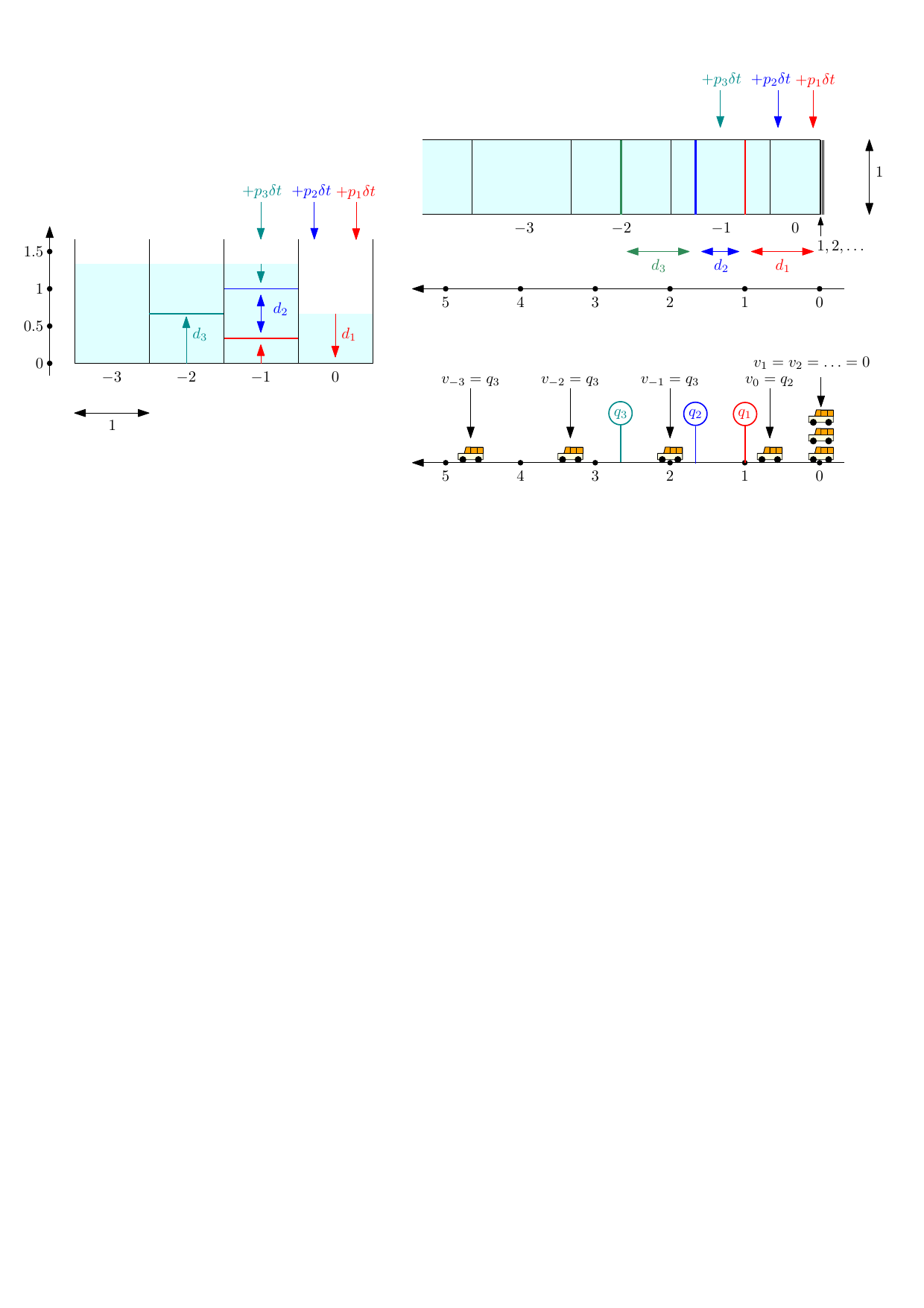}
    \caption{Illustration of the coupling between the liquid bin model and the car model with parameters $N=3$, $d_1 = 1$, $d_2 =\frac{2}{3}$, $d_3 = 1$. On the left a picture of the bin configuration with unit-width bins. On the top right a picture of the same bin configuration with unit-height bins. On the bottom right the corresponding car configuration. For this bin configuration, we assume that the bins with indices greater than zero are empty so that the front  position is $0$. The difference of positions between cars in the car model corresponds to the quantity of liquid in the bin model. The speed of each car is indicated above it.}
    \label{fig:couplingLBMcar}
\end{figure}

\begin{figure}
    \centering
    \includegraphics[scale=0.9]{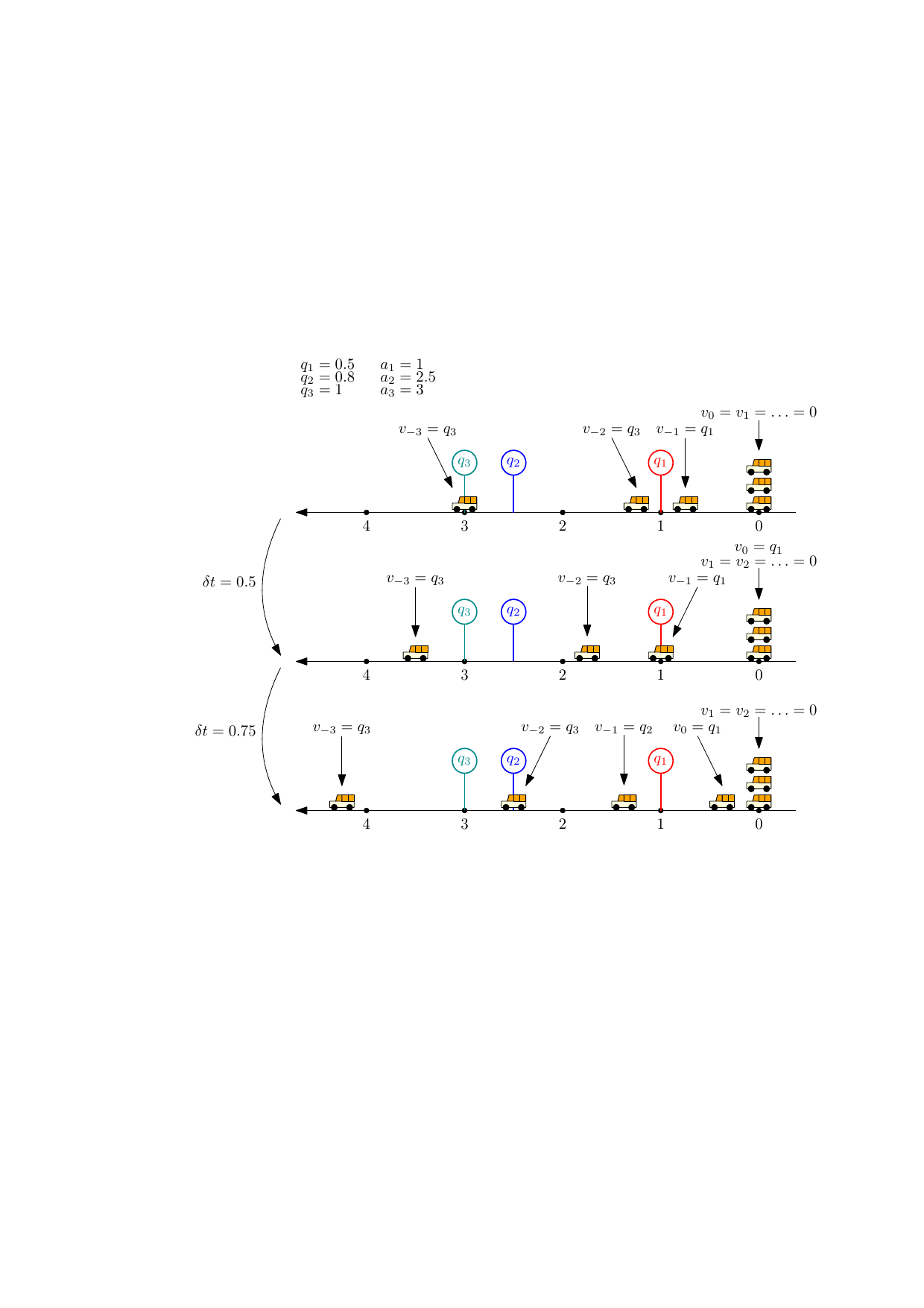}
    \caption{Illustration of the dynamics of the car model. The initial configuration is depicted at the top. On the figure, $v_i$ denotes the speed of the car with index $i$.  After $0.5$ units of time, the car $-1$ arrives at position $a_1$. Therefore, the car $0$ starts moving at speed $q_1$. Then, after $0.75$ units of time, the car $-2$ arrives at position $a_2$. Therefore, the car $-1$ now moves at speed $q_2$.}
    \label{fig:carModel}
\end{figure}

The car model is defined heuristically as follows: at position $a_i$, there is a road sign telling cars to move at speed $q_i=p_1+\cdots+p_i$. A road sign is visible to a car if and only if there is no other car between it and the car (one may imagine that the speed limits are painted on the floor, with cars hiding them from those behind).
At any time, the speed of each car is equal to the maximal speed restriction that they have seen or can see:   
A car moves at speed $q_i$ if and only if $i$ is the greatest index of road signs visible to this car, or already passed by this car. 
See Figure \ref{fig:carModel} for an illustration of the car model. As will be apparent when we describe the coupling between the bin model and the car model, it is more convenient to direct to positive real half-line towards the left for the car model. 

More precisely, denote by $\carConf$ the set consisting of all the elements $(y_k)_{k \in \ZZ} \in (\RR_{\geq0}\cup\{+\infty\})^{\ZZ}$ such that for some $f \in \ZZ$, 
\begin{itemize}
    \item $y_k = 0$ for all $k > f$, 
    \item $0 < y_f < y_{f-1} < \dots$,
    \item $y_k \xrightarrow[k \rightarrow -\infty]{} +\infty$.
\end{itemize}
An element $y$ of $\carConf$ can be seen as a configuration of cars, where $y_k$ is the position of the $k$-th car. 
When $ y_k = +\infty$, assume there is no car with index $k$ or less in the configuration $y$. For $y \in \carConf $, denote by $\front{y} = \max\Set{k \in \ZZ}{y_k > 0}$ the index of the car closest to $0$ among cars with positive positions. 
Note that by definition of $\carConf$, for any configuration $y \in \carConf$, there is an infinite number of cars at position $0$. Moreover, by the third point, there is a finite number of cars with position in any compact subset of $\RR_{>0}$.

As we did for the liquid bin model, let us construct the dynamics $\Psi$ of the car model.  For $i \in \llbracket 1 , \nbParam \rrbracket $, set \begin{equation*}
\cursorLetterCar_i(y) := \max\lrSet{k \in \ZZ}{y_k \geq a_i } .
\end{equation*}
By definition of $\carConf$, $-\infty < \cursorLetterCar_i(y)< +\infty$ for all $y \in \carConf$. According to the heuristic of the dynamics, since $ \cursorLetterCar_i(y)  $ is the last car which passed by the road sign at position $a_i$, we want $\cursorLetterCar_i(y) + 1$ to be the largest index among indices of cars moving at speed at least $q_i$. Then, consider the map $\carDynLetter^{(0)}$ such that for all $t \in \RR_{\geq0}$ and $y \in \carConf$, 
\begin{align}\label{eq:defDynCar0}
\carDynLetter^{(0)}(t,y) = y + t \sum_{i = 1}^{\nbParam} p_i \cdot \widehat{e}(\cursorLetterCar_i(y) +1),
\end{align}
where for all $j \in \ZZ$, $\widehat{e}(j) \in \RR^{\ZZ}$ is the sequence where $\widehat{e}_k(j) = \mathds{1}_{k \leq j}$.  
One may give an alternative definition of $ \carDynLetter^{(0)} $ that is equivalent to \eqref{eq:defDynCar0}.  
We adopt the convention that $a_0=0$. For $ k \in \ZZ$, set $\indexOfSpeedCar_{k}(y) := \max\Set{ i \in \llbracket 0 , \nbParam \rrbracket }{ a_i \leq y_{k-1}}$, representing the greatest index of the road signs passed by the $(k-1)$-th car. 
For $t \in \RR_{\geq0}$, notice that 
\begin{align}\label{eq:defDynCar0bis}
    \carDynLetter^{(0)}_k(t,y) = y_k +  t \cdot q_{\indexOfSpeedCar_{k}(y) } ,
\end{align} where $\carDynLetter^{(0)}_k(t,y)$ corresponds to the position of the $k$-th car in $\carDynLetter^{(0)}(t,y)$.
With this definition, $\indexOfSpeedCar_{k}(y)$ is the index of the leftmost road sign that is or was visible to the $k$-th car in the configuration $y \in \carConf$.  

Set $\nextJumpTimeCar_1 (y)$ to be the first time at which a car passes a road sign:
\begin{align}\label{eq:defTimeCar1}
    \nextJumpTimeCar_1 (y) := \inf\lrSet{t \in \RR_{\geq0}}{ \exists (k,i) \in \ZZ\times \llbracket 1 , \nbParam \rrbracket ,\,  \carDynLetter^{(0)}_k (t,y) = a_i }.
\end{align}
After this time, the speed of at least one of the cars changes. Then, define the map $\carDynLetter $ such that for all $y \in \carConf$ and $0 \leq  t \leq \nextJumpTimeCar_1(y)$,  
\begin{equation*}
    \carDyn{t}{y}:=\carDynLetter^{(0)} (t,y).
\end{equation*}
To construct $\carDynLetter(t,y)$ for all $t > 0$, proceed recursively as done with the liquid bin model: set by induction $\nextJumpTimeCar_l(y) = \nextJumpTimeCar_{l-1}(y) + \nextJumpTimeCar_{1}( \carDyn{\nextJumpTimeCar_{l-1}(y)}{y})$ for all $l \geq 2$. 
We define $\carDyn{t}{y}$ for all values of $t \in [ \nextJumpTimeCar_l(y) , \nextJumpTimeCar_{l+1}(y) ]$ by induction on $l$ as  
\begin{equation*}
    \carDyn{t}{y} := \carDynLetter^{(0)}_{}(t-\nextJumpTimeCar_l(y), \carDyn{\nextJumpTimeCar_l(y)}{y}).
\end{equation*} 

Denote by $\Sigma : \binConf \longrightarrow \carConf$ the map such that for any bin configuration $x\in\binConf$,
\begin{equation}
\label{eq:defsigma}
\Sigma x := \left( \sum_{j = k}^{+\infty} x_j\right)_{k \in \ZZ}.
\end{equation}
It is easy to check that $\Sigma$ is a one-to-one correspondence from $\binConf$ to $\carConf$. In this subsection, we prove the equivalence of the liquid bin model and the car model via the following coupling property, illustrated on Figure \ref{fig:couplingLBMcar}.

\begin{proposition}[coupling property]\label{proposition:couplingLiquidCar}
    For all $t \in \RR_{\geq0}$, $ \carDynLetter(t,\cdot) \circ \Sigma   = \Sigma \circ \binDynLetter(t,\cdot)$.
\end{proposition}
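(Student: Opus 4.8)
The plan is to prove the identity $\carDynLetter(t,\cdot)\circ\Sigma = \Sigma\circ\binDynLetter(t,\cdot)$ by first checking that $\Sigma$ intertwines the cursor data of the two models, then that it intertwines the ``free'' dynamics $\binDynLetter^{(0)}$ and $\carDynLetter^{(0)}$, then that it matches up the jump times $\nextJumpTimeBin_l$ and $\nextJumpTimeCar_l$, and finally concluding by induction on the index $l$ of the time interval $[\nextJumpTimeBin_l(x),\nextJumpTimeBin_{l+1}(x)]$. The first step is essentially unwinding definitions: for $x\in\binConf$ and $y:=\Sigma x$, one has $y_k=\sum_{j\ge k}x_j$ by \eqref{eq:defsigma}, so the condition $\sum_{j\ge m}x_j\ge a_i$ defining $\cursorLetterBin_i(x)$ becomes exactly the condition $y_m\ge a_i$ defining $\cursorLetterCar_i(y)$. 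Hence $\cursorLetterBin_i(x)=\cursorLetterCar_i(\Sigma x)$ for every $i\in\llbracket 1,\nbParam\rrbracket$. One also checks that $\Sigma$ is a well-defined bijection $\binConf\to\carConf$ (finiteness of $\sum_{j\ge k}x_j$ follows from the front condition; the strict inequalities $0<y_f<y_{f-1}<\cdots$ and $y_k\to+\infty$ follow from admissibility), so it makes sense to transport the dynamics.

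The second step is to verify $\Sigma\circ\binDynLetter^{(0)}(t,\cdot)=\carDynLetter^{(0)}(t,\cdot)\circ\Sigma$. Applying $\Sigma$ to \eqref{eq:defDynBin0} and using linearity of $\Sigma$ together with the elementary identity $\Sigma(e(j))=\widehat e(j)$ (indeed $\sum_{m\ge k}\mathds 1_{m=j}=\mathds 1_{k\le j}$), we get
\begin{equation*}
\Sigma\bigl(\binDynLetter^{(0)}(t,x)\bigr)=\Sigma x+t\sum_{i=1}^{\nbParam}p_i\,\widehat e\bigl(\cursorLetterBin_i(x)+1\bigr)=\Sigma x+t\sum_{i=1}^{\nbParam}p_i\,\widehat e\bigl(\cursorLetterCar_i(\Sigma x)+1\bigr),
\end{equation*}
which is exactly $\carDynLetter^{(0)}(t,\Sigma x)$ by \eqref{eq:defDynCar0}, where the middle equality uses Step~1. (One should also note that $\carDynLetter^{(0)}(t,\Sigma x)$ remains in $\carConf$: the car-position differences stay nonnegative because liquid amounts stay nonnegative, and only cars with index $\le\max_i\cursorLetterCar_i+1$ move, so the escape-to-$+\infty$ condition is preserved.) The third step compares the first jump times: by the Remark following \eqref{eq:defTimeBin1}, $\nextJumpTimeBin_1(x)$ is the least $t>0$ at which $\sum_{k\ge\cursorLetterBin_i(x)+1}(\binDynLetter^{(0)}(t,x))_k=a_i$ for some $i$, i.e.\ at which $(\carDynLetter^{(0)}(t,\Sigma x))_{\cursorLetterCar_i(\Sigma x)+1}=a_i$; comparing with \eqref{eq:defTimeCar1} and observing that the only cars whose position can newly hit some $a_i$ are exactly those of index $\cursorLetterCar_i(\Sigma x)+1$ (all cars of smaller index are already at or past $a_i$, and cars of larger index have not yet reached it — here one uses the monotonicity $\cdots>y_{f-1}>y_f$ and the fact that $\cursorLetterCar_i$ is the last car past $a_i$), we conclude $\nextJumpTimeBin_1(x)=\nextJumpTimeCar_1(\Sigma x)$.

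With these three facts the conclusion follows by induction on $l$. For $l=0$ and $t\in[0,\nextJumpTimeBin_1(x)]$, $\binDynLetter(t,x)=\binDynLetter^{(0)}(t,x)$ and $\carDynLetter(t,\Sigma x)=\carDynLetter^{(0)}(t,\Sigma x)$, so Step~2 gives the identity; moreover Step~3 gives $\nextJumpTimeBin_1(x)=\nextJumpTimeCar_1(\Sigma x)$, which starts the induction on jump times. For the inductive step, assume $\Sigma\binDynLetter(\nextJumpTimeBin_l(x),x)=\carDynLetter(\nextJumpTimeCar_l(\Sigma x),\Sigma x)$ and $\nextJumpTimeBin_l(x)=\nextJumpTimeCar_l(\Sigma x)$. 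Writing $x':=\binDynLetter(\nextJumpTimeBin_l(x),x)$ and $y':=\Sigma x'$, for $t\in[\nextJumpTimeBin_l(x),\nextJumpTimeBin_{l+1}(x)]$ we have $\binDynLetter(t,x)=\binDynLetter^{(0)}(t-\nextJumpTimeBin_l(x),x')$, so $\Sigma\binDynLetter(t,x)=\carDynLetter^{(0)}(t-\nextJumpTimeBin_l(x),y')=\carDynLetter^{(0)}(t-\nextJumpTimeCar_l(\Sigma x),y')=\carDynLetter(t,\Sigma x)$ by Step~2 and the recursive definition of $\carDynLetter$; and $\nextJumpTimeBin_{l+1}(x)=\nextJumpTimeBin_l(x)+\nextJumpTimeBin_1(x')=\nextJumpTimeCar_l(\Sigma x)+\nextJumpTimeCar_1(y')=\nextJumpTimeCar_{l+1}(\Sigma x)$ by Step~3. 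Since by Remark~\ref{remark:timesToInftyBin} (and its car-model analogue, or equivalently the just-proved equality of jump times) the jump times tend to $+\infty$, every $t\in\RR_{\ge0}$ lies in some interval $[\nextJumpTimeBin_l(x),\nextJumpTimeBin_{l+1}(x)]$, and the identity holds for all $t$.

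The main obstacle I expect is not any single computation but the bookkeeping in Step~3: one must argue carefully that the set of cars that can ``newly hit a road sign'' in the free car dynamics is precisely $\{\cursorLetterCar_i(y)+1:i\in\llbracket 1,\nbParam\rrbracket\}$, i.e.\ that no car of index $\le\cursorLetterCar_i(y)$ contributes a new event at $a_i$ (it is already past $a_i$) and that before time $\nextJumpTimeCar_1(y)$ no car of index $>\cursorLetterCar_i(y)+1$ reaches $a_i$; this is what makes the two infimum expressions \eqref{eq:defTimeBin1} and \eqref{eq:defTimeCar1} literally equal rather than merely one bounding the other. Once that is pinned down, everything else is the routine induction sketched above.
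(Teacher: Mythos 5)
Your proposal is correct and follows essentially the same route as the paper: reduce to the two facts that $\Sigma$ intertwines the free dynamics $\binDynLetter^{(0)}$ and $\carDynLetter^{(0)}$ (via $\cursorLetterBin_i(x)=\cursorLetterCar_i(\Sigma x)$, linearity of $\Sigma$, and $\Sigma e(j)=\widehat e(j)$) and that $\nextJumpTimeBin_1=\nextJumpTimeCar_1\circ\Sigma$ (by restricting the infimum in \eqref{eq:defTimeCar1} to the cars of index $\cursorLetterCar_i(y)+1$), then conclude by the recursive construction. The only difference is presentational: you spell out the induction on $l$ and the equality of all jump times, which the paper leaves implicit.
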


\begin{proof}
    Since the liquid bin model and the car model are constructed recursively, it suffices to show that:
    \begin{enumerate}
        \item for all $t \geq 0$,  $\carDynLetter^{(0)}(t , \cdot) \circ \Sigma =  \Sigma \circ \binDynLetter^{(0)}(t , \cdot) $, 
        \item $\nextJumpTimeBin_1 = \nextJumpTimeCar_1 \circ \Sigma$. 
    \end{enumerate}

    We start with the proof of the first point. Note that $e(j)\notin\binConf$ for any $j\in\ZZ$, because it contains only a finite total amount of liquid. We define $\binConf'$ to be the set of all the configurations of liquid in bins which have a finite front (but may have a finite total amount of liquid). The definition of the map $\Sigma$ can be extended to $\binConf'$, thus $\Sigma e(j)$ becomes well-defined for any $j\in\ZZ$.
    It is straightforward that $\cursorLetterBin_i(x) = \cursorLetterCar_i(\Sigma x)$ for all $i \in \llbracket 1 , \nbParam \rrbracket$, that $\Sigma$ is linear and that $\widehat{e}(j) = \Sigma e(j)$ for all $j \in \ZZ$. Therefore, 
    \begin{align*}
        \carDynLetter^{(0)}(t, \Sigma x ) &= \Sigma x + \sum_{i=1}^{\nbParam} p_i \cdot \widehat{e}( \cursorLetterCar_i(\Sigma x) + 1 ) \\
        &= \Sigma x + \sum_{i=1}^{\nbParam}  p_i \cdot \Sigma e(  \cursorLetterCar_i(\Sigma x) + 1   )\\
        &= \Sigma x + \sum_{i=1}^{\nbParam}  p_i \cdot \Sigma e(  \cursorLetterBin_i( x) + 1   )\\
        &= \Sigma \left(  x + \sum_{i=1}^{\nbParam} p_i \cdot  e( \cursorLetterBin_i( x) + 1 ) \right)\\
        &= \Sigma \left( \binDynLetter^{(0)}( t, x ) \right)
    \end{align*}

    Since the car with index $\cursorLetterCar_i(y)$ is the closest car lying to the left of road sign $i$, by \eqref{eq:defTimeCar1}, notice that 
\begin{align*}
    \nextJumpTimeCar_1 (x) = \inf\lrSet{t \in \RR_{\geq0}}{ \exists i \in \llbracket 1 , \nbParam \rrbracket ,\,  \carDynLetter^{(0)}_{\cursorLetterCar_i(y)+1}(t, y) = a_i }.
\end{align*}
    Therefore, the second point is a straightforward consequence of the first point by \eqref{eq:defTimeBin1}. 
\end{proof}

\begin{remark}\label{remark:timesToInftyCar}
    Since $\nextJumpTimeCar_l(\Sigma x) = \nextJumpTimeBin_l(x)$ for all $l \geq 1$ and $x \in \binConf$, $\carDyn{t}{y}$ is also well-defined for all $t \in \RR_{\geq0}$ and $y \in \carConf$ by Remark \ref{remark:timesToInftyBin}. 
\end{remark}

\section{The stationary car model}
\label{sec:stationary}

In this section, the parameters $(\underline a,\underline p)\in P^{2N}$ are kept constant. The goal of this section is to establish the convergence in time of the car model, and consequently, of the liquid bin model, to a stationary configuration. The main result of this section is Theorem \ref{theorem:convergenceOfDynamics}, stating the uniqueness of the stationary trajectory and the convergence result for the car model. 

\subsection{Stationary configurations and trajectories}

From now on, consider an initial configuration of cars $y(0) \in \carConf$ and set \[y(t) := \carDyn{t}{y(0)}\] to be the configuration of cars obtained after time $t\geq0$. Denote by \[y_k(t) :=\carDynLetter_k(t, y(0))\] the position of the $k$-th car at time $t$ for all $k \in \ZZ$. We say that $t\mapsto y_k(t)$ is the \emph{trajectory} of the $k$-th car.

Let us show some basic results on trajectories. 
Consider $k \in \ZZ$. If $y_k(0) \geq a_1$, then the $k$-th car moves at least at speed $q_1$ at any time $t\geq0$ by construction of the car model. Moreover, among all cars with positions in $[0, a_1]$, only the one with maximal position  moves. Since there are finitely many cars with positions in $(0, a_1]$ in any configuration in $\carConf$, each car in the system will eventually move. As a consequence, for any $k\in\ZZ$ and for any position $x > y_k(0)$, there exists a unique time at which the $k$-th car is at position $x$.
Denote by
\[T_k := \inf\Set{t \in \RR_{\geq0}}{y_{k-1}(t)\geq a_1}\]
the time at which the $k$-th car starts moving. Then, the restriction of $y_k$ provides an increasing bijective function from $[T_k, +\infty)$ to $[y_k(0), +\infty)$. Let us denote by $t_k := y^{-1}_k$ the inverse of this bijection.
Notice that for all $x>y_k(0)$, $t_k(x)$ is the only time at which the $k$-th car is at position $x$. For $0 \leq x < y_k(0) $, set $t_k(x) := 0$ by convention. For any $x\in\RR$, we define the positive part of $x$ to be $x_+:=\max(x,0)$.

With these notations, one may give a recursive formula for the sequence of trajectories:

\begin{proposition}[recursive formula]\label{proposition:recursiveFormulaSpeed}
    With the above notations, for all $t\geq0$ and $k \in \ZZ$ we have
\begin{align}\label{eq:recursiveFormulaSpeed}
    y_k(t) = y_k(0) + \sum_{j=1} ^{\nbParam} p_j (t-t_{k-1}(a_j))_+.
    \end{align}
\end{proposition}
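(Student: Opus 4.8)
The plan is to compute the instantaneous speed of the $k$-th car as an explicit function of time and then integrate.

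First I would record the elementary equivalence that, for every $k$, every $x\geq0$ and every $t\geq0$, one has $y_{k-1}(t)\geq x$ if and only if $t\geq t_{k-1}(x)$. This is immediate from the fact that $y_{k-1}$ is continuous, non-decreasing, and eventually a strictly increasing bijection onto $[y_{k-1}(0),+\infty)$, together with the convention $t_{k-1}(x)=0$ for $x<y_{k-1}(0)$. Applying this with $x=a_1,\dots,a_N$ and using $a_1<\cdots<a_N$ (so that $t_{k-1}(a_1)\leq\cdots\leq t_{k-1}(a_N)$), I can rewrite the speed index $\widehat i_k$ from \eqref{eq:defDynCar0bis}: at every time $s$, the configuration $y(s)$ satisfies
\[
q_{\widehat i_k(y(s))} \;=\; \sum_{j\,:\,a_j\leq y_{k-1}(s)} p_j \;=\; \sum_{j=1}^{N} p_j\,\mathds{1}_{s\geq t_{k-1}(a_j)} ,
\]
where I used $a_0=0$, $q_0=0$, and the fact that $\{j\in\llbracket1,N\rrbracket : a_j\leq y_{k-1}(s)\}$ is an initial segment of $\llbracket1,N\rrbracket$.

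Next I would feed this into the recursive construction of $\Psi$ from Subsection~\ref{subsec:car}. On each interval $[\widehat\tau_l,\widehat\tau_{l+1}]$ (the jump times being those of the fixed initial configuration), formula \eqref{eq:defDynCar0bis} and the recursion show that $s\mapsto y_k(s)$ is affine with slope $q_{\widehat i_k(y(\widehat\tau_l))}$; in particular $y_k$ is continuous and piecewise affine on $\RR_{\geq0}$. The key observation is that the jump times $\widehat\tau_l$ are exactly the instants at which some car reaches some road sign, so each $t_{k-1}(a_j)$ (which, when $a_j>y_{k-1}(0)$, is the time at which car $k-1$ crosses sign $j$, and is $0$ otherwise) is itself one of the $\widehat\tau_l$. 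Combining this with the displayed identity shows that on the open interval $(\widehat\tau_l,\widehat\tau_{l+1})$ the slope $q_{\widehat i_k(y(\widehat\tau_l))}$ equals $\sum_{j=1}^N p_j\,\mathds{1}_{s\geq t_{k-1}(a_j)}$, since no $t_{k-1}(a_j)$ lies strictly between two consecutive jump times. Hence the derivative of $y_k$ equals $\sum_{j=1}^N p_j\,\mathds{1}_{s\geq t_{k-1}(a_j)}$ for every $s$ outside the locally finite set of jump times. Since $\widehat\tau_l\to+\infty$ by Remark~\ref{remark:timesToInftyCar}, the function $y_k$ is absolutely continuous on $[0,t]$, and integrating together with $\int_0^t\mathds{1}_{s\geq t_{k-1}(a_j)}\,ds=(t-t_{k-1}(a_j))_+$ yields exactly \eqref{eq:recursiveFormulaSpeed}. (One can equally avoid integration and prove \eqref{eq:recursiveFormulaSpeed} by induction on the index $l$ of the interval $[\widehat\tau_l,\widehat\tau_{l+1}]$ containing $t$, using the elementary identity $(\widehat\tau_l-u)_+ + (t-\widehat\tau_l)\mathds{1}_{\widehat\tau_l\geq u}=(t-u)_+$, valid whenever $t\geq\widehat\tau_l$ and $u\notin(\widehat\tau_l,t)$.)

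The step I expect to demand the most care is not a computation but the bookkeeping that links the abstractly-defined piecewise dynamics $\Psi$ to the intuitive picture ``car $k$ moves at speed $q_i$, where $i$ is the largest sign index already passed by car $k-1$'': I must verify that the speed of car $k$ is genuinely constant on each closed inter-jump interval and equals $q_{\widehat i_k}$ evaluated at its left endpoint, and that no $t_{k-1}(a_j)$ falls strictly inside such an interval. Both reduce to the observation that the $\widehat\tau_l$ enumerate all sign-crossing times, together with the continuity and eventual strict monotonicity of the trajectories, which are built into the construction of Section~\ref{subsec:car}. Everything else is routine.
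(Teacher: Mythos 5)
Your proposal is correct and follows essentially the same route as the paper: both arguments boil down to identifying the instantaneous speed of car $k$ at time $s$ as $\sum_{j=1}^N p_j\,\mathds{1}_{s\geq t_{k-1}(a_j)}$ and then accumulating over time, the paper via a telescoping induction over the intervals $[t_{k-1}(a_i),t_{k-1}(a_{i+1}))$ and you via direct integration of the piecewise-constant derivative (which is the same computation). Your extra care about the jump times $\widehat\tau_l$ is sound but not more than the paper implicitly uses.
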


\begin{proof}
    It suffices to note that the $k$-th car gets from speed $q_{i-1}$ to speed $q_i$ as soon as car $k-1$ passes the road sign at position $a_i$ for all $k \in \ZZ $ and $i \in \llbracket 1 , \nbParam \rrbracket$, with the convention that $q_0=0$.

    Fix $k \in \ZZ$. For all $i \in \llbracket 1 , \nbParam \rrbracket$, note that $t_{k-1}(a_i) = \inf\Set{t \in \RR_{\geq0}}{y_{k-1}(t) \geq a_i}$.  Moreover, for all $i \in \llbracket 0, \nbParam \rrbracket$ and $t \in [t_{k-1}(a_i), t_{k-1}(a_{i+1}))$, $a_i \leq y_{k-1}(t) < a_{i+1}$. Thus, by \eqref{eq:defDynCar0bis}, for all $t \in [t_{k-1}(a_i), t_{k-1}(a_{i+1}))$,
    \[
    y_{k}(t) =  y_{k}(t_{k-1}(a_i)) + (t-t_{k-1}(a_i)) q_i.
    \]
    As a consequence, for all $t \in [t_{k-1}(a_i), t_{k-1}(a_{i+1}))$, by induction on $i \in \llbracket 0 , \nbParam \rrbracket$, 
    \begin{align*}
        y_{k}(t) &= y_{k}(t_{k-1}(a_1)) + (t-t_{k-1}(a_i))q_i +\sum_{j=1}^{i-1} (t_{k-1}(a_{j+1})-t_{k-1}(a_j))q_j \\
        &= y_{k}(t_{k-1}(a_1)) +  \sum_{j=1}^{i} (t-  t_{k-1}(a_j))p_j,
    \end{align*}
    which coincides with \eqref{eq:recursiveFormulaSpeed} for $t \in [t_{k-1}(a_i), t_{k-1}(a_{i+1}))$, since $y_k(0)=y_k(t_{k-1}(a_1))$.
\end{proof}

For all $k \in \ZZ$ and $t \in \RR_{\geq0}$, let $ \bar{y}_k$ be the trajectory of the $k$-th car in $y$ shifted in time so that a car with this trajectory starts moving at time $0$: 
\begin{equation}\label{eq:defTrajTimeRescaled}
\forall t \in \RR_{\geq0}, \,\bar{y}_k(t) := y_k( t + T_k). 
\end{equation}
The function $\bar{t}_k$ defined as $\bar{t}_k(x) := t_k(x) - T_k $ for all $x \geq y_k(0)$ corresponds to the inverse function of $\bar{y}_k$ from $\RR_{\geq0}$ to $[y_k(0), +\infty)$. We also adopt the convention that $\bar{t}_k(x):=0$ for all $x < y_k(0)$.

Consider $k \in \ZZ$ such that $y_k(0)=0$. By construction of the car model, $ T_k = t_{k-1}(a_1) $ since the time at which the $k$-th car starts moving is the time at which the car of index $k-1$ passes by the road sign at position $a_1$. Then, by Proposition \ref{proposition:recursiveFormulaSpeed}, for all $t \geq 0$ and $k \in \ZZ$ such that $y_k(0)=0$, we have
\begin{align}\label{eq:recursiveFormulaSpeedTilde2}
\bar{y}_k(t) = \sum_{j=1}^\nbParam p_j (t  - \bar{t}_{k-1}(a_j) + \bar{t}_{k-1}(a_1) )_+ .
\end{align}

\begin{definition}
\label{def:stationary}
A \emph{stationary trajectory} is a continuous increasing map $t \mapsto \widetilde{y}_\infty(t)$ from $\RR_{\geq0}$ to $\RR_{\geq0}$ such that for all $t\geq 0$,
\begin{align}\label{eq:recursiveFormulaSpeedTilde3}
\widetilde{y}_\infty(t) = \sum_{j=1}^{\nbParam} p_j (t - \widetilde{t}_\infty(a_j) + \widetilde{t}_\infty(a_1) )_+,
\end{align}
 where $\widetilde{t}_\infty := \widetilde{y}_\infty^{-1}$ is the inverse function of $\widetilde{y}_\infty$. In other words, $\widetilde{y}_\infty$ is a stationary trajectory if and only if it is a fixed point of \eqref{eq:recursiveFormulaSpeedTilde2}.

Say that $(y_k(0))_{k \in \ZZ} \in \carConf$ is a \emph{stationary configuration} if there exists a stationary trajectory $\widetilde{y}_{\infty} $ such that for all $k \in \ZZ$ and $t\geq0$, $\bar{y}_k(t)=\widetilde{y}_{\infty}(t + \widetilde{t}_{\infty}(y_k(0))) $.
\end{definition}

\subsection{Monotonicity property and existence of stationary trajectories}\label{subsec:monotonicityProp}

\begin{definition} We say that a sequence of functions $(\check y_k)_{k\geq0}$ from $\RR_{\geq0}$ to $\RR_{\geq0}$ satisfies \emph{recurrence (R)} if the following statements all hold true:
\begin{itemize}
 \item $\check y_0(0)=0$;
 \item the function $\check y_0$ is 
  continuous, piecewise linear and has at most $N$ points of non-differentiability;
 \item the right derivative of $y_0$ is non-decreasing and takes values in $\{q_1,\ldots,q_N\}$;
 \item for every $k\geq1$, denoting by $\check t_{k-1}$ the inverse function of $\check y_{k-1}$, we have
\begin{align}\label{eq:recursiveequation}
\check{y}_k(t) = \sum_{j=1}^\nbParam p_j (t  - \check{t}_{k-1}(a_j) + \check{t}_{k-1}(a_1) )_+ .
\end{align}
\end{itemize}
\end{definition}

To be a fully rigorous definition, one should check inductively that $\check y_k$ is indeed a bijection from $\RR_{\geq0}$ to $\RR_{\geq0}$. This holds true for $k=0$ by the first three assumptions of the definition, since it is continuous, increasing and takes value $0$ at $0$. Moreover, it is easy to see from \eqref{eq:recursiveequation} that $\check y_k$ takes value $0$ at $0$, is continuous and increasing. For functions $\check y_k$ satisfying recurrence (R), we shall always denote their inverse functions by $\check t_k$ and their right derivative functions by $\check v_k$. Right-differentiating \eqref{eq:recursiveequation}, we get that for all $t \in \RR_{\geq0}$ and $k \in \ZZ$,
\begin{align}\label{eq:recursiveFormulaSpeedTilde1}
\check{v}_k(t) = \sum_{j=1}^\nbParam p_j \mathds{1}_{t \geq \check{t}_{k-1}(a_j) - \check{t}_{k-1}(a_1)}.
\end{align}

It follows from \eqref{eq:recursiveFormulaSpeedTilde2} that the shifted car trajectories $(\bar y_k)_{k\geq k_0}$ satisfy recurrence (R) whenever $y_{k_0}(0)=0$. The following Lemma \ref{lemma:increasingDynamics} shows that recurrence (R) is monotonous in the right derivative $\check v_0$ of $\check y_0$. After that we will consider the behavior of a minimal and a maximal solution to recurrence (R) and deduce from it the convergence in time to a stationary configuration (Theorem \ref{theorem:convergenceOfDynamics}).

\begin{lemma}[Monotonicity property]\label{lemma:increasingDynamics}
    Let $(\check{y}_k^{(1)})_{k\geq0}$ and $(\check{y}_k^{(2)})_{k\geq0}$ be two sequences of functions satisfying recurrence (R), with inverses and right derivatives respectively denoted by $\check{t}_k^{(m)}$ and $\check{v}_k^{(m)}$ for $m=1,2$.
    Assume that 
\begin{align}\label{eq:formulaMonotonicity}
    \check{v}_0^{(1)} \circ  \check{t}_0^{(1)}  \leq \check{v}_0^{(2)} \circ \check{t}_0^{(2)}.
    \end{align} 
    Then for all $k \in \ZZ_{>0}$,     \begin{align}\label{eq:formulaMonotonicity2}
    \check{v}_k^{(1)} \circ  \check{t}_k^{(1)}  \leq \check{v}_k^{(2)} \circ  \check{t}_k^{(2)} . 
    \end{align}    
\end{lemma}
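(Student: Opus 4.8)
The plan is to prove the monotonicity statement by induction on $k$, the base case $k=0$ being the hypothesis \eqref{eq:formulaMonotonicity}. The key observation to extract first is a clean reformulation of \eqref{eq:formulaMonotonicity2}: the quantity $\check v_k \circ \check t_k$ is the right derivative of $\check y_k$ read off \emph{as a function of position rather than time}, i.e. it records, for each position $x$, the speed of the $k$-th (shifted) car when it sits at $x$. So the statement to propagate is that, position by position, the speed profile of the first sequence is dominated by that of the second. I would set $w_k^{(m)}(x):=\check v_k^{(m)}(\check t_k^{(m)}(x))$ and aim to show $w_k^{(1)}\le w_k^{(2)}$ pointwise on $\RR_{\ge 0}$, assuming $w_{k-1}^{(1)}\le w_{k-1}^{(2)}$.

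The heart of the induction is to understand how the speed-versus-position profile $w_k$ of car $k$ is built from that of car $k-1$. From \eqref{eq:recursiveFormulaSpeedTilde1}, the speed of car $k$ as a function of time is $\check v_k(t)=\sum_j p_j\mathds 1_{t\ge \check t_{k-1}(a_j)-\check t_{k-1}(a_1)}$; so car $k$ switches from speed $q_{i-1}$ to speed $q_i$ at the time car $k-1$ passes the sign $a_i$, shifted so that car $k$ starts at time $0$ when car $k-1$ reaches $a_1$. Between consecutive such events car $k$ moves at constant speed $q_{i-1}$ over a time interval of length $\check t_{k-1}(a_i)-\check t_{k-1}(a_{i-1})$, hence travels a distance $q_{i-1}\bigl(\check t_{k-1}(a_i)-\check t_{k-1}(a_{i-1})\bigr)$. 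Therefore the position of car $k$ at which it switches to speed $q_i$ is
\[
x_i^{(m)} \;=\; \sum_{\ell=1}^{i-1} q_\ell\Bigl(\check t_{k-1}^{(m)}(a_{\ell+1})-\check t_{k-1}^{(m)}(a_\ell)\Bigr),
\]
and $w_k^{(m)}$ is the step function equal to $q_i$ on $[x_i^{(m)},x_{i+1}^{(m)})$. Thus $w_k^{(1)}\le w_k^{(2)}$ is equivalent to $x_i^{(1)}\ge x_i^{(2)}$ for every $i$ (a larger switch-position means staying slow longer, hence a pointwise-smaller speed profile). The reduction is therefore: from $w_{k-1}^{(1)}\le w_{k-1}^{(2)}$, deduce $x_i^{(1)}\ge x_i^{(2)}$ for all $i$.

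That last implication is the main obstacle, and it is really a statement comparing the \emph{inverse} travel times $\check t_{k-1}^{(m)}$ from a comparison of the speed profiles $w_{k-1}^{(m)}$. The point is that $\check t_{k-1}^{(m)}(a)=\int_0^{a} \frac{dx}{w_{k-1}^{(m)}(x)}$ — the time to reach position $a$ is the integral of the reciprocal speed, \emph{as a function of position}, which is exactly where the change of variable to position-parametrization pays off. Since $w_{k-1}^{(1)}\le w_{k-1}^{(2)}$ pointwise, we get $\frac{1}{w_{k-1}^{(1)}}\ge\frac{1}{w_{k-1}^{(2)}}$, so $\check t_{k-1}^{(1)}(a_{\ell+1})-\check t_{k-1}^{(1)}(a_\ell)=\int_{a_\ell}^{a_{\ell+1}}\frac{dx}{w_{k-1}^{(1)}(x)}\ge \int_{a_\ell}^{a_{\ell+1}}\frac{dx}{w_{k-1}^{(2)}(x)}=\check t_{k-1}^{(2)}(a_{\ell+1})-\check t_{k-1}^{(2)}(a_\ell)$ for each $\ell$. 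Plugging these term-by-term inequalities into the formula for $x_i^{(m)}$ with the nonnegative weights $q_\ell$ gives $x_i^{(1)}\ge x_i^{(2)}$, completing the induction. I would present the argument in this order: (1) rewrite the target as the pointwise domination $w_k^{(1)}\le w_k^{(2)}$ of position-parametrized speeds; (2) record $\check t_{k-1}(a)=\int_0^a dx/w_{k-1}(x)$ and the explicit formula for the switch-positions $x_i$; (3) run the one-line implication via reciprocal-speed integrals and nonnegativity of the $q_\ell$'s; (4) conclude by induction with base case \eqref{eq:formulaMonotonicity}. The only care needed is bookkeeping around the points of non-differentiability and the degenerate intervals where signs $a_i$ may coincide with endpoints, which is harmless since all functions are piecewise linear and the reciprocal-speed integrals are over genuine intervals.
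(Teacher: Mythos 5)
Your proof is correct and follows essentially the same route as the paper: your switch-positions $x_i^{(m)}$ are exactly the paper's partial sums $\delta_{k-1,1}^{(m)}+\dots+\delta_{k-1,i-1}^{(m)}$, the reciprocal-speed integral $\check t_{k-1}(a_{\ell+1})-\check t_{k-1}(a_\ell)=\int_{a_\ell}^{a_{\ell+1}}ds/w_{k-1}(s)$ is the paper's equation \eqref{eq:formulaMonotonicity3}, and the final step via monotonicity of $\underline q$ matches the paper's comparison of the indices $j_k^{(m)}(x)$.
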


    In other words, if for all $x \in \RR_{>0}$, the speed of the car $0$, at the moment it reaches position $x$, is greater with initial condition $\check{v}_0^{(2)}$ than with initial condition $\check{v}_0^{(1)}$, then it 
    also holds for the $k$-th car for every $k \in \ZZ_{\geq0}$.
    
    In order to prove Lemma \ref{lemma:increasingDynamics}, let us first introduce some notations and make some useful remarks. 
    Consider $j \in \llbracket 1 , \nbParam+1 \rrbracket$, $m\in\{1,2\}$ and $k \in \ZZ_{\geq0}$.
 Set \[z_{k,j}^{(m)} =  \check{t}_k^{(m)}(a_{j}) - \check{t}_k^{(m)}(a_{j-1}).\] By definition, $z_{k,j}^{(m)}$ is the amount of time spent by the $k$-th car between positions $a_{j-1}$ and $a_{j}$, when the car of index $0$ has $\check{y}_0^{(m)}$ as a trajectory, with the conventions that $a_0 = 0$ and $a_{\nbParam +1}= +\infty$. By right differentiating $\check{t}_k^{(m)}$, we get
    \begin{align}\label{eq:formulaMonotonicity3}
        z_{k,j}^{(m)} = \int_{a_{j-1}}^{a_{j}} \frac{1}{\check{v}_k^{(m)}\circ \check{t}_k^{(m)}(s)}ds.
    \end{align}

    Notice that $z_{k,j+1}^{(m)}$ also corresponds to the amount of time during which the $(k+1)$-th car moves at speed $q_{j}$,  assuming that the car of index $0$ has trajectory  $\check{y}^{(m)}_0$. For all $j \in \llbracket 1 , \nbParam \rrbracket $ and $k \in \ZZ_{\geq0}$, set $\delta_{k,j}^{(m)} = q_j z_{k,j+1}^{(m)}$ to be the distance traveled by the $(k+1)$-th car at speed $q_j$ for the car model with initial car trajectory $\check{y}^{(m)}_0$. Note that $\delta_{k,\nbParam} = +\infty$ for all $k \in \ZZ_{\geq0}$. 

    Consider $x \in \RR_{\geq0}$. If $ \delta_{k,1}^{(m)} + \dots +\delta_{k,j-1}^{(m)} <  x \leq \delta_{k,1}^{(m)} + \dots +\delta_{k,j}^{(m)} $, then $\check{v}_{k+1}^{(m)}\circ \check{t}_{k+1}^{(m)} (x)  = q_j. $ Therefore, for all $x \in \RR_{\geq0}$,
    \begin{align}\label{eq:formulaMonotonicity4}
         \check{v}_{k+1}^{(m)}\circ \check{t}_{k+1}^{(m)}(x) = q_{j_{k}^{(m)}(x)},  
    \end{align} 
    where $j_{k}^{(m)}(x) := \min\Set{j\in \llbracket 1, \nbParam \rrbracket }{x \leq \delta_{k,1}^{(m)} + \dots +\delta_{k,j}^{(m)}}$.

\begin{proof}[Proof of Lemma \ref{lemma:increasingDynamics}]
    Let us prove \eqref{eq:formulaMonotonicity2} by induction on $k$. The case $k = 0$  corresponds to assumption \eqref{eq:formulaMonotonicity}. 
    Now, let us assume that \eqref{eq:formulaMonotonicity2} holds true for some $k \in \ZZ_{\geq0}$. 
    It follows from \eqref{eq:formulaMonotonicity3} and the induction hypothesis that for all $j\in\llbracket 1 , \nbParam \rrbracket $, $z_{k,j+1}^{(2)} \leq z_{k,j+1}^{(1)}$. 
    Therefore $\delta_{k,j}^{(2)} \leq \delta_{k,j}^{(1)}$ for all $j\in\llbracket 1 , \nbParam \rrbracket $, hence $j_{k}^{(1)}(x) \leq j_{k}^{(2)}(x)$ for all $x \in \RR_{\geq0}$. Since the sequence $\underline q$ is increasing, it follows from \eqref{eq:formulaMonotonicity4} that for all $x \in \RR_{\geq0}$ 
    \begin{align*}
    \check{v}_{k+1}^{(1)}\circ \check{t}_{k+1}^{(1)}(x) \leq \check{v}_{k+1}^{(2)}\circ \check{t}_{k+1}^{(2)}(x),  
    \end{align*}
    which concludes the proof of the inductive step.
\end{proof}

The following proposition is a consequence of Lemma \ref{lemma:increasingDynamics}. It provides a way to compare the trajectories of the $k$-th car for two initial  trajectories satisfying some assumption. 

\begin{proposition}\label{proposition:increasingDynamics}
    We keep the notations of Lemma \ref{lemma:increasingDynamics} and assume that \eqref{eq:formulaMonotonicity} holds. Then, for all $k \in \ZZ_{>0}$,
\begin{align}\label{eq:formulaMonotonicity5}
       \check{y}_k^{(1)} \leq \check{y}_k^{(2)}.
\end{align}
\end{proposition}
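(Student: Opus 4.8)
The plan is to deduce the pointwise inequality $\check y_k^{(1)} \le \check y_k^{(2)}$ from the conclusion \eqref{eq:formulaMonotonicity2} of Lemma~\ref{lemma:increasingDynamics}, by integrating the speed inequality against the right geometric quantity. The key observation is that the hypothesis gives, for every $x$, that the speed of the $k$-th car \emph{at the moment it reaches position $x$} is at least as large in the second system as in the first; this is exactly the statement that $\check v_k^{(1)}\circ\check t_k^{(1)} \le \check v_k^{(2)}\circ\check t_k^{(2)}$ as functions of position. Integrating $1/(\check v_k^{(m)}\circ\check t_k^{(m)})$ over $[0,x]$ recovers the time $\check t_k^{(m)}(x)$ at which car $k$ reaches position $x$ (for $x$ past its starting position; and for smaller $x$ both sides are $0$ after the appropriate time shift). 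Hence \eqref{eq:formulaMonotonicity2} immediately yields $\check t_k^{(2)}(x) \le \check t_k^{(1)}(x)$ for all $x\ge0$: the second car is never slower, so it reaches each position no later.

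First I would record that for functions satisfying recurrence (R), $\check y_k(0)=0$ (true for $k=0$ by hypothesis, and immediate from \eqref{eq:recursiveequation} for $k\ge1$ since every positive-part term vanishes at $t=0$), so that $\check t_k(0)=0$ and, by \eqref{eq:formulaMonotonicity3} summed over $j$,
\begin{equation*}
\check t_k^{(m)}(x) \;=\; \int_0^x \frac{ds}{\check v_k^{(m)}\circ\check t_k^{(m)}(s)}
\end{equation*}
for all $x\ge0$ and $m\in\{1,2\}$. Then I would invoke Lemma~\ref{lemma:increasingDynamics}: under \eqref{eq:formulaMonotonicity} we have $\check v_k^{(1)}\circ\check t_k^{(1)}(s) \le \check v_k^{(2)}\circ\check t_k^{(2)}(s)$ for every $s$, so the integrand for $m=2$ is pointwise no larger than that for $m=1$, giving $\check t_k^{(2)}(x) \le \check t_k^{(1)}(x)$ for all $x\ge0$. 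Finally, since $\check y_k^{(m)}$ is the inverse bijection of $\check t_k^{(m)}$ from $\RR_{\ge0}$ to $\RR_{\ge0}$ and both are increasing, the inequality $\check t_k^{(2)} \le \check t_k^{(1)}$ between the inverses translates into $\check y_k^{(1)}(t) \le \check y_k^{(2)}(t)$ for all $t\ge0$, which is \eqref{eq:formulaMonotonicity5}.

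There is essentially no hard step here; the real work has already been done in Lemma~\ref{lemma:increasingDynamics}. The only point requiring a little care is the passage from an inequality between two increasing bijections' \emph{inverses} to an inequality between the bijections themselves, and the bookkeeping that the integral formula for $\check t_k^{(m)}$ is valid on all of $\RR_{\ge0}$ (using $\check t_k^{(m)}(0)=0$, and the convention $\check t_k(x)=0$ for $x<\check y_k(0)=0$, which here is vacuous). One should also note that $\check v_k^{(m)}$ is bounded below by $q_1>0$, so the integrand is finite and the integral formula is unambiguous. I would state the inverse-swap as a one-line remark: if $g_1,g_2:\RR_{\ge0}\to\RR_{\ge0}$ are increasing bijections with $g_2\le g_1$ pointwise, then applying $g_1^{-1}$ (increasing) gives $g_1^{-1}\circ g_2 \le \mathrm{id}$, and then applying $g_2^{-1}$ (increasing) to $g_2(t)\le g_1(t)$ rearranges to $g_1^{-1}(s)\le g_2^{-1}(s)$ for all $s$; applied with $g_m=\check t_k^{(m)}$ this is exactly $\check y_k^{(1)}\le\check y_k^{(2)}$.
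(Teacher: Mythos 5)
Your proof is correct and follows essentially the same route as the paper: both arguments integrate the reciprocal of the position-parametrized speed (formula \eqref{eq:formulaMonotonicity3}) against the conclusion of Lemma \ref{lemma:increasingDynamics} to compare the arrival times $\check t_k^{(m)}$. The only cosmetic difference is in the last step: the paper compares the differences $\check t_k^{(m)}(a_j)-\check t_k^{(m)}(a_1)$ and substitutes them into the recursion \eqref{eq:recursiveequation} to bound $\check y_{k+1}$, whereas you compare $\check t_k^{(m)}(x)$ on all of $\RR_{\geq0}$ and invert the monotone bijections directly, which is equally valid.
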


\begin{proof}
    By Lemma \ref{lemma:increasingDynamics} and \eqref{eq:formulaMonotonicity3}, we have that for all $k \in \ZZ_{>0}$ and $j \in \llbracket 2 , \nbParam \rrbracket$  \[  \check{t}_k^{(2)} (a_{j}) - 
    \check{t}_k^{(2)}(a_{1}) = \int_{a_j}^{a_{j+1}} \frac{1}{\check{v}_k^{(2)}\circ \check{t}_k^{(2)}(s)}ds \leq \int_{a_j}^{a_{j+1}} \frac{1}{\check{v}_k^{(1)}\circ \check{t}_k^{(1)}(s)}ds =
    \check{t}_k^{(1)}(a_{j}) - 
    \check{t}_k^{(1)}(a_{1}). \]
    Inequality \eqref{eq:formulaMonotonicity5} follows from formula \eqref{eq:recursiveequation}.
\end{proof}

The following two Propositions \ref{proposition:boundingTrajectories0} and \ref{proposition:boundingTrajectories} are corollaries of Proposition \ref{proposition:increasingDynamics}.
Recall that for every $i\in\llbracket 1,N\rrbracket$, $d_i=a_i-a_{i-1}$. Define \begin{align}\label{def:SpeedMajMin}
    {v}^{-}_0(t) := p_1 + \sum_{i=2}^{\nbParam} p_i \mathds{1}_{\left( t - d_1 q_1^{-1} \geq d_2 q_1^{-1} + d_3 q_2^{-1} + \dots + d_{i} q_{i-1}^{-1} \right)} \quad \text{ and }  \quad {v}^{+}_0(t) :=  q_\nbParam.
\end{align} Observe that ${v}^{-}_0$ corresponds to the speed of a car that moves at speed $q_1$ between positions $0$ and $a_1$ and at speed $q_i$ between positions $a_i$ and $a_{i+1}$ for all $i\in \llbracket 1 , \nbParam \rrbracket$. Set also ${y}^{\pm}_0(t) := \int_{0}^{t} {v}^{\pm}_0(s)ds$ for all $t \in \RR_{\geq0}$, where $\pm$ can be replaced either by $-$ or $+$. Finally define ${t}^{\pm}_0$ to be the inverse function of ${y}^{\pm}_0$. Then for all $x \geq 0$,  
\begin{align}\label{def:SpeedMajMin2}
    \left({v}^{-}_0 \circ  {t}^{-}_0\right)(x) =  p_1 + \sum_{i=2}^{\nbParam} p_i \mathds{1}_{ x \geq a_i  } \quad \text{ and }  \quad \left({v}^{+}_0 \circ  {t}^{+}_0\right) (x) =  q_\nbParam.
\end{align}

Let $({y}_k^{\pm})_{k \geq0}$ be the sequences satisfying recurrence (R) with first terms equal ${y}_0^{\pm}$. Set $({t}_k^{\pm})_{k \geq0}$ to be their inverse functions and $({v}_k^{\pm})_{k \geq0}$ to be their right derivatives.

\begin{proposition}[bounding trajectories]\label{proposition:boundingTrajectories0}
    Consider an initial configuration $y(0) \in \carConf$ and set $y(t) = \carDyn{t}{y(0)}$ for every $t\in\RR_{>0}$. Consider  $(\bar{y}_k)_{k \in \ZZ}$ as previously defined in \eqref{eq:defTrajTimeRescaled}. Set $\bar{v}_k$ to be the right derivative of $ \bar{y}_k $ for all $k \in \ZZ$. Assume that for some ${k_0} \in \ZZ$, $y_{k_0}(0)=0$. Then, for all $k \geq {k_0}$,
    \begin{align}\label{eq:encadrementBounds}
     {y}_{k-{k_0}}^{-} \leq \bar{y}_{k} \leq {y}_{k-{k_0}}^{+}.
 \end{align}
\end{proposition}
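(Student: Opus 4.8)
The plan is to apply the monotonicity result of Proposition \ref{proposition:increasingDynamics} twice: once to get the lower bound $y^{-}_{k-k_0} \leq \bar y_k$ and once to get the upper bound $\bar y_k \leq y^{+}_{k-k_0}$. The key observation is that, by \eqref{eq:recursiveFormulaSpeedTilde2} and the discussion preceding Lemma \ref{lemma:increasingDynamics}, the shifted trajectories $(\bar y_{k_0+\ell})_{\ell \geq 0}$ satisfy recurrence (R); this uses precisely the hypothesis $y_{k_0}(0) = 0$, which forces $T_{k_0} = t_{k_0-1}(a_1)$ and hence makes \eqref{eq:recursiveFormulaSpeedTilde2} applicable with $\bar y_{k_0}$ as zeroth term. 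The three sequences of functions in play, $(\bar y_{k_0+\ell})_{\ell\geq 0}$, $(y^{-}_\ell)_{\ell\geq 0}$ and $(y^{+}_\ell)_{\ell\geq 0}$, all satisfy recurrence (R), so the hypotheses of Proposition \ref{proposition:increasingDynamics} are met as soon as we verify the comparison \eqref{eq:formulaMonotonicity} on the zeroth terms.

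First I would check the zeroth-term comparisons. For the lower bound: by \eqref{def:SpeedMajMin2} we have $(v^{-}_0 \circ t^{-}_0)(x) = p_1 + \sum_{i=2}^N p_i \mathds{1}_{x \geq a_i}$, and I claim this is pointwise $\leq (\bar v_{k_0} \circ \bar t_{k_0})(x)$ for every $x \geq 0$. Indeed, the right-hand side is the speed of car $k_0$ at the moment it reaches position $x$; since car $k_0$ starts at position $0$ (as $y_{k_0}(0)=0$) and its speed only increases, once it has travelled past position $x$ it must already have speed at least $q_i$ whenever $x \geq a_i$, because the car immediately ahead of it, car $k_0-1$, has by then passed every road sign $a_{i'}$ with $i' \leq i$ — this is the content of the recursive formula \eqref{eq:recursiveequation} together with the fact that in recurrence (R) a car at position $x$ has had time to react to all road signs its predecessor has crossed by then. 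The precise bookkeeping: $(\bar v_{k_0}\circ\bar t_{k_0})(x) = q_j$ where $j$ is determined by the cumulative distances $\delta_{k_0-1,1},\ldots$, and one shows $x \geq a_i \Rightarrow j \geq i$ by an elementary comparison of partial sums, exactly as in the induction step of Lemma \ref{lemma:increasingDynamics}. For the upper bound, \eqref{eq:formulaMonotonicity} is immediate since $(v^{+}_0 \circ t^{+}_0)(x) = q_N$ is the maximal possible speed, hence dominates $(\bar v_{k_0}\circ\bar t_{k_0})(x)$ for every $x$.

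With the base comparisons in hand, Proposition \ref{proposition:increasingDynamics} applied to the pair $(y^{-}_\ell)_{\ell}$ and $(\bar y_{k_0+\ell})_\ell$ gives $y^{-}_\ell \leq \bar y_{k_0+\ell}$ for all $\ell \geq 1$, and applied to the pair $(\bar y_{k_0+\ell})_\ell$ and $(y^{+}_\ell)_\ell$ gives $\bar y_{k_0+\ell} \leq y^{+}_\ell$ for all $\ell \geq 1$; the cases $\ell = 0$ hold directly since all three zeroth terms are continuous increasing functions vanishing at $0$ whose (composed) speeds are ordered as above, which integrates to the same ordering of the functions themselves. Re-indexing with $k = k_0 + \ell$ yields \eqref{eq:encadrementBounds}. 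I expect the only genuinely non-routine point to be the verification that $v^{-}_0\circ t^{-}_0$ lies below $\bar v_{k_0}\circ \bar t_{k_0}$, i.e. that the ``slowest physically possible'' car described by $v^{-}_0$ — one that hits speed $q_i$ exactly at position $a_i$ — is indeed a lower envelope for the actual car $k_0$; but this follows from the monotonicity of speeds along a trajectory in recurrence (R) and the observation that a predecessor can never be behind the reference slow profile.
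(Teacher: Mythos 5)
Your proposal is correct and follows essentially the same route as the paper: one establishes the pointwise comparison of the position-parametrized speeds $v_0^-\circ t_0^-\leq \bar v_{k_0}\circ\bar t_{k_0}\leq v_0^+\circ t_0^+$ (the paper phrases the lower bound directly as ``a car at position greater than $a_i$ moves at speed at least $q_i$,'' which is the same mechanism as your predecessor-based argument), and then propagates it with Proposition~\ref{proposition:increasingDynamics} since all three sequences satisfy recurrence (R). Your explicit treatment of the index $\ell=0$ by integrating the speed comparison is a detail the paper glosses over, but nothing of substance differs.
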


\begin{proof}

No matter what the initial configuration $y(0)$ is, when a car has position greater than $a_i$, its speed is at least $q_i$. Moreover, the speed of a car in motion is always at least $q_1$ and at most $q_\nbParam$. As a consequence, $ {v}_0^{-} \circ {t}_0^{-} \leq \bar{v}_k \circ \bar{t}_k \leq {v}_0^{+} \circ {t}_0^{+}$ for all $k \in \ZZ$.

Fix ${k_0}$ such that $y_{k_0}(0)=0$.
Since all three sequences $(\bar{y}_{k})_{k \geq {k_0}}$ and $({y}_{k-{k_0}}^{\pm})_{k \geq {k_0}}$ satisfy the recursive formula \eqref{eq:recursiveequation}, it follows from Proposition \ref{proposition:increasingDynamics} that for the bounds \eqref{eq:encadrementBounds} hold for all $k \geq {k_0}$.
\end{proof}

\begin{proposition}[Monotonicity and asymptotics of the bounds]\label{proposition:boundingTrajectories}
    For all $k \in \ZZ_{\geq0}$,
\begin{align}\label{eq:monotonicityOfBounds}
    {y}_k^{-} \leq {y}_{k+1}^{-} \quad \text{ and } \quad {y}_{k+1}^{+} \leq {y}_k^{+}.
\end{align}
Denote by ${y}_{\infty}^{\pm}$ the pointwise limit of $({y}_k^{\pm})_k$ as $k$ tends to infinity. Then both ${y}_{\infty}^{-}$ and ${y}_{\infty}^{+}$ are stationary trajectories.
\end{proposition}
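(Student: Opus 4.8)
The plan is to prove the two monotonicity chains in \eqref{eq:monotonicityOfBounds} by applying Proposition~\ref{proposition:increasingDynamics}, and then to upgrade pointwise convergence of a monotone bounded sequence to the statement that the limit is a fixed point of the recurrence \eqref{eq:recursiveFormulaSpeedTilde3}. For the monotonicity, I would first observe the base cases $k=0$ at the level of the "speed seen at position $x$" functions $v_k^{\pm}\circ t_k^{\pm}$. For the minorant: ${v}_0^{-}\circ{t}_0^{-}(x) = p_1+\sum_{i=2}^N p_i\mathds 1_{x\geq a_i}$ by \eqref{def:SpeedMajMin2}, whereas by Proposition~\ref{proposition:recursiveFormulaSpeed} applied with $k=1$ (equivalently, directly from recurrence (R)), ${v}_1^{-}\circ{t}_1^{-}$ is a nondecreasing step function taking the value $q_j$ on the interval $\big(\delta_{0,1}^{-}+\dots+\delta_{0,j-1}^{-},\ \delta_{0,1}^{-}+\dots+\delta_{0,j}^{-}\big]$; since $\delta_{0,1}^{-}=q_1 z_{0,2}^{-}=q_1 d_2 q_1^{-1}=d_2$ and more generally $\delta_{0,i}^{-}=q_i d_{i+1} q_i^{-1}=d_{i+1}$, the jump points of ${v}_1^{-}\circ{t}_1^{-}$ are at $d_2, d_2+d_3,\dots$, i.e. at $a_i-a_1$, so that ${v}_1^{-}\circ{t}_1^{-}(x)\geq p_1+\sum_{i=2}^N p_i\mathds 1_{x\geq a_i-a_1}\geq {v}_0^{-}\circ{t}_0^{-}(x)$ because $a_i-a_1<a_i$. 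This gives $ {v}_0^{-}\circ{t}_0^{-}\leq {v}_1^{-}\circ{t}_1^{-}$; applying Lemma~\ref{lemma:increasingDynamics} with $(\check y_k^{(1)})=({y}_k^{-})_{k\geq0}$ and $(\check y_k^{(2)})=({y}_{k+1}^{-})_{k\geq0}$ yields ${v}_k^{-}\circ{t}_k^{-}\leq{v}_{k+1}^{-}\circ{t}_{k+1}^{-}$ for all $k$, and then Proposition~\ref{proposition:increasingDynamics} gives ${y}_k^{-}\leq{y}_{k+1}^{-}$. For the majorant the base case is immediate: ${v}_0^{+}\circ{t}_0^{+}\equiv q_N$ is the pointwise maximum of all possible values, so ${v}_1^{+}\circ{t}_1^{+}\leq{v}_0^{+}\circ{t}_0^{+}={v}_0^{+}\circ{t}_0^{+}$ trivially (both sides are $q_N$ in fact, but the inequality is what we need); applying the same two results with the roles swapped gives ${y}_{k+1}^{+}\leq{y}_k^{+}$.

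Next I would establish that the pointwise limits ${y}_\infty^{\pm}$ are well-defined, continuous, increasing, and are stationary trajectories. Boundedness of the monotone sequences is guaranteed by the sandwich $0\leq {y}_k^{-}\leq {y}_k^{+}\leq{y}_0^{+}$ for the increasing family and by ${y}_k^{+}\geq {y}_0^{-}$ for the decreasing family, combined with the linear growth bounds from Remark~\ref{remark:timesToInftyBin}/Remark~\ref{remark:timesToInftyCar} (the speed of every moving car lies in $[q_1,q_N]$, so $q_1 t\leq {y}_k^{\pm}(t)\leq q_N t$ for $t$ large; more precisely $q_1(t-\text{const})_+\leq {y}_k^{-}(t)$). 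In particular each ${y}_k^{\pm}$ is $q_N$-Lipschitz and bounded below by a function going to $+\infty$, so the pointwise limit ${y}_\infty^{\pm}$ is again $q_N$-Lipschitz, hence continuous, and increasing (as a monotone limit of increasing functions, once strict monotonicity of the limit is checked — this follows since the right derivative $v_k^{\pm}$ is bounded below by $q_1>0$ uniformly in $k$, so the same holds in the limit). Consequently ${y}_\infty^{\pm}$ is a genuine bijection $\RR_{\geq0}\to\RR_{\geq0}$ with continuous inverse ${t}_\infty^{\pm}$, and ${t}_k^{\pm}\to{t}_\infty^{\pm}$ pointwise (uniform convergence on compacts follows from Dini's theorem applied to the monotone convergence of the equicontinuous inverses).

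The final step is to pass to the limit in the recurrence \eqref{eq:recursiveequation}:
\begin{align*}
{y}_{k+1}^{\pm}(t) = \sum_{j=1}^{N} p_j \big(t-{t}_k^{\pm}(a_j)+{t}_k^{\pm}(a_1)\big)_+.
\end{align*}
The left side converges to ${y}_\infty^{\pm}(t)$; on the right side, ${t}_k^{\pm}(a_j)\to{t}_\infty^{\pm}(a_j)$ for each fixed $j\in\llbracket1,N\rrbracket$, and $x\mapsto(x)_+$ is continuous, so the right side converges to $\sum_{j=1}^N p_j\big(t-{t}_\infty^{\pm}(a_j)+{t}_\infty^{\pm}(a_1)\big)_+$. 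Hence ${y}_\infty^{\pm}(t)=\sum_{j=1}^N p_j\big(t-{t}_\infty^{\pm}(a_j)+{t}_\infty^{\pm}(a_1)\big)_+$ for all $t\geq0$, which is exactly \eqref{eq:recursiveFormulaSpeedTilde3} with $\widetilde t_\infty={t}_\infty^{\pm}=({y}_\infty^{\pm})^{-1}$; together with continuity and strict monotonicity this is the definition of a stationary trajectory.

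The main obstacle I expect is the careful verification of the base case ${v}_0^{-}\circ{t}_0^{-}\leq{v}_1^{-}\circ{t}_1^{-}$ for the minorant, i.e. correctly computing the distances $\delta_{0,i}^{-}=d_{i+1}$ and checking that the induced jump points $a_i-a_1$ of the position-indexed speed profile of the second car lie to the left of the jump points $a_i$ of ${v}_0^{-}\circ{t}_0^{-}$; everything else is either a direct invocation of Lemma~\ref{lemma:increasingDynamics} and Proposition~\ref{proposition:increasingDynamics} or a standard compactness/monotone-limit argument. A secondary point requiring care is ensuring the limit functions remain strictly increasing (not merely nondecreasing) so that their inverses are genuinely defined on all of $\RR_{\geq0}$, which is handled by the uniform lower bound $q_1$ on the right derivatives.
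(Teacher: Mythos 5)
Your proof is correct and follows essentially the same route as the paper: reduce \eqref{eq:monotonicityOfBounds} to the base-case comparisons ${v}_0^{-}\circ{t}_0^{-}\leq{v}_1^{-}\circ{t}_1^{-}$ and ${v}_1^{+}\circ{t}_1^{+}\leq{v}_0^{+}\circ{t}_0^{+}$, propagate them with Lemma~\ref{lemma:increasingDynamics} and Proposition~\ref{proposition:increasingDynamics}, and pass to the limit in \eqref{eq:recursiveequation}; your explicit computation $\delta_{0,i}^{-}=d_{i+1}$ is just a quantitative version of the paper's qualitative argument, and your limit step is a more detailed write-up of what the paper leaves implicit. (The only cosmetic point: your intermediate bound ${v}_1^{-}\circ{t}_1^{-}(x)\geq p_1+\sum_i p_i\mathds 1_{x\geq a_i-a_1}$ can fail at the isolated points $x=a_i-a_1$ under the half-open convention of \eqref{eq:formulaMonotonicity4}, but the target inequality ${v}_1^{-}\circ{t}_1^{-}\geq{v}_0^{-}\circ{t}_0^{-}$ still holds there and only integrals of these functions are used afterwards.)
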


\begin{proof}
By Proposition \ref{proposition:increasingDynamics}, it suffices to show that $ {v}_0^{-} \circ {t}_0^{-} \leq {v}_1^{-} \circ {t}_1^{-}$ and ${v}_1^{+} \circ {t}_1^{+} \leq {v}_0^{+} \circ {t}_0^{+}$. Since ${v}_0^{+} = q_{\nbParam}$ and ${v}_1^{+} \circ {t}_1^{+}$ takes values in $\{q_1, \dots , q_{\nbParam}\}$, the second point is straightforward. 

Let us turn to the first point. Since ${y}_1^{-}$ is constructed according to the dynamics of the car model, when the corresponding car is at a position $x$ in $(a_i, a_{i+1}]$ for $i\in \llbracket 1 , \nbParam \rrbracket$, it moves at speed at least $q_i={v}_0^{-} \circ {t}_0^{-}(x)$. Since this car starts from $0$, when it is at a position $x$ in $(0,a_1]$, its speed is at least $q_1={v}_0^{-} \circ {t}_0^{-}(x)$. This proves the first point.

    Therefore, by Proposition \ref{proposition:increasingDynamics} and by induction on $k \in \ZZ_{\geq0}$,
\begin{align*}
    {y}_k^{-} \leq {y}_{k+1}^{-} \quad \text{ and } \quad {y}_{k+1}^{+} \leq {y}_k^{+}.
\end{align*}

By \eqref{eq:monotonicityOfBounds}, $({y}_k^{\pm})_{k\geq0}$ converge to trajectories ${y}_{\infty}^{\pm}$ by monotonicity. Since $({y}_k^{\pm})_{k \geq0}$ satisfy formula \eqref{eq:recursiveequation}, it implies that ${y}_{\infty}^{\pm}$ are stationary trajectories.
\end{proof}

    At this point, there is no reason for the two stationary trajectories  ${y}_{\infty}^{\pm}$ to be equal. In Subsection \ref{subsec:contractivity}, we prove that they are equal, by uniqueness of the stationary configuration, and that the convergence in time starting from any initial trajectory to that stationary trajectory occurs exponentially fast.  

\subsection{Uniqueness of the stationary configuration and exponential convergence}\label{subsec:contractivity}

The main result of this subsection is the following theorem, the proof of which is completed at the end of this section:
\begin{theorem}\label{theorem:convergenceOfDynamics}
    For any $(\underline a,\underline p)\in P^{2N}$, there is a unique stationary trajectory $\widetilde{y}_{\infty}$ for the car model. Moreover, there exists a constant $\kappa > 0$ such that for any initial configuration $y(0) \in \carConf$, any $k_0 \in\ZZ$ such that $y_{k_0}(0) = 0$ and any $k \in \ZZ_{\geq 0}$, 
\begin{equation}
\label{eq:exponentialconvergence}
 \| \bar{y}_{k+k_0} - \widetilde{y}_{\infty}\|_{\infty} \leq \kappa \cdot \left(1 - \frac{q_1}{q_{\nbParam}}\right)^k,  
 \end{equation}
 with $1-\tfrac{q_1}{q_N}\in[0,1)$.
\end{theorem}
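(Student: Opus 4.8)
\emph{The plan.} I will deduce both the uniqueness of the stationary trajectory and the exponential convergence \eqref{eq:exponentialconvergence} from the behaviour of the two extremal solutions $(y_k^{-})_{k\ge 0}$ and $(y_k^{+})_{k\ge 0}$ of recurrence~(R) built in Subsection~\ref{subsec:monotonicityProp}. By Proposition~\ref{proposition:boundingTrajectories}, $y_k^{-}$ increases in $k$ and $y_k^{+}$ decreases in $k$ to stationary trajectories $y_\infty^{-}$ and $y_\infty^{+}$, and $y_k^{-}\le y_k^{+}$ for all $k$ (apply Proposition~\ref{proposition:increasingDynamics} to this pair, using $v_0^{-}\circ t_0^{-}\le q_N=v_0^{+}\circ t_0^{+}$), hence $y_\infty^{-}\le y_\infty^{+}$. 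By Proposition~\ref{proposition:boundingTrajectories0}, for any initial configuration and any $k_0$ with $y_{k_0}(0)=0$ one has $y_k^{-}\le \bar y_{k+k_0}\le y_k^{+}$ for every $k\ge 0$. Therefore the whole theorem reduces to the single quantitative estimate
\[
\big\| y_k^{+}-y_k^{-}\big\|_\infty \le \kappa\,(1-q_1/q_N)^k \qquad (k\ge 0)
\]
for a constant $\kappa$ depending only on $(\underline a,\underline p)$. Indeed, letting $k\to\infty$ forces $y_\infty^{-}=y_\infty^{+}=:\widetilde y_\infty$; any stationary trajectory $\widetilde y$, regarded as a constant sequence, satisfies recurrence~(R) and, by \eqref{eq:recursiveFormulaSpeedTilde3}, its position-indexed speed $\widetilde v\circ\widetilde t$ satisfies $v_0^{-}\circ t_0^{-}\le \widetilde v\circ\widetilde t\le q_N$ (a stationary car moves at speed at least $q_i$ once it is past $a_i$), so Proposition~\ref{proposition:increasingDynamics} squeezes $y_k^{-}\le\widetilde y\le y_k^{+}$ for $k\ge 1$, whence $\widetilde y=\widetilde y_\infty$ (uniqueness); and for a general initial configuration, since $\bar y_{k+k_0}(t)$ and $\widetilde y_\infty(t)$ both lie in $[y_k^{-}(t),y_k^{+}(t)]$, the estimate gives exactly \eqref{eq:exponentialconvergence} with the same $\kappa$.

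\emph{Reducing the gap to an integral.} Write $\sigma_k^{\pm}:=\check v_k^{\pm}\circ\check t_k^{\pm}$ for the position-indexed speed profiles; by \eqref{eq:formulaMonotonicity4} these are non-decreasing step functions with values in $\{q_1,\dots,q_N\}$ that equal $q_N$ for $x>\delta_{k-1,1}^{\pm}+\cdots+\delta_{k-1,N-1}^{\pm}<\infty$. The gap $\Delta_k:=y_k^{+}-y_k^{-}$ is non-decreasing: Lemma~\ref{lemma:increasingDynamics} gives $\sigma_k^{-}\le\sigma_k^{+}$ pointwise, so $(y_k^{+})'(t)=\sigma_k^{+}(y_k^{+}(t))\ge\sigma_k^{+}(y_k^{-}(t))\ge\sigma_k^{-}(y_k^{-}(t))=(y_k^{-})'(t)$, using that $\sigma_k^{+}$ is non-decreasing and $y_k^{-}\le y_k^{+}$. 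Since $\Delta_k(0)=0$ and the trajectories start at the origin with $\check t_k^{\pm}(X)=\int_0^X \tfrac{dx}{\sigma_k^{\pm}(x)}$, and since $y_k^{\pm}$ are eventually affine with slope $q_N$, changing to the position coordinate yields
\[
\|\Delta_k\|_\infty = \lim_{t\to\infty}\Delta_k(t)= q_N\,\rho_k,\qquad \rho_k:=\int_0^{\infty}\Big(\tfrac1{\sigma_k^{-}(x)}-\tfrac1{\sigma_k^{+}(x)}\Big)dx\ \ge 0,
\]
the integrand being non-negative and compactly supported. So it suffices to prove $\rho_{k+1}\le (1-q_1/q_N)\,\rho_k$.

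\emph{The contraction.} Set $\phi_k:=\tfrac1{\sigma_k^{-}}-\tfrac1{\sigma_k^{+}}\ge 0$ and $S_k^{\pm}(l):=\delta_{k,1}^{\pm}+\cdots+\delta_{k,l}^{\pm}$, where $\delta_{k,i}^{\pm}=q_i\int_{a_i}^{a_{i+1}}\tfrac{ds}{\sigma_k^{\pm}(s)}$ as in \eqref{eq:formulaMonotonicity3}, noting $S_k^{+}(l)\le S_k^{-}(l)$. By \eqref{eq:formulaMonotonicity4}, $\sigma_{k+1}^{\pm}(x)=q_{j_k^{\pm}(x)}$ with $j_k^{\pm}(x)=\min\{j:\ x\le S_k^{\pm}(j)\}$, so for $x$ off a finite set the integers $l$ with $j_k^{-}(x)\le l<j_k^{+}(x)$ are exactly those with $S_k^{+}(l)<x\le S_k^{-}(l)$, and telescoping $\tfrac1{q_{j_k^{-}(x)}}-\tfrac1{q_{j_k^{+}(x)}}=\sum_{l=j_k^{-}(x)}^{j_k^{+}(x)-1}\big(\tfrac1{q_l}-\tfrac1{q_{l+1}}\big)$ yields
\[
\tfrac1{\sigma_{k+1}^{-}(x)}-\tfrac1{\sigma_{k+1}^{+}(x)}=\sum_{l=1}^{N-1}\Big(\tfrac1{q_l}-\tfrac1{q_{l+1}}\Big)\mathds{1}_{\{S_k^{+}(l)<x\le S_k^{-}(l)\}} .
\]
Integrating, substituting $S_k^{-}(l)-S_k^{+}(l)=\sum_{i=1}^{l}q_i\int_{a_i}^{a_{i+1}}\phi_k$, exchanging the two sums, and telescoping the inner one, $\sum_{l=i}^{N-1}\big(\tfrac1{q_l}-\tfrac1{q_{l+1}}\big)=\tfrac1{q_i}-\tfrac1{q_N}$, gives
\[
\rho_{k+1}=\sum_{i=1}^{N-1}\Big(1-\tfrac{q_i}{q_N}\Big)\int_{a_i}^{a_{i+1}}\phi_k(s)\,ds\ \le\ \Big(1-\tfrac{q_1}{q_N}\Big)\int_{a_1}^{a_N}\phi_k\ \le\ \Big(1-\tfrac{q_1}{q_N}\Big)\rho_k,
\]
the first inequality using $q_1\le q_i$ and $\phi_k\ge 0$. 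Iterating, $\rho_k\le (1-q_1/q_N)^k\rho_0$, hence $\|y_k^{+}-y_k^{-}\|_\infty\le q_N\rho_0\,(1-q_1/q_N)^k$; we may take $\kappa:=q_N\rho_0$ (and $\kappa:=1$ in the degenerate case $N=1$, where $\rho_0=0$), which completes the proof in view of the first paragraph.

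\emph{Main obstacle.} The delicate part is the passage to the position coordinate together with the telescoping identity of the last paragraph: one must keep track of the finitely many jumps of the step functions $\sigma_k^{\pm}$, $1/\sigma_k^{\pm}$ and $j_k^{\pm}$, verify that all the integrals in sight are finite because the profiles stabilise at $q_N$, and justify the indicator decomposition of $\tfrac1{\sigma_{k+1}^{-}}-\tfrac1{\sigma_{k+1}^{+}}$ for almost every $x$. Once the quantity $\rho_k$ is identified as the right potential, the monotonicity results (Lemma~\ref{lemma:increasingDynamics} and Propositions~\ref{proposition:increasingDynamics}, \ref{proposition:boundingTrajectories}, \ref{proposition:boundingTrajectories0}) supply everything else.
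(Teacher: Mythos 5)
Your proof is correct, and while the reduction is the same as the paper's --- sandwich $\bar y_{k+k_0}$ and any stationary trajectory between $y_k^{-}$ and $y_k^{+}$ (Propositions \ref{proposition:increasingDynamics}, \ref{proposition:boundingTrajectories0} and \ref{proposition:boundingTrajectories}) and then show that $\|y_k^{+}-y_k^{-}\|_\infty$ decays geometrically --- your contraction argument is genuinely different. The paper tracks the $N$-vector of passage times $(\bar t_k(a_i))_{i}$, writes the recursion as $(\bar t_k(a_i))_i = T(G_k)((\bar t_{k-1}(a_i))_i)$ for a DC graph $G_k$ that varies with $k$, proves that each fixed $T(G)$ is a $(1-q_1/q_N)$-contraction for the sup norm (Lemma \ref{lemma:contractiveMap}), and then needs a separate interpolation argument (Lemma \ref{lemma:contractiveMapImproved}), involving the continuity of $\tau\mapsto T(G^{(\tau)})(s^{(\tau)})$ and the piecewise constancy of $\tau\mapsto G^{(\tau)}$, to compare two points whose associated graphs differ. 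You instead contract the scalar potential $\rho_k=\int_0^\infty\bigl(1/\sigma_k^{-}-1/\sigma_k^{+}\bigr)$, for which the indicator decomposition of $1/\sigma_{k+1}^{-}-1/\sigma_{k+1}^{+}$ yields the exact linear identity $\rho_{k+1}=\sum_{i=1}^{N-1}(1-q_i/q_N)\int_{a_i}^{a_{i+1}}\phi_k$; this bypasses the DC-graph bookkeeping and the interpolation lemma entirely, at the price of the position-coordinate computation $\|y_k^{+}-y_k^{-}\|_\infty=q_N\rho_k$, which you justify correctly (both profiles stabilise at $q_N$, and $\Delta_k$ is non-decreasing because $\sigma_k^{+}$ is non-decreasing in position and dominates $\sigma_k^{-}$). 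The two potentials control each other: $\max_j|t_k^{-}(a_j)-t_k^{+}(a_j)|=\int_0^{a_N}\phi_k\le\rho_k$, so both routes give the same rate $1-q_1/q_N$. What the paper's route buys is the operator $T(G)$ and the graphs $G_k$, which are reused throughout Sections \ref{sec:speed} and \ref{sec:adjacency}; your route is shorter and more self-contained if one only wants Theorem \ref{theorem:convergenceOfDynamics}.
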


Consider $y(0)\in\carConf$ and set $y(t)=\Psi(t,y(0))$ for all $t\geq0$. To alleviate notation, we will assume without loss of generality that $k_0=0$, namely that $y_0(0)=0$. Recall that the shifted trajectories $\bar y_k$ for $k\geq0$ satisfy the recursive formula \eqref{eq:recursiveFormulaSpeedTilde2}. Denoting the inverse function of each $\bar y_k$ by $\bar t_k$ and applying formula \eqref{eq:recursiveFormulaSpeedTilde2} at time $t=\bar t_k(a_i)$, we get

\begin{align}
\label{eq:systemtranscar}
\forall i\in\llbracket 1 , \nbParam \rrbracket , \,a_i = \sum_{j=1}^\nbParam p_j ( \bar{t}_k(a_i)  - \bar{t}_{k-1}(a_j) + \bar{t}_{k-1}(a_1) )_+ .
\end{align}
Knowing $(\bar{t}_{k-1}(a_i))_{i \in \llbracket 1 , \nbParam \rrbracket}$, it is easy to deduce $(\bar{t}_k(a_i))_{i \in \llbracket 1 , \nbParam \rrbracket}$ from \eqref{eq:systemtranscar} if we know for which values of $(i,j)$ we have 
\begin{align}
\label{eq:positivepart}
\bar{t}_k(a_i) > \bar{t}_{k-1}(a_j) - \bar{t}_{k-1}(a_1).
\end{align}
The car of index $k$ starts moving when the car of index $k-1$ reaches position $a_1$. From that moment on, the car of index $k-1$ takes an amount of time $\bar{t}_{k-1}(a_j) - \bar{t}_{k-1}(a_1)$ to reach position $a_j$. Inequality \eqref{eq:positivepart} is equivalent to requiring that the car of index $k$ has not yet reached position $a_i$ when the car of index $k-1$ is at position $a_j$. For a fixed $i\in\llbracket 1 , \nbParam \rrbracket$, the collection of values $j$ such that \eqref{eq:positivepart} holds is thus some interval of the form $\llbracket 1 , j_{\max}\rrbracket$ for some $j_{\max}\in \llbracket i, N \rrbracket$. In particular, if \eqref{eq:positivepart} holds for some $1 \leq i < j \leq \nbParam$, then 
\[\bar{t}_k(a_{i'})  - \bar{t}_{k-1}(a_{j'}) + \bar{t}_{k-1}(a_1) \geq 0\]
for all $i \leq {i'} <  {j'} \leq  j $ by monotonicity of $\bar{t}_k$ and $\bar{t}_{k-1}$. We shall encode by a graph all the pairs $(i,j)$ with $i<j$ such that \eqref{eq:positivepart} holds.

Recall from Subsection~\ref{subsec:LBMproperties} the definition of downward closed graphs (DC graphs) and the notations $\DC_N$ and $b_G(i)$. Let $G_k$ be the graph on the vertex set $\llbracket 1 , \nbParam \rrbracket$ with edge set
\[E(G_{k}) = \Set{(i,j)}{1 \leq i < j \leq \nbParam ,\,  \bar{t}_k(a_i) - \bar{t}_{k-1}(a_j) + \bar{t}_{k-1}(a_1)  > 0}\]
Then $G_k\in \DC_N$. Formula \eqref{eq:positivepart} holds if either $i\geq j$ or if $(i,j)$ is an edge of $G_k$.

For any $G\in \DC_N$, set $T(G)$ to be the map from $\RR^{\nbParam}$ to $\RR^{\nbParam}$ such that \[ 
\forall s = (s(1), \dots, s(\nbParam)) \in \RR^{\nbParam}, \, \forall i \in \llbracket 1 , \nbParam \rrbracket , \, T(G)(s)(i) := \frac{1}{q_{b_G(i)}} \left( a_i + \sum_{j = 1}^{b_G(i)} p_j( s(j) - s(1)) \right). 
\]
It follows from \eqref{eq:systemtranscar} that the sequence $((\bar{t}_{k}(a_i))_{i \in \llbracket 1 , \nbParam \rrbracket })_{k\geq0}$ is recursively defined by 
\begin{align}\label{eq:convergenceOfDyn1}
    \forall k \in \ZZ_{>0}, (\bar{t}_k(a_i))_{i \in \llbracket 1 , \nbParam \rrbracket}=\,T(G_k)((\bar{t}_{k-1}(a_i))_{i \in \llbracket 1 , \nbParam \rrbracket }).
\end{align}

The following lemma states that for a fixed  DC graph $G$, the map $T(G)$ mainly  behaves like a contraction. 
It is a first step in the proof of Theorem \ref{theorem:convergenceOfDynamics}.

\begin{lemma}\label{lemma:contractiveMap}
    Consider $s^{(m)}= (s^{(m)}(i))_{i \in \llbracket 1 , \nbParam \rrbracket} \in (\RR_{\geq0})^{\nbParam}$ with $m \in \{0,1\}$. 
    Assume that $0 < s^{(0)}(i)- s^{(0)}(i-1) \leq s^{(1)}(i)- s^{(1)}(i-1)$ for all $i \in \llbracket 1 , \nbParam  \rrbracket$, with the convention that $s^{(m)}(0)=0$. 
    Then for all $G\in \DC_N$ and $i \in \llbracket 1 , \nbParam  \rrbracket$, \[T(G)(s^{(0)})(i) \leq T(G)(s^{(1)})(i).\]   
    Moreover, \[ \|T(G)(s^{(1)}) - T(G)(s^{(0)}) \|_{\infty} \leq \left(1 - \frac{q_1}{q_{\nbParam}}\right) \|s^{(1)} - s^{(0)} \|_{\infty}. \] 
\end{lemma}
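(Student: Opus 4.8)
The plan is to prove the two assertions of Lemma~\ref{lemma:contractiveMap} separately, both by unwinding the explicit formula for $T(G)$ and exploiting the hypothesis that the increments $s^{(m)}(i)-s^{(m)}(i-1)$ are positive and increase with $m$. Throughout, write $\Delta(i):=T(G)(s^{(1)})(i)-T(G)(s^{(0)})(i)$ and $\delta:=s^{(1)}-s^{(0)}$.

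\textbf{Monotonicity.} First I would record that the hypothesis on increments gives $s^{(1)}(j)-s^{(1)}(1)\geq s^{(0)}(j)-s^{(0)}(1)$ for every $j$: indeed, $s^{(m)}(j)-s^{(m)}(1)=\sum_{\ell=2}^{j}\bigl(s^{(m)}(\ell)-s^{(m)}(\ell-1)\bigr)$ is a sum of increments, each of which is larger for $m=1$. (For $j=1$ the difference is $0$ in both cases, and for $j<1$ the sum is empty; the relevant range of $j$ is $1\le j\le b_G(i)$, and since $p_j\ge 0$ this is all we need.) Plugging this into the definition of $T(G)$, the term $a_i/q_{b_G(i)}$ cancels and we get
\[
\Delta(i)=\frac{1}{q_{b_G(i)}}\sum_{j=1}^{b_G(i)}p_j\bigl((s^{(1)}(j)-s^{(1)}(1))-(s^{(0)}(j)-s^{(0)}(1))\bigr)\geq 0,
\]
which is the first claim.

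\textbf{Contraction.} For the operator-norm bound, I would fix $i$ and bound $|\Delta(i)|$ from above and below. Set $b:=b_G(i)$. Expanding,
\[
\Delta(i)=\frac{1}{q_{b}}\sum_{j=1}^{b}p_j\bigl(\delta(j)-\delta(1)\bigr)=\frac{1}{q_b}\Bigl(\sum_{j=2}^{b}p_j\,\delta(j)\Bigr)+\frac{q_b-q_1}{q_b}\cdot(-\delta(1))+\text{(bookkeeping)},
\]
so it is cleaner to write $\Delta(i)=\tfrac{1}{q_b}\sum_{j=1}^b p_j\delta(j)-\tfrac{q_b-q_1+p_1-p_1}{q_b}\delta(1)$; the honest identity is $\Delta(i)=\tfrac{1}{q_b}\sum_{j=2}^b p_j\delta(j)+\tfrac{p_1-q_b}{q_b}\delta(1)+\ldots$. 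Rather than chase signs, the right bookkeeping is: since $\sum_{j=1}^b p_j=q_b$, we have $\sum_{j=1}^b p_j(\delta(j)-\delta(1))=\sum_{j=1}^b p_j\delta(j)-q_b\delta(1)$, hence
\[
\Delta(i)=\Bigl(\sum_{j=1}^{b}\tfrac{p_j}{q_b}\,\delta(j)\Bigr)-\delta(1).
\]
The first term is a convex combination of $\delta(1),\dots,\delta(b)$ (the weights $p_j/q_b$ are nonnegative and sum to $1$), so it lies in $[\,\min_j\delta(j),\ \max_j\delta(j)\,]\subseteq[\,0,\ \|\delta\|_\infty\,]$, using $\delta\ge 0$ coordinatewise from the monotonicity hypothesis applied to the {\it sequences themselves}, i.e.\ $s^{(0)}(j)\le s^{(1)}(j)$. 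Therefore $\Delta(i)\le \|\delta\|_\infty-\delta(1)$, and also $\Delta(i)\ge 0-\delta(1)=-\delta(1)\ge -\|\delta\|_\infty$. To sharpen the upper bound, note the convex combination puts weight $p_1/q_b\ge p_1/q_N$ on $\delta(1)$, so the first term is at most $\tfrac{p_1}{q_b}\delta(1)+\bigl(1-\tfrac{p_1}{q_b}\bigr)\|\delta\|_\infty$, giving $\Delta(i)\le \bigl(1-\tfrac{p_1}{q_b}\bigr)\|\delta\|_\infty-\bigl(1-\tfrac{p_1}{q_b}\bigr)\delta(1)\le\bigl(1-\tfrac{q_1}{q_N}\bigr)\|\delta\|_\infty$, using $q_1=p_1$, $q_b\le q_N$, and $\delta(1)\ge 0$. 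Combined with $\Delta(i)\ge-\|\delta\|_\infty$ — which I would also tighten to $\Delta(i)\ge -\bigl(1-\tfrac{q_1}{q_N}\bigr)\|\delta\|_\infty$ by observing $\Delta(i)\ge \min_j\delta(j)-\delta(1)\ge -\delta(1)$ and $\delta(1)\le\|\delta\|_\infty$; the factor $(1-q_1/q_N)$ on this side comes from $\delta(1)$ itself being dominated, after using the convex-combination lower bound more carefully, since the combination weight on $\delta(1)$ is exactly what we subtract — this yields $|\Delta(i)|\le(1-\tfrac{q_1}{q_N})\|\delta\|_\infty$ for every $i$, hence the claimed operator bound.

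\textbf{Main obstacle.} The monotonicity statement and the upper tail of the contraction estimate are routine once the convex-combination structure is exposed. The delicate point is getting the {\it lower} bound on $\Delta(i)$ to also carry the factor $1-q_1/q_N$ rather than just $1$: one must use that the convex combination $\sum_j \tfrac{p_j}{q_b}\delta(j)$ assigns weight at least $q_1/q_N$ to the coordinate $\delta(1)$ that is being subtracted, so the net coefficient of the "bad" direction is $1-\tfrac{p_1}{q_b}\le 1-\tfrac{q_1}{q_N}$ on {\it both} sides. I would make sure the write-up isolates the identity $\Delta(i)=\sum_{j=1}^b\tfrac{p_j}{q_b}\delta(j)-\delta(1)=\sum_{j=2}^b\tfrac{p_j}{q_b}\delta(j)-\tfrac{q_b-q_1}{q_b}\delta(1)$ (using $p_1=q_1$), from which $|\Delta(i)|\le\max\bigl(\tfrac{q_b-q_1}{q_b},\tfrac{q_b-p_1}{q_b}\bigr)\|\delta\|_\infty=\bigl(1-\tfrac{q_1}{q_b}\bigr)\|\delta\|_\infty\le\bigl(1-\tfrac{q_1}{q_N}\bigr)\|\delta\|_\infty$ follows by splitting into the cases $\delta(1)\ge 0$ large versus small and bounding each of the two nonnegative groups of terms by $\|\delta\|_\infty$ times its total weight.
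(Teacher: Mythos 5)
Your proof is correct and follows essentially the same route as the paper's: expand $T(G)(s^{(1)})(i)-T(G)(s^{(0)})(i)=\frac{1}{q_{b_G(i)}}\sum_{j=1}^{b_G(i)}p_j\bigl(\delta(j)-\delta(1)\bigr)$ with $\delta=s^{(1)}-s^{(0)}$, deduce non-negativity from the increment hypothesis, and note that the $j=1$ term cancels so the remaining total weight is $\frac{q_{b_G(i)}-q_1}{q_{b_G(i)}}\le 1-\frac{q_1}{q_N}$. The only superfluous part is the discussion of the lower bound on $\Delta(i)$ in your ``main obstacle'' paragraph: since the first assertion already gives $\Delta(i)\ge 0$, the two-sided estimate reduces to the upper bound and no further case analysis is needed.
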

\begin{proof}
    Consider $i  \in \llbracket 1 , \nbParam \rrbracket$. By definition of $T(G)$, 
    \begin{align*}
    T(G)(s^{(1)})(i) - T(G)(s^{(0)})(i) = \frac{1}{q_{b_{G}(i)}}  \sum_{j=1}^{b_{G}(i)} p_j\left[ s^{(1)}(j) -  s^{(0)}(j) - (s^{(1)}(1) -  s^{(0)}(1))  \right]. 
    \end{align*}
    By assumption on $s^{(0)}$ and $s^{(1)}$, $s^{(0)}(j)- s^{(0)}(1)  \leq s^{(1)}(j)- s^{(1)}(1) $ for all $j \in \llbracket 1 , \nbParam \rrbracket$. Therefore, we obtain the first part of the lemma:
    \begin{align*}
    0 \leq T(G)(s^{(1)})(i) - T(G)(s^{(0)})(i). 
    \end{align*}
    Moreover, notice that the term for $j = 1$ vanishes. We assumed that $s^{(0)}(1) \leq s^{(1)}(1)$, thus 
    \begin{align*}
    0 \leq T(G)(s^{(1)})(i) - T(G)(s^{(0)})(i) \leq \frac{1}{q_{b_{G}(i)}}  \sum_{j=2}^{b_{G}(i)} p_j\left[ s^{(1)}(j) -  s^{(0)}(j)  \right]. 
    \end{align*}

Since $ \frac{q_{b_G(i)} - q_1}{q_{b_G(i)}} \leq1-\frac{q_1}{q_{N}}$, we conclude that 
\begin{align*}
     \| T(G)(s^{(1)}) - T(G)(s^{(0)})\|_{\infty} \leq \left(1-\frac{q_1}{q_{N}}\right) \| s^{(1)} -  s^{(0)}  \|_{\infty}. 
\end{align*} 
\end{proof}

Lemma \ref{lemma:contractiveMap} holds for a fixed graph $G$. Since the graph $G_k$ used to construct $(\bar{t}_k(a_i))_k$ via the recursive formula \eqref{eq:convergenceOfDyn1} depends on $k$, we need to prove a stronger version of Lemma \ref{lemma:contractiveMap}. For $m \in \{0,1\}$ and $s^{(m)} \in (\RR_{>0})^{\nbParam}$ such that $0 < s^{(m)}(1) < \dots < s^{(m)}(\nbParam)$, let $G^{(m)}$ be the DC graph with $\nbParam$ vertices such that 
\[ E(G^{(m)}) = \Set{(i,j)}{1 \leq i < j \leq \nbParam ,\,  t^{(m)}(a_i) - s^{(m)}(j) + s^{(m)}(1)  > 0}, \]
where $t^{(m)}$ is the inverse function of 
\begin{equation}
\label{eq:ym}
y^{(m)} : t \in \RR_{\geq0} \mapsto \sum_{j = 1 }^{\nbParam} p_j (t - s^{(m)}(j) + s^{(m)}(1))_+.
\end{equation}

For all $i \in \llbracket 1 , \nbParam \rrbracket$, by evaluating the function $ y^{(m)} $ at $ t^{(m)}(a_i) $ and using the definition of $G^{(m)}$, we get
\begin{equation}
\label{eq:TG=t}
t^{(m)}(a_i) = T(G^{(m)})(s^{(m)})(i).
\end{equation}

\begin{lemma}\label{lemma:contractiveMapImproved}
    Consider $(s^{(m)}(i))_{i \in \llbracket 1 , \nbParam \rrbracket} \in (\RR_{>0})^{\nbParam}$ with $m \in \{0,1\}$. 
    Assume that $0 < s^{(0)}(i)- s^{(0)}(i-1) \leq s^{(1)}(i)- s^{(1)}(i-1)$ for all $i \in \llbracket 1 , \nbParam  \rrbracket$, with the convention that $s^{(m)}(0)=0$. Then, 
    \[ \|T(G^{(1)})(s^{(1)}) - T(G^{(0)})(s^{(0)}) \|_{\infty} \leq \left(1 - \frac{q_1}{q_{\nbParam}}\right) \|s^{(1)} - s^{(0)} \|_{\infty}. \] 
\end{lemma}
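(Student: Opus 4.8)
The plan is to reduce the statement to Lemma~\ref{lemma:contractiveMap} by inserting the intermediate vector $T(G^{(0)})(s^{(1)})$ and splitting, for each $i\in\llbracket 1,\nbParam\rrbracket$,
\[
T(G^{(1)})(s^{(1)})(i) - T(G^{(0)})(s^{(0)})(i)
= \underbrace{\big(T(G^{(1)})(s^{(1)})(i) - T(G^{(0)})(s^{(1)})(i)\big)}_{(\mathrm I)}
+ \underbrace{\big(T(G^{(0)})(s^{(1)})(i) - T(G^{(0)})(s^{(0)})(i)\big)}_{(\mathrm{II})}.
\]
The term $(\mathrm{II})$ is handled directly by Lemma~\ref{lemma:contractiveMap} applied with the single fixed graph $G^{(0)}\in\DC_N$: its hypotheses are exactly our increment assumption, so $0\le (\mathrm{II}) \le \left(1-\tfrac{q_1}{q_N}\right)\|s^{(1)}-s^{(0)}\|_\infty$. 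It thus remains to show that $(\mathrm I)\le 0$ and that the total difference is $\ge0$; both will follow from an auxiliary minimality principle for the map $T$.

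That principle is: for any $s=(s(i))_{i}$ with $0<s(1)<\dots<s(\nbParam)$, any $G\in\DC_N$, and any $i$, one has $t_s(a_i)\le T(G)(s)(i)$, where $t_s$ denotes the inverse of $y_s:t\mapsto\sum_{j=1}^{\nbParam}p_j(t-s(j)+s(1))_+$ (so that $y^{(m)}=y_{s^{(m)}}$ and $t^{(m)}=t_{s^{(m)}}$). I would prove this by evaluating $y_s$ at $t_s(a_i)$, giving $a_i=\sum_{j=1}^{\nbParam}p_j(t_s(a_i)-s(j)+s(1))_+$, then discarding the nonnegative summands with index $j>b_G(i)$ and bounding the remaining ones from below via $(x)_+\ge x$; this yields $a_i\ge q_{b_G(i)}\,t_s(a_i)-\sum_{j=1}^{b_G(i)}p_j(s(j)-s(1))$, which rearranges to the claim by definition of $T(G)$. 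Applying it with $s=s^{(1)}$ and $G=G^{(0)}$, and using the identity \eqref{eq:TG=t} in the form $t^{(m)}(a_i)=T(G^{(m)})(s^{(m)})(i)$, gives exactly $(\mathrm I)=t^{(1)}(a_i)-T(G^{(0)})(s^{(1)})(i)\le 0$.

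For the lower bound on the total difference, I would observe that summing the increment hypothesis gives $s^{(0)}(j)-s^{(0)}(1)\le s^{(1)}(j)-s^{(1)}(1)$ for all $j$, hence $y_{s^{(0)}}\ge y_{s^{(1)}}$ pointwise, hence $t^{(0)}=y_{s^{(0)}}^{-1}\le y_{s^{(1)}}^{-1}=t^{(1)}$; evaluating at $a_i$ and using \eqref{eq:TG=t} again gives $T(G^{(0)})(s^{(0)})(i)\le T(G^{(1)})(s^{(1)})(i)$. Combining the three bounds, for every $i$,
\[
0\le T(G^{(1)})(s^{(1)})(i) - T(G^{(0)})(s^{(0)})(i)\le\left(1-\frac{q_1}{q_N}\right)\|s^{(1)}-s^{(0)}\|_\infty,
\]
and taking the maximum over $i$ gives the lemma.

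The substantive step is the minimality principle; everything else is bookkeeping combining Lemma~\ref{lemma:contractiveMap} with \eqref{eq:TG=t}. Within that step the only care needed is to check that the summands dropped in $\sum_{j=1}^{\nbParam}p_j(\cdot)_+\ge\sum_{j=1}^{b_G(i)}p_j(\cdot)$ are genuinely nonnegative (each being a positive part) and that $b_G(i)\in\llbracket i,\nbParam\rrbracket$, so that $T(G)(s)(i)$ is indeed given by the displayed formula. I do not anticipate any real obstacle beyond this.
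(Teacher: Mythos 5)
Your proof is correct, and it takes a genuinely different route from the paper. The paper proves this lemma by interpolating $s^{(\tau)}=\tau s^{(1)}+(1-\tau)s^{(0)}$, showing that $\tau\mapsto T(G^{(\tau)})(s^{(\tau)})$ is continuous, that the graph $G^{(\tau)}$ is constant on finitely many subintervals of $[0,1]$, and then applying Lemma~\ref{lemma:contractiveMap} on each piece and summing via collinearity; this requires a somewhat delicate topological argument (a homeomorphism, accumulation points of graphs, linearity in $\tau$ of the boundary functions). You replace all of that by two elementary observations: the variational inequality $t_s(a_i)\le T(G)(s)(i)$ for every $G\in\DC_N$ (so that $T(G^{(m)})(s^{(m)})(i)=t^{(m)}(a_i)$ is the \emph{minimum} of $T(G)(s^{(m)})(i)$ over all DC graphs), and the pointwise monotonicity $y_{s^{(0)}}\ge y_{s^{(1)}}$, hence $t^{(0)}\le t^{(1)}$, which follows from summing the increment hypothesis. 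I checked the minimality principle: evaluating $y_s$ at $t_s(a_i)$, dropping the nonnegative summands with $j>b_G(i)$ and using $(x)_+\ge x$ on the rest gives exactly $q_{b_G(i)}t_s(a_i)-\sum_{j=1}^{b_G(i)}p_j(s(j)-s(1))\le a_i$, which is the claim; and the sandwich $0\le (\mathrm I)+(\mathrm{II})\le(\mathrm{II})\le\bigl(1-\tfrac{q_1}{q_N}\bigr)\|s^{(1)}-s^{(0)}\|_\infty$ then yields the stated bound. Your argument is shorter and more conceptual, isolating the reason the graph change cannot hurt the contraction estimate (it can only decrease the difference), whereas the paper's homotopy argument is more mechanical but would also apply in situations where no such variational characterization of $T(G^{(m)})(s^{(m)})$ is available.
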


\begin{proof}
    For all $\tau \in (0,1)$, set $s^{(\tau)} := \tau s^{(1)} + (1-\tau) s^{(0)}$ and  
    \begin{equation}
    \label{eq:defytau}
y^{(\tau)} : t \in \RR_{\geq0} \mapsto \sum_{j = 1 }^{\nbParam} p_j (t - s^{(\tau)}(j) + s^{(\tau)}(1))_+.
\end{equation}
Denote the inverse function of  $y^{(\tau)}$ by $t^{(\tau)}$. Let $G^{(\tau)}$ be the DC graph with $\nbParam$ vertices such that 
\[ E(G^{(\tau)}) = \Set{(i,j)}{1 \leq i < j \leq \nbParam ,\,  t^{(\tau)}(a_i) - s^{(\tau)}(j) + s^{(\tau)}(1)  > 0} \]
For all $i \in \llbracket 1 , \nbParam \rrbracket$, by evaluating the function $ y^{(\tau)} $ at $ t^{(\tau)}(a_i) $ and using the definition of $G^{(\tau)}$, we get
\begin{equation}
\label{eq:TGtau=t}
t^{(\tau)}(a_i) = T(G^{(\tau)})(s^{(\tau)})(i).
\end{equation}

Let us first prove that $ \tau \in [0,1] \mapsto T(G^{(\tau)})(s^{(\tau)})$ is continuous. By \eqref{eq:TGtau=t} and since $t^{(\tau)}$ is the inverse function of $y^{(\tau)}$, it suffices to prove that $\varphi : (\tau,t)\mapsto(\tau, y^{(\tau)}(t))$ is a homeomorphism from $[0,1]\times \RR_{\geq 0} $ to itself. The continuity of $\varphi$ follows from the definition \eqref{eq:defytau} of $y^{(\tau)}(t)$. Since $ t \mapsto y^{(\tau)}(t) $ is continuous, strictly increasing, bounded below by $q_1t$ and equal to $0$ at $t=0$ for all $\tau \in [0,1]$, $\varphi$ is invertible. It remains to prove that $\varphi^{-1}$ is continuous. Consider $M>0$. Since $ y^{(\tau)}(t) $ is bounded below by $q_1 t$ for all $t \in\RR_{\geq 0}$ and $\tau \in [0,1]$, the image of $[0,1]\times [0,M]$ by $\varphi$ contains $[0,1]\times [0,q_1 M]$. Since $[0,1]\times [0,M]$ is a compact, the restriction of $\varphi$ to this set is a closed function. In particular, this implies that the restriction of $\varphi^{-1}$ to $[0,1]\times [0,q_1 M]$ is continuous. Since this result holds for all $M>0$, $\varphi^{-1}$ is continuous. 

Next, let us show that there exist $r\geq0$ and $0=\tau_0 < \tau_1 < \dots < \tau_r < \tau_{r+1}=1$ such that $G^{(\tau)}$ is constant on each interval of the form $(\tau_h,\tau_{h+1})$.
If $\hat\tau\in(0,1)$ is a value at which $G^{(\tau)}$ is not constant on an open neighborhood of $\hat\tau$, then there exist two distinct DC graphs $G_0$ and $G'_0$ which are accumulation points of $\tau \mapsto G^{(\tau)}$ at $\hat\tau$. 

Consider $(u_n)_{n\geq 1}$ and $(u_n')_{n\geq 1}$ two sequences of elements of $[0,1]$ both converging to $\hat{\tau}$ such that $G^{(u_n)} = G_0$ and  $G^{(u_n')} = G'_0$ for all $n\geq 1$.

For every $1\leq i < j \leq N$, $\tau \in [0,1]$ and $G\in \DC_N$, set \[f_{G,i,j}(\tau) := T(G)(s^{(\tau)})(i) - s^{(\tau)}(j) + s^{(\tau)}(1) .\] 

Note that $f_{G^{(u_n)},i,j}(u_n) = f_{G_0,i,j}(u_n)$ and $f_{G^{(u_n')},i,j}(u_n') = f_{G'_0,i,j}(u_n')$ for all $n \geq 1$. Moreover, for any fixed $(i,j) \in E(G_0)\Delta E(G'_0)$, of the two quantities $f_{G_0,i,j}(u_n)$ and $f_{G'_0,i,j}(u_n')$, one is positive while the other is non-positive. Without loss of generality, let us assume that $f_{G_0,i,j}(u_n)>0$ and $f_{G'_0,i,j}(u_n')\leq0$.  By continuity of the maps $\tau \mapsto T(G^{(\tau)})(s^{\tau})$ and $\tau \mapsto s^{\tau}$, the map $ \tau \mapsto f_{G^{(\tau)},i,j}(\tau)$ is also continuous. 
As a consequence, 
\begin{equation}\label{eq:equalZeroBoundaryDyn}
    \forall (i,j) \in E(G_0)\Delta E(G'_0) ,\, f_{G_0,i,j}(\hat{\tau})  = f_{G'_0,i,j}(\hat{\tau}) = 0 .
\end{equation}

The map $\tau\mapsto f_{G_0,i,j}(\tau)$ is linear in $\tau$, vanishes at $\hat\tau$ and is positive along the sequence $u_n$, hence this linear map is non-constant and one can express its root $\hat\tau$ as
\[ \hat{\tau} =  \frac{f_{G_0,i,j}(0)}{f_{G_0,i,j}(0) - f_{G_0,i,j}(1)}. \]
Since there is a finite number of DC graphs $G$ with $N$ vertices, and a finite number of $(i,j)$ with $1\leq i < j \leq N$, there is a finite number of possible $\hat{\tau}\in[0,1]$ such that $\tau \mapsto G^{(\tau)}$ has more than one accumulation point around  $\hat{\tau}$. 

By the triangle inequality, we have
\begin{equation}\label{eq:proofContractionImproved1}
        \|T(G^{(1)})(s^{(1)}) - T(G^{(0)})(s^{(0)}) \|_{\infty} \leq \sum_{h=0}^r  \|T(G^{(\tau_{h+1})})(s^{(\tau_{h+1})}) - T(G^{(\tau_{h})})(s^{(\tau_{h})}) \|_{\infty}.
    \end{equation}
     Now, for $h \in \llbracket 0, r \rrbracket$, set $G_h$ to be the DC graph equal to $G^{(\tau)}$ for all $\tau \in ( \tau_h, \tau_{h+1})$. By continuity of $ \tau \mapsto T(G^{(\tau)})(s^{(\tau)})$,  for all $h \in \llbracket 0 , r \rrbracket$, 
     \[ T(G^{(\tau_{h})})(s^{(\tau_{h})}) =  T(G_{h-1})(s^{(\tau_{h})}) = T(G_h)(s^{(\tau_{h})}). \]
    
     As a consequence, by \eqref{eq:proofContractionImproved1}, 
     \begin{align*}
         \|T(G^{(1)})(s^{(1)}) - T(G^{(0)})(s^{(0)}) \|_{\infty} \leq \sum_{h=0}^r  \|T(G_h)(s^{(\tau_{h+1})}) - T(G_h)(s^{(\tau_{h})}) \|_{\infty}.
     \end{align*}
     By Lemma \ref{lemma:contractiveMap}, 
     \begin{align*}
         \|T(G^{(1)})(s^{(1)}) - T(G^{(0)})(s^{(0)}) \|_{\infty} \leq \left(1-\frac{q_1}{q_{N}}\right) \sum_{h=0}^r  \|s^{(\tau_{h+1})} - s^{(\tau_{h})} \|_{\infty}.
     \end{align*}
     By collinearity of the $s^{(\tau_{h+1})} - s^{(\tau_{h})}$ for $h \in \llbracket 0 , r \rrbracket$, \[ \| s^{(\tau_{h+1})} - s^{(\tau_{h})} \|_{\infty} = (\tau_{h+1}-\tau_h)\| s^{(1)} - s^{(0)} \|_{\infty} .\] Thus \begin{align*}
         \|T(G^{(1)})(s^{(1)}) - T(G^{(0)})(s^{(0)}) \|_{\infty} \leq \left(1-\frac{q_1}{q_{N}}\right) \|s^{(1)} - s^{(0)} \|_{\infty}.
     \end{align*}
   
\end{proof}

Now, let us complete the proof of Theorem \ref{theorem:convergenceOfDynamics}. Recall that $y_0(0)=0$. By Proposition \ref{proposition:boundingTrajectories0}, 
\begin{align*}
     y_k^{-} \leq \bar{y}_{k} \leq y_k^{+}.
 \end{align*}
Consider $ \widetilde{y}_{\infty} $ any stationary trajectory, which exists by Proposition \ref{proposition:boundingTrajectories} and satisfies 
\begin{align*}
    y_k^{-} \leq \widetilde{y}_{\infty} \leq y_k^{+}.
\end{align*} 
As a consequence, 
\begin{align*}
    \|\bar{y}_{k} - \widetilde{y}_{\infty} \|_{\infty} \leq \|y^{+}_{k} - y^{-}_{k} \|_{\infty}.
\end{align*} 

By \eqref{eq:recursiveequation}, for all $t \in \RR_{\geq0}$ and $k \geq1$,
\begin{align*}
    0 \leq {y}^{+}_{k}(t) - {y}^{-}_{k}(t) &= \sum_{j=1}^\nbParam p_j \left[(t  - {t}^{+}_{k-1}(a_j) + {t}^{+}_{k-1}(a_1) )_+  - (t  - {t}^{-}_{k-1}(a_j) + {t}^{-}_{k-1}(a_1) )_+\right].
\end{align*}
Therefore, 
\begin{align*}
    \| y^{+}_{k} - y^{-}_{k} \|_\infty &\leq q_{\nbParam} \max_{j \in \llbracket 1 , \nbParam \rrbracket} | t^{-}_{k-1}(a_j) - t^{-}_{k-1}(a_1) - t^{+}_{k-1}(a_j) + t^{+}_{k-1}(a_1) | \\
    &\leq 2 q_{\nbParam} \max_{j \in \llbracket 1 , \nbParam \rrbracket} | t^{-}_{k-1}(a_j) - t^{+}_{k-1}(a_j) |.
\end{align*}

Now, let us apply Lemma \ref{lemma:contractiveMapImproved} with $s^{(0)} := ( t^{+}_{k-1}(a_i) )_{i \in \llbracket 1 , \nbParam \rrbracket} $ and $s^{(1)} := ( t^{-}_{k-1}(a_i) )_{i \in \llbracket 1 , \nbParam \rrbracket} $ (note that the assumptions of Lemma \ref{lemma:contractiveMapImproved} are satisfied by Proposition~\ref{proposition:boundingTrajectories0}). Writing $G^+:=G^{(0)}$ and $G^-:=G^{(1)}$, we have that for all $i \in \llbracket 1, \nbParam \rrbracket $, 
\[T(G^{\pm})((t^{\pm}_{k-1}(a_j))_{j \in \llbracket 1 , \nbParam \rrbracket}))(i) = t_{k}^{\pm}(a_i).\]
Hence \[\max_{j \in \llbracket 1 , \nbParam \rrbracket} | {t}^{-}_{k-1}(a_j) - {t}_{k-1}^{+}(a_j) | \leq \left(1-\frac{q_1}{q_N}\right)^{k-1} \max_{j \in \llbracket 1 , \nbParam \rrbracket}| {t}^{+}_0(a_j) - {t}^{-}_0(a_j) | .\]
With $\kappa = \frac{2 q_{\nbParam}^2}{q_N-q_1}\max_{j \in \llbracket 1 , \nbParam \rrbracket}| {t}^{+}_0(a_j) - {t}^{-}_0(a_j) | $, we obtain \eqref{eq:exponentialconvergence}. The uniqueness of the stationary trajectory $\widetilde y_\infty$ follows from the uniqueness of the limit. This concludes the proof of Theorem \ref{theorem:convergenceOfDynamics}.

\begin{remark}
\label{rem:finitetime}
    In some cases, the stationary regime is reached in finite time. For instance, this is the case when the DC graph of the stationary configuration is complete.
    In this case, there is a time at which there is no car between road signs $1$ and $\nbParam$. Then, one may check that any car starting to move after this time has the stationary trajectory. We conjecture that the complete graph is the only connected graph for which the stationary regime is reached in finite time. 
\end{remark}

\section{Formula for the front speed}
\label{sec:speed}

In this section we prove the formula for the front speed of the liquid bin model, stated in Theorem~\ref{theorem:mainspeed}. For this, we first partition $P^{2N}$ into regions to which we associate a linear system (Subsection~\ref{subsec:formulaMatrices}), then we solve this linear system in Subsection~\ref{subsec:solving}. The inverse of the front speed is the first component of the vector that is the solution of the linear system. Finally in Subsection~\ref{subsec:phasesAndBoundaries} we show that the solutions of the linear systems are continuous across regions and that each region has non-empty interior.

\subsection{Partitioning the parameter space}\label{subsec:formulaMatrices}

Let $(\underline a,\underline p) \in P^{2\nbParam }$ and set $\widetilde{y}_{\infty}$ to be the stationary trajectory associated to these parameters. Set $\widetilde{t}_{\infty} = (\widetilde{y}_{\infty})^{-1}$ to be the inverse function of $\widetilde{y}_{\infty}$.
Given $\widetilde{t}_\infty(a_i)$ for all $i \in \llbracket 1 , \nbParam \rrbracket$, one can easily recover $\widetilde{y}_\infty$, since it is a piecewise linear continuous function which is differentiable away from the points $\widetilde{t}_\infty(a_i)-\widetilde{t}_\infty(a_1)$, with derivative equal to $q_i$ on each interval of the form $(\widetilde{t}_\infty(a_{i})-\widetilde{t}_\infty(a_1),\widetilde{t}_\infty(a_{i+1})-\widetilde{t}_\infty(a_1))$. It follows from \eqref{eq:recursiveFormulaSpeedTilde3} that $(\widetilde{t}_\infty(a_i))_{1 \leq i \leq \nbParam}$ satisfies the following non-linear relations:
\begin{align}\label{eq:systemStatCar}
    \forall i \in \llbracket 1 , \nbParam \rrbracket,\, a_i = \sum_{j=1}^{\nbParam} p_j (\widetilde{t}_\infty(a_i) - \widetilde{t}_\infty(a_j) + \widetilde{t}_\infty(a_1))_+.
\end{align}

Consider $y = (y_k)_{k \in \ZZ}$ the canonical stationary configuration associated to $\widetilde{y}_{\infty}$ defined as follows: $y_{-k}(0): = \widetilde{y}_\infty(k\widetilde{t}_\infty(a_1))$ for all $k\geq1$ and $y_{k}(0) := 0$ for all $k\geq0$. As in Subsection~\ref{subsec:contractivity}, one may associate a DC graph to each car of the canonical stationary configuration. This time the DC graph is independent of the index of the car because of stationarity.
\begin{definition}
Let $(\underline a,\underline p)\in P^{2N}$. We define the downward closed graph $\Gr(\underline a,\underline p)\in\DC_N$ associated to $(\underline a,\underline p)$ to be the directed graph with vertex set $\llbracket 1, N\rrbracket$ and edge set given by all the pairs $(i,j)$ with $i<j$ satisfying
\begin{equation}
\label{eq:positivepartpositive}
 \widetilde{t}_{\infty}(a_i) > \widetilde{t}_{\infty}(a_j)-\widetilde{t}_{\infty}(a_1).
\end{equation}
\end{definition}
Recall that inequality \eqref{eq:positivepartpositive} is also satisfied whenever $i\geq j$, but we do not add such directed edges $(i,j)$ to the DC graph. It is not hard to see that this definition of $\Gr(\underline a,\underline p)$ from the stationary car model coincide with the definition of $\Gr(\underline a,\underline p)$ from the stationary liquid bin model given in Subsection~\ref{subsec:LBMproperties}.

We have the following useful interpretation of $\Gr(\underline a,\underline p)$ in terms of the stationary car model.
For every $(\underline a,\underline p)\in P^{2N}$ and $1\leq i\leq N$, define
   \begin{equation}
   \label{eq:defxi}
    \chi_i:=\widetilde y_\infty(\widetilde t_\infty (a_1)+\widetilde t_\infty (a_i)).
   \end{equation}
The quantity $\chi_i$ corresponds to the position of the car of index $-1$ in a stationary configuration where the car of index $0$ is at position $a_i$. The following lemma is a straightforward reformulation of the definition in \eqref{eq:positivepartpositive}:

 \begin{lemma}
\label{lem:Grcarinterpretation}
Let $(\underline a,\underline p)\in P^{2N}$ and $1\leq i<j\leq N$. Then $\chi_i>a_j$ if and only if $(i,j)\in E(\Gr(\underline a,\underline p))$.
\end{lemma}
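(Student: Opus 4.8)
The plan is to deduce the equivalence directly from the structural properties of the stationary trajectory recorded in Definition~\ref{def:stationary}: $\widetilde y_\infty$ is a continuous, strictly increasing bijection of $\RR_{\geq0}$ onto itself, and $\widetilde t_\infty$ is its inverse. First I would note that the argument $\widetilde t_\infty(a_1)+\widetilde t_\infty(a_i)$ appearing in the definition \eqref{eq:defxi} of $\chi_i$ is a sum of two nonnegative reals, hence lies in the domain $\RR_{\geq0}$ of $\widetilde y_\infty$, so $\chi_i$ is well defined. I would also record the identity $a_j=\widetilde y_\infty(\widetilde t_\infty(a_j))$, valid because $\widetilde t_\infty$ is the inverse of $\widetilde y_\infty$ and $a_j$ lies in the range $\RR_{\geq0}$ of $\widetilde y_\infty$.

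The main step is then a one-line chain of equivalences using strict monotonicity. Since $\widetilde y_\infty$ is strictly increasing, $\chi_i>a_j$ holds if and only if $\widetilde y_\infty\bigl(\widetilde t_\infty(a_1)+\widetilde t_\infty(a_i)\bigr)>\widetilde y_\infty\bigl(\widetilde t_\infty(a_j)\bigr)$, if and only if $\widetilde t_\infty(a_1)+\widetilde t_\infty(a_i)>\widetilde t_\infty(a_j)$, if and only if $\widetilde t_\infty(a_i)>\widetilde t_\infty(a_j)-\widetilde t_\infty(a_1)$. Because we have assumed $i<j$, this last inequality is exactly condition~\eqref{eq:positivepartpositive} defining when $(i,j)$ is an edge of $\Gr(\underline a,\underline p)$, which yields the lemma.

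There is essentially no obstacle: the only points requiring (minimal) care are that the arguments fed to $\widetilde y_\infty$ and $\widetilde t_\infty$ lie in the appropriate domain and range, so that applying these maps is legitimate, and that the strictness of the inequality is preserved, which is immediate from $\widetilde y_\infty$ being \emph{strictly} increasing rather than merely nondecreasing. One may, if desired, add the geometric gloss already present in the text that $\chi_i$ is the position of the car of index $-1$ when the car of index $0$ sits at $a_i$, so that $\chi_i>a_j$ says this neighboring car has passed the road sign at $a_j$, but this is not needed for the formal argument.
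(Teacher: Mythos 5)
Your proof is correct and is exactly the ``straightforward reformulation'' the paper has in mind: unwinding the definition \eqref{eq:defxi} of $\chi_i$ and applying the strictly increasing bijection $\widetilde y_\infty$ (with inverse $\widetilde t_\infty$) turns $\chi_i>a_j$ into precisely the inequality \eqref{eq:positivepartpositive} defining the edge $(i,j)$ of $\Gr(\underline a,\underline p)$. Nothing is missing.
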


For any DC graph $G\in\DC_N$, set
\[P_G := \Set{(\underline a,\underline p)\in P^{2\nbParam }}{\Gr(\underline a,\underline p)= G}\] to be the set of all parameters for which the DC graph of the stationary trajectory is $G$.
Since there is exactly one stationary trajectory for given $(\underline a,\underline p)\in P^{2N}$ by Theorem \ref{theorem:convergenceOfDynamics}, we obtain the following partition of the parameter space $P^{2N}$ by the sets $P_G$:
\[
P^{2N}=\bigsqcup_{G\in \DC_N} P_G.
\]

If we know that $(\underline a,\underline p)\in P_G$ for some graph $G$, then the relations \eqref{eq:systemStatCar} become linear:
\begin{align}\label{eq:systemStatCarBis}
    \forall i \in \llbracket 1 , \nbParam \rrbracket,\, a_i = \sum_{j=1}^{b_G(i)} p_j (\widetilde{t}_{\infty}(a_i) - \widetilde{t}_{\infty}(a_j) + \widetilde{t}_{\infty}(a_1))
\end{align}
where we recall that $b_G(i)$ is either the largest $j$ such that there exists some edge $(i,j)$ in $G$, or $i$ if no such edge exists.

\begin{remark}
\label{rem:speedz1}
    The reciprocal of the front speed in the stationary liquid bin model corresponds to the time elapsed between two consecutive jumps of the cursor $c_1$. In terms of the stationary car model, it corresponds to the time elapsed between two consecutive departures of cars from $0$, namely $\widetilde{t}_\infty(a_1)$. This is readily computed by matrix inversion as soon as one knows in which region $P_G$ the parameters lie.
\end{remark}

\subsection{Solving the linear system}
\label{subsec:solving}

In this subsection we shall solve the linear system \eqref{eq:systemStatCarBis} with unknowns $\widetilde{t}_{\infty}(a_{i})$, which holds whenever the parameters $(\underline a,\underline p)$ are restricted to the region $P_G$ for a fixed DC graph $G\in\DC_N$.

Let us first perform a change of variables. For all $i \in \llbracket 1 , \nbParam\rrbracket$, set 
\begin{equation}\label{eq:definitionOfZi}
     z_i(\underline a,\underline p) := \widetilde{t}_{\infty}(a_{i})-\widetilde{t}_{\infty}(a_{i-1})
\end{equation}
with the convention that $a_{0}= 0$ (implying that $\widetilde{t}_{\infty}(a_{0})=0$). With these new variables, $(i,j)$ is an edge in $\Gr(\underline a,\underline p)$ if and only if 
\begin{equation}
\label{eq:edgeGrapz}
    z_1(\underline a,\underline p) > z_{i+1}(\underline a,\underline p) + \dots + z_j(\underline a,\underline p).
\end{equation}

\begin{definition}
\label{def:linearsystem}
Let $G\in\DC_N$ and let $(\underline a,\underline p)\in P^{2N}$. The linear system $\mathcal S^{(G)}(\underline a,\underline p)$ associated with $G$ and $(\underline a,\underline p)$ is the following system of equations with unknowns $\zeta_1,\ldots,\zeta_N$:
\begin{equation}
\label{eq:newlinearsystem}
\forall i\in\llbracket1,N\rrbracket,\ a_i=\sum_{j=1}^{b_G(i)}p_j((\zeta_1+\cdots+\zeta_i)-(\zeta_1+\cdots+\zeta_j)+\zeta_1)
\end{equation}
\end{definition}

It follows from \eqref{eq:systemStatCarBis} that the $(z_i(\underline a,\underline p))_{i\in\llbracket 1,N\rrbracket}$ are solutions of $\mathcal S^{(G)}(\underline a,\underline p)$ if $(\underline a,\underline p)\in P_G$.

We will define rational functions $z_i^{(G)}$ on $P^{2N}$ for all $1\leq i\leq N$ and $G\in\DC_N$. Then we will show in Theorem~\ref{theorem:formulaSpeed} that $z_i$ coincides with $z_i^{(G)}$ on each $P_G$ for all $1\leq i\leq N$.

Recall from Subsection~\ref{subsec:LBMproperties} the definitions of $\gamma^{(G)}_{i,j}$ and $\Gamma^{(G)}_{i,j}$. Note that by construction, $\Gamma^{(G)}_{i,j}$ is non-negative for all $1 \leq i < j \leq \nbParam$.

We adopt the convention that $b_G(0)=1$. For every DC graph $G\in\DC_N$ and every $2 \leq i \leq \nbParam$, define 
\begin{equation}
\label{eq:eq:z1formula}
    z_1^{(G)}(\underline a,\underline p) := \frac{\sum_{j=1}^N \Gamma^{(G)}_{1,j} \frac{ d_j }{q_{b_G(j-1)}}}{ 1 + \sum_{j=1}^N \Gamma^{(G)}_{1,j} \frac{  q_{b_G(j)}-q_{b_G(j-1)}}{q_{b_G(j-1)}}}
\end{equation}
and
\begin{align}
    z_i^{(G)}(\underline a,\underline p) &:=  \left(\sum_{j = i}^N  \Gamma^{(G)}_{i,j} \frac{d_j}{q_{b_G(j-1)}}\right) -  z_1^{(G)}(\underline a,\underline p) \left( \sum_{j = i}^N  \Gamma^{(G)}_{i,j} \frac{q_{b_G(j)}-q_{b_G(j-1)}}{q_{b_G(j-1)}} \right) \label{ziformula} \\
    &=  \left(\sum_{j = i}^N  \Gamma^{(G)}_{i,j} \frac{d_j}{q_{b_G(j-1)}}\right) -  \frac{\left( \sum_{j = i}^N  \Gamma^{(G)}_{i,j} \frac{q_{b_G(j)}-q_{b_G(j-1)}}{q_{b_G(j-1)}} \right) \left(\sum_{j=1}^N \Gamma^{(G)}_{1,j} \frac{ d_j }{q_{b_G(j-1)}}  \right) }{   1 + \sum_{j=1}^N \Gamma^{(G)}_{1,j} \frac{  q_{b_G(j)}-q_{b_G(j-1)}}{q_{b_G(j-1)}} }.
\end{align}

Then the linear system $\mathcal S^{(G)}(\underline a,\underline p)$ can be solved as follows. 
\begin{theorem}\label{theorem:formulaSpeed}
For every DC graph $G\in\DC_N$ and every $(\underline a,\underline p) \in P^{2N}$, the linear system $\mathcal S^{(G)}(\underline a,\underline p)$ has a unique solution, given by $(z_i^{(G)}(\underline a,\underline p))_{i\in\llbracket1,N\rrbracket}$. Thus, when $(\underline a,\underline p) \in P_G$, for every $1 \leq i \leq \nbParam$ we have
    \begin{align}\label{eq:formulaInvSpeed}
    z_i(\underline a,\underline p) = z_i^{(G)}(\underline a,\underline p) .
\end{align}
This implies in particular that, for every DC graph $G\in\DC_N$ and every $(\underline a,\underline p) \in P_G$, the speed of the front of the stationary liquid bin model with parameters $(\underline a,\underline p)$ is given by $1/z_1^{(G)}(\underline a,\underline p)$.
   \end{theorem}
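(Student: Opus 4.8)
**The plan is to solve the linear system $\mathcal S^{(G)}(\underline a,\underline p)$ explicitly by a substitution that decouples it into a triangular cascade.** First I would rewrite \eqref{eq:newlinearsystem} using the variables $\zeta_i$ directly: since $(\zeta_1+\cdots+\zeta_i)-(\zeta_1+\cdots+\zeta_j)+\zeta_1 = \zeta_1 - (\zeta_{i+1}+\cdots+\zeta_j)$ when $j\geq i$ (and equals $\zeta_1+(\zeta_{j+1}+\cdots+\zeta_i)$ when $j<i$), the $i$-th equation becomes
\begin{equation*}
a_i = q_{b_G(i)}\,\zeta_1 + \sum_{j=1}^{i-1} p_j(\zeta_{j+1}+\cdots+\zeta_i) - \sum_{j=i+1}^{b_G(i)} p_j(\zeta_{i+1}+\cdots+\zeta_j).
\end{equation*}
Swapping the order of summation in both double sums and using $a_i - a_{i-1} = d_i$, I would subtract the $(i-1)$-th equation from the $i$-th to obtain a first-order recursion expressing $\zeta_i$ in terms of $\zeta_1$ and the later variables $\zeta_{i+1},\ldots,\zeta_{b_G(i)}$, with coefficients built from the $q_\ell$'s. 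The key structural observation is that $b_G$ is monotone on each connected block (this is exactly downward closedness), so this recursion can be unrolled from $i=N$ down to $i=2$, yielding each $\zeta_i$ as an explicit affine function of $\zeta_1$ alone.

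\textbf{Second, I would identify the coefficients of these affine functions with the path sums $\Gamma^{(G)}_{i,j}$.} Unrolling the recursion produces, for each $i$, a sum over ``descent sequences'' $i = i_0 < i_1 < \cdots < i_k = j$ with each step $i_{\ell}< i_{\ell+1}\leq b_G(i_\ell)$ — but such sequences are precisely the directed paths from $i$ to $j$ in $G$, because $(i,j)$ is an edge of $G$ iff $i<j\leq b_G(i)$ by definition of $b_G$ and downward closedness. Matching the per-step weight that emerges from the recursion with \eqref{eq:weightEdges} is then a bookkeeping check: the factor $q_{b_G(i)}-q_{\max(j-1,b_G(i-1))}$ over $q_{b_G(i-1)}$ is exactly what appears when one eliminates $\zeta_i$ from the difference of consecutive equations. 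This gives $\zeta_i = \bigl(\sum_{j\geq i}\Gamma^{(G)}_{i,j}\tfrac{d_j}{q_{b_G(j-1)}}\bigr) - \zeta_1\bigl(\sum_{j\geq i}\Gamma^{(G)}_{i,j}\tfrac{q_{b_G(j)}-q_{b_G(j-1)}}{q_{b_G(j-1)}}\bigr)$, matching \eqref{ziformula}.

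\textbf{Third, I would close the system by feeding the case $i=1$ back in.} Applying the same recursion at $i=1$ (with the convention $b_G(0)=1$, $a_0=0$) gives one equation relating $\zeta_1$ to itself via the $\Gamma^{(G)}_{1,j}$; solving it yields \eqref{eq:eq:z1formula}. Uniqueness is automatic since the whole derivation was a sequence of reversible linear manipulations (subtracting consecutive equations is invertible, and the final scalar equation for $\zeta_1$ has nonzero coefficient because the denominator $1 + \sum_j \Gamma^{(G)}_{1,j}\tfrac{q_{b_G(j)}-q_{b_G(j-1)}}{q_{b_G(j-1)}}$ is manifestly positive). Then \eqref{eq:formulaInvSpeed} follows because, when $(\underline a,\underline p)\in P_G$, the true quantities $z_i(\underline a,\underline p)$ satisfy $\mathcal S^{(G)}(\underline a,\underline p)$ by \eqref{eq:systemStatCarBis}, hence coincide with the unique solution; and the last sentence about the front speed follows from Remark~\ref{rem:speedz1}, which identifies the reciprocal front speed with $\widetilde t_\infty(a_1) = z_1$.

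\textbf{The main obstacle I anticipate is the combinatorial identification in the second step} — carefully verifying that unrolling the first-order recursion, with its $b_G(i)$-dependent truncations and the telescoping from consecutive equations, reproduces exactly the weights $\gamma^{(G)}_{i,j}$ of \eqref{eq:weightEdges} rather than some off-by-one variant, particularly the role of the $\max(j-1,b_G(i-1))$ term. This requires being precise about how $b_G$ behaves across block boundaries (where $b_G(i-1)$ can be smaller than $i$) and checking that the path-sum bookkeeping handles vertices lying outside the connected component of $i$ correctly (where $\Gamma^{(G)}_{i,j}=0$ and the corresponding $d_j$ terms must drop out). The rest is linear algebra and careful index manipulation.
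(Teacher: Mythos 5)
Your proposal is correct and takes essentially the same route as the paper's proof: subtract consecutive equations to obtain a system that is upper-triangular in $(\zeta_2,\ldots,\zeta_N)$ once $\zeta_1$ is treated as known, unroll it by descending induction while identifying the resulting coefficients with the path sums $\Gamma^{(G)}_{i,j}$ (the paper does this via the decomposition $\Gamma^{(G)}_{i,j}=\sum_{h}\gamma^{(G)}_{i,h}\Gamma^{(G)}_{h,j}$ over the first step of each path), and then close the scalar equation for $\zeta_1$. Your concluding steps — deducing \eqref{eq:formulaInvSpeed} from \eqref{eq:systemStatCarBis} and the speed statement from Remark~\ref{rem:speedz1} — also coincide with the paper's.
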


\begin{proof}
Let us prove the first statement. The second statement will follow from \eqref{eq:systemStatCarBis} while the third one is a simple consequence of Remark~\ref{rem:speedz1}.

By considering the difference of rows $i$ and $i-1$ in the linear system $\mathcal S^{(G)}(\underline a,\underline p)$ and with the conventions that $b_G(0) = 1$ and $q_0 = 0$, we have for every $i \in \llbracket 1 , \nbParam \rrbracket$,  
\begin{equation*}\label{eq:systD'}
    d_i = (q_{b_G(i)}-q_{b_G(i-1)})\zeta_1 + q_{b_G(i-1)}\zeta_{i} - \sum_{j=b_G(i-1)+1}^{b_G(i)}p_j \sum_{l=i+1}^j \zeta_l.
\end{equation*}
The sum over $j$ on the right-hand side can be rewritten as follows :
\begin{align*}
& \sum_{j=b_G(i-1)+1}^{b_G(i)}p_j \sum_{l=i+1}^j \zeta_l = \sum_{j=b_G(i-1)+1}^{b_G(i)} (q_j - q_{j-1}) \sum_{l=i+1}^j \zeta_l   \\
&= q_{b_G(i)}\sum_{l=i+1}^{b_G(i)} \zeta_l-q_{b_G(i-1)}\sum_{l=i+1}^{b_G(i-1)+1} \zeta_l-\sum_{j=b_G(i-1)+2}^{b_G(i)} q_{j-1}\zeta_j  \\
&=\sum_{j=i+1}^{b_G(i)} q_{b_G(i)}\zeta_{j}-\sum_{j=i+1}^{b_G(i-1)+1} q_{b_G(i-1)}\zeta_{j}-\sum_{j=b_G(i-1)+2}^{b_G(i)} q_{j-1}\zeta_j \\
&=\sum_{j=i+1}^{b_G(i)} (q_{b_G(i)}-q_{\max(j-1,b_G(i-1))})  \zeta_{j}.
\end{align*}
Putting everything together, we obtain the following linear system for the $\zeta_i$:
\begin{equation}\label{eq:systD}
    \forall i \in \llbracket 1 , \nbParam \rrbracket,\, d_i = (q_{b_G(i)}-q_{b_G(i-1)})\zeta_1 + q_{b_G(i-1)}\zeta_{i} - \sum_{j=i+1}^{b_G(i)} (q_{b_G(i)}-q_{\max(j-1,b_G(i-1))})  \zeta_{j}.
\end{equation}

Notice that if we assume that we know $\zeta_1$, this linear system in $(\zeta_2, \dots, \zeta_{\nbParam})$ becomes upper-triangular.
For all $i \in \llbracket 1 , \nbParam \rrbracket$, set 
\[\alpha_i := \frac{d_i - (q_{b_G(i)}-q_{b_G(i-1)})\zeta_1}{q_{b_G(i-1)}}. \]   
Then the linear system \eqref{eq:systD} becomes 
\begin{align*}
    \forall i \in \llbracket 1 , \nbParam \rrbracket, \  \alpha_i =  \zeta_{i} - \sum_{j=i+1}^{b_G(i)} \gamma^{(G)}_{i,j}  \zeta_{j}.
\end{align*}

By descending induction on $i$ from $N$ to $1$, one obtains that for all $i \in \llbracket 1 , \nbParam \rrbracket$, 
\begin{align}\label{eq:systD2}
     \zeta_{i} =  \sum_{j = i}^N  \Gamma^{(G)}_{i,j} \alpha_j .
\end{align}
In passing we make use of the formula
\[
\Gamma^{(G)}_{i,j}=\sum_{h = i+1}^{\min(j,b_G(i))} \gamma^{(G)}_{i,h}\Gamma^{(G)}_{h,j}
\]
which holds true by decomposing any increasing path from $i$ to $j$ according to its first step $(i,h)$.

Formula~\eqref{eq:systD2} entails that $\zeta_i$ equals the right-hand side of \eqref{ziformula} for $2\leq i\leq N$. Taking $i = 1$ in \eqref{eq:systD2} gives
\begin{align*}
    \zeta_1 = \sum_{j=1}^N \Gamma^{(G)}_{1,j} \frac{ d_j - (q_{b_G(j)}-q_{b_G(j-1)})\zeta_1}{q_{b_G(j-1)}}.  
\end{align*}
Solving this linear equation in $\zeta_1$ shows that $\zeta_1$ equals the right-hand side of \eqref{eq:eq:z1formula}.
\end{proof}

We stress that the linear system $\mathcal S^{(G)}(\underline a,\underline p)$ is defined for every $(\underline a,\underline p)\in P^{2N}$, not only for $(\underline a,\underline p)\in P_G$. We will sometimes need to consider its solutions $(z^{(G)}_i(\underline a,\underline p))_{i\in\llbracket 1,N\rrbracket}$ for parameters $(\underline a,\underline p)\notin P_G$.

\begin{definition}
Let $G$ be a DC graph. An edge $(i,j)\in E(G)$ is called a \emph{maximal edge} of $G$ if it is a maximal element of the poset $(E(G),\prec_E)$.
\end{definition}

\begin{remark}\label{remark:reduxSystem}
    Define $B = \Set{i \in \llbracket 1 , \nbParam \rrbracket}{ b_G(i-1) \neq b_G(i)}$. The elements $i$ of $B$ are either isolated vertices in $G$ (if $b_G(i)=i$) or the starting point of a maximal edge (if $b_G(i)>i$). For all $i \in \llbracket 1 , \nbParam \rrbracket \backslash B$, it follows from \eqref{eq:systD} that
    \[z_i^{(G)}(\underline a, \underline p) = \frac{d_i}{q_{b_G(i-1)}}.\]
    Therefore, one can reduce the size of the system from $\nbParam$ to $|B|$ equations. 
\end{remark}

\begin{example}
\label{ex:completespeed}
In the case where $G= K_{\nbParam}$ is the complete graph with $N$ vertices, $\Gamma^{(G)}_{i,j} = \mathds{1}_{i=1} \frac{q_{\nbParam} - q_{j-1}}{q_1}$ for all $1 \leq i < j \leq \nbParam$. Thus for every $(\underline a,\underline p)\in P_{K_N}$, the speed of the front is given by
\begin{equation}
\label{eq:completespeed}
\frac{1}{z_1^{(K_{\nbParam})}(\underline a,\underline p)} = \frac{q_{\nbParam}^2}{\sum_{j=1}^{\nbParam} \left( q_{\nbParam}-q_{j-1}\right)d_j } .
\end{equation}
\end{example}

\begin{example}
\label{ex:linespeed}
If $G=L_{\nbParam}$ is the graph with $N$ vertices such that $E(L_{\nbParam}) = \Set{(i,i+1)}{i \in \llbracket 1 , \nbParam -1 \rrbracket}$ (we call such a graph a line graph), then $\gamma^{(L_N)}_{i,j} = \mathds{1}_{j=i+1} \frac{p_{i+1}}{q_i} $ for all $1 \leq i < j \leq \nbParam$. Therefore, 
\[\Gamma^{(L_{\nbParam})}_{i,j} = \frac{p_{i+1}\dots p_j}{q_i \dots q_{j-1}}\] for all $1 \leq i < j \leq \nbParam$. 
As a consequence, with the convention that the empty product equals $1$, the front speed is
\begin{equation}
\label{eq:linespeed}
\frac{1}{z_1^{(L_{\nbParam})}(\underline a,\underline p)} = \frac{\sum_{j=0}^{\nbParam-1} p_{j+1} \left( \prod_{h=1}^j \frac{p_h}{q_h} \right)  }{\sum_{j=1}^{\nbParam} d_j \left( \prod_{h=1}^j \frac{p_h}{q_h} \right)}.
\end{equation}
\end{example}

\subsection{Every region has non-empty interior}\label{subsec:phasesAndBoundaries}

 The goal of this subsection is to prove the following result:

\begin{theorem}\label{theorem:nonemptiness}
    For every  $G\in\DC_N$, the interior of $P_G$ is non-empty. 
\end{theorem}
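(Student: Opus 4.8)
The plan is to exhibit a non-empty open subset of $P^{2N}$ contained in $P_G$, realised as the preimage of an explicit open set under a continuous map. By Theorem~\ref{theorem:formulaSpeed}, for every $(\underline a,\underline p)\in P^{2N}$ the linear system $\mathcal S^{(G)}(\underline a,\underline p)$ has the unique solution $(z^{(G)}_i(\underline a,\underline p))_{i\in\llbracket 1,N\rrbracket}$, and each $z^{(G)}_i$ is a rational function whose denominator is a product of factors $q_{b_G(j-1)}\ge q_1>0$ and of the factor $1+\sum_{j=1}^N\Gamma^{(G)}_{1,j}\,\tfrac{q_{b_G(j)}-q_{b_G(j-1)}}{q_{b_G(j-1)}}$, which is $\ge 1$ since $\Gamma^{(G)}_{1,j}\ge 0$ and $b_G$ is non-decreasing (if $(j-1,m)\in E(G)$ with $m>j-1$, then $(j,m)\prec_E(j-1,m)$, so $(j,m)\in E(G)$, whence $b_G(j)\ge b_G(j-1)$; the remaining cases are immediate). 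Thus the map $\Theta_G\colon P^{2N}\to\RR^N$, $\Theta_G(\underline a,\underline p):=(z^{(G)}_i(\underline a,\underline p))_i$, is continuous. Let $U_G$ be the open subset of $\RR^N$ consisting of the tuples $\underline\zeta=(\zeta_1,\dots,\zeta_N)$ with $\zeta_i>0$ for all $i$, with $\zeta_{i+1}+\cdots+\zeta_j<\zeta_1$ for all $(i,j)\in E(G)$, and with $\zeta_{i+1}+\cdots+\zeta_j>\zeta_1$ for all $(i,j)\in E_N\setminus E(G)$. I will show that $\Theta_G^{-1}(U_G)$ is non-empty and contained in $P_G$; since it is open, this proves the theorem.

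For the inclusion $\Theta_G^{-1}(U_G)\subseteq P_G$, I would take $(\underline a,\underline p)$ with $\underline\zeta:=\Theta_G(\underline a,\underline p)\in U_G$, set $\sigma_i:=\zeta_1+\cdots+\zeta_i$ (with $\sigma_0:=0$), and observe that for $1\le i<j\le N$ one has $\sigma_i-\sigma_j+\sigma_1=\zeta_1-(\zeta_{i+1}+\cdots+\zeta_j)$, which by the definition of $U_G$ is $>0$ exactly when $(i,j)\in E(G)$, i.e. when $j\le b_G(i)$, while $\sigma_i-\sigma_j+\sigma_1>0$ holds automatically for $j\le i$. Comparing with the equations \eqref{eq:newlinearsystem} of $\mathcal S^{(G)}(\underline a,\underline p)$, this gives $a_i=\sum_{j=1}^N p_j(\sigma_i-\sigma_j+\sigma_1)_+$ for all $i$, so $(\sigma_i)_i$ solves the non-linear system \eqref{eq:systemStatCar}. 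The function $\widetilde y_\infty(t):=\sum_{j=1}^N p_j(t-\sigma_j+\sigma_1)_+$ is then continuous, piecewise linear, vanishes at $0$ (since $\sigma_j\ge\sigma_1$ for all $j$), and has right derivative at least $p_1>0$ everywhere, hence is an increasing bijection of $\RR_{\ge0}$; by \eqref{eq:systemStatCar} its inverse $\widetilde t_\infty$ satisfies $\widetilde t_\infty(a_i)=\sigma_i$, so $\widetilde y_\infty$ satisfies \eqref{eq:recursiveFormulaSpeedTilde3}. By the uniqueness part of Theorem~\ref{theorem:convergenceOfDynamics}, $\widetilde y_\infty$ is the stationary trajectory attached to $(\underline a,\underline p)$; hence $z_i(\underline a,\underline p)=\sigma_i-\sigma_{i-1}=\zeta_i$, and by \eqref{eq:positivepartpositive} the edge set of $\Gr(\underline a,\underline p)$ is $\{(i,j)\in E_N:\zeta_1>\zeta_{i+1}+\cdots+\zeta_j\}=E(G)$, i.e. $(\underline a,\underline p)\in P_G$.

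To see that $\Theta_G^{-1}(U_G)\ne\emptyset$, I would first check that $U_G\ne\emptyset$: after normalising $\zeta_1=1$, one builds $\zeta_2,\dots,\zeta_N>0$ by induction on $N$, choosing $\zeta_N$ at the last step inside a non-empty open interval whose endpoints involve the smallest and the largest index $i<N$ with $(i,N)\in E(G)$; the non-emptiness of this interval is where the downward-closedness of $E(G)$ is used. Fixing such a $\underline\zeta^\ast\in U_G$ and any $\underline p^\ast\in\RR_{>0}^N$, I would set $\sigma_i^\ast:=\zeta_1^\ast+\cdots+\zeta_i^\ast$ and $a_i^\ast:=\sum_{j=1}^N p_j^\ast(\sigma_i^\ast-\sigma_j^\ast+\sigma_1^\ast)_+$. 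Every summand of $a_i^\ast-a_{i-1}^\ast$ is $\ge 0$ (because $t\mapsto t_+$ is non-decreasing and $\sigma_i^\ast>\sigma_{i-1}^\ast$), and its $j=i$ summand equals $p_i^\ast\bigl(\sigma_1^\ast-(\sigma_1^\ast-\zeta_i^\ast)_+\bigr)=p_i^\ast\min(\zeta_i^\ast,\zeta_1^\ast)>0$, while $a_1^\ast\ge p_1^\ast\sigma_1^\ast>0$; hence $0<a_1^\ast<\cdots<a_N^\ast$ and $(\underline a^\ast,\underline p^\ast)\in P^{2N}$. Exactly as in the previous paragraph, $\underline\zeta^\ast\in U_G$ forces $a_i^\ast=\sum_{j=1}^{b_G(i)}p_j^\ast(\sigma_i^\ast-\sigma_j^\ast+\sigma_1^\ast)$, so $\underline\zeta^\ast$ is a solution of $\mathcal S^{(G)}(\underline a^\ast,\underline p^\ast)$, and by uniqueness of that solution $\Theta_G(\underline a^\ast,\underline p^\ast)=\underline\zeta^\ast\in U_G$; thus $(\underline a^\ast,\underline p^\ast)\in\Theta_G^{-1}(U_G)$, which completes the argument. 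I expect the main obstacle to be the inclusion $\Theta_G^{-1}(U_G)\subseteq P_G$, where one must promote a solution of the \emph{linear} system $\mathcal S^{(G)}$ to the genuine increments of the stationary trajectory by reconstructing an honest stationary trajectory with the prescribed values and invoking its uniqueness; the combinatorial input $U_G\ne\emptyset$ is routine but genuinely relies on $E(G)$ being downward closed.
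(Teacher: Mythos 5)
Your proof is correct, but it follows a genuinely different route from the paper's. The paper argues by induction on $N$: it restricts $G$ to $\llbracket 1,N-1\rrbracket$, takes parameters in the interior of $P_{G'}$ (after perturbing so that all non-edge inequalities are strict), and then appends $a_N$ and a small $p_N$, using a continuity-in-$p_N$ lemma (Lemma~\ref{lemma:continuityInParameterPN}), the global continuity of the $z_i$ (Proposition~\ref{proposition:continuityInParameters}) and the car-model quantities $\chi_i$ of Lemma~\ref{lem:Grcarinterpretation} to place $a_N$ between $\chi_{i_0-1}$ and $\chi_{i_0}$. You instead solve the inverse problem: you prescribe the target increments $\underline\zeta\in U_G$ (whose non-emptiness is a unit-interval-graph-type construction, correctly using downward-closedness to make the final interval for $\zeta_N$ non-empty), choose $\underline p$ arbitrarily, and \emph{define} $\underline a$ by evaluating the candidate stationary trajectory at the prescribed times; the sign analysis on $U_G$ converts the non-linear system \eqref{eq:systemStatCar} into the linear system $\mathcal S^{(G)}$, and the two uniqueness results you invoke (Theorem~\ref{theorem:formulaSpeed} for the linear system, Theorem~\ref{theorem:convergenceOfDynamics} for the stationary trajectory) close the loop in both directions. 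Neither result you use appears after Theorem~\ref{theorem:nonemptiness} in the logical order, so there is no circularity. Your approach buys a shorter and more transparent argument: it avoids the induction on $N$ for the main construction, dispenses entirely with Lemma~\ref{lemma:continuityInParameterPN} and the $\chi_i$ bookkeeping, and yields an explicit non-empty open set $\Theta_G^{-1}(U_G)\subseteq P_G$ (in fact essentially the interior of $P_G$, compare Proposition~\ref{proposition:interiorAndClosure}) together with a concrete parametrization of points in it by $(\underline\zeta,\underline p)$. What it gives up is the by-product of the paper's route, namely the practice run for the more delicate inductive perturbation argument that is reused in the proof of Lemma~\ref{lemma:step2}. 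The only part you leave as a sketch is the non-emptiness of $U_G$; the inductive choice of $\zeta_N$ in the interval $\bigl(\max(\sigma_{N-1},\,\sigma_{i_0-1}+1),\,\sigma_{i_0}+1\bigr)$ (with $i_0$ the least $i$ with $(i,N)\in E(G)$, after normalizing $\zeta_1=1$) does work exactly as you indicate, with downward-closedness guaranteeing $\sigma_{N-1}<\sigma_{i_0}+1$.
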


Let us first prove a result on the regularity of the $z_i$ in the parameters $(\underline a,\underline p)$.

\begin{proposition}\label{proposition:continuityInParameters}
    For every $ i \in \llbracket 1 , \nbParam \rrbracket$, the map $(\underline a,\underline p) \in P^{2N} \mapsto z_i (\underline a,\underline p)$ is continuous. 
\end{proposition}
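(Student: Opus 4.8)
The plan is to prove \emph{sequential} continuity: fix $(\underline a,\underline p)\in P^{2N}$ and an arbitrary sequence $(\underline a^{(n)},\underline p^{(n)})\to(\underline a,\underline p)$ in $P^{2N}$, and show $z_i(\underline a^{(n)},\underline p^{(n)})\to z_i(\underline a,\underline p)$ for each $i$. The engine will be a standard compactness-plus-uniqueness argument, where the uniqueness input is Theorem~\ref{theorem:convergenceOfDynamics}. Throughout, write $\widetilde t^{(n)}$ for the inverse of the stationary trajectory attached to $(\underline a^{(n)},\underline p^{(n)})$ and set $t_i^{(n)}:=\widetilde t^{(n)}(a_i^{(n)})$, so that by \eqref{eq:definitionOfZi} one has $z_i^{(n)}:=z_i(\underline a^{(n)},\underline p^{(n)})=t_i^{(n)}-t_{i-1}^{(n)}$ (with $t_0^{(n)}=0$), and by \eqref{eq:systemStatCar}
\[
a_i^{(n)}=\sum_{j=1}^{N}p_j^{(n)}\bigl(t_i^{(n)}-t_j^{(n)}+t_1^{(n)}\bigr)_+,\qquad i\in\llbracket1,N\rrbracket.
\]

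First I would record a priori bounds. Since the stationary trajectory satisfies \eqref{eq:recursiveFormulaSpeedTilde3}, its right derivative is non-decreasing, equals $p_1^{(n)}=q_1^{(n)}$ near $0$, and is bounded above by $q_N^{(n)}$; hence the corresponding car always moves at a speed in $[q_1^{(n)},q_N^{(n)}]$, which gives $z_i^{(n)}\in\bigl[d_i^{(n)}/q_N^{(n)},\,d_i^{(n)}/q_1^{(n)}\bigr]$ for every $i$ (with $d_1^{(n)}=a_1^{(n)}$). Because the parameters converge, for all large $n$ these intervals lie in a fixed compact box $K\subset(0,\infty)^N$, so $(z_i^{(n)})_{i}$ eventually stays in $K$.

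Next I would extract and pass to the limit. Given any subsequence, compactness of $K$ yields a further subsequence $(n_k)$ along which $z_i^{(n_k)}\to z_i^\star$ for every $i$, with $\underline z^\star\in K$, hence $t_i^{(n_k)}\to t_i^\star:=z_1^\star+\cdots+z_i^\star$ and $0<t_1^\star<\cdots<t_N^\star$. The map $(\underline p,\underline t)\mapsto\sum_j p_j(t_i-t_j+t_1)_+$ is continuous — the only non-smoothness, the positive part, is continuous, so a possible change of ``active'' terms in the limit is harmless — and letting $k\to\infty$ in the displayed system gives $a_i=\sum_{j=1}^N p_j(t_i^\star-t_j^\star+t_1^\star)_+$ for all $i$. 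To identify $\underline z^\star$, set $y^\star(t):=\sum_{j=1}^N p_j(t-t_j^\star+t_1^\star)_+$: it is continuous, strictly increasing (right derivative $\ge p_1>0$), vanishes at $0$, and satisfies $y^\star(t_i^\star)=a_i$, so $(y^\star)^{-1}(a_i)=t_i^\star$ and therefore $y^\star$ obeys \eqref{eq:recursiveFormulaSpeedTilde3}, i.e.\ $y^\star$ is a stationary trajectory for $(\underline a,\underline p)$. By Theorem~\ref{theorem:convergenceOfDynamics} the stationary trajectory is unique, so $y^\star=\widetilde y_\infty$ and $z_i^\star=z_i(\underline a,\underline p)$ for all $i$. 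Since every subsequence of $(z_i^{(n)})_n$ has a further subsequence converging to the same limit $z_i(\underline a,\underline p)$, the whole sequence converges to it, which gives continuity of $z_i$ at $(\underline a,\underline p)$ and hence on all of $P^{2N}$.

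The main obstacle is the identification step: one must transfer the uniqueness of the stationary trajectory (Theorem~\ref{theorem:convergenceOfDynamics}) into uniqueness of the \emph{ordered} solution of the transcendental system \eqref{eq:systemStatCar}; the reconstruction $y^\star$ above is exactly what makes this work, and it crucially uses that the limiting tuple is strictly increasing (guaranteed by the compact box $K$). Everything else — the speed bounds, compactness, and the continuity of the defining map — is routine.
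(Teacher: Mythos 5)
Your proof is correct, and it takes a genuinely different route from the paper's. The paper first observes that $z_i$ restricted to each region $P_G$ equals the rational function $z_i^{(G)}$, and then proves continuity \emph{across boundaries}: at a point of $\partial P_{G_1}\cap\partial P_{G_2}$ it takes one approximating sequence inside each region, passes to the limit in the two families of stationary trajectories (using that the $z_i^{(G_l)}$ are rational, hence continuous), and invokes the uniqueness of the stationary trajectory (Theorem~\ref{theorem:convergenceOfDynamics}) to conclude that $z_i^{(G_1)}$ and $z_i^{(G_2)}$ agree there. You instead bypass the decomposition into regions entirely: a priori bounds $z_i^{(n)}\in[d_i^{(n)}/q_N^{(n)},\,d_i^{(n)}/q_1^{(n)}]$ give compactness, any subsequential limit of the tuples $(\widetilde t^{(n)}(a_i^{(n)}))_i$ solves the limiting transcendental system \eqref{eq:systemStatCar}, and the reconstruction of $y^\star$ from that limit — which correctly uses the strict positivity of the limiting $z_i^\star$ to ensure $y^\star$ is a strictly increasing bijection satisfying \eqref{eq:recursiveFormulaSpeedTilde3} — lets you apply the same uniqueness theorem. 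Both arguments hinge on exactly the same key input (uniqueness of the stationary trajectory); what yours buys is uniformity, since it handles an arbitrary approaching sequence regardless of which regions $P_G$ it visits, whereas the paper's argument as written only treats sequences confined to one of two adjacent regions and implicitly relies on the finite closed cover $\{\overline{P_G}\}$ to glue. What the paper's approach buys is that it reuses the explicit rational formulas already established in Section~\ref{subsec:solving} and requires no compactness extraction.
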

\begin{proof}
    Fix $i \in \llbracket 1 , \nbParam \rrbracket$. By Subsection \ref{subsec:formulaMatrices}, for every $G\in\DC_N$, the restriction of $z_i$ to the region $P_{G}$ is a rational function of the parameters $(\underline a,\underline p)$. 
    It remains to prove that $z_i$ is continuous across boundaries of regions.
    Let $G_1$ and $G_2$ be two distinct DC graphs with $\nbParam$ vertices such that $\partial P_{G_1} \cap \partial P_{G_2}$ is non-empty. Pick $(\underline a,\underline p)$ in $\partial P_{G_1} \cap \partial P_{G_2}$. 
    The functions $z_i$ and $z_i^{(G_l)}$ coincide on $P_{G_l}$ for $l\in\{1,2\}$. Therefore, it suffices to show that $z^{(G_1)}_i(\underline a,\underline p) =  z^{(G_2)}_i (\underline a,\underline p) $. 

    Since $(\underline a,\underline p)\in\partial P_{G_1} \cap \partial P_{G_2}$, there exists a sequence $(\underline a^{s,1},\underline p^{s,1})_{s\in\ZZ_+}$  (resp. $(\underline a^{s,2},\underline p^{s,2})_{s\in\ZZ_+}$) of elements of $P_{G_1}$ (resp. $P_{G_2}$) converging to $ (\underline a,\underline p) $ as $s$ goes to infinity. 
    If for every $s\in\ZZ_+$ and $l\in\{1,2\}$ we set
    \begin{align*}
        \widetilde{y}_{\infty}^{s,l}(t) &:= \sum_{j=1}^{\nbParam} p_j^{s,l} \left( t - \left( z_2^{(G_l)}(\underline a^{s,l},\underline p^{s,l}) + \dots  + z_j^{(G_l)}(\underline a^{s,l},\underline p^{s,l})\right)\right)_+ , 
    \end{align*}
    then $\widetilde{y}_{\infty}^{s,l}$ is the only stationary trajectory for parameters $(\underline a^{s,l},\underline p^{s,l})$. In other words, $\widetilde{y}_{\infty}^{s,l}$ is the only function satisfying \eqref{eq:recursiveFormulaSpeedTilde3} for parameters $(\underline a^{s,l},\underline p^{s,l})$. The functions $t\mapsto\widetilde{y}_{\infty}^{s,l}(t)$ and its inverse $x\mapsto\widetilde{t}_{\infty}^{s,l}(x)$ are both continuous piecewise affine functions with $N$ points of non-differentiability. The coordinates of these points of non-differentiability are rational functions of the parameters $(\underline a,\underline p)$. 
    When $s$ goes to infinity, we denote by $\widetilde{y}_{\infty}^{l}(t)$ and $\widetilde{t}_{\infty}^{l}(x)$ the pointwise limits of $\widetilde{y}_{\infty}^{s,l}(t)$ and $\widetilde{t}_{\infty}^{s,l}(x)$. The functions $t\mapsto\widetilde{y}_{\infty}^{l}(t)$ and $x\mapsto\widetilde{t}_{\infty}^{l}(x)$ are still inverses of each other and they satisfy \eqref{eq:recursiveFormulaSpeedTilde3} for parameters $(\underline a,\underline p)$.
    
    Since the stationary trajectory for parameters $(\underline a,\underline p)$ is unique by Theorem \ref{theorem:convergenceOfDynamics}, $\widetilde{y}_{\infty}^{1} = \widetilde{y}_{\infty}^{2}$.  
    As a consequence, the coordinates of the points of non-differentiability of $\widetilde{y}_{\infty}^{1}$ and $\widetilde{y}_{\infty}^{2}$ are the same, which means that $ z^{(G_1)}_i(\underline a,\underline p) =  z^{(G_2)}_i (\underline a,\underline p) $ for all $ i \in \llbracket 1 , \nbParam \rrbracket$. 
\end{proof}

We will also need the continuity lemma below. Whenever we have some finite sequence $\underline s$ of real numbers of length at least $2$, we shall denote by $\underline s^\dagger$ the sequence obtained from $\underline s$ by removing its last element. For example, if $\underline p=(p_1,\ldots,p_N)$, then $\underline p^\dagger=(p_1,\ldots,p_{N-1})$.

\begin{lemma}\label{lemma:continuityInParameterPN}
    Fix $\nbParam \geq 2$. Let $(z_i(\underline a,\underline p))_{i \in \llbracket 1 , \nbParam \rrbracket}$ (resp. $z'_i(\underline a^\dagger,\underline p^\dagger)_{i \in \llbracket 1 , \nbParam - 1 \rrbracket}$) be the functions defined by \eqref{eq:definitionOfZi} for $(\underline a,\underline p)\in P^{2N}$ (resp. $(\underline a^\dagger,\underline p^\dagger)\in P^{2N-2}$). Then  we have
    \[ \forall i \in \llbracket 1 , \nbParam - 1 \rrbracket, \, z_i(\underline a,\underline p) \xrightarrow[p_{\nbParam} \rightarrow 0]{} z_i'(\underline a^\dagger,\underline p^\dagger).\]
Moreover
\[
z_N(\underline a,\underline p) \xrightarrow[p_{\nbParam} \rightarrow 0]{} \frac{a_N-a_{N-1}}{q_{N-1}}.
\]
\end{lemma}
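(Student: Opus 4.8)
Note that Proposition~\ref{proposition:continuityInParameters} does not apply directly, since it concerns continuity inside $P^{2N}$, whereas the locus $\{p_N=0\}$ lies on its boundary and moreover changes the number of rules. The plan is instead to argue straight from the fixed-point description \eqref{eq:systemStatCar} of the stationary trajectory, combined with the uniqueness statement of Theorem~\ref{theorem:convergenceOfDynamics} and a compactness argument; this avoids having to track how the region $\Gr(\underline a,\underline p)$ degenerates as $p_N\to0$. Throughout, $\underline a$ and $p_1,\dots,p_{N-1}$ are held fixed and only $p_N\in(0,\varepsilon)$ varies. Write $s_i:=\widetilde t_\infty(a_i)$ for $i\in\llbracket1,N\rrbracket$, so that $z_i(\underline a,\underline p)=s_i-s_{i-1}$ with $s_0:=0$, and recall that $0<s_1<\dots<s_N$ and that $(s_1,\dots,s_N)$ solves $a_i=\sum_{j=1}^N p_j(s_i-s_j+s_1)_+$ for every $i$ by \eqref{eq:systemStatCar}.

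The first step is a uniform a priori bound. Keeping only the $j=1$ summand in the $i$-th equation, and using $s_1>0$, gives $a_i\ge p_1 s_i$, hence $s_i\le a_N/p_1$ for all $i$ and all $p_N\in(0,\varepsilon)$; bounding each bracket in the first equation by $(2s_1-s_j)_+\le s_1$ (valid since $s_j\ge s_1$) gives $a_1\le q_N s_1$, hence $s_1\ge a_1/(q_{N-1}+\varepsilon)$. Thus, along any sequence $p_N^{(m)}\to0$, the associated vectors $\underline s^{(m)}=(s_1^{(m)},\dots,s_N^{(m)})$ remain in a fixed compact subset of $(0,\infty)^N$.

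The second step is to identify every subsequential limit. Since each $(s_i-s_j+s_1)_+$ is a continuous function of $\underline s$ with bounded values on the compact set above, letting $m\to\infty$ along a subsequence with $\underline s^{(m)}\to\underline s^\ast$ makes the $j=N$ term vanish and yields $a_i=\sum_{j=1}^{N-1}p_j(s_i^\ast-s_j^\ast+s_1^\ast)_+$ for every $i\in\llbracket1,N\rrbracket$. The limit inherits $s_1^\ast>0$ and $s_1^\ast\le\dots\le s_N^\ast$; subtracting row $i$ from row $i+1$ in this limiting system and retaining the $j=1$ term gives $a_{i+1}-a_i\ge p_1(s_{i+1}^\ast-s_i^\ast)$, so in fact $s_1^\ast<\dots<s_N^\ast$. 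The first $N-1$ of these equations say precisely that $y^\ast:t\mapsto\sum_{j=1}^{N-1}p_j(t-s_j^\ast+s_1^\ast)_+$, which is a continuous strictly increasing bijection of $\RR_{\ge0}$ (its right derivative is everywhere $\ge p_1>0$ and $y^\ast(0)=0$), satisfies $y^\ast(s_i^\ast)=a_i$ for $i\le N-1$; substituting $s_i^\ast=(y^\ast)^{-1}(a_i)$ back into its own definition exhibits $y^\ast$ as a stationary trajectory for the $(N-1)$-rule model with parameters $(\underline a^\dagger,\underline p^\dagger)$ in the sense of \eqref{eq:recursiveFormulaSpeedTilde3}. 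By the uniqueness part of Theorem~\ref{theorem:convergenceOfDynamics} applied with $N-1$ rules, $y^\ast$ is the stationary trajectory of that model, so $s_i^\ast-s_{i-1}^\ast=z_i'(\underline a^\dagger,\underline p^\dagger)$ for $i\in\llbracket1,N-1\rrbracket$. Finally, subtracting the $(N-1)$-th equation of the limiting system from the $N$-th one — all brackets being positive since $\underline s^\ast$ is strictly increasing and $s_1^\ast>0$ — gives $a_N-a_{N-1}=q_{N-1}(s_N^\ast-s_{N-1}^\ast)$, i.e. $s_N^\ast-s_{N-1}^\ast=(a_N-a_{N-1})/q_{N-1}$.

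The conclusion is then immediate: the subsequential limit $\underline s^\ast$ is the same for every subsequence, so the bounded sequence $\underline s^{(m)}$ converges to it, and since $(p_N^{(m)})$ was an arbitrary sequence tending to $0$, we obtain $z_i(\underline a,\underline p)\to z_i'(\underline a^\dagger,\underline p^\dagger)$ for $i\le N-1$ and $z_N(\underline a,\underline p)\to(a_N-a_{N-1})/q_{N-1}$ as $p_N\to0$. I expect the only genuinely delicate point to be the step ``solves the degenerate system $\Rightarrow$ is the stationary data of the $(N-1)$-rule model'': one must produce the explicit candidate $y^\ast$, verify that it is honestly invertible (which is why one first needs the strict-monotonicity and positivity facts, and hence the a priori bounds) and only then invoke uniqueness; everything else is routine continuity and compactness.
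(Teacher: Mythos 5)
Your argument is correct in substance but takes a genuinely different route from the paper's. The paper picks a DC graph $G$ with $(\underline a,\underline p^\dagger,0)\in\overline{P_G}$, approaches the degenerate point through a sequence in $P_G$, uses the rationality (hence continuity up to $p_N=0$) of the explicit solutions $z_i^{(G)}$ to pass to the limit in the piecewise-affine stationary trajectory, identifies the limit as the stationary trajectory of the $(N-1)$-rule model via the uniqueness part of Theorem~\ref{theorem:convergenceOfDynamics}, and treats $z_N$ separately through the integral formula $z_N=\int_{a_{N-1}}^{a_N}\frac{ds}{\widetilde v_\infty\circ\widetilde t_\infty(s)}$. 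You instead work directly with the nonlinear fixed-point system \eqref{eq:systemStatCar}, establish uniform a priori bounds on $\underline s=(\widetilde t_\infty(a_i))_i$, extract subsequential limits by compactness, and identify every limit through the reduced $(N-1)$-variable system plus the same uniqueness theorem. Your route bypasses the explicit formulas and the choice of a region $P_G$ entirely, and it treats arbitrary approaches $p_N\to0$ at once rather than a single sequence inside one region; the price is the a priori bounds, which you supply correctly, and the verification that the limiting $y^\ast$ is honestly a stationary trajectory of the reduced model, which you also carry out. The derivation of the $z_N$ limit by differencing rows $N$ and $N-1$ is a clean alternative to the paper's integral argument.

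One step is misstated, although the proof survives: retaining only the $j=1$ term in the difference of rows $i+1$ and $i$ gives $a_{i+1}-a_i\ge p_1(s_{i+1}^\ast-s_i^\ast)$, which bounds $s_{i+1}^\ast-s_i^\ast$ from \emph{above} and therefore cannot yield $s_{i+1}^\ast>s_i^\ast$. Strictness follows from the reverse estimate: since $x\mapsto x_+$ is $1$-Lipschitz and non-decreasing, each bracket difference is at most $s_{i+1}^\ast-s_i^\ast$, so $a_{i+1}-a_i\le q_{N-1}(s_{i+1}^\ast-s_i^\ast)$, whence $s_{i+1}^\ast-s_i^\ast\ge (a_{i+1}-a_i)/q_{N-1}>0$. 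In fact the only place you invoke strict monotonicity — positivity of the brackets in rows $N-1$ and $N$ — already follows from $s_1^\ast\le\cdots\le s_N^\ast$ together with $s_1^\ast>0$, so this is a local slip rather than a structural gap.
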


 \begin{proof}
Since $(\underline a,\underline p^\dagger,0)$ lies in the closure $\overline {P^{2N}}$ of $P^{2N}=\bigsqcup_{G\in\DC_N} P_G$, there exists some $G\in\DC_N$ such that $(\underline a,\underline p^\dagger,0)\in\overline{P_G}$. Pick a sequence $(\underline a^s, \underline p^s )_{s \in \ZZ_{>0}}$ of elements of $P_G$ converging to $(\underline a,\underline p^\dagger,0)$. On $P_G$, $z_i$ coincides with $z_i^{(G)}$ for every $i\in\llbracket 1 , \nbParam-1 \rrbracket$.

Set $ \widetilde{y}_{\infty}^{s,G} $ to be the unique solution of the system \eqref{eq:recursiveFormulaSpeedTilde3}: 
\begin{align*}
        \widetilde{y}_{\infty}^{s,G}(t) &:= \sum_{j=1}^{\nbParam} p_j^{s} \left( t - \left( z_2^{(G)}(\underline a^{s},\underline p^{s}) + \dots  + z_j^{(G)}(\underline a^{s},\underline p^{s})\right)\right)_+. 
\end{align*}
By definition of a stationary trajectory \eqref{eq:recursiveFormulaSpeedTilde3}, 
\begin{align}\label{eq:recursiveFormulaSpeedTilde3BIS}
\widetilde{y}_\infty^{s,G}(t) = \sum_{j=1}^{\nbParam} p_j^s \left(t - \widetilde{t}^{s,G}_\infty(a_j^s) + \widetilde{t}^{s,G}_\infty(a_1^s) \right)_+,
\end{align}
where $\widetilde{t}_{\infty}^{s,G}$ is the inverse bijection of $\widetilde{y}_{\infty}^{s,G}$, and $(\widetilde{t}^{s,G}_\infty(a_j^s))_{j \in \llbracket 1 , \nbParam \rrbracket}$ is non-decreasing.  

For every $i\in\llbracket 1 , \nbParam \rrbracket$, the function $z_i^{(G)}$ is rational hence continuous in the parameters $(\underline a,\underline p)$. Thus, when $s$ goes to infinity, $\widetilde{y}_{\infty}^{s,G} $ pointwise converges on $\RR_{\geq0}$ to $\widetilde{y}_{\infty}^{G} $ defined by 
\begin{align*}
        \widetilde{y}_{\infty}^{G}(t) &:= \sum_{j=1}^{\nbParam-1} p_j \left( t - \left( z_2^{(G)}(\underline a,\underline p^\dagger,0) + \dots  + z_j^{(G)}(\underline a,\underline p^\dagger,0)\right)\right)_+ . 
\end{align*}
The function $\widetilde{y}_{\infty}^{G}$ is a bijection since $p_1>0$. Since $\widetilde{y}_{\infty}^{s,G}$ is a piecewise affine function for which the slopes and the coordinates of the points of non-differentiability  are continuous functions of the parameters, $\widetilde{t}_{\infty}^{s,G}$ also pointwise converges to $\widetilde{t}_{\infty}^{G}$, where $\widetilde{t}_{\infty}^{G}$ is the inverse bijection of $\widetilde{y}_{\infty}^{G}$. 
By taking the limit $s$ goes to infinity in \eqref{eq:recursiveFormulaSpeedTilde3BIS}, one obtains that  
\begin{align}
\widetilde{y}_\infty^{G}(t) = \sum_{j=1}^{\nbParam-1} p_j \left(t - \widetilde{t}^{G}_\infty(a_j) + \widetilde{t}^{G}_\infty(a_1) \right)_+,
\end{align}
where $\widetilde{t}^{G}_{\infty}$ is the inverse bijection of $\widetilde{y}^{G}_{\infty}$. 

Therefore, $\widetilde{y}_{\infty}^{G}$ is a stationary trajectory for parameters $(\underline a^\dagger,\underline p^\dagger)$. 

Since the stationary trajectory is unique by Theorem \ref{theorem:convergenceOfDynamics}, $\widetilde{y}_\infty^{G} = \widetilde{y}'_\infty$, where $\widetilde{y}'_\infty$ is the stationary trajectory for parameters $(\underline a^\dagger,\underline p^\dagger)$, with inverse denoted by $\widetilde{t}'_\infty$. 
By comparing the points of non-differentiability of $\widetilde{y}_\infty^{G}$ and $\widetilde{y}'_\infty$, one obtains that $z_i^{(G)}(\underline a,\underline p^\dagger,0)  = z_i'(\underline a^{\dagger},\underline p^{\dagger}) $ for all $i \in \llbracket 2 , \nbParam -1 \rrbracket$. To get this equality for $i=1$, observe that $ z_1^{(G)}(\underline a^s,\underline p^s)=\widetilde{t}_{\infty}^{s,G}(a_1)$ for all $s>0$, which implies that \[z_1^{(G)}(\underline a,\underline p^\dagger,0)=\widetilde{t}_{\infty}^{G}(a_1)=\widetilde{t}'_{\infty}(a_1)=z_1'(\underline a^{\dagger},\underline p^{\dagger}).\]
The result for $z_N$ follows from taking the limit $p_N$ goes to $0$ in the formula
\[
z_N= \int_{a_{N-1}}^{a_N} \frac{1}{\widetilde v_\infty\circ \widetilde t_\infty(s)}ds,
\]
since every car beyond position $a_{N-1}$ has a speed equal to either $q_{N-1}$ or $q_{N-1}+p_N$.
\end{proof}

\begin{proof}[Proof of Theorem \ref{theorem:nonemptiness}]
    Let us proceed by induction on $\nbParam=|V(G)|$. For $\nbParam= 1$, the only possible graph is the one with a single vertex $G=(\{ 1 \} , \varnothing )$. In this case, $\mathring{P_G}$ is the whole parameter space $P^2$ which is non-empty.

    Now, fix $N\geq 2$ and assume that $\mathring{P_{G'}}$ is non-empty for every DC graph $G'$ with at most $N-1$ vertices. Let $(z_i(\underline a,\underline p))_{i \in \llbracket 1 , \nbParam \rrbracket}$ (resp. $z'_i(\underline a',\underline p')_{i \in \llbracket 1 , \nbParam - 1 \rrbracket}$) be the functions defined by \eqref{eq:definitionOfZi} for $(\underline a,\underline p)\in P^{2N}$ (resp. $(\underline a',\underline p')\in P^{2N-2}$).
    
    Let $G\in\DC_N$ and define $G'$ to be the restriction of $G$ to the vertex set $\llbracket 1 , \nbParam - 1 \rrbracket$.

    It suffices to find $(\underline a,\underline p)\in P^{2N}$ satisfying the following two conditions:
    \begin{enumerate}
    \item For every $1\leq i<j\leq N$ such that $(i,j)\in E(G)$,
    \begin{equation}
    \label{eq:proofExistParam1} 
         z_1(\underline a, \underline p) > z_{i+1}(\underline a, \underline p) + \dots + z_j(\underline a, \underline p).
    \end{equation}
    \item For every $1\leq i<j\leq N$ such that $(i,j)\notin E(G)$,
    \begin{equation}
    \label{eq:proofExistParam2} 
         z_1(\underline a, \underline p) < z_{i+1}(\underline a, \underline p) + \dots + z_j(\underline a, \underline p).
    \end{equation}
    \end{enumerate}
    Indeed, by the definition \eqref{eq:edgeGrapz} of $\Gr(\underline a,\underline p)$ in terms of the $z_i$, this will imply that $(\underline a,\underline p)\in P_G$. Since the inequalities are strict and the $z_i$ are continuous functions of $(\underline a,\underline p)\in P^{2N}$ by Proposition \ref{proposition:continuityInParameters}, every collection of parameters in a neighborhood of $(\underline a,\underline p)$ will satisfy the same inequalities, thus we will conclude that $\mathring{P_G}$ is non-empty.

By induction hypothesis, $\mathring{P_{G'}}$ is non-empty. By \eqref{eq:edgeGrapz}, this implies that for every $(\underline a',\underline p')\in\mathring{P_{G'}}$ and every $1\leq i<j\leq N-1$ such that $(i,j)\in E(G')$, we have
\begin{equation}
\label{eq:proofExistParam1G'}
z'_1(\underline a', \underline p') > z'_{i+1}(\underline a', \underline p') + \dots + z'_j(\underline a', \underline p').
\end{equation}
Moreover, for every $(\underline a',\underline p')\in\mathring{P_{G'}}$ and every $1\leq i<j\leq N-1$ such that $(i,j)\notin E(G')$, we have by \eqref{eq:edgeGrapz}
\begin{equation}
\label{eq:proofExistParam2G'large}
z'_1(\underline a', \underline p') \leq z'_{i+1}(\underline a', \underline p') + \dots + z'_j(\underline a', \underline p').
\end{equation}
The functions $z'_{i+1}+\cdots+z'_j-z'_1$ are non-constant rational functions on the non-empty open set $\mathring{P_{G'}}$, thus one can find $(\underline a', \underline p') := (a_1, \dots , a_{\nbParam - 1}, p_1,\ldots,p_{\nbParam - 1 }) \in \mathring{P_{G'}}$ such that for every $1\leq i<j\leq N-1$ with $(i,j)\notin E(G')$, we have
\begin{equation}
\label{eq:proofExistParam2G'strict}
z'_1(\underline a', \underline p') < z'_{i+1}(\underline a', \underline p') + \dots + z'_j(\underline a', \underline p').
\end{equation} 

From now on, the values of $(\underline a',\underline p')$ are fixed as in the previous paragraph. We complete $(\underline a', \underline p')$ to $(\underline a,\underline p)$, where $a_N>a_{N-1}$ and $p_N>0$ are not fixed for the moment and will be appropriately chosen later. With this completion, $(\underline a',\underline p')=(\underline a^\dagger,\underline p^\dagger)$.

 By Lemma \ref{lemma:continuityInParameterPN}, for every $1\leq i\leq N$, the function $z_i$ can be extended by continuity to the case when $p_N=0$ by setting
    \begin{align}
    z_i(\underline a^\dagger,a_N,\underline p^\dagger,0)&:=z'_i(\underline a^\dagger,\underline p^\dagger) \quad \text{ if } i\in\llbracket 1 , \nbParam -1 \rrbracket; \label{eq:zipnzero} \\
    z_N(\underline a^\dagger,a_N,\underline p^\dagger,0)&:=\frac{a_N-a_{N-1}}{q_{N-1}}. \label{eq:znpnzero}
    \end{align}
With this extension, we have all the inequalities \eqref{eq:proofExistParam1} and \eqref{eq:proofExistParam2} whenever $j\neq N$ and $p_N=0$, regardless of the choice of $a_N>a_{N-1}$.

Consider the parameters $(\underline a,\underline p)$, with $(\underline a^\dagger,\underline p^\dagger)$ fixed as above and $a_N>a_{N-1}$ and $p_N\geq0$ free parameters for now. Recall that for every $1\leq i\leq N$, the position of the car of index $-1$ in a stationary configuration where the car of index $0$ is at position $a_i$ is given by $\chi_i=\widetilde y_\infty(\widetilde t_\infty (a_1)+\widetilde t_\infty (a_i))$. We extend this definition to the case $i=0$ by setting $\chi_0:=a_1$. Consider the quantities $\chi_{i}$ as functions of $(a_N,p_N)$. Observe that for every $i\in\llbracket 1,N-1\rrbracket$, $\chi_{i}(a_N,0)$ is independent of the value of $a_N>a_{N-1}$. We denote it by $\chi_{i}(\cdot,0)$.

   Define
   \[i_0:= \min \Set{i \in \llbracket 1 , \nbParam \rrbracket}{b_G(i)=N}.\]   If $i_0\leq N-2$, since $G$ is a DC graph containing the edge $(i_0,N)$, we also have $(i_0,N-1)\in E(G')$. Since $(\underline a^\dagger,\underline p^\dagger)\in P_{G'}$, Lemma \ref{lem:Grcarinterpretation} implies that $\chi_{i_0}(\cdot,0)>a_{N-1}$ whenever $i_0\leq N-2$. This inequality also clearly holds when $i_0=N-1$. We also have that $\chi_{i_0}(\cdot,0)>\chi_{i_0-1}(\cdot,0)$. Pick $a_N$ such that
   \[\max(a_{N-1},\chi_{i_0-1}(\cdot,0))<a_N<\chi_{i_0}(\cdot,0).\]
   The value of $a_N$ is now fixed and we now consider $\chi_{i_0}(a_N,p_N)$ and $\chi_{i_0-1}(a_N,p_N)$ as functions of a single variable $p_N\geq0$. It follows from \eqref{eq:defxi} that $\chi_{i_0}(a_N,p_N)$ and $\chi_{i_0-1}(a_N,p_N)$ may be entirely expressed as continuous functions of the $z_i$. By Proposition \ref{proposition:continuityInParameters} and Lemma \ref{lemma:continuityInParameterPN}, we have that $\chi_{i_0}(a_N,p_N)$ and $\chi_{i_0-1}(a_N,p_N)$ are continuous functions of $p_N\geq0$. Thus, the inequalities
    \[\max(a_{N-1},\chi_{i_0-1}(a_N,p_N))<a_N<\chi_{i_0}(a_N,p_N),\]
    which hold for $p_N=0$ also hold for every $p_N\in(0,2\epsilon)$ for some $\epsilon>0$. Fix $p_N=\epsilon$. Up to reducing the value of $\epsilon>0$, by continuity, we may assume that inequalities \eqref{eq:proofExistParam1} and \eqref{eq:proofExistParam2} also hold for all $j<N$ for this choice of $(\underline a,\underline p)$. The $2N$-tuple $(\underline a,\underline p)$ is now completely fixed.
    
    Since $a_N<\chi_{i_0}(a_N,p_N)$, Lemma \ref{lem:Grcarinterpretation} implies that $(i_0,N)$ is an edge of $\Gr(\underline a,\underline p)$. Hence inequality \eqref{eq:proofExistParam1} holds for $(i,j)=(i_0,N)$ and thus also for all $(i,N)$ with $i\in\llbracket i_0,N-1\rrbracket$, which are precisely the pairs of the form $(i,N)$ in $E(G)$.

    Since $a_N>\chi_{i_0-1}(a_N,p_N)$, Lemma \ref{lem:Grcarinterpretation} implies that $(i_0-1,N)$ is not an edge of $\Gr(\underline a,\underline p)$. Hence inequality \eqref{eq:proofExistParam2} holds for $(i,j)=(i_0-1,N)$ and thus also for all $(i,N)$ with $i\in\llbracket 1,i_0-1\rrbracket$, which are precisely the pairs of the form $(i,N)$ which are not in $E(G)$.

    Collecting everything, we have found $(\underline a,\underline p)\in P^{2N}$ such that, in the case when there exists an edge of the form $(i,N)\in E(G)$, all the inequalities \eqref{eq:proofExistParam1} and \eqref{eq:proofExistParam2} hold.

If $i_0=N$, we pick $a_N>\chi_{N-1}(\cdot,0)$ and we conclude using the same line of proof as above.
\end{proof}
   
\section{Adjacency structure of the regions}
\label{sec:adjacency}

In this section we characterize when two regions $P_{G_1}$ and $P_{G_2}$ are adjacent and we compute the dimension of their common boundary.

Let $m(G)$ denote the set of maximal edges of $G$. Denote by $M(G)$ the set of pairs $(i,j)\in E_N\setminus E(G)$ such that $G' = (\llbracket 1 , \nbParam \rrbracket, E(G)\cup \{(i,j)\} )$ is a DC graph.
Equivalently, $M(G)$ is the set of minimal elements of $E_N\setminus E(G)$ for the partial order $\preceq_E$. 

We can now state the main result of this section:

\begin{theorem}[Adjacency between regions]\label{theorem:boundaryAnalysis}
    Let $G_1$ and $G_2$ be two distinct DC graphs in $\DC_N$. Then $\partial P_{G_1} \cap \partial P_{G_2}$ is non-empty if and only if
    \begin{equation} \label{eq:boundaryAnalysisCondition} E(G_1)\backslash E(G_2) \subseteq m(G_1) \quad \text{and} \quad E(G_2) \backslash E(G_1) \subseteq M(G_1). \end{equation}
    In this case, the codimension of $\partial P_{G_1} \cap \partial P_{G_2}$ is $ |E(G_1)\Delta E(G_2)| $. 
\end{theorem}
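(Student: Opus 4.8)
The plan is to reformulate the adjacency condition in terms of the rational functions $z_i^{(G)}$ on the whole parameter space and then use the explicit inequalities \eqref{eq:edgeGrapz} that cut out each region. Recall from \eqref{eq:edgeGrapz} that $(\underline a,\underline p)\in P_G$ if and only if for every $(i,j)\in E_N$, the sign of $z_1(\underline a,\underline p)-z_{i+1}(\underline a,\underline p)-\cdots-z_j(\underline a,\underline p)$ matches membership of $(i,j)$ in $E(G)$ (strictly positive if $(i,j)\in E(G)$, strictly negative otherwise), where on $P_G$ the $z_i$ equal the rational functions $z_i^{(G)}$ by Theorem~\ref{theorem:formulaSpeed}. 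For a DC graph $G$ and an edge $(i,j)$, write $\phi_{i,j}^{(G)}(\underline a,\underline p):=z_1^{(G)}-z_{i+1}^{(G)}-\cdots-z_j^{(G)}$; these are rational functions on $P^{2N}$. A point of $\partial P_{G_1}\cap\partial P_{G_2}$ must be a common limit of sequences from $P_{G_1}$ and from $P_{G_2}$; by continuity of the $z_i$ (Proposition~\ref{proposition:continuityInParameters}), at such a limit point the functions $\phi_{i,j}^{(G_1)}$ and $\phi_{i,j}^{(G_2)}$ agree (both equal the common limit value of $z_1-z_{i+1}-\cdots-z_j$), so each such $\phi$ is $\geq 0$ from the $G_1$ side and $\leq 0$ from the $G_2$ side (or vice versa) exactly on the symmetric difference $E(G_1)\Delta E(G_2)$, and hence vanishes there. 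This is the core mechanism: \emph{adjacency forces $\phi_{i,j}=0$ for all $(i,j)\in E(G_1)\Delta E(G_2)$.}

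\textbf{For the "only if" direction}, I would argue that $E(G_1)\Delta E(G_2)$ must be an antichain (equivalently, condition \eqref{eq:boundaryAnalysisCondition} holds, via Proposition~\ref{prop:equivalentcharac}). Suppose toward a contradiction that $(i,j)\prec_E(i',j')$ are two comparable edges in the symmetric difference. One shows that at a boundary point the vanishing of $\phi_{i,j}$ and $\phi_{i',j'}$ together with the downward-closure constraints on $G_1,G_2$ are incompatible: if the nested edge $(i,j)$ is present in, say, $G_1$ but the outer edge $(i',j')$ is also in the symmetric difference, then since $G_1$ is downward closed, $(i,j)\prec_E(i',j')$ and $(i',j')\in E(G_1)$ would force... — one analyzes the four cases for which of $(i,j),(i',j')$ lie in $E(G_1)$ versus $E(G_2)$ (downward-closure of both graphs rules out two of them: one cannot have the outer edge in but the nested edge out). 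In the surviving cases, vanishing of both $\phi$'s pins down too many independent equations; more precisely, $z_1-z_{i+1}-\cdots-z_j=0$ and $z_1-z_{i'+1}-\cdots-z_{j'}=0$ with $[i+1,j]\subseteq[i'+1,j']$ would force $z_{i'+1}+\cdots+z_i + z_{j+1}+\cdots+z_{j'}=0$, a sum of strictly positive quantities (since all $z_l>0$ by Definition~\ref{def:linearsystem} and positivity of the car speeds), contradiction. So $E(G_1)\Delta E(G_2)$ must be an antichain.

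\textbf{For the "if" direction and the codimension computation}, assume \eqref{eq:boundaryAnalysisCondition} holds, so $E(G_1)\Delta E(G_2)$ is an antichain, say of size $r=|E(G_1)\Delta E(G_2)|$ with elements $(i_1,j_1),\ldots,(i_r,j_r)$. I would exhibit a point in $\partial P_{G_1}\cap\partial P_{G_2}$ by a perturbation argument analogous to the proof of Theorem~\ref{theorem:nonemptiness}: start from an interior point of $P_{G_1}$ (non-empty by that theorem) and deform the parameters so that the $r$ functions $\phi_{i_k,j_k}^{(G_1)}$ are driven to zero simultaneously while all other sign conditions (the ones shared by $G_1$ and $G_2$, valid because the symmetric difference is precisely $\{(i_k,j_k)\}$) are preserved by continuity. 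The antichain condition is exactly what guarantees the $r$ constraints are "independent" in the sense that their common zero locus is locally a codimension-$r$ submanifold: because the edges are pairwise non-nested, the corresponding linear combinations $z_1-\sum_{l=i_k+1}^{j_k}z_l$ involve disjoint or only partially overlapping index blocks in a way that makes the Jacobian full rank — here I would either compute the Jacobian of $(\phi_{i_1,j_1},\ldots,\phi_{i_r,j_r})$ with respect to a suitable subset of $r$ of the parameters and show it is invertible (using the explicit formulas \eqref{eq:eq:z1formula}, \eqref{ziformula}), or invoke the implicit function theorem after checking the rank condition. The common boundary is then locally the graph of a smooth function over a codimension-$r$ slice, giving Lebesgue covering dimension $2N-r$, i.e. codimension $r=|E(G_1)\Delta E(G_2)|$. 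Conversely, that the codimension is \emph{not smaller} than $r$ follows because on the boundary all $r$ equations $\phi_{i_k,j_k}=0$ hold, and these are genuinely distinct constraints (no two define the same hypersurface, again by the antichain/positivity argument above).

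\textbf{The main obstacle} I anticipate is the codimension count: proving that the $r$ vanishing conditions are independent (full-rank Jacobian) rather than merely showing the intersection is non-empty. The sign/positivity argument handles non-emptiness and the "antichain is necessary" direction cleanly, but verifying that an antichain of size $r$ yields exactly codimension $r$ — not more, not less — requires either a careful anal\-ysis of how the rational functions $z_i^{(G_1)}$ depend on the $d_l$ and $q_l$ variables near the boundary point, or an inductive argument peeling off one maximal edge at a time (reducing to the $N-1$ case as in Lemma~\ref{lemma:continuityInParameterPN}). I expect the cleanest route is to use the $(\underline d,\underline q)$ coordinates: the defining inequalities of $P_G$, after the change of variables, become sign conditions on certain polynomials (as displayed in Figure~\ref{fig:Hasse3}), and for an antichain the relevant polynomials can be shown to have linearly independent differentials at a suitable boundary point by an explicit choice of the free parameters.
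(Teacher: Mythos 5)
Your ``only if'' direction is essentially the paper's argument (Lemma~\ref{lemma:step1}): at a common boundary point the functions $z_i$ coincide with both $z_i^{(G_1)}$ and $z_i^{(G_2)}$ by continuity, two nested edges in the symmetric difference force $Z_{i,j}=Z_{i',j'}=z_1$, and the difference is a non-empty sum of strictly positive $z_l$'s. That part is sound. The two remaining parts, however, contain genuine gaps rather than mere omissions of routine detail. First, for existence: ``deform the parameters so that the $r$ functions $\phi_{i_k,j_k}^{(G_1)}$ are driven to zero simultaneously while all other sign conditions are preserved'' presupposes exactly the independence statement you defer to the codimension step, and it also presupposes that along the deformation the rational functions $z_i^{(G_1)}$ continue to describe the true dynamics (they only do so on $\overline{P_{G_1}}$, via Lemma~\ref{lemma:positivityC2C3}); and once you reach the zero locus you still must argue that the point lies in $\partial P_{G_2}$ as well, which requires checking that the solution of $\mathcal S^{(G_1)}(\underline a,\underline p)$ also solves $\mathcal S^{(G_2)}(\underline a,\underline p)$ there (the two systems differ only in terms that vanish by $(C_{E(G_1)\Delta E(G_2)}^{\mathsmaller{=}})$ — this is Proposition~\ref{proposition:step2and3}). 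The paper avoids the perturbation entirely and instead constructs the point by induction on $N$, as in Theorem~\ref{theorem:nonemptiness}: extend a common boundary point of $P_{G_1'}$ and $P_{G_2'}$ by taking $p_N$ small, tune $a_N$ using the car positions $\chi_i$, and then solve a linear system for $\underline d$.

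Second, and more seriously, your justification of the codimension lower bound on the defect — ``these are genuinely distinct constraints (no two define the same hypersurface)'' — does not imply that $r$ hypersurfaces intersect in codimension $r$; pairwise distinctness only rules out codimension $1$. The full-rank claim you flag as ``the main obstacle'' is precisely the content you must supply, and the paper's device for it is worth knowing: quotient by the scaling action via the homeomorphism $(\lambda,(\underline a,\underline p))\mapsto(\lambda\underline a,\underline p)$ to impose $z_1=1$, replace each $Z_{i,j}^{(G)}$ by the same-sign quantity $\tilde Z_{i,j}^{(G)}$ of \eqref{eq:defZij}, which by \eqref{eq:Zfijm} is a linear combination of $d_{i+1},\ldots,d_N$ with positive coefficients while $z_1^{(G)}$ carries a positive coefficient on $d_1$, and observe that the antichain condition makes $(i,j)\mapsto i$ injective on $E(G_1)\Delta E(G_2)$. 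The resulting system of $|E(G_1)\Delta E(G_2)|+1$ equations in the unknowns $d_1$ and $(d_{i+1})$ is then triangular with non-zero diagonal, which gives both the upper bound on the dimension and the transversality needed for the lower bound. Without this (or an equivalent explicit rank computation), the codimension statement is unproved.
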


The following proposition guarantees that Theorem~\ref{theorem:boundaryAnalysis} is equivalent to Theorem~\ref{theorem:mainadjacency}. 
    
\begin{proposition}
\label{prop:equivalentcharac}
    Condition \eqref{eq:boundaryAnalysisCondition} holds if and only if $E(G_1) \Delta E(G_2)$ is an antichain for the poset $(E_N,\preceq_E)$.
\end{proposition}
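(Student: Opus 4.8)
The plan is to unwind the definitions of $m(G_1)$ and $M(G_1)$ in terms of the poset $(E_N,\preceq_E)$ and then argue in both directions. Recall that $m(G_1)$ is the set of maximal elements of $(E(G_1),\preceq_E)$ and that $M(G_1)$ is the set of minimal elements of $(E_N\setminus E(G_1),\preceq_E)$; also, since $G_1$ and $G_2$ are DC graphs, both $E(G_1)$ and $E(G_2)$ are downward closed for $\preceq_E$. Write $A := E(G_1)\setminus E(G_2)$ and $B := E(G_2)\setminus E(G_1)$, so that $E(G_1)\Delta E(G_2) = A\sqcup B$, $A\subseteq E(G_1)$ and $B\subseteq E_N\setminus E(G_1)$.

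First I would prove the forward implication: assume \eqref{eq:boundaryAnalysisCondition}, i.e.\ $A\subseteq m(G_1)$ and $B\subseteq M(G_1)$, and show $A\sqcup B$ is an antichain. No two elements of $A$ are comparable because $A\subseteq m(G_1)$ and distinct maximal elements of a poset are incomparable; similarly no two elements of $B$ are comparable because $B\subseteq M(G_1)$ consists of minimal elements of $E_N\setminus E(G_1)$. It remains to rule out a comparable pair $e\in A$, $e'\in B$. If $e'\prec_E e$, then $e'$ would be nested below an edge of the downward closed set $E(G_1)$, forcing $e'\in E(G_1)$, contradicting $e'\in B$. If $e\prec_E e'$: since $E(G_2)$ is downward closed and $e'\in E(G_2)$, we would get $e\in E(G_2)$, contradicting $e\in A$. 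Hence $A\sqcup B$ is an antichain.

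Next the converse: assume $A\sqcup B$ is an antichain; I must show $A\subseteq m(G_1)$ and $B\subseteq M(G_1)$. For $A\subseteq m(G_1)$: take $e\in A\subseteq E(G_1)$ and suppose $e$ is not maximal in $E(G_1)$, so there is $e''\in E(G_1)$ with $e\prec_E e''$. Since $A\sqcup B$ is an antichain and $e\prec_E e''$, we have $e''\notin A\sqcup B$; as $e''\in E(G_1)$ this forces $e''\in E(G_1)\cap E(G_2)$. But $E(G_2)$ is downward closed and $e\prec_E e''\in E(G_2)$ gives $e\in E(G_2)$, contradicting $e\in A$. Hence $e\in m(G_1)$. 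For $B\subseteq M(G_1)$: take $e\in B\subseteq E_N\setminus E(G_1)$. First, $G_1' := (\llbracket 1,N\rrbracket, E(G_1)\cup\{e\})$ is a DC graph: its edge set is downward closed since $E(G_1)$ is downward closed and any $e'\prec_E e$ already lies in $E(G_1)$ — indeed $e'\prec_E e$ with $e\in E(G_2)$ and $E(G_2)$ downward closed gives $e'\in E(G_2)$, so $e'\notin B$, and since $A\sqcup B$ is an antichain $e'\notin A$ either, hence $e'\in E(G_1)\cap E(G_2)\subseteq E(G_1)$. Thus $e\in M(G_1)$ once we check minimality in $E_N\setminus E(G_1)$, which is exactly the statement just proved that every $e'\prec_E e$ lies in $E(G_1)$. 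This completes the converse, and hence the proof.

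The argument is entirely a matter of chasing the downward-closedness hypotheses through the poset $(E_N,\preceq_E)$, so I do not anticipate a genuine obstacle; the one point to be careful about is keeping straight which of $E(G_1)$, $E(G_2)$ is being used as a downward closed set in each comparability case, and correctly identifying $M(G_1)$ with the minimal elements of $E_N\setminus E(G_1)$ (the ``equivalently'' clause already recorded in the text just before Theorem~\ref{theorem:boundaryAnalysis}).
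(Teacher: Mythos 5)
Your proof is correct and uses exactly the same ingredients as the paper's (downward closedness of both edge sets, plus the characterization of $m(G_1)$ and $M(G_1)$ as maximal/minimal elements); the only cosmetic difference is that you argue both implications directly while the paper argues both by contraposition.
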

    
\begin{proof}
    Let us prove that equivalence. First, assume that \eqref{eq:boundaryAnalysisCondition} does not hold. Therefore, either there is an edge $e'$ in $E(G_1)\backslash E(G_2)$ which is not maximal in $G_1$, or there is an edge $e$ in $E(G_1)\backslash E(G_2)$ which is not in $M(G_1)$. In the first case, since $e' = (i',j')$ is in $E(G_1)$ but not in $m(G_1)$, there exists an $(i,j) \in E(G_1)$ such that $(i,j)\neq (i',j')$ and $i\leq i' < j' \leq j$. In addition, since $(i',j')$ is not in $E(G_2)$ and since $G_2$ is a DC graph, $(i,j)$ is not in $G_2$. Therefore, $\{(i,j),(i',j')\}\subseteq E(G_1)\backslash E(G_2)\subseteq E(G_1)\Delta E(G_2)$. In the second case, since $e = (i,j)$ is not in $E(G_1)$ and not in $M(G_1)$, there exists an edge $(i',j') \notin E(G_1)$ such that $(i,j)\neq (i',j')$ and $i \leq i' < j' \leq j$. Since $G_2$ is a DC graph and since $(i,j)$ is in $E(G_2)$, $(i',j')$ is also in $E(G_2)$. Therefore, $\{(i,j),(i',j')\}\subseteq E(G_2)\backslash E(G_1)\subseteq E(G_1)\Delta E(G_2)$. 

    Reciprocally, assume that the rightmost term in the equivalence  does not hold. Then there exists distinct edges $(i,j) \neq (i',j')$ which are both contained in $E(G_1) \Delta E(G_2)$. There are four possibilities: 
    \begin{itemize}
        \item $\{(i,j),(i',j')\} \subseteq E(G_1)\backslash E(G_2)$,
        \item $\{(i,j),(i',j')\} \subseteq E(G_2)\backslash E(G_1)$,
        \item $(i,j) \in E(G_2)\backslash E(G_1)$ and $(i',j') \in E(G_1)\backslash E(G_2)$,
        \item $(i,j) \in E(G_1)\backslash E(G_2)$ and $(i',j') \in E(G_2)\backslash E(G_1)$. 
    \end{itemize}
    The third case is impossible since $G_2$ is a DC graph: if $(i,j)$ is in $E(G_2)$, then $(i',j')$ must be in $E(G_2)$. Similarly, the fourth point is also impossible since $G_1$ is a DC graph. 
    In the first (resp. second) case, it is easy to check that $(i',j')$ is not in $m(G_1)$ (resp. that $(i,j)$ is not in $M(G_1)$) which implies that \eqref{eq:boundaryAnalysisCondition} does not hold.
\end{proof}

In Subsection \ref{subsec:prelimResults}, we will characterize the boundary of a region $P_{G}$ (Corollary \ref{corollary:caracBorder}) and the intersection of the boundaries of two regions $P_{G_1}$ and $P_{G_2}$ (Proposition \ref{proposition:step2and3}). Using this, we will prove Theorem \ref{theorem:boundaryAnalysis} in Subsection \ref{subsec:proofTheoremBoundaryAnalysis}.

\subsection{Characterization of the boundaries}\label{subsec:prelimResults}

Let $G\in\DC_N$ be a DC graph and let $(\underline a,\underline p)\in P^{2N}$. For every $0 \leq i < j \leq \nbParam$, define
\[
    Z_{i,j}^{(G)}(\underline a, \underline p):= z_{i+1}^{(G)}(\underline a, \underline p) + \dots +z_{j}^{(G)}(\underline a, \underline p)
\]
and
\[
    Z_{i,j}(\underline a, \underline p):= z_{i+1}(\underline a, \underline p) + \dots +z_{j}(\underline a, \underline p).
\]
The following notation will be intensively used in the rest of this section.

\begin{definition}
For every subset $S\subset E_N$ and every binary relation $\mathcal R\in\{>,<,\leq,\geq,=\}$, we say $(G,\underline a ,\underline p)$ satisfies the condition $(C_S^{\mathsmaller{\mathcal R}})$ if
\[
\forall (i,j)\in S, \quad z_1^{(G)}(\underline a,\underline p)\ \mathcal R\  Z_{i,j}^{(G)}(\underline a,\underline p).
\]
\end{definition}
For example, $(G,\underline a ,\underline p)$ satisfies the condition $(C_{m(G)}^{\mathsmaller{>}})$ if
\[
\forall (i,j)\in m(G), \quad z_1^{(G)}(\underline a,\underline p)>  Z_{i,j}^{(G)}(\underline a,\underline p).
\]
Note that we do not require in this definition that $(\underline a, \underline p)\in P_G$.

The following lemma will be useful in the remainder of this subsection. 
\begin{lemma}\label{lemma:positivityC2C3}
    Consider a graph $G\in \DC_N$ and parameters $(\underline a , \underline p )\in P^{2\nbParam} $ such that $(G,\underline a ,\underline p)$ satisfies $(C_{m(G)}^{\mathsmaller{\geq}})$ and $(C_{M(G)}^{\mathsmaller{\leq}})$.
    Then for every $ i \in \llbracket 1 , \nbParam \rrbracket$, $z_i^{(G)}(\underline a , \underline p ) = z_i(\underline a , \underline p )$. As a consequence, for such a choice of parameters $(\underline a , \underline p )$, we have $z_i^{(G)}(\underline a , \underline p ) > 0$. 
\end{lemma}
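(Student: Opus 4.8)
The plan is to show that the solution $(z_i^{(G)}(\underline a,\underline p))_{i}$ of the linear system $\mathcal S^{(G)}(\underline a,\underline p)$ assembles into a genuine stationary trajectory for the parameters $(\underline a,\underline p)$. Uniqueness of the stationary trajectory (Theorem~\ref{theorem:convergenceOfDynamics}) then forces it to coincide with $(z_i(\underline a,\underline p))_i$, and the positivity assertion is automatic: the inverse $\widetilde t_\infty$ of the true stationary trajectory is strictly increasing (its right derivative is at least $q_1>0$), so $z_i(\underline a,\underline p)=\widetilde t_\infty(a_i)-\widetilde t_\infty(a_{i-1})>0$ since $a_{i-1}<a_i$.

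First I would record two facts that require no assumption on $(\underline a,\underline p)$, writing $B:=\Set{i\in\llbracket1,N\rrbracket}{b_G(i-1)\neq b_G(i)}$ as in Remark~\ref{remark:reduxSystem}. Downward-closedness of $G$ readily implies that $b_G$ is non-decreasing; hence in~\eqref{eq:eq:z1formula} the numerator is a sum of terms $\Gamma^{(G)}_{1,j}d_j/q_{b_G(j-1)}\geq0$ with the $j=1$ term equal to $d_1/q_1>0$, and the denominator is $\geq1$ because every summand $\Gamma^{(G)}_{1,j}(q_{b_G(j)}-q_{b_G(j-1)})/q_{b_G(j-1)}$ is non-negative; thus $z_1^{(G)}(\underline a,\underline p)>0$. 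By Remark~\ref{remark:reduxSystem}, $z_i^{(G)}(\underline a,\underline p)=d_i/q_{b_G(i-1)}>0$ for every $i\notin B$.

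The core step is to prove, using both hypotheses, that $z_i^{(G)}(\underline a,\underline p)\geq0$ for every $i\in B$ with $i\geq2$, by descending induction on $i$. If $b_G(i-1)=i-1$, then $(i-1,i)$ is a non-edge of $G$ with no proper sub-pair in $(E_N,\preceq_E)$, hence $(i-1,i)\in M(G)$, and $(C_{M(G)}^{\mathsmaller{\leq}})$ gives $z_i^{(G)}=Z_{i-1,i}^{(G)}\geq z_1^{(G)}>0$. If $b_G(i-1)\geq i$, then $i\in B$ ensures that $(i,b_G(i))$ is a maximal edge of $G$ (Remark~\ref{remark:reduxSystem}), and using that $b_G$ is non-decreasing one checks that $(i-1,b_G(i-1)+1)$ is a non-edge of $G$ all of whose proper sub-pairs are edges, so $(i-1,b_G(i-1)+1)\in M(G)$; then $(C_{M(G)}^{\mathsmaller{\leq}})$ gives
\[
z_1^{(G)}\leq Z_{i-1,\,b_G(i-1)+1}^{(G)}=z_i^{(G)}+Z_{i,\,b_G(i-1)+1}^{(G)},
\]
while the induction hypothesis ($z_k^{(G)}\geq0$ for $i<k\leq b_G(i)$) together with $(C_{m(G)}^{\mathsmaller{\geq}})$ applied to the maximal edge $(i,b_G(i))$, inside which the edge $(i,b_G(i-1)+1)$ is nested, gives $Z_{i,\,b_G(i-1)+1}^{(G)}\leq Z_{i,\,b_G(i)}^{(G)}\leq z_1^{(G)}$. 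Combining the last two inequalities yields $z_i^{(G)}\geq z_1^{(G)}-Z_{i,\,b_G(i-1)+1}^{(G)}\geq0$. This case analysis, especially checking that the exhibited pairs lie in $M(G)$, is the main obstacle.

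Finally, granting $z_j^{(G)}(\underline a,\underline p)\geq0$ for all $j$, set $g(t):=\sum_{j=1}^{N}p_j\bigl(t-(z_2^{(G)}+\cdots+z_j^{(G)})\bigr)_+$ and $t_i:=z_1^{(G)}+\cdots+z_i^{(G)}$. Then $g$ is continuous, has right derivative $\geq q_1>0$ everywhere (so it is a strictly increasing bijection of $\RR_{\geq0}$), and $g(0)=0$ precisely because all partial sums $z_2^{(G)}+\cdots+z_j^{(G)}$ are non-negative. The identity $g(t_i)=a_i$ is exactly the $i$-th equation of $\mathcal S^{(G)}(\underline a,\underline p)$, once one knows $t_i-t_j+t_1\geq0$ for $j\leq b_G(i)$ and $t_i-t_j+t_1\leq0$ for $j>b_G(i)$: for $j\leq i$ this holds since $t_1>0$ and the partial sums are monotone; for $i<j\leq b_G(i)$ it reads $z_1^{(G)}\geq Z_{i,j}^{(G)}$, which follows from $(C_{m(G)}^{\mathsmaller{\geq}})$ by nesting the edge $(i,j)$ inside a maximal edge and discarding the non-negative $z_k^{(G)}$; for $j>b_G(i)$ it reads $z_1^{(G)}\leq Z_{i,j}^{(G)}$ for the non-edge $(i,j)$, which follows from $(C_{M(G)}^{\mathsmaller{\leq}})$ since $(i,j)$ dominates a minimal non-edge. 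Hence $g^{-1}(a_i)=t_i$, so $g$ satisfies~\eqref{eq:recursiveFormulaSpeedTilde3}, whence $g=\widetilde y_\infty$ by Theorem~\ref{theorem:convergenceOfDynamics}; therefore $z_i(\underline a,\underline p)=\widetilde t_\infty(a_i)-\widetilde t_\infty(a_{i-1})=t_i-t_{i-1}=z_i^{(G)}(\underline a,\underline p)$, and this common value is positive.
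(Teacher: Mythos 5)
Your proof is correct and follows essentially the same route as the paper's: non-negativity of the $z_i^{(G)}$ by descending induction using $(C_{m(G)}^{\mathsmaller{\geq}})$ on maximal edges and $(C_{M(G)}^{\mathsmaller{\leq}})$ on minimal non-edges, then verification that the resulting piecewise-affine function is a stationary trajectory, uniqueness from Theorem~\ref{theorem:convergenceOfDynamics}, and positivity from the car interpretation. The only difference is the cosmetic case split in the induction ($b_G(i-1)=i-1$ versus $b_G(i-1)\geq i$ instead of isolated vertex versus start of a maximal edge), which amounts to the same argument.
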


\begin{proof}
    Let us first prove that $z_i^{(G)}(\underline a,\underline p)$ is non-negative for every $i \in \llbracket 1 , \nbParam \rrbracket$. By  \eqref{eq:eq:z1formula}, $z_1^{(G)}(\underline a,\underline p)$ is positive. Recall from Remark \ref{remark:reduxSystem} that $B$ denotes the set of vertices that are either isolated or the starting point of a maximal edge. If $i\in 
    \llbracket 2 , \nbParam \rrbracket \backslash B$, the positivity of $z_i^{(G)}(\underline a,\underline p)$ follows from Remark \ref{remark:reduxSystem}. If $i\in 
    \llbracket 2 , \nbParam \rrbracket$ is an isolated vertex, then $(i-1,i)\in M(G)$. Thus condition $(C_{M(G)}^{\mathsmaller{\leq}})$ implies that
    \[
z_i^{(G)}(\underline a , \underline p)\geq z_1^{(G)}(\underline a , \underline p)>0.
    \]    

    Assume that $i \in \llbracket 2 , \nbParam -1\rrbracket$ is the starting point of a maximal edge. Then $(i, b_G(i)) \in m(G)$ and $(i-1 , b_G(i-1) +1) \in M(G)$, thus $(C_{M(G)}^{\mathsmaller{\leq}})$ and $(C_{m(G)}^{\mathsmaller{\geq}})$ imply that 
    \begin{align*}
        z_1^{(G)}(\underline a,\underline p)&\leq Z_{i-1,b_G(i-1)+1}^{(G)}(\underline a,\underline p) \\
        z_1^{(G)}(\underline a,\underline p)&\geq Z^{(G)}_{i,b_G(i)}(\underline a,\underline p).
    \end{align*}
 
    Therefore, \[ Z^{(G)}_{i,b_G(i)}(\underline a,\underline p) \leq Z^{(G)}_{i-1,b_G(i-1)+1}(\underline a,\underline p).\]
    Since $b_G(i-1) < b_G(i)$, we have
    \begin{equation}
    \label{eq:positivezi}
    Z^{(G)}_{b_G(i-1)+1,b_G(i)}(\underline a,\underline p) \leq z^{(G)}_{i}(\underline a,\underline p),
    \end{equation}
    with the convention that the sum on the left-hand side vanishes if $b_G(i-1)+2>b_G(i)$.
    Since the indices $j$ of the $z^{(G)}_{j}(\underline a,\underline p)$ that appear on the left-hand side of \eqref{eq:positivezi} are greater than $i$, descending induction on $i$ yields that $z^{(G)}_{i}(\underline a,\underline p) \geq 0$ for every $i \in \llbracket 1 , \nbParam \rrbracket$. 

    Now, let us prove that the function 
    \[\hat{y}_{\infty} : t \in \RR_{\geq0} \mapsto \sum_{j= 1}^{\nbParam} p_j \left(t - Z_{0,j}^{(G)}(\underline a,\underline p ) + z_1^{(G)}(\underline a,\underline p ) \right)_+ \] is the  stationary trajectory for parameters $(\underline a,\underline p )$. It is clear that $\hat{y}_{\infty}$ is a continuous increasing map since $p_j$ is positive for all $j \in \llbracket 1 , \nbParam \rrbracket$. Therefore, the inverse function $\hat{t}_{\infty}$ of $\hat{y}_{\infty}$ exists. Following Definition~\ref{def:stationary}, it remains to prove that $Z_{1,i}^{(G)}(\underline a , \underline p ) $ is equal to $ \hat{t}_{\infty}(a_i) - \hat{t}_{\infty}(a_1) $ for all $i \in \llbracket 2 , \nbParam \rrbracket$. Let us prove the stronger statement that for all $i \in \llbracket 1 , \nbParam \rrbracket$ \[ Z_{0,i}^{(G)}(\underline a , \underline p ) = \hat{t}_{\infty}(a_i). \] 
    
    By definition of $m(G)$, for all $(i,j) \in E(G) $, there exists an edge $(i',j') \in m(G)$ such that $(i,j)\preceq_E (i',j')$.
    Similarly, for all $(i,j) \in E_N\setminus E(G)$, there exists  $(i',j') \in M(G)$ such that $(i',j')\preceq_E (i,j)$.
    The above non-negativity property implies that 
    \begin{equation}
    \label{eq:equivziedgeBis1}
    \forall (i,j) \in E(G) , \, z_1^{(G)}(\underline a,\underline p) \geq Z^{(G)}_{i,j}(\underline a,\underline p).
    \end{equation}
    \begin{equation}
    \label{eq:equivziedgeBis2}
    \forall (i,j) \in E_N\setminus E(G) , \, z_1^{(G)}(\underline a,\underline p) \leq Z^{(G)}_{i,j}(\underline a,\underline p).
    \end{equation}
    It follows from \eqref{eq:equivziedgeBis1}, \eqref{eq:equivziedgeBis2} and the non-negativity of the $z_i^{(G)}(\underline a , \underline p)$ that for all $i \in \llbracket 1 , \nbParam \rrbracket$, 
    \begin{align*}
        &\hat{y}_{\infty}(   Z_{0,i}^{(G)}(\underline a,\underline p)) \\= &\sum_{j = 1}^{b_G(i)} p_j (Z_{0,i}^{(G)}(\underline a,\underline p) - Z_{0,j}^{(G)}(\underline a,\underline p) + z_1^{(G)}(\underline a,\underline p ) ).
    \end{align*}

    By Theorem \ref{theorem:formulaSpeed}, the $(z_i^{(G)}(\underline a,\underline p))_{i\in\llbracket1,N\rrbracket}$ are solutions of the linear system $\mathcal S^{(G)}(\underline a,\underline p)$. As a consequence, 
    $
    \hat{y}_{\infty}(Z_{0,i}^{(G)}(\underline a,\underline p))  = a_i
    $
    for all $i \in \llbracket 1 , \nbParam \rrbracket$. 
    Equivalently, $\hat{t}_{\infty}(a_i) = Z_{0,i}^{(G)}(\underline a,\underline p)$ for all $i \in \llbracket 1 , \nbParam \rrbracket$. This concludes the proof of the fact that $ \hat{y}_{\infty} $ is the stationary trajectory.

    Since $Z_{0,i}(\underline a,\underline p)$ is defined as the image of $a_i$ by the inverse function of the stationary trajectory, one obtains that 
    \[ Z_{0,i}(\underline a,\underline p)=Z_{0,i}^{(G)}(\underline a,\underline p) \] for all $i \in \llbracket 1 , \nbParam \rrbracket$, which implies that $z_i^{(G)}(\underline a , \underline p ) = z_i(\underline a , \underline p )$ for all $i \in \llbracket 1 , \nbParam \rrbracket$.

    The quantity $z_i(\underline a , \underline p )$ corresponds to the amount of time spent by a car between positions $a_{i-1}$ and $a_i$ in the stationary regime. Since the speeds of the cars remain finite, $ z_i^{(G)}(\underline a , \underline p ) =z_i(\underline a , \underline p) >0 $ for all $i \in \llbracket 1 , \nbParam \rrbracket$.
 \end{proof}

The following proposition gives criteria for parameters to be in the region $P_G$. 

\begin{proposition}[Inequalities characterizing $P_G$]\label{proposition:caracPG}
    Let $G\in\DC_N$ and $(\underline a,\underline p)\in P^{2N}$. Then $(\underline a,\underline p)$ belongs to the region $P_G$ if and only if $(G,\underline a ,\underline p)$ satisfies both conditions $(C_{m(G)}^{\mathsmaller{>}})$ and $(C_{M(G)}^{\mathsmaller{\leq}})$.
\end{proposition}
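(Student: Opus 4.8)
The plan is to deduce both implications directly from the description of $\Gr(\underline a,\underline p)$ via the auxiliary quantities $z_i$ given in \eqref{eq:edgeGrapz}, read in the equivalent form: $(i,j)\in E(\Gr(\underline a,\underline p))$ iff $z_1(\underline a,\underline p)>Z_{i,j}(\underline a,\underline p)$, and $(i,j)\notin E(\Gr(\underline a,\underline p))$ iff $z_1(\underline a,\underline p)\le Z_{i,j}(\underline a,\underline p)$. The two tools I will lean on are Theorem~\ref{theorem:formulaSpeed}, which identifies $z_i$ with $z_i^{(G)}$ on $P_G$, and Lemma~\ref{lemma:positivityC2C3}, which does the same under the weaker pair of hypotheses $(C_{m(G)}^{\mathsmaller{\geq}})$, $(C_{M(G)}^{\mathsmaller{\leq}})$ and, crucially, also supplies the nonnegativity of the $z_i^{(G)}(\underline a,\underline p)$.

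First I would handle the forward implication. If $(\underline a,\underline p)\in P_G$, then $z_i(\underline a,\underline p)=z_i^{(G)}(\underline a,\underline p)$ for every $i$ by Theorem~\ref{theorem:formulaSpeed}, hence $Z_{i,j}=Z_{i,j}^{(G)}$ for every $0\le i<j\le N$. Since $\Gr(\underline a,\underline p)=G$, applying the reformulated \eqref{eq:edgeGrapz} to the edges in $m(G)\subseteq E(G)$ yields condition $(C_{m(G)}^{\mathsmaller{>}})$, and applying it to the non-edges in $M(G)\subseteq E_N\setminus E(G)$ yields condition $(C_{M(G)}^{\mathsmaller{\leq}})$.

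For the converse, suppose $(G,\underline a,\underline p)$ satisfies $(C_{m(G)}^{\mathsmaller{>}})$ and $(C_{M(G)}^{\mathsmaller{\leq}})$. Since $(C_{m(G)}^{\mathsmaller{>}})$ implies $(C_{m(G)}^{\mathsmaller{\geq}})$, Lemma~\ref{lemma:positivityC2C3} applies and gives $z_i^{(G)}(\underline a,\underline p)=z_i(\underline a,\underline p)\ge 0$ for all $i$. Nonnegativity then lets me propagate the two families of inequalities along the nesting order $\preceq_E$. Indeed, for an arbitrary edge $(i,j)\in E(G)$, since $E(G)$ is finite there is a maximal edge $(i',j')\in m(G)$ with $(i,j)\preceq_E(i',j')$, i.e. $i'\le i<j\le j'$; then $\{i+1,\dots,j\}\subseteq\{i'+1,\dots,j'\}$ and nonnegativity give $Z_{i,j}\le Z_{i',j'}$, so $(C_{m(G)}^{\mathsmaller{>}})$ yields $z_1>Z_{i',j'}\ge Z_{i,j}$, whence $(i,j)\in E(\Gr(\underline a,\underline p))$. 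Symmetrically, for an arbitrary non-edge $(i,j)\in E_N\setminus E(G)$ there is a minimal non-edge $(i',j')\in M(G)$ with $(i',j')\preceq_E(i,j)$, i.e. $i\le i'<j'\le j$; then $Z_{i',j'}\le Z_{i,j}$, so $(C_{M(G)}^{\mathsmaller{\leq}})$ yields $z_1\le Z_{i',j'}\le Z_{i,j}$, whence $(i,j)\notin E(\Gr(\underline a,\underline p))$. Therefore $E(\Gr(\underline a,\underline p))=E(G)$, i.e. $(\underline a,\underline p)\in P_G$.

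The only genuinely delicate point is this reduction step in the converse direction: checking the defining inequalities of the region only along the maximal edges and the minimal non-edges is legitimate precisely because the $z_i^{(G)}(\underline a,\underline p)$ are nonnegative, and that nonnegativity is not apparent from the rational formulas \eqref{eq:eq:z1formula}--\eqref{ziformula} — it is exactly what Lemma~\ref{lemma:positivityC2C3} supplies. Once that lemma is invoked, everything else is elementary poset bookkeeping (every edge of a finite DC graph lies below some maximal edge, every non-edge lies above some minimal non-edge, and $M(G)$ is the set of minimal elements of $E_N\setminus E(G)$) combined with \eqref{eq:edgeGrapz}.
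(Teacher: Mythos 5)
Your proof is correct and follows essentially the same route as the paper: the forward implication via Theorem~\ref{theorem:formulaSpeed} and the reformulated \eqref{eq:edgeGrapz}, and the converse via Lemma~\ref{lemma:positivityC2C3} followed by propagation of the inequalities from $m(G)$ and $M(G)$ to all of $E(G)$ and $E_N\setminus E(G)$ using the sign of the $z_i$. The paper states this last propagation step more tersely ("since $z_i(\underline a,\underline p)$ is positive..."), but your explicit poset bookkeeping is exactly what it means.
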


\begin{proof}
By definition of $m(G)$ and $M(G)$, $m(G) \subseteq E(G)$ and $M(G) \subseteq E_N\setminus E(G)$. Therefore, it follows from the definition of $P_G$, from Theorem \ref{theorem:formulaSpeed} and from  \eqref{eq:edgeGrapz} that, if $(\underline a,\underline p)\in P_G$, then $(G,\underline a ,\underline p)$ satisfies conditions $(C_{m(G)}^{\mathsmaller{>}})$ and $(C_{M(G)}^{\mathsmaller{\leq}})$. 

Conversely, assume that $(G,\underline a ,\underline p)$ satisfies conditions $(C_{m(G)}^{\mathsmaller{>}})$ and $(C_{M(G)}^{\mathsmaller{\leq}})$. 
    Then, $(G,\underline a ,\underline p)$ also satisfies conditions $(C_{m(G)}^{\mathsmaller{\geq}})$, thus by Lemma \ref{lemma:positivityC2C3}, $z_i^{(G)}(\underline a , \underline p ) = z_i(\underline a , \underline p ) >0$ for every $i \in \llbracket 1 , \nbParam \rrbracket$. Hence conditions $(C_{m(G)}^{\mathsmaller{>}})$ and $(C_{M(G)}^{\mathsmaller{\leq}})$ become: 
    \begin{align*} \forall (i,j)\in m(G), \, z_1(\underline a,\underline p)>Z_{i,j}(\underline a,\underline p), \\ \forall (i,j)\in M(G), \, z_1(\underline a,\underline p) \leq Z_{i,j}(\underline a,\underline p). 
    \end{align*}
    Since $z_i(\underline a,\underline p)$ is positive for every $i \in \llbracket 1 , \nbParam \rrbracket$, one obtains that 
    \begin{align*} \forall (i,j)\in E(G), \, z_1(\underline a,\underline p)>Z_{i,j}(\underline a,\underline p), \\ \forall (i,j)\in E_N\setminus E(G), \, z_1(\underline a,\underline p) \leq Z_{i,j}(\underline a,\underline p). 
    \end{align*}
    By definition of $P_G$, $(\underline a,\underline p)$ is in $P_G$. 
\end{proof}

Let us now characterize the interior and the closure of $P_G$.

\begin{proposition}\label{proposition:interiorAndClosure}
    For every $G \in \DC_N$, denote by $\overset{\circ}{P_G}$ (resp. $\overline{P_G}$)  the interior (resp. the closure) of $P_G$ in $P^{2N}$. Then, 
    \[
        \overset{\circ}{P_G} = \Set{(\underline a , \underline p) \in P^{2N}}{(G,\underline a ,\underline p) \text{ satisfies } (C_{m(G)}^{\mathsmaller{>}}) \text{ and } (C_{M(G)}^{\mathsmaller{<}})} \]
    and
    \[    \overline{P_G} = \Set{(\underline a , \underline p) \in P^{2N}}{(G,\underline a ,\underline p) \text{ satisfies } (C_{m(G)}^{\mathsmaller{\geq}}) \text{ and } (C_{M(G)}^{\mathsmaller{\leq}})}. \]
\end{proposition}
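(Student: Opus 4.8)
The plan is to recast the two conditions $(C^{\mathsmaller{\mathcal R}}_S)$ in terms of the functions $\varphi^{(G)}_{i,j}:=z^{(G)}_1-Z^{(G)}_{i,j}$ on $P^{2N}$, so that $(G,\underline a,\underline p)$ satisfies $(C^{\mathsmaller{\mathcal R}}_S)$ exactly when $\varphi^{(G)}_{i,j}(\underline a,\underline p)\ \mathcal R\ 0$ for all $(i,j)\in S$. From \eqref{eq:eq:z1formula}--\eqref{ziformula}, using $\Gamma^{(G)}_{i,j}\geq0$ and the monotonicity of $\underline q$, each $z^{(G)}_i$ is a rational function of $(\underline a,\underline p)$ with denominators bounded below by a positive constant on $P^{2N}$, hence each $\varphi^{(G)}_{i,j}$ is continuous on $P^{2N}$. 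Therefore the set $U_G$ (resp. $V_G$) on the right-hand side of the first (resp. second) identity is open (resp. closed) in $P^{2N}$, and by Proposition~\ref{proposition:caracPG} we have $U_G\subseteq P_G\subseteq V_G$; so $U_G\subseteq\mathring{P_G}$ and $\overline{P_G}\subseteq V_G$, and it remains to prove $\mathring{P_G}\subseteq U_G$ and $V_G\subseteq\overline{P_G}$. The other structural input I will use repeatedly is that, for fixed $\underline p$, every $\varphi^{(G)}_{i,j}$ is a \emph{linear} function of $\underline d=(d_1,\dots,d_N)$ whose coefficient of $d_1$ is \emph{strictly positive}: the $d_1$-coefficient of $z^{(G)}_1$ equals $1/(q_1D)$ with $D\geq1$ (because $\Gamma^{(G)}_{1,1}=1$ and $b_G(0)=1$), while by \eqref{ziformula} the $d_1$-coefficient of $z^{(G)}_i$ is $\leq0$ for $i\geq2$; note that moving $d_1$ with $\underline p$ and $d_2,\dots,d_N$ fixed is achieved inside $P^{2N}$ by translating $a_1,\dots,a_N$ by the same small real number.

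For $\mathring{P_G}\subseteq U_G$: let $(\underline a,\underline p)\in\mathring{P_G}$. Being in $P_G$, it satisfies $(C^{\mathsmaller{>}}_{m(G)})$. If $(C^{\mathsmaller{<}}_{M(G)})$ failed, some $(i_0,j_0)\in M(G)$ would have $\varphi^{(G)}_{i_0,j_0}(\underline a,\underline p)=0$; increasing $d_1$ by an arbitrarily small amount would then, by positivity of the $d_1$-coefficient, give a point with $\varphi^{(G)}_{i_0,j_0}>0$, which lies outside $P_G$ by Proposition~\ref{proposition:caracPG}, contradicting $(\underline a,\underline p)\in\mathring{P_G}$. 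Hence $(\underline a,\underline p)\in U_G$, and together with $U_G\subseteq\mathring{P_G}$ this gives $\mathring{P_G}=U_G$.

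For $V_G\subseteq\overline{P_G}$, the main point is that $U_G$ meets every ``$\underline p$-slice''. By Theorem~\ref{theorem:nonemptiness}, $\mathring{P_G}=U_G$ is non-empty; taking a point of it and setting $\zeta_i:=z_i(\underline a,\underline p)$ yields, via Lemma~\ref{lemma:positivityC2C3} and \eqref{eq:edgeGrapz}, a vector $\underline\zeta\in(\RR_{>0})^N$ with $\zeta_1>\zeta_{i+1}+\cdots+\zeta_j$ for all $(i,j)\in m(G)$ and $\zeta_1<\zeta_{i+1}+\cdots+\zeta_j$ for all $(i,j)\in M(G)$ — conditions that do not involve $\underline p$. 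Now fix an arbitrary $\underline p$, put $\widetilde y(t):=\sum_{j=1}^N p_j\big(t-(\zeta_1+\cdots+\zeta_j)+\zeta_1\big)_+$ (continuous, strictly increasing since the $j=1$ term is always active, and $\widetilde y(0)=0$), and define $a_i:=\widetilde y(\zeta_1+\cdots+\zeta_i)$; then $0<a_1<\cdots<a_N$, so $(\underline a,\underline p)\in P^{2N}$, and since $\widetilde y^{-1}(a_j)=\zeta_1+\cdots+\zeta_j$ the function $\widetilde y$ satisfies the fixed-point equation \eqref{eq:recursiveFormulaSpeedTilde3}, whence $\widetilde y=\widetilde y_\infty$ by uniqueness of the stationary trajectory (Theorem~\ref{theorem:convergenceOfDynamics}) and $z_i(\underline a,\underline p)=\zeta_i$. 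The inequalities on $\underline\zeta$, extended to all of $E(G)$ and $E_N\setminus E(G)$ using $\zeta_i\geq0$ and the nesting structure of DC graphs, then show $\Gr(\underline a,\underline p)=G$ and in fact $(\underline a,\underline p)\in U_G$. Thus for every $\underline p$ there is a point of $U_G$ with that $\underline p$. Finally, given $x=(\underline a^x,\underline p^x)\in V_G$, pick $y=(\underline a^y,\underline p^x)\in U_G$ with the same $\underline p^x$; along the segment $t\mapsto\big((1-t)\underline a^x+t\underline a^y,\underline p^x\big)$, which stays in $P^{2N}$ for $t\in[0,1]$, linearity of $\varphi^{(G)}_{i,j}$ in $\underline d$ gives $\varphi^{(G)}_{i,j}=(1-t)\varphi^{(G)}_{i,j}(x)+t\,\varphi^{(G)}_{i,j}(y)$, which for $t\in(0,1]$ is $>0$ when $(i,j)\in m(G)$ and $<0$ when $(i,j)\in M(G)$. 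Hence the half-open segment lies in $U_G\subseteq P_G$ and converges to $x$, so $x\in\overline{P_G}$; therefore $\overline{P_G}=V_G$.

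The genuinely delicate ingredients are the two structural facts isolated in the first paragraph and in the middle of the third: the strict positivity of the $d_1$-coefficient of every $\varphi^{(G)}_{i,j}$ (this alone drives the interior half), and the statement that $U_G$ meets every $\underline p$-slice, which I expect to be the main obstacle. That statement is handled by observing that the slice-membership conditions depend only on the ``gap vector'' $\underline\zeta$ (not on $\underline p$), so that one positive solution — guaranteed by the non-emptiness Theorem~\ref{theorem:nonemptiness} — can be transported to every $\underline p$ by the explicit stationary-trajectory construction; the linearity of the $\varphi^{(G)}_{i,j}$ in $\underline d$ is exactly what makes the concluding segment argument legitimate. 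Everything else reduces to bookkeeping with \eqref{eq:eq:z1formula}--\eqref{ziformula} and the definitions of $m(G)$ and $M(G)$.
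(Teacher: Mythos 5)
Your proof is correct, and the two halves have rather different relationships to the paper's argument. The interior inclusion $\mathring{P_G}\subseteq U_G$ (with $U_G$, $V_G$ your names for the two right-hand sides) is essentially the paper's: both arguments exploit that $z_1^{(G)}-Z_{i,j}^{(G)}$ is strictly increasing in $d_1$ at fixed $\underline p$ and $d_2,\dots,d_N$, so that a saturated $M(G)$-inequality could be violated by an arbitrarily small admissible perturbation. The paper phrases this via the auxiliary quantity $\tilde Z_{i,j}^{(G)}$ of \eqref{eq:defZij}, which is independent of $d_1$; your direct computation of the $d_1$-coefficients from \eqref{eq:eq:z1formula} and \eqref{ziformula} is an equivalent and slightly more streamlined packaging of the same fact. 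The closure inclusion $V_G\subseteq\overline{P_G}$ is where you genuinely diverge. The paper normalizes $z_1^{(G)}=1$, partitions $m(G)\cup M(G)$ according to which inequalities are saturated, shows that the saturated constraints form a triangular linear system in $d_1$ and certain $d_{i+1}$ (using injectivity of $(i,j)\mapsto i$ on the saturated set and the sign structure of the coefficients in \eqref{eq:Zf1m}--\eqref{eq:Zfijm}), and perturbs transversally. You instead (i) show that $U_G$ meets every $\underline p$-slice by transporting the gap vector $\underline\zeta=(z_i(\underline a,\underline p))_i$ of an interior point supplied by Theorem~\ref{theorem:nonemptiness} to an arbitrary $\underline p$ via the explicit trajectory $\widetilde y(t)=\sum_j p_j\bigl(t-(\zeta_1+\cdots+\zeta_j)+\zeta_1\bigr)_+$, which is stationary by uniqueness (Theorem~\ref{theorem:convergenceOfDynamics}); and (ii) connect any point of $V_G$ to such a point within its $\underline p$-slice by a segment, using that the functions $z_1^{(G)}-Z_{i,j}^{(G)}$ are homogeneous linear in $\underline d$ at fixed $\underline p$, so strictness at one endpoint propagates to the whole open segment. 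This route is shorter and avoids the triangularity and transversality bookkeeping, at the cost of importing Theorem~\ref{theorem:nonemptiness} (proved in the paper before this proposition, so there is no circularity) and the stationary-trajectory uniqueness; the paper's closure argument is self-contained modulo the explicit formulas. Both proofs ultimately rest on the same structural input, namely the sign pattern of the $\underline d$-coefficients in \eqref{eq:eq:z1formula}--\eqref{ziformula}.
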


In order to prove this proposition, let us introduce some additional notation.

For every DC graph $G\in\DC_N$ and $1 \leq i < j \leq \nbParam$, set
\begin{equation}
\label{eq:defZij}
    \tilde Z_{i,j}^{(G)}(\underline a, \underline p):= z_{1}^{(G)}(\underline a, \underline p) + \frac{Z_{i,j}^{(G)}(\underline a, \underline p)-z_{1}^{(G)}(\underline a, \underline p)}{1 + \sum_{k=i+1}^j \sum_{l = k}^N  \Gamma^{(G)}_{k,l} \frac{q_{b_G(l)}-q_{b_G(l-1)}}{q_{b_G(l-1)}}}.
\end{equation}

\begin{remark}
\label{rem:samesign}
This definition immediately implies that the two quantities $z_{1}^{(G)}(\underline a, \underline p) - Z_{i,j}^{(G)}(\underline a, \underline p) $ and $z_{1}^{(G)}(\underline a, \underline p) - \tilde Z_{i,j}^{(G)}(\underline a, \underline p)$ have the same strict sign (either they are simultaneously positive, or they are simultaneously negative, or they simultaneously vanish).
\end{remark}

\begin{remark}
\label{rem:positivecoeffs}
Let $G \in \DC_{\nbParam}$ be a DC graph and let $1 \leq i < j \leq \nbParam$. 
Using \eqref{ziformula}, we have that
\begin{equation}
\label{eq:Zijformula}
\tilde Z_{i,j}^{(G)}(\underline a, \underline p)= \frac{\sum_{k=i+1}^j \sum_{l=k}^N \Gamma^{(G)}_{k,l} \frac{d_l}{q_{b_G(l-1)}}}{1 + \sum_{k=i+1}^j \sum_{l = k}^N  \Gamma^{(G)}_{k,l} \frac{q_{b_G(l)}-q_{b_G(l-1)}}{q_{b_G(l-1)}}}
\end{equation}

It follows from \eqref{eq:eq:z1formula} (resp. \eqref{eq:Zijformula}) that $z_1^{(G)}(\underline a, \underline p)$ (resp. $\tilde Z_{i,j}^{(G)}(\underline a , \underline p)$) is a linear combination of $d_1,\ldots,d_N$ (resp. $d_{i+1},\ldots,d_N$) with positive coefficients:
\begin{align}
    z_1^{(G)}(\underline a, \underline p) &= \sum_{l=1}^{\nbParam} f_{1,l}^{(G)}(\underline p) d_l \label{eq:Zf1m} \\
    \tilde Z_{i,j}^{(G)}(\underline a , \underline p) &= \sum_{l=i+1}^{\nbParam} f_{i,j,l}^{(G)}(\underline p) d_l, \label{eq:Zfijm}
\end{align}
where the coefficients $f_{1,l}^{(G)}(\underline p)$ and $f_{i,j,l}^{(G)}(\underline p)$ are positive for all $l$ and $\underline p$.
\end{remark}

\begin{proof}[Proof of Proposition \ref{proposition:interiorAndClosure}]
    By Proposition \ref{proposition:caracPG} and by continuity of $z_i^{(G)}$ and $Z_{i,j}^{(G)}$ for all $i<j$, one obtains that 
    \begin{align*}
        \overset{\circ}{P_G} &\supseteq \Set{(\underline a , \underline p) \in P^{2N}}{(G,\underline a ,\underline p) \text{ satisfies } (C_{m(G)}^{\mathsmaller{>}}) \text{ and } (C_{M(G)}^{\mathsmaller{<}})}, \\
        \overline{P_G} &\subseteq \Set{(\underline a , \underline p) \in P^{2N}}{(G,\underline a ,\underline p) \text{ satisfies } (C_{m(G)}^{\mathsmaller{\geq}}) \text{ and } (C_{M(G)}^{\mathsmaller{\leq}})}.
    \end{align*}
    Reciprocally, let us prove that 
    \begin{align*}
        \overset{\circ}{P_G} &\subseteq \Set{(\underline a , \underline p) \in P^{2N}}{(G,\underline a ,\underline p) \text{ satisfies } (C_{m(G)}^{\mathsmaller{>}}) \text{ and } (C_{M(G)}^{\mathsmaller{<}})}.
    \end{align*}
    Consider $(\underline a , \underline p) \in \overset{\circ}{P_G}$. Since $(\underline a , \underline p) \in P_G$, $(G,\underline a ,\underline p)$  satisfies $(C_{m(G)}^{\mathsmaller{>}})$ and $(C_{M(G)}^{\mathsmaller{\leq}})$. Let us prove that $(G,\underline a ,\underline p)$  satisfies $(C_{M(G)}^{\mathsmaller{<}})$. 
    Assume by contradiction that $z_1^{(G)}(\underline a,\underline p) = Z_{i,j}^{(G)}(\underline a,\underline p)$ for some $(i,j) \in M(G)$. By Remark~\ref{rem:samesign} this is equivalent to having $z_1^{(G)}(\underline a,\underline p) = \tilde Z_{i,j}^{(G)}(\underline a,\underline p)$. By Remark~\ref{rem:positivecoeffs}, $\tilde Z_{i,j}^{(G)}(\underline a,\underline p)$ does not depend on $d_1$ while $z_1^{(G)}(\underline a,\underline p)$ is linear in $d_1$ with a positive coefficient in front of $d_1$. Therefore, there exists some $\varepsilon >0$ such that for all $ d_1' \in (d_1 , d_1 + \varepsilon) $, 
    \[z_1^{(G)}(\underline a',\underline p') > \tilde Z_{i,j}^{(G)}(\underline a',\underline p')\]
    and \[ (\underline a',\underline p') \in \overset{\circ}{P_G} ,\] 
    where $d_i' = d_i$ for all $i \in \llbracket 2 , \nbParam \rrbracket$ and $p_i' = p_i$ for all $i \in \llbracket 1, \nbParam \rrbracket$. Since $(\underline a',\underline p')$ is in $P_G$, $(G,\underline a',\underline p')$ should satisfy $(C_{M(G)}^{\mathsmaller{\leq}})$, which is a contradiction with the previous inequality.  

    It remains to prove that 
    \begin{align*}
        \overline{P_G} &\supseteq \Set{(\underline a , \underline p) \in P^{2N}}{(G,\underline a ,\underline p) \text{ satisfies } (C_{m(G)}^{\mathsmaller{\geq}}) \text{ and } (C_{M(G)}^{\mathsmaller{\leq}})}.
    \end{align*}
    Let $(\underline a , \underline p )\in P^{2N}$ be such that  $(C_{m(G)}^{\mathsmaller{\geq}})$ and $(C_{M(G)}^{\mathsmaller{\leq}})$ are satisfied by $(G,\underline a',\underline p')$. Observe that the dynamics is unchanged by re-scaling all the distances by a common multiplicative factor. Without loss of generality, let us choose this scaling such that $z_1^{(G)}(\underline a , \underline p ) =1$.
    Define the following four sets:
    \begin{align*}
A_M^= &:=\Set{(i,j) \in M(G)}{1 = Z_{i,j}^{(G)}(\underline a,\underline p) } \\
A_M^< &:=\Set{(i,j) \in M(G)}{1 < Z_{i,j}^{(G)}(\underline a,\underline p) } \\
A_m^= &:=\Set{(i,j) \in m(G)}{1 = Z_{i,j}^{(G)}(\underline a,\underline p) } \\
A_m^> &:=\Set{(i,j) \in m(G)}{1 > Z_{i,j}^{(G)}(\underline a,\underline p) }.
    \end{align*}

By continuity of the $z_i^{(G)}$ and $Z_{i,j}^{(G)}$ for all $i$ and $j$, there exists $\eta>0$ such that for every $(\underline a' , \underline p' )\in P^{2N}$ satisfying $ \| (\underline a' , \underline p' ) - (\underline a , \underline p ) \|_{\infty} < \eta $, we have 
   \begin{align*}
        \forall (i,j) \in A_M^<, \,& z_1^{(G)}(\underline a',\underline p') < Z_{i,j}^{(G)}(\underline a',\underline p'), \\
        \forall (i,j) \in A_m^>, \,& z_1^{(G)}(\underline a',\underline p') > Z_{i,j}^{(G)}(\underline a',\underline p'). 
    \end{align*}
Let $\varepsilon \in (0,\eta)$. We want to find $(\underline a' , \underline p' )\in P^{2N}$ with $\| (\underline a' , \underline p' ) - (\underline a , \underline p ) \|_{\infty} < \varepsilon$ such that the following three conditions hold:
\begin{itemize}
 \item[$(C_1)$] $z_1^{(G)}(\underline a',\underline p')=1$,
 \item[$(C_2)$] $\forall (i,j) \in A_M^=, \, 1=\tilde Z_{i,j}^{(G)}(\underline a',\underline p')$, 
 \item[$(C_3)$] $\forall (i,j) \in A_m^=, \,  1>\tilde Z_{i,j}^{(G)}(\underline a',\underline p')$.
\end{itemize}
Combined with Remark~\ref{rem:samesign}, it will entail that $(G,\underline a',\underline p')$ satisfies $(C_{M(G)}^{\mathsmaller{\leq}})$ and $(C_{m(G)}^{\mathsmaller{>}})$, and thus by Proposition~\ref{proposition:caracPG} we will conclude that $(\underline a',\underline p')\in P_G$.

Let us first prove that the map $s:(i,j)\in A_M^=\cup A_m^= \mapsto i\in\llbracket 1, \nbParam \rrbracket$ is injective. Assume that $(i,j)$ and $(i,j')$ are two elements of $A_M^=\cup A_m^=$. We distinguish three cases.
\begin{itemize}
\item If $(i,j)$ and $(i,j')$ are both in $A_m^=$, then $j=j'=b_G(i)$.
\item If they are both in $A_M^=$, let us reason by contradiction and assume that $j\neq j'$. Without loss of generality assume $j<j'$. Since $(i,j')$ is a minimal element of $E_N\setminus E(G)$ for the order $\preceq_E$, we have that $(i,j'-1)\in E(G)$. Since $j\leq j'-1$ and $G$ is a DC graph, it implies that $(i,j)\in E(G)$, which is a contradiction.
\item Let us show that $(i,j)$ and $(i,j')$ cannot be for one in $A_m^=$ and for the other in $A_M^=$. Reason by contradiction and assume that $(i,j)\in A_m^=$ and $(i,j')\in A_M^=$. Then $(i,j)\in m(G)$ implies $j=b_G(i)$ and $(i,j')\in M(G)$ implies $j'=b_G(i)+1=j+1$. By definition of $A_m^=$ and $A_M^=$, we have $z_1^{(G)}(\underline a,\underline p) =Z_{i,j}^{(G)}(\underline a , \underline p) = Z_{i,j+1}^{(G)}(\underline a , \underline p) $. This would imply that $ z_{j+1}^{(G)}(\underline a , \underline p) =   Z_{i,j+1}^{(G)}(\underline a , \underline p) - Z_{i,j}^{(G)}(\underline a , \underline p) = 0 $, which contradicts the positivity of $z_{j+1}^{(G)}(\underline a , \underline p )$ when $(C_M^{\mathsmaller{\leq}})$ and $(C_m^{\mathsmaller{\geq}})$ are satisfied by $(G,\underline a,\underline p)$ (Lemma \ref{lemma:positivityC2C3}).
\end{itemize}

 Set $\underline p' := \underline p$ and $ d_{i+1}' := d_{i+1} $ for every $i\in \llbracket 1 , \nbParam -1 \rrbracket \backslash s(A_M^=\cup A_m^=)$.
 Define condition
\[
(C'_3) \quad \forall (i,j) \in A_m^=, \,  1=\tilde Z_{i,j}^{(G)}(\underline a',\underline p').
\]
Then the conditions $(C_1)$, $(C_2)$ and $(C'_3)$ form a linear system of equations in the unknowns $d'_1$ and $(d'_{i+1})_{i\in s(A_M^=\cup A_m^=)}$. Since the map $s$ is injective on $A_M^=\cup A_m^=$, there are as many unknowns as equations. By \eqref{eq:Zf1m} and \eqref{eq:Zfijm}, this system is triangular with non-zero coefficients on the diagonal. It has a unique solution, which we know to be $d'_1=d_1$ and $ d_{i+1}' = d_{i+1} $ for every $i\in  s(A_M^=\cup A_m^=)$. If we now consider conditions $(C_1)$, $(C_2)$ and $(C_3)$, they form a system of linear equations and inequations, defined by hyperplanes that intersect transversely. Thus they possess solutions arbitrarily close to the intersection point.
\end{proof}

Denote by $\partial P_G := \overline{P_G} \backslash \overset{\circ}{P_G}$ the boundary of $P_G$. As an immediate consequence of Proposition \ref{proposition:interiorAndClosure}, one obtains the following characterisation of $\partial P_G$: 
\begin{corollary}\label{corollary:caracBorder}
    We have that $ (\underline a , \underline p)$ is in $\partial P_G$ if and only if $(C_{m(G)}^{\mathsmaller{\geq}})$ and $(C_{M(G)}^{\mathsmaller{\leq}})$ are satisfied by $ (G,\underline a , \underline p)$
and there exists $(i,j)\in m(G) \cup M(G)$ such that 
\[z_1^{(G)}(\underline a,\underline p) = Z_{i,j}^{(G)}(\underline a,\underline p). \]
\end{corollary}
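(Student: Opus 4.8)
The plan is to deduce the corollary directly from Proposition~\ref{proposition:interiorAndClosure}, by unwinding the identity $\partial P_G = \overline{P_G}\setminus\overset{\circ}{P_G}$ and carrying out a careful logical negation.

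First I would recall from Proposition~\ref{proposition:interiorAndClosure} that, for $(\underline a,\underline p)\in P^{2N}$, one has $(\underline a,\underline p)\in\overline{P_G}$ if and only if $(G,\underline a,\underline p)$ satisfies $(C_{m(G)}^{\mathsmaller{\geq}})$ and $(C_{M(G)}^{\mathsmaller{\leq}})$, and $(\underline a,\underline p)\in\overset{\circ}{P_G}$ if and only if $(G,\underline a,\underline p)$ satisfies $(C_{m(G)}^{\mathsmaller{>}})$ and $(C_{M(G)}^{\mathsmaller{<}})$. Hence $(\underline a,\underline p)\in\partial P_G$ precisely when (i) $(C_{m(G)}^{\mathsmaller{\geq}})$ and $(C_{M(G)}^{\mathsmaller{\leq}})$ hold, and (ii) it is not the case that $(C_{m(G)}^{\mathsmaller{>}})$ and $(C_{M(G)}^{\mathsmaller{<}})$ both hold.

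Next I would perform the negation in (ii) while retaining the weak inequalities of (i). Assuming (i), for every $(i,j)\in m(G)$ we have $z_1^{(G)}(\underline a,\underline p)\geq Z_{i,j}^{(G)}(\underline a,\underline p)$, so the strict inequality of $(C_{m(G)}^{\mathsmaller{>}})$ fails at $(i,j)$ if and only if $z_1^{(G)}(\underline a,\underline p)=Z_{i,j}^{(G)}(\underline a,\underline p)$; likewise, for every $(i,j)\in M(G)$ we have $z_1^{(G)}(\underline a,\underline p)\leq Z_{i,j}^{(G)}(\underline a,\underline p)$, so the strict inequality of $(C_{M(G)}^{\mathsmaller{<}})$ fails at $(i,j)$ if and only if $z_1^{(G)}(\underline a,\underline p)=Z_{i,j}^{(G)}(\underline a,\underline p)$. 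Therefore, under (i), condition (ii) is equivalent to the existence of some $(i,j)\in m(G)\cup M(G)$ with $z_1^{(G)}(\underline a,\underline p)=Z_{i,j}^{(G)}(\underline a,\underline p)$. Combining this with (i) yields exactly the asserted characterization of $\partial P_G$.

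No genuine obstacle arises here: all the substantive content is already contained in Proposition~\ref{proposition:interiorAndClosure}, and the corollary is a purely logical reformulation of $\partial P_G = \overline{P_G}\setminus\overset{\circ}{P_G}$. The only point deserving a moment's care is that the failure of a strict inequality must be upgraded to an equality using the weak inequality supplied by membership in $\overline{P_G}$ — which is precisely the manipulation carried out in the second step above.
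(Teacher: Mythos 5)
Your argument is correct and is exactly the intended one: the paper derives this corollary as an immediate consequence of Proposition~\ref{proposition:interiorAndClosure} via $\partial P_G=\overline{P_G}\setminus\overset{\circ}{P_G}$, and your careful negation (upgrading the failure of a strict inequality to an equality using the weak inequality from membership in $\overline{P_G}$) is precisely the step the paper leaves implicit.
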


Now, assume that $G_1$ and $G_2$ are two distinct DC graphs and let us characterize the common boundary of the two regions $P_{G_1}$ and $P_{G_2}$. In what follows, we denote by $\setFixedSpeed{1}$ the set of parameters $(\underline a , \underline p )$ such that $z_1(\underline a , \underline p ) = 1$.

\begin{proposition}\label{proposition:step2and3}
    Consider $G_1 \neq G_2$ in $\DC_N$ satisfying \eqref{eq:boundaryAnalysisCondition}. For every $ (\underline a , \underline p ) \in P^{2N}$, $(\underline a , \underline p )$ is in $\partial P_{G_1} \cap \partial P_{G_2} \cap \setFixedSpeed{1}$ if and only if $(G_1,\underline a , \underline p)$ satisfies the four conditions below:
    \begin{align}
    &(C_{E(G_1)\Delta E(G_2)}^{\mathsmaller{=}}) \label{eq:=} \\
    &(C_{m(G_1)\cap E(G_2)}^{\mathsmaller{\geq}}) \label{eq:geq} \\
    &(C_{M(G_1)\cap E(G_2)^c}^{\mathsmaller{\leq}}) \label{eq:leq} \\
    &z_1^{(G_1)}( \underline a , \underline p) = 1. \label{eq:conditionsIntersecLowerBoundary2}
    \end{align}
\end{proposition}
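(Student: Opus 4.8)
The plan is to characterize $\partial P_{G_1}\cap\partial P_{G_2}\cap\setFixedSpeed{1}$ by combining Corollary~\ref{corollary:caracBorder} applied to each of $G_1$ and $G_2$, after showing that on this intersection the two rational descriptions $z_i^{(G_1)}$ and $z_i^{(G_2)}$ agree. First I would set up notation: assume \eqref{eq:boundaryAnalysisCondition} holds, so $E(G_1)\setminus E(G_2)\subseteq m(G_1)$ and $E(G_2)\setminus E(G_1)\subseteq M(G_1)$; by the symmetric role in Proposition~\ref{prop:equivalentcharac}, the analogous inclusions hold with $G_1,G_2$ swapped. The key preliminary observation is that a point $(\underline a,\underline p)$ in $\partial P_{G_1}\cap\partial P_{G_2}$ lies in $\overline{P_{G_1}}\cap\overline{P_{G_2}}$, hence (arguing as in Proposition~\ref{proposition:continuityInParameters}, using the uniqueness of the stationary trajectory from Theorem~\ref{theorem:convergenceOfDynamics}) one has $z_i(\underline a,\underline p)=z_i^{(G_1)}(\underline a,\underline p)=z_i^{(G_2)}(\underline a,\underline p)$ for every $i$, and by Lemma~\ref{lemma:positivityC2C3} these common values are positive. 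This lets me freely translate between the two graphs' formulas and the edge-inequalities \eqref{eq:edgeGrapz} in the variables $Z_{i,j}$.

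The forward direction: suppose $(\underline a,\underline p)\in\partial P_{G_1}\cap\partial P_{G_2}\cap\setFixedSpeed{1}$. Since $(\underline a,\underline p)\in\overline{P_{G_1}}$, Proposition~\ref{proposition:interiorAndClosure} gives $(C_{m(G_1)}^{\mathsmaller{\geq}})$ and $(C_{M(G_1)}^{\mathsmaller{\leq}})$; restricting the index sets yields \eqref{eq:geq} and \eqref{eq:leq}. For \eqref{eq:=}, take an edge $(i,j)\in E(G_1)\setminus E(G_2)$. By \eqref{eq:boundaryAnalysisCondition}, $(i,j)\in m(G_1)$, so $(C_{m(G_1)}^{\mathsmaller{\geq}})$ gives $z_1\geq Z_{i,j}$; but $(\underline a,\underline p)\in\overline{P_{G_2}}$ and $(i,j)\notin E(G_2)$, and since $(i,j)$ is then a minimal non-edge of $G_2$ (here I would verify that an edge of $G_1$ that is maximal in $G_1$ and absent from $G_2$ must lie in $M(G_2)$, using that $G_2$ is downward closed), so $(C_{M(G_2)}^{\mathsmaller{\leq}})$ gives $z_1\leq Z_{i,j}$; hence equality. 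The symmetric argument handles $(i,j)\in E(G_2)\setminus E(G_1)$. Condition \eqref{eq:conditionsIntersecLowerBoundary2} is just $z_1(\underline a,\underline p)=z_1^{(G_1)}(\underline a,\underline p)=1$, using the identification above and $\setFixedSpeed{1}$.

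The reverse direction: suppose $(G_1,\underline a,\underline p)$ satisfies \eqref{eq:=}, \eqref{eq:geq}, \eqref{eq:leq}, \eqref{eq:conditionsIntersecLowerBoundary2}. I claim $(G_1,\underline a,\underline p)$ satisfies $(C_{m(G_1)}^{\mathsmaller{\geq}})$ and $(C_{M(G_1)}^{\mathsmaller{\leq}})$: indeed $m(G_1)=(m(G_1)\cap E(G_2))\sqcup(m(G_1)\setminus E(G_2))$, the first part is covered by \eqref{eq:geq}, and the second part is contained in $E(G_1)\setminus E(G_2)\subseteq E(G_1)\Delta E(G_2)$, hence covered by \eqref{eq:=}; similarly for $M(G_1)$. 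So by Proposition~\ref{proposition:interiorAndClosure}, $(\underline a,\underline p)\in\overline{P_{G_1}}$. Now I must show $(\underline a,\underline p)\notin\mathring{P_{G_1}}$, i.e.\ that it is a genuine boundary point, and likewise that it lies in $\partial P_{G_2}$. Since $G_1\neq G_2$, the set $E(G_1)\Delta E(G_2)$ is non-empty, and condition \eqref{eq:=} forces $z_1^{(G_1)}=Z_{i,j}^{(G_1)}$ for some $(i,j)\in m(G_1)\cup M(G_1)$ (because each element of $E(G_1)\Delta E(G_2)$ lies below some maximal edge of $G_1$ or above some minimal non-edge, and by the positivity of the $z_i^{(G_1)}$ from Lemma~\ref{lemma:positivityC2C3} the equality propagates to that extremal edge); Corollary~\ref{corollary:caracBorder} then puts $(\underline a,\underline p)$ in $\partial P_{G_1}$. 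To get $\partial P_{G_2}$, I use the identification $z_i^{(G_1)}=z_i^{(G_2)}$ on $\overline{P_{G_1}}\cap\overline{P_{G_2}}$ — but I only know $(\underline a,\underline p)\in\overline{P_{G_1}}$ so far; the resolution is to first check directly from \eqref{eq:=}, \eqref{eq:geq}, \eqref{eq:leq} rephrased via $G_2$ that $(G_2,\underline a,\underline p)$ satisfies $(C_{m(G_2)}^{\mathsmaller{\geq}})$ and $(C_{M(G_2)}^{\mathsmaller{\leq}})$ — which requires first proving $z_i^{(G_1)}(\underline a,\underline p)=z_i^{(G_2)}(\underline a,\underline p)$ for a point known only to satisfy the $G_1$-inequalities.

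\textbf{Main obstacle.} The delicate point is precisely this last symmetry: showing $z_i^{(G_1)}=z_i^{(G_2)}$ at a point that a priori only satisfies the closure conditions for $G_1$. My plan to handle it is to invoke Lemma~\ref{lemma:positivityC2C3} for $G_1$ to get that $(\hat y_\infty$ built from the $z_i^{(G_1)})$ is \emph{the} stationary trajectory for $(\underline a,\underline p)$; then observe that because of the equalities \eqref{eq:=}, the same piecewise-affine trajectory also satisfies the defining relation \eqref{eq:recursiveFormulaSpeedTilde3} with the index bookkeeping of $G_2$ (the positive parts that get ``switched on/off'' between $G_1$ and $G_2$ occur exactly at the breakpoints where the corresponding term contributes zero), so by uniqueness of the stationary trajectory it coincides with the solution of $\mathcal S^{(G_2)}(\underline a,\underline p)$, giving $z_i^{(G_2)}=z_i^{(G_1)}$. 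With this in hand the remaining implications are bookkeeping with the partition of $m(G_2)$ and $M(G_2)$ exactly as for $G_1$, plus Corollary~\ref{corollary:caracBorder} applied to $G_2$.
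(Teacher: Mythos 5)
Your proposal is correct and follows essentially the same route as the paper: the forward direction via Proposition~\ref{proposition:interiorAndClosure} and the identification $z_i^{(G_1)}=z_i^{(G_2)}=z_i$ on $\overline{P_{G_1}}\cap\overline{P_{G_2}}$, and the reverse direction by decomposing $m(G_1)$ and $M(G_1)$, applying Corollary~\ref{corollary:caracBorder} to $G_1$, and then transferring to $G_2$ by observing that the terms distinguishing the systems $\mathcal S^{(G_1)}$ and $\mathcal S^{(G_2)}$ vanish thanks to $(C_{E(G_1)\Delta E(G_2)}^{\mathsmaller{=}})$. The only cosmetic slip is that the final identification $z_i^{(G_1)}=z_i^{(G_2)}$ rests on the uniqueness of the solution of the linear system $\mathcal S^{(G_2)}(\underline a,\underline p)$ (Theorem~\ref{theorem:formulaSpeed}) rather than on the uniqueness of the stationary trajectory.
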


Observe that the three sets $E(G_1)\Delta E(G_2)$, $m(G_1)\cap E(G_2)$ and $M(G_1)\cap E(G_2)^c$ form a partition of $m(G_1)\cup M(G_1)$ by \eqref{eq:boundaryAnalysisCondition}.

\begin{remark}
We could remove the normalization $\{z_1=1\}$ from the statement of Proposition \ref{proposition:step2and3}, in the sense that $( \underline a , \underline p) \in \partial P_{G_1} \cap \partial P_{G_2}$ if and only if conditions \eqref{eq:=}-\eqref{eq:leq} are satisfied. However, as in the proof of Proposition \ref{proposition:interiorAndClosure}, we will later need such a scaling to ensure that a certain linear system is triangular.
\end{remark}

\begin{proof}[Proof of Proposition \ref{proposition:step2and3}]

    First, let us prove that if $(\underline a , \underline p )$ is in $\partial P_{G_1} \cap \partial P_{G_2} \cap \setFixedSpeed{1}$, then \eqref{eq:=}-\eqref{eq:conditionsIntersecLowerBoundary2} are satisfied. 
    Since $(\underline a , \underline p )$ is in $\partial P_{G_1}$, $(C_{m(G_1)}^{\mathsmaller{\geq}})$ and $(C_{M(G_1)}^{\mathsmaller{\leq}})$ are satisfied by $(G_1,\underline a , \underline p )$ by Corollary \ref{corollary:caracBorder}. Therefore, \eqref{eq:geq} and \eqref{eq:leq} are satisfied.
    Note that $(\underline a , \underline p )$ belongs to $\overline{P_{G_1}} \cap \overline{P_{G_2}}$. Then, by Lemma \ref{lemma:positivityC2C3} and Proposition \ref{proposition:interiorAndClosure}, for every $i\in \llbracket 1 , \nbParam \rrbracket$, 
    \begin{equation}\label{eq:zG1G2coincide}
        z_i^{(G_1)}(\underline a , \underline p ) = z_i^{(G_2)}(\underline a , \underline p ) = z_i(\underline a , \underline p ).
    \end{equation}
    Therefore, \eqref{eq:conditionsIntersecLowerBoundary2} is satisfied since $(\underline a , \underline p) \in \{z_1 = 1\}$.
    Now, it remains to prove \eqref{eq:=}. Consider $(i,j) \in E(G_1) \Delta E(G_2)$ and let us prove that
    \begin{equation}
    \label{eq:66equal}
    z_1^{(G_1)}( \underline a , \underline p) -  Z_{i,j}^{(G_1)}( \underline a , \underline p) = 0.
    \end{equation}
    Note that the edges of $E(G_1) \Delta E(G_2)$ are either in $m(G_1) \cap M(G_2)$ or in $M(G_1) \cap m(G_2)$ by hypothesis  \eqref{eq:boundaryAnalysisCondition}. Let us assume that $(i,j)$ belongs to $m(G_1) \cap M(G_2)$ (the other case is treated similarly by symmetry). Then, 
    \begin{equation}
    \label{eq:66geq}
    z_1^{(G_1)}( \underline a , \underline p) -  Z_{i,j}^{(G_1)}( \underline a , \underline p) \geq 0
    \end{equation}
    and 
    \begin{equation}
    \label{eq:66leq}
    z_1^{(G_2)}( \underline a , \underline p) -  Z_{i,j}^{(G_2)}( \underline a , \underline p) \leq 0 .
    \end{equation}
    Equality \eqref{eq:66equal} follows from combining \eqref{eq:zG1G2coincide} with \eqref{eq:66geq} and \eqref{eq:66leq}.

    Now, assume that $(\underline a , \underline p) \in P^{2N}$ satisfies \eqref{eq:=}-\eqref{eq:conditionsIntersecLowerBoundary2}. Let us prove that $(\underline a , \underline p)$ belongs to $\partial P_{G_1 } \cap \partial P_{G_2 } \cap \{z_1= 1\} $.
    Since $G_1\neq G_2$, the set $E(G_1)\Delta E(G_2)$ is non-empty. Combining \eqref{eq:=}-\eqref{eq:leq} with Corollary \ref{corollary:caracBorder}, we get that $( \underline a , \underline p) \in \partial P_{G_1}$. 
    Since $\partial  P_{G_1} \subseteq \overline{P_{G_1}}$ and by Lemma \ref{lemma:positivityC2C3} and Proposition \ref{proposition:interiorAndClosure}, for every $i \in \llbracket 1 , \nbParam \rrbracket$, 
    \[ z_i ( \underline a , \underline p) = z_i^{(G_1)} ( \underline a , \underline p).  \]

    As a consequence, with $i=1$, one obtains that $( \underline a , \underline p) \in \{z_1 = 1\}$. It remains to prove that $( \underline a , \underline p) \in \partial  P_{G_2}$.
    
    By definition of $m(G_1)$, for every $(i,j) \in E(G_1)$, there exists $(i',j') \in m(G_1)$ such that $ i' \leq i < j \leq j'$. Similarly, by definition of $M(G_1)$,  for every $(i,j) \in E(G_1)^c$, there exists $(i',j') \in M(G_1)$ such that $ i \leq i' < j' \leq j$.

    As a consequence, since $z_i^{(G_1)}( \underline a , \underline p)$ is positive for every $i \in \llbracket 1 , \nbParam \rrbracket$ by Lemma~\ref{lemma:positivityC2C3}, and by definition of $m(G)$ and $M(G)$, we have that $(G_1,\underline a,\underline p)$ satisfies conditions $(C_{E(G_1)\cap E(G_2)}^{\mathsmaller{\geq}})$ and $(C_{E(G_1)^c\cap E(G_2)^c}^{\mathsmaller{\leq}})$.
    
    Since $(z_i^{(G_1)}(\underline a , \underline p))_{i \in \llbracket 1 , \nbParam \rrbracket}$ is the unique solution of the system $\mathcal S^{(G_1)}(\underline a,\underline p)$, one obtains that for every $i\in\llbracket1,N\rrbracket$, 
    \begin{align}\label{eq:systemAi}
    a_i=\sum_{j=1}^{b_{G_1}(i)} p_j &\left( z_1^{(G_1)}(\underline a , \underline p) + (z_{1}^{(G_1)}(\underline a , \underline p) + \dots + z_i^{(G_1)}(\underline a , \underline p)) \right. \\&- \left. (z_{1}^{(G_1)}(\underline a , \underline p) + \dots + z_j^{(G_1)}(\underline a , \underline p))\right).\notag
    \end{align}
    
    Let $i\in\llbracket 1,N\rrbracket$. We want to show that
    \begin{align}\label{eq:systemAibis}
    a_i=\sum_{j=1}^{b_{G_2}(i)}p_j &\left( z_1^{(G_1)}(\underline a , \underline p) + (z_{1}^{(G_1)}(\underline a , \underline p) + \dots + z_i^{(G_1)}(\underline a , \underline p)) \right.\\&- \left. (z_{1}^{(G_1)}(\underline a , \underline p) + \dots + z_j^{(G_1)}(\underline a , \underline p))\right). \notag
    \end{align}
    This is clearly true if $b_{G_1}(i)=b_{G_2}(i)$. Otherwise, hypothesis \eqref{eq:boundaryAnalysisCondition} implies that $j:=\max( b_{G_1}(i), b_{G_2}(i))=\min( b_{G_1}(i), b_{G_2}(i)) +1$. In that case, \eqref{eq:systemAi} and \eqref{eq:systemAibis} differ by a single term, which is $z_1^{(G_1)}(\underline a , \underline p) - Z_{i,j}^{(G_1)}(\underline a , \underline p)$. Since $(i,j) \in E(G_1)\Delta E(G_2)$, condition \eqref{eq:=} implies that the extra term vanishes, hence \eqref{eq:systemAibis} also holds in that case. 
    
    As a consequence, $(z_i^{(G_1)}(\underline a , \underline p))_{i \in \llbracket 1 , \nbParam \rrbracket}$ is a  solution of the linear system $\mathcal S^{(G_2)}(\underline a,\underline p)$. Since the unique solution of this system is $(z_i^{(G_2)}(\underline a , \underline p))_{i \in \llbracket 1 , \nbParam \rrbracket}$, we have that $z_i(\underline a , \underline p ) = z_i^{(G_1)}(\underline a , \underline p ) = z_i^{(G_2)}(\underline a , \underline p )$ for every $i \in \llbracket 1 , \nbParam \rrbracket$. Thus $Z_{i,j}^{(G_1)}(\underline a , \underline p) = Z_{i,j}^{(G_2)}(\underline a , \underline p)$ for every $1 \leq i < j \leq \nbParam$. The fact that $(G_1,\underline a,\underline p)$ satisfies conditions $(C_{E(G_1)\Delta E(G_2)}^{\mathsmaller{=}})$, $(C_{E(G_1)\cap E(G_2)}^{\mathsmaller{\geq}})$ and $(C_{E(G_1)^c\cap E(G_2)^c}^{\mathsmaller{\leq}})$ then imply that $(G_2,\underline a,\underline p)$ satisfies respectively conditions $(C_{E(G_1)\Delta E(G_2)}^{\mathsmaller{=}})$, $(C_{m(G_2)\cap E(G_1)}^{\mathsmaller{\geq}})$ and $(C_{M(G_2)\cap E(G_1)^c}^{\mathsmaller{\leq}})$.
    We deduce from Corollary \ref{corollary:caracBorder} that $(\underline a , \underline p)\in\partial P_{G_2}$. To conclude, $(\underline a , \underline p) \in \partial P_{G_1} \cap \partial P_{G_2}\cap \{z_1 = 1\}$.
\end{proof}

\subsection{Proof of Theorem \ref{theorem:boundaryAnalysis}}\label{subsec:proofTheoremBoundaryAnalysis}
We decompose the proof of Theorem \ref{theorem:boundaryAnalysis} into three lemmas: Lemmas \ref{lemma:step1}, \ref{lemma:step2} and \ref{lemma:step3}.
For the rest of the subsection, let us fix $G_1$ and $G_2$ two distinct DC graphs. 
    
\begin{lemma}\label{lemma:step1}
    $\partial P_{G_1} \cap \partial P_{G_2} = \varnothing$ if \eqref{eq:boundaryAnalysisCondition} is not verified.
\end{lemma}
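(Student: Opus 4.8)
The plan is to prove the contrapositive: assuming $\partial P_{G_1} \cap \partial P_{G_2} \neq \varnothing$, we derive \eqref{eq:boundaryAnalysisCondition}. Pick $(\underline a, \underline p) \in \partial P_{G_1} \cap \partial P_{G_2}$. By Corollary~\ref{corollary:caracBorder}, $(G_l, \underline a, \underline p)$ satisfies $(C_{m(G_l)}^{\mathsmaller{\geq}})$ and $(C_{M(G_l)}^{\mathsmaller{\leq}})$ for $l = 1, 2$; in particular these conditions hold with the weak inequalities, so Lemma~\ref{lemma:positivityC2C3} applies to both graphs. Consequently $z_i^{(G_1)}(\underline a, \underline p) = z_i(\underline a, \underline p) = z_i^{(G_2)}(\underline a, \underline p)$ for every $i \in \llbracket 1, N \rrbracket$, and all these quantities are positive. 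Hence $Z_{i,j}^{(G_1)}(\underline a, \underline p) = Z_{i,j}^{(G_2)}(\underline a, \underline p) =: Z_{i,j}$ and $z_1^{(G_1)}(\underline a,\underline p) = z_1^{(G_2)}(\underline a,\underline p) =: z_1$ for all $1 \le i < j \le N$.

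Next I would exploit the consequences of Lemma~\ref{lemma:positivityC2C3} spelled out in its proof (inequalities \eqref{eq:equivziedgeBis1}--\eqref{eq:equivziedgeBis2}): for each $l \in \{1,2\}$, $(i,j) \in E(G_l)$ forces $z_1 \geq Z_{i,j}$, while $(i,j) \in E_N \setminus E(G_l)$ forces $z_1 \leq Z_{i,j}$. Now take an edge $(i,j) \in E(G_1) \setminus E(G_2)$. From $(i,j) \in E(G_1)$ we get $z_1 \ge Z_{i,j}$, and from $(i,j) \notin E(G_2)$ we get $z_1 \le Z_{i,j}$, so $z_1 = Z_{i,j}$, i.e. $z_{i+1} + \cdots + z_j = z_1$. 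The argument must show $(i,j) \in m(G_1)$. Suppose not: there is $(i', j') \in E(G_1)$ with $(i,j) \prec_E (i',j')$, strictly, so $i' \le i < j \le j'$ with at least one inequality strict. Then $z_1 = Z_{i,j} \le Z_{i',j'}$ by positivity of the intermediate $z_k$'s (the range $(i',j']$ contains $(i,j]$), and $Z_{i',j'} \le z_1$ because $(i',j') \in E(G_1)$; thus $z_1 = Z_{i',j'}$, forcing all the $z_k$ with $k \in (i,i'] \cup (j,j']$ to vanish — contradicting their strict positivity (there is at least one such $k$ since the nesting is strict). Hence $(i,j) \in m(G_1)$, proving $E(G_1)\setminus E(G_2) \subseteq m(G_1)$. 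Symmetrically, for $(i,j) \in E(G_2) \setminus E(G_1)$ we again get $z_1 = Z_{i,j}$; if $(i,j) \notin M(G_1)$ there is $(i',j') \notin E(G_1)$ with $(i,j) \succ_E (i',j')$ strictly, and since $(i,j) \in E(G_2)$ and $G_2$ is a DC graph, $(i',j') \in E(G_2)$; then $z_1 \le Z_{i',j'} \le Z_{i,j} = z_1$ (the middle inequality because $(i',j'] \subseteq (i,j]$), again forcing a vanishing $z_k$ for some $k$ in the strictly nonempty difference, a contradiction. Thus $E(G_2)\setminus E(G_1) \subseteq M(G_1)$, establishing \eqref{eq:boundaryAnalysisCondition}.

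The main obstacle, and the place requiring care, is the strict-positivity bookkeeping: one must check that whenever the nesting $(i,j) \prec_E (i',j')$ is strict, the symmetric difference of index-intervals $((i',i] \cup (j,j'])$ is genuinely nonempty, so that the equality of the two $Z$-sums does force some individual $z_k = 0$. This is immediate from the definition of $\prec_E$ — strictness means $i' < i$ or $j < j'$ — but it is the crux that turns the weak inequalities into the sharp combinatorial conclusion. The rest of the argument is a routine combination of Corollary~\ref{corollary:caracBorder}, Lemma~\ref{lemma:positivityC2C3}, and the monotonicity of partial sums of positive reals.
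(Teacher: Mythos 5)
Your proof is correct and uses essentially the same mechanism as the paper's: at a common boundary point the $z_i^{(G_1)}$, $z_i^{(G_2)}$ and $z_i$ all coincide and are positive (via Corollary~\ref{corollary:caracBorder} and Lemma~\ref{lemma:positivityC2C3}), and a strict nesting of edges with equal $Z$-sums would force some $z_k=0$. The paper phrases it as a direct contradiction (exhibiting two nested edges with the same membership status in both graphs) while you argue the contrapositive edge by edge, but the content is the same.
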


\begin{lemma}\label{lemma:step2}
    $\partial P_{G_1} \cap \partial P_{G_2} \neq \varnothing$ if \eqref{eq:boundaryAnalysisCondition} is verified. 
\end{lemma}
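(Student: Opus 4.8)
The plan is to reduce to the explicit system of Proposition~\ref{proposition:step2and3} and then build a solution by hand. Since $\partial P_{G_1}\cap\partial P_{G_2}$ is invariant under the rescaling $\underline a\mapsto\lambda\underline a$ (which scales $z_1$ by $\lambda$), it is non-empty if and only if $\partial P_{G_1}\cap\partial P_{G_2}\cap\setFixedSpeed{1}$ is, so by Proposition~\ref{proposition:step2and3} it suffices to exhibit one $(\underline a,\underline p)\in P^{2N}$ for which $(G_1,\underline a,\underline p)$ satisfies \eqref{eq:=}, \eqref{eq:geq}, \eqref{eq:leq} (the normalization \eqref{eq:conditionsIntersecLowerBoundary2} is then obtained by the rescaling). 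I would produce such a point by prescribing the car speeds: choose positive reals $\zeta_1,\dots,\zeta_N$ and a rate vector $\underline p$, and \emph{define} $\underline a$ by imposing $z_i^{(G_1)}=\zeta_i$ in $\mathcal S^{(G_1)}(\underline a,\underline p)$, that is by the identities \eqref{eq:systD} for the $d_i$. By Theorem~\ref{theorem:formulaSpeed} the system $\mathcal S^{(G_1)}(\underline a,\underline p)$ has a unique solution, so $z^{(G_1)}_i(\underline a,\underline p)=\zeta_i$ and $Z_{i,j}^{(G_1)}(\underline a,\underline p)=\zeta_{i+1}+\cdots+\zeta_j$, and conditions \eqref{eq:=}--\eqref{eq:leq} become the purely combinatorial requirements
\[
\zeta_1=\sum_{l=i+1}^{j}\zeta_l \ \ ((i,j)\in E(G_1)\Delta E(G_2)),\qquad \zeta_1\ \gtrless\ \sum_{l=i+1}^{j}\zeta_l
\]
($\geq$ for $(i,j)\in m(G_1)\cap E(G_2)$, $\leq$ for $(i,j)\in M(G_1)\cap E(G_2)^c$), subject to $(\underline a,\underline p)\in P^{2N}$, i.e.\ all $d_i>0$.

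Next I would pin down the combinatorial structure of $S:=E(G_1)\Delta E(G_2)$. By Proposition~\ref{prop:equivalentcharac}, $S$ is an antichain of $(E_N,\preceq_E)$. Using this and the facts that $G_1,G_2$ are DC graphs satisfying \eqref{eq:boundaryAnalysisCondition}, I would check that the maps $(i,j)\mapsto i$ and $(i,j)\mapsto j$ are both injective on $S$ and that ordering $S$ by first coordinate also orders it by second coordinate; hence, viewing each $(i,j)\in S$ as an edge $\{i,j\}$ on $\llbracket 1,N\rrbracket$, one gets a disjoint union of ``staircase paths'' $v_0<v_1<\cdots<v_k$ whose edges are exactly the consecutive pairs $(v_{m-1},v_m)$. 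I would then build $\underline\zeta$ through the partial sums $S_l:=\zeta_2+\cdots+\zeta_l$ (with $S_1:=0$ and $\zeta_1:=C$ for a large constant $C$): along each staircase path the antichain constraints read $S_{v_m}=S_{v_0}+mC$, so the $S$-values on the path are determined by the base $S_{v_0}$. The heart of the argument is to show that the bases (one per staircase path, the one through vertex $1$ being forced to $0$) can be chosen so that $0=S_1<S_2<\cdots<S_N$; I would process the vertices in increasing order, choosing freely the base of each path when its smallest vertex is reached, and check inductively that the already pinned higher-level values dominate the preceding $S$-values — this is precisely where the antichain/staircase structure is used, since a ``straddling'' edge of another path is forced to start strictly to the left, so its increment stays below $C$. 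The remaining $S_l$ are filled in by any strictly increasing interpolation, and the inequalities on $m(G_1)\cap E(G_2)$ and $M(G_1)\cap E(G_2)^c$ hold automatically because every edge of $G_1$ (resp.\ non-edge) is nested below a maximal edge (resp.\ above a minimal non-edge) of $G_1$.

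Finally I would choose $\underline p$. With $q_i=p_1+\cdots+p_i$ and the $\zeta_i$ fixed, \eqref{eq:systD} expresses each $d_i$ as a polynomial in $q_1,\dots,q_N$ whose top-degree coefficient (that of $q_{b_{G_1}(i)}$) is $\zeta_1-Z_{i,b_{G_1}(i)}$, which is $\ge 0$ by the constraints just imposed, and whose next coefficient is $\zeta_{b_{G_1}(i)}>0$ when the top one vanishes; for the indices $i$ that neither start a maximal edge of $G_1$ nor are isolated one has $d_i=q_{b_{G_1}(i-1)}\zeta_i>0$ outright by Remark~\ref{remark:reduxSystem}. A routine estimate then shows that taking the ratios $q_i/q_{i-1}$ large enough makes every $d_i>0$; since the $\zeta_i$ are already fixed this is a finite list of conditions on $\underline p$. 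Then $(\underline a,\underline p)\in P^{2N}$, $z^{(G_1)}_i(\underline a,\underline p)=\zeta_i$, and the rescaled point lies in $\partial P_{G_1}\cap\partial P_{G_2}$ by Proposition~\ref{proposition:step2and3}. The main obstacle is the middle step — ensuring the staircase-path bases can be selected so that the partial sums strictly increase. An appealing alternative that sidesteps some of this bookkeeping is an induction on $N$ mirroring the proof of Theorem~\ref{theorem:nonemptiness}: when $E(G_1)\Delta E(G_2)$ avoids the vertex $N$, restrict to $\llbracket 1,N-1\rrbracket$, apply the inductive hypothesis to the restrictions $(G_1',G_2')$ (which one checks still satisfy \eqref{eq:boundaryAnalysisCondition}), and re-extend by letting $p_N\to 0$ with $a_N$ tuned via the quantities $\chi_i$; when $E(G_1)\Delta E(G_2)$ touches $N$, one shows it then contains a single such edge $(i^*,N)$ and chooses $a_N$ at the corresponding critical value $\chi_{i^*}$ — but making that limiting procedure interact cleanly with the equality constraints has its own delicate points.
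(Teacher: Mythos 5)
Your main route---prescribe the solution $\underline{\zeta}$ of $\mathcal S^{(G_1)}(\underline a,\underline p)$, recover $\underline d$ from \eqref{eq:systD}, then invoke Proposition~\ref{proposition:step2and3} after rescaling---is genuinely different from the paper's proof, which instead runs an induction on $N$ (essentially your ``alternative''). The outer layers of your construction are sound: the rescaling reduction is legitimate, the antichain $E(G_1)\Delta E(G_2)$ does decompose into monotone staircase paths as you describe, and the final step is in fact easier than you suggest: once $\zeta_1\geq Z^{(G_1)}_{i,b_{G_1}(i)}$ holds for every maximal edge, \eqref{eq:systD} yields $d_i\geq(q_{b_{G_1}(i)}-q_{b_{G_1}(i-1)})(\zeta_1-Z^{(G_1)}_{i,b_{G_1}(i)})+q_{b_{G_1}(i-1)}\zeta_i>0$ for \emph{every} $\underline p$, so no largeness condition on the ratios $q_i/q_{i-1}$ is needed.

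The genuine gap is the middle step, the existence of $0=S_1<S_2<\cdots<S_N$ meeting all constraints, and it is twofold. First, your claim that the inequalities on $m(G_1)\cap E(G_2)$ and $M(G_1)\cap E(G_2)^c$ ``hold automatically'' for any strictly increasing interpolation is false: the nesting argument you invoke only reduces conditions on all of $E(G_1)$ and $E(G_1)^c$ to conditions on $m(G_1)$ and $M(G_1)$; it does not discharge the conditions on those maximal edges and minimal non-edges that lie outside $E(G_1)\Delta E(G_2)$. Concretely, take $N=4$, $E(G_1)=\{(1,2),(2,3),(3,4)\}$ and $E(G_2)=E(G_1)\cup\{(1,3)\}$, which satisfy \eqref{eq:boundaryAnalysisCondition}: the symmetric difference $\{(1,3)\}$ forces $S_3=C$, but $(3,4)\in m(G_1)\cap E(G_2)$ and $(2,4)\in M(G_1)\cap E(G_2)^c$ are incomparable to $(1,3)$, so the constraints $S_4-S_3\leq C$ and $S_4-S_2\geq C$ genuinely restrict $S_4$ (with $C=1$, $S_2=0.5$, $S_3=1$, the strictly increasing choice $S_4=1.1$ satisfies the equality but violates $S_4\geq S_2+C$). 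Second, the selection of the staircase-path bases compatible with strict monotonicity---which you yourself flag as ``the main obstacle''---is only sketched, and must moreover be interleaved with these extra inequality constraints. So the heart of the argument is missing. The paper avoids it entirely via the induction on $N$: a three-case analysis on $i_1,i_2$, a choice of $a_N^{(0)}$ at or near a critical value $\chi_{i_0}$ with $p_N=0$, and then a triangular linear system to restore the equalities for small $p_N>0$---exactly the ``delicate points'' you name but do not resolve.
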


\begin{lemma}\label{lemma:step3}
    If $\partial P_{G_1} \cap \partial P_{G_2} \neq \varnothing$, then  \[ \dim \partial P_{G_1} \cap \partial P_{G_2} = 2 \nbParam - |E(G_1) \Delta E(G_2)|.\] 
\end{lemma}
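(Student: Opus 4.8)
The plan is to slice everything by the hyperplane $\{z_1=1\}$ and turn the dimension count into a statement about a triangular linear system. Recall from the proof of Proposition~\ref{proposition:interiorAndClosure} that the dynamics is invariant under rescaling all distances by a common factor $\lambda>0$; this action $\underline a\mapsto\lambda\underline a$ (with $\underline p$ unchanged) multiplies every $z_i$ by $\lambda$ and maps each $P_G$ to itself, hence also $\overline{P_{G_\ell}}$, $\mathring{P_{G_\ell}}$ and $\partial P_{G_\ell}$. Since $z_1>0$ on $\overline{P_{G_1}}$ (Lemma~\ref{lemma:positivityC2C3}) and $z_1=z_1^{(G_1)}$ is homogeneous of degree $1$ in $\underline a$ by \eqref{eq:Zf1m}, each orbit of the $\RR_{>0}$-action meets $\{z_1=1\}$ in exactly one point, so $\partial P_{G_1}\cap\partial P_{G_2}$ is homeomorphic to $\bigl(\partial P_{G_1}\cap\partial P_{G_2}\cap\setFixedSpeed{1}\bigr)\times\RR_{>0}$. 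Thus, writing $r:=|E(G_1)\Delta E(G_2)|$, it suffices to prove that $\dim\bigl(\partial P_{G_1}\cap\partial P_{G_2}\cap\setFixedSpeed{1}\bigr)=2N-1-r$. By the hypothesis and Lemma~\ref{lemma:step1}, condition \eqref{eq:boundaryAnalysisCondition} holds, so Proposition~\ref{proposition:step2and3} identifies $\partial P_{G_1}\cap\partial P_{G_2}\cap\setFixedSpeed{1}$ with the set of $(\underline a,\underline p)\in P^{2N}$ satisfying \eqref{eq:=}, \eqref{eq:geq}, \eqref{eq:leq} and \eqref{eq:conditionsIntersecLowerBoundary2}.

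For the upper bound I would work in the coordinates $(\underline d,\underline p)$ and use Remark~\ref{rem:samesign} to replace $(C_{E(G_1)\Delta E(G_2)}^{\mathsmaller{=}})$, together with $z_1^{(G_1)}=1$, by the system $z_1^{(G_1)}=1$ and $\tilde Z_{i,j}^{(G_1)}=1$ for $(i,j)\in E(G_1)\Delta E(G_2)$. Because $E(G_1)\Delta E(G_2)$ is an antichain (Proposition~\ref{prop:equivalentcharac}), the map $(i,j)\mapsto i$ is injective on it, so the set $I:=\{1\}\cup\{\,i+1\mid (i,j)\in E(G_1)\Delta E(G_2)\,\}$ has cardinality $r+1$. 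Freezing $\underline p$ and the coordinates $d_k$ with $k\notin I$, formulas \eqref{eq:Zf1m}--\eqref{eq:Zfijm} show that the remaining $(r+1)\times(r+1)$ linear system in $(d_k)_{k\in I}$ has a block-triangular matrix with positive diagonal: $\tilde Z_{i,j}^{(G_1)}$ involves only the $d_l$ with $l\geq i+1$ (in particular never $d_1$), so ordering the edges by increasing first coordinate makes the $\tilde Z$-block upper triangular with diagonal entries $f^{(G_1)}_{i,j,i+1}(\underline p)>0$, while $z_1^{(G_1)}=1$ contributes the last row with diagonal entry $f^{(G_1)}_{1,1}(\underline p)>0$ in the column of $d_1$. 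Hence the system is uniquely solvable for every admissible value of the other coordinates, so the equality locus of \eqref{eq:=}$+$\eqref{eq:conditionsIntersecLowerBoundary2} in $P^{2N}$ is, near each of its points, a rational graph over the $2N-r-1$ free coordinates, i.e. an embedded manifold $\mathcal M$ of dimension $2N-r-1$. By monotonicity of the Lebesgue covering dimension, $\dim\bigl(\partial P_{G_1}\cap\partial P_{G_2}\cap\setFixedSpeed{1}\bigr)\leq 2N-r-1$.

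For the lower bound I would exhibit a relatively open subset of $\mathcal M$ contained in $\partial P_{G_1}\cap\partial P_{G_2}\cap\setFixedSpeed{1}$. Start from a point $(\underline a_0,\underline p_0)$ of that set, which is non-empty by the hypothesis together with the scaling of the first paragraph, and let $B_m$ (resp. $B_M$) be the subset of $m(G_1)\cap E(G_2)$ (resp. of $M(G_1)\cap E(G_2)^c$) on which \eqref{eq:geq} (resp. \eqref{eq:leq}) is an equality there. Exactly as in the proof of Proposition~\ref{proposition:interiorAndClosure}, the positivity of the $z_i^{(G_1)}(\underline a_0,\underline p_0)$ (Lemma~\ref{lemma:positivityC2C3}) forces $(i,j)\mapsto i$ to be injective on $(E(G_1)\Delta E(G_2))\cup B_m\cup B_M$, since a maximal edge $(i,b_{G_1}(i))$ and the minimal non-edge $(i,b_{G_1}(i)+1)$ cannot both have $\tilde Z^{(G_1)}$ equal to $z_1^{(G_1)}$ without making $z^{(G_1)}_{b_{G_1}(i)+1}$ vanish. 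Enlarging $I$ to $I'$ over this bigger index set and invoking the same block-triangularity, I would solve the system $z_1^{(G_1)}=1$, $\tilde Z_{i,j}^{(G_1)}=1$ on $E(G_1)\Delta E(G_2)$, $\tilde Z_{i,j}^{(G_1)}=1-\varepsilon$ on $B_m$, $\tilde Z_{i,j}^{(G_1)}=1+\varepsilon$ on $B_M$; for $\varepsilon>0$ small this produces a point of $\mathcal M$ at which \eqref{eq:geq} and \eqref{eq:leq} hold with strict inequality (the other, already strict, inequalities surviving by continuity), and by continuity of the $z_i^{(G_1)}$ and $Z_{i,j}^{(G_1)}$ a whole $\mathcal M$-neighbourhood of this point lies in $\partial P_{G_1}\cap\partial P_{G_2}\cap\setFixedSpeed{1}$. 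This gives $\dim\geq 2N-r-1$, and combining with the upper bound and the first paragraph yields $\dim\bigl(\partial P_{G_1}\cap\partial P_{G_2}\bigr)=2N-r$. The main obstacle is precisely this lower bound: one must rule out that tightening the $r$ equalities of \eqref{eq:=} secretly forces further equalities among the constraints \eqref{eq:geq}--\eqref{eq:leq}, which is exactly what the injectivity-plus-triangularity argument (imported from Proposition~\ref{proposition:interiorAndClosure}) controls.
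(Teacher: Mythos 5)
Your proof is correct and follows essentially the same route as the paper's: the scaling homeomorphism identifying $P^{2N}$ with $\RR_{>0}\times\setFixedSpeed{1}$, the upper bound via the triangular linear system in $d_1$ and $(d_{i+1})_{i\in s(E(G_1)\Delta E(G_2))}$ coming from \eqref{eq:Zf1m}--\eqref{eq:Zfijm}, and the lower bound by exhibiting an open set of the free parameters on which the strict inequalities persist. The only (harmless) divergence is that the paper takes the required point with all non-equality constraints strict directly from Lemma~\ref{lemma:step2}, whereas you reconstruct one by an $\varepsilon$-perturbation of an arbitrary boundary point; both are valid.
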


\subsubsection{Proof of Lemma \ref{lemma:step1}}

    Assume that \eqref{eq:boundaryAnalysisCondition} is not verified. More explicitly, assume that $ (E(G_1) - E(G_2))\cap(m(G_1))^c \neq \varnothing$ or  $ (E(G_2) - E(G_1))\cap(M(G_1))^c \neq \varnothing$ and let us show that $\partial P_{G_1} \cap \partial P_{G_2}$ is empty. 
    
    In the first case, there is an edge $(i,j)$ of $G_1$ which is not in $G_2$ and which is not maximal in $G_1$. Therefore, there exists $(i',j') \in m(G_1)$ a maximal edge of $G_1$ distinct of $(i,j)$ such that $ i' \leq i < j \leq j'$. Since $G_2$ is a DC graph and $(i,j)$ is not in $G_2$, $(i',j')$ is not in $G_2$. 
    
    Now, we know that $(i,j)$ and $(i',j')$ are both in $G_1$ and both not in $G_2$. By contradiction, assume there exists $(\underline a,\underline p) \in \partial P_{G_1} \cap \partial P_{G_2} $. 
    Then, by Proposition \ref{proposition:interiorAndClosure}, for $(i_0,j_0) \in \{(i,j), (i',j')\}$, 
        we have
            $z_1^{(G_1)}(\underline a,\underline p)  \geq Z_{i_0,j_0}^{(G_1)}(\underline a,\underline p)$
        and 
        $z_1^{(G_2)}(\underline a,\underline p)  \leq Z_{i_0,j_0}^{(G_2)}(\underline a,\underline p)$.
    Let $l\in\llbracket 1,N\rrbracket$. For every $(\underline a' , \underline p') \in P_{G_1}$, $z_l(\underline a' , \underline p')=z_l^{(G_1)}(\underline a' , \underline p')$ and for every $(\underline a' , \underline p') \in P_{G_2}$, $z_l(\underline a' , \underline p')=z_l^{(G_2)}(\underline a' , \underline p')$.
    Therefore, by continuity of $z_l$, $z_l^{(G_1)}$ and $z_l^{(G_2)}$, since $(\underline a , \underline p) \in \overline{P_{G_1}}\cap \overline{P_{G_2}}$, $z_l(\underline a , \underline p) = z_l^{(G_1)}(\underline a,\underline p) = z_l^{(G_2)}(\underline a,\underline p)$. 
    As a consequence,
    \[Z_{i',j'}(\underline a,\underline p)= Z_{i,j}(\underline a,\underline p)  = z_1(\underline a,\underline p) .\]

    Since $(i',j') \neq (i,j)$, there are strictly more terms in $Z_{i',j'}(\underline a,\underline p)$ than in $Z_{i',j'}(\underline a,\underline p)$. The fact that each $z_{l}(\underline a,\underline p)$ is positive yields the desired contradiction.
    
    In the second case, there exists an edge $(i,j)$ of $G_2$ which is not in $M(G_1)$. Therefore, there exists an edge $(i',j')$ in $M(G_1)$ distinct of $(i,j)$ such that $ i \leq  i' < j' \leq j $. Then, by Proposition \ref{proposition:interiorAndClosure}, for $(i_0,j_0) \in \{(i,j), (i',j')\}$, 
        we have $z_1^{(G_1)}(\underline a,\underline p)  \leq Z_{i_0,j_0}^{(G_1)}(\underline a,\underline p)$
        and 
        $z_1^{(G_2)}(\underline a,\underline p)  \geq Z_{i_0,j_0}^{(G_2)}(\underline a,\underline p)$.
        We reach a contradiction as in the first case. 
    
\subsubsection{Proof of Lemma \ref{lemma:step2}}

    Assume that \eqref{eq:boundaryAnalysisCondition} is satisfied. It suffices to find $(\underline a , \underline p)\in \setFixedSpeed{1} $ such that $(G_1,\underline a , \underline p)$ satisfies conditions $(C_{E(G_1)\Delta E(G_2)}^{\mathsmaller{=}})$, $(C_{m(G_1)\cap E(G_2)}^{\mathsmaller{>}})$ and $(C_{M(G_1)\cap E(G_2)^c}^{\mathsmaller{<}})$. Indeed, by Proposition \ref{proposition:step2and3}, it will imply that $(\underline a , \underline p) \in \partial P_{G_1} \cap \partial P_{G_2}$.
    
Let us proceed by induction on $N$, along similar lines as the proof of Theorem \ref{theorem:nonemptiness}.

For $N=1$, there is nothing to prove since there is only one DC graph. For $N=2$, assume that $G_1 = (\{1,2\}, \varnothing)$ and $G_2 = (\{1,2\}, \{(1,2)\})$. In this case, $E(G_1)\Delta E(G_2) =\{(1,2)\}$ does not contain any pair of nested edges. Set $\underline a = (1 , 3)$ and $\underline p = (1 , 1)$. Then one computes $z_1^{(G_1)}(\underline a,\underline p)-Z_{1,2}^{(G_1)}(\underline a,\underline p)=z_1^{(G_1)}(\underline a,\underline p)-z_2^{(G_1)}(\underline a,\underline p)=0$ and $z_1^{(G_1)}(\underline a,\underline p)=1$. Thus $(G_1,\underline a , \underline p)$ satisfies $(C_{E(G_1)\Delta E(G_2)}^{\mathsmaller{=}})$, $(C_{m(G_1)\cap E(G_2)}^{\mathsmaller{>}})$, $(C_{M(G_1)\cap E(G_2)^c}^{\mathsmaller{<}})$ and \eqref{eq:conditionsIntersecLowerBoundary2}. 

Now fix $N\geq3$ and assume that the result holds for every pair of distinct DC graphs with $N-1$ vertices.

Consider $G_1, G_2$ two distinct DC graphs with $N$ vertices satisfying \eqref{eq:boundaryAnalysisCondition}. Define $G'_1$ (resp. $G'_2$) to be the restriction of $G_1$ (resp. $G_2$) to the vertex set $\llbracket 1 , \nbParam - 1 \rrbracket$. Then $G'_1$ and $G'_2$ satisfy \eqref{eq:boundaryAnalysisCondition}. By induction hypothesis, there exists $(\underline a',\underline p')\in P^{2N-2}$ such that $(G'_1,\underline a' , \underline p')$ satisfies conditions $(C_{E(G'_1)\Delta E(G'_2)}^{\mathsmaller{=}})$, $(C_{m(G'_1)\cap E(G'_2)}^{\mathsmaller{>}})$,  $(C_{M(G'_1)\cap E(G'_2)^c}^{\mathsmaller{<}})$ and \eqref{eq:conditionsIntersecLowerBoundary2}. Define 
\[(a^{(0)}_1, \dots , a^{(0)}_{\nbParam - 1}, p^{(0)}_1,\ldots,p^{(0)}_{\nbParam - 1 }):=(\underline a',\underline p').\]

Let $l\in\{1,2\}$. For every $1\leq i\leq N$, by definition of $z^{(G_l)}_i$, this function can be extended by continuity to the case when $p_N=0$ by setting
    \begin{align}
    z^{(G_l)}_i(\underline a^\dagger,a_N,\underline p^\dagger,0)&:=z^{(G'_l)}_i(\underline a^\dagger,\underline p^\dagger) \quad \text{ if } i\in\llbracket 1 , \nbParam -1 \rrbracket; \label{eq:zigpnzero} \\
    z^{(G_l)}_N(\underline a^\dagger,a_N,\underline p^\dagger,0)&:=\frac{d_N}{q_{N-1}}. \label{eq:zngpnzero}
    \end{align}
With this extension, we have that $(G_1,\underline a^{(0)} , \underline p^{(0)})$ satisfies the four conditions $(C_{E(G_1)\Delta E(G_2)}^{\mathsmaller{=}})$, $(C_{m(G_1)\cap E(G_2)}^{\mathsmaller{>}})$,  $(C_{M(G_1)\cap E(G_2)^c}^{\mathsmaller{<}})$ and \eqref{eq:conditionsIntersecLowerBoundary2} whenever $j\neq N$ and $p^{(0)}_N=0$, regardless of the choice of $a^{(0)}_N>a^{(0)}_{N-1}$. Let us fix $p^{(0)}_N=0$ and let $a^{(0)}_N>a^{(0)}_{N-1}$ be a free parameter that will be chosen appropriately below. 

For every $1\leq i\leq N$, we denote by $\chi_i=\widetilde y_\infty(\widetilde t_\infty (a^{(0)}_1)+\widetilde t_\infty (a^{(0)}_i))$ the position of the car of index $-1$ in a stationary configuration where the car of index $0$ is at position $a^{(0)}_i$. We extend this definition to the case $i=0$ by setting $\chi_0:=a^{(0)}_1$. Consider the quantities $\chi_{i}$ as functions of $(a^{(0)}_N,p^{(0)}_N)$. Observe that for every $i\in\llbracket 1,N-1\rrbracket$, $\chi_{i}(a^{(0)}_N,0)$ is independent of the value of $a^{(0)}_N>a^{(0)}_{N-1}$. We denote it by $\chi_{i}(\cdot,0)$.

 Let $(z_i(\underline a,\underline p))_{i \in \llbracket 1 , \nbParam \rrbracket}$ (resp. $z'_i(\underline a',\underline p')_{i \in \llbracket 1 , \nbParam - 1 \rrbracket}$) be the functions defined by \eqref{eq:definitionOfZi} for $(\underline a,\underline p)\in P^{2N}$ (resp. $(\underline a',\underline p')\in P^{2N-2}$).  By using the definition of the continuous extensions \eqref{eq:zipnzero}, \eqref{eq:znpnzero}, \eqref{eq:zigpnzero} and \eqref{eq:zngpnzero} of $z_i$ and $z_i^{(G_1)}$ to the case when $p_N=0$ and combining Lemma \ref{lemma:positivityC2C3} with the induction hypothesis, we obtain that for every $i\in\llbracket 1,N-1\rrbracket$,
   \[
   z_i(\underline a^{(0)} , \underline p^{(0)})=z_i(\underline a^{(0)} , \underline p',0)=z_i'(\underline a' , \underline p') = z_i^{(G_1')}(\underline a' , \underline p')= z_i^{(G_1)}(\underline a^{(0)} , \underline p',0)=z_i^{(G_1)}(\underline a^{(0)} , \underline p^{(0)}).
   \]
   We also have that
   \[
   z_N(\underline a^{(0)} , \underline p^{(0)})=\frac{a^{(0)}_N-a^{(0)}_{N-1}}{q^{(0)}_{N-1}}= z_N^{(G_1)}(\underline a^{(0)} , \underline p^{(0)}).
   \]

Set $i_l := \min \Set{i \in \llbracket 1 , \nbParam \rrbracket}{b_{G_l}(i)=\nbParam  } $ for $l=1,2$. By \eqref{eq:boundaryAnalysisCondition}, $ i _1 - i_2 \in \{-1,0,1\}$. By symmetry, assume without loss of generality that $i_1 \leq i_2$. We shall now prove that $(G_1,\underline a^{(0)} , \underline p^{(0)})$ satisfies conditions $(C_{E(G_1)\Delta E(G_2)}^{\mathsmaller{=}})$, $(C_{m(G_1)\cap E(G_2)}^{\mathsmaller{>}})$ and   $(C_{M(G_1)\cap E(G_2)^c}^{\mathsmaller{<}})$ when $j=N$ and $p^{(0)}_N=0$. We consider three cases.

\emph{Case 1: $i_1=i_2\leq N-1$.} The only edges of the form $(i,N)$ in $m(G_1)\cup M(G_1)$

are $(i_1,N)\in m(G_1)\cap E(G_2)$ and $(i_1-1,N)\in M(G_1) \cap E(G_2)^c$ thus it suffices to find $a^{(0)}_N>a^{(0)}_{N-1}$ satisfying the following inequalities:
\begin{equation}
\label{eq:ineqscase1}
   Z^{(G_1)}_{i_1 , N}( \underline a^{(0)} , \underline p',0) < z_1^{(G_1)}( \underline a^{(0)} , \underline p',0) < Z^{(G_1)}_{i_1 -1, N}( \underline a^{(0)} , \underline p',0).
\end{equation}

If $i_1\leq N-2$, since $(i_1,N)$ is an edge of both DC graphs $G_1$ and $G_2$, $(i_1,N-1)$ is also an edge of $G_1$ and $G_2$, hence also of $G'_1$ and $G'_2$.
Since $(i_1,N-1)\in m(G'_1)\cap E(G'_2)$ and $(G'_1,\underline a' , \underline p')$ satisfies $(C_{m(G'_1)\cap E(G'_2)}^{\mathsmaller{>}})$ by induction hypothesis, we have that, if $i_1\leq N-2$, 
\[
z_1^{(G_1)}(\underline a^{(0)},\underline p',0)>Z_{i_1,N-1}^{(G_1)}(\underline a^{(0)},\underline p',0),
\]
which can be rewritten as
\[
z_1(\underline a^{(0)},\underline p',0)>Z_{i_1,N-1}(\underline a^{(0)},\underline p',0).
\]
Lemma \ref{lem:Grcarinterpretation} implies that $a^{(0)}_{N-1} < \chi_{i_1}(\cdot , 0)$ if $i_1\leq N-2$. This inequality also clearly holds for $i_1=N-1$. We also have that $\chi_{i_1}(\cdot,0)>\chi_{i_1-1}(\cdot,0)$. Pick $a^{(0)}_N$ such that
   \[\max(a^{(0)}_{N-1},\chi_{i_1-1}(\cdot,0))<a^{(0)}_N<\chi_{i_1}(\cdot,0).\]
  
   Since $\chi_{i_1-1}(\cdot,0) <  \, a^{(0)}_N < \chi_{i_1}(\cdot,0)$, it follows from Lemma \ref{lem:Grcarinterpretation} that  
   \[
   Z_{i_1 , N}( \underline a^{(0)} , \underline p',0) < z_1( \underline a^{(0)} , \underline p',0) < Z_{i_1 -1, N}( \underline a^{(0)} , \underline p',0),
   \]
   which yields \eqref{eq:ineqscase1}.

\emph{Case 2: $i_1=i_2=N$.} The only edge of the form $(i,N)$ in $m(G_1)\cup M(G_1)$ is $(N-1,N)\in M(G_1) \cap E(G_2)$. We pick $a^{(0)}_N>\chi_{N-1}(\cdot,0)$ and conclude as above that
\[
    z_1^{(G_1)}( \underline a^{(0)} , \underline p',0) < Z_{N -1, N}^{(G_1)}( \underline a^{(0)} , \underline p',0).
   \]

\emph{Case 3: $i_2=i_1+1$.} The proof follows the same lines as the one of case 1, replacing a strict inequality by an equality. The only edges of the form $(i,N)$ in $m(G_1)\cup M(G_1)$ are $(i_1,N)\in E(G_1)\Delta E(G_2)$ and $(i_1-1,N)\in M(G_1) \cap E(G_2)$, thus it suffices to find $a^{(0)}_N>a^{(0)}_{N-1}$ satisfying
\begin{equation}
\label{eq:ineqscase3}
   Z^{(G_1)}_{i_1 , N}( \underline a^{(0)} , \underline p',0) = z_1^{(G_1)}( \underline a^{(0)} , \underline p',0) < Z^{(G_1)}_{i_1 -1, N}( \underline a^{(0)} , \underline p',0).
\end{equation}
Assume first that $i_1\leq N-2$. Since $(i_1,N)$ is in $E(G_1)\Delta E(G_2)$, which contains no pair of nested edges by assumption \eqref{eq:boundaryAnalysisCondition}, we have that $(i_1,N-1)\notin E(G_1)\Delta E(G_2)$. Since $(i_1,N)\in E(G_1)$ with $i_1\leq N-2$, we have that $(i_1,N-1)\in E(G_1)$, hence $(i_1,N-1)\in E(G_2)$ too.
As in case 1, we deduce that $a^{(0)}_{N-1}<\chi_{i_1}(\cdot,0)$. The latter inequality automatically holds when $i_1=N-1$. We then set $a^{(0)}_N:=\chi_{i_1}(\cdot,0)>\max(a^{(0)}_{N-1},\chi_{i_1-1}(\cdot,0))$, which yields
\eqref{eq:ineqscase3}.

Wrapping up the three cases, we have found in each case a value of $a^{(0)}_N$ such that the triple $(G_1,\underline a^{(0)} , \underline p^{(0)})$ satisfies conditions $(C_{E(G_1)\Delta E(G_2)}^{\mathsmaller{=}})$, $(C_{m(G_1)\cap E(G_2)}^{\mathsmaller{>}})$,  $(C_{M(G_1)\cap E(G_2)^c}^{\mathsmaller{<}})$ and \eqref{eq:conditionsIntersecLowerBoundary2} for $p^{(0)}_N=0$. The requirement that $(C_{m(G_1)\cap E(G_2)}^{\mathsmaller{>}})$ and $(C_{M(G_1)\cap E(G_2)^c}^{\mathsmaller{<}})$ be satisfied by $(G_1,\underline a , \underline p)$ forms a collection of strict inequalities involving continuous functions of $(\underline a,\underline p)$ all the way up to $p_N=0$, thus this holds true for $(\underline a,\underline p)$ in a neighborhood $U$ of $(\underline a^0,\underline p^0)$ in the closure $\overline{P^{2N}}$ of $P^{2N}$.

We fix $p_i=p_i^{(0)}$ for every $i\in\llbracket 1,N-1\rrbracket$ and we let $p_N\geq0$ be a free parameter for now. Let us consider a linear system of equations with unknowns $\underline d$, where $d_i=a_i-a_{i-1}$ for every $i\in\llbracket 1,N\rrbracket$.

One equation is given by \eqref{eq:conditionsIntersecLowerBoundary2}. 
The equations
\begin{equation}
\label{eq:equivalentequality}
\forall (i,j) \in E(G_1) \Delta E(G_2), \, z_1^{(G_1)}( \underline a , \underline p) -  \tilde{Z}_{(i,j)}^{(G_1)}(\underline a , \underline p) = 0,
\end{equation}
are equivalent, by Remark~\ref{rem:samesign}, to the fact that $(G_1,\underline a , \underline p)$ satisfies $(C_{E(G_1)\Delta E(G_2)}^{\mathsmaller{=}})$. Combining them with \eqref{eq:conditionsIntersecLowerBoundary2}, we get the following equations:
\begin{equation}
\label{eq:reinjectedequivalentequality}
\forall (i,j) \in E(G_1) \Delta E(G_2), \    \tilde{Z}_{(i,j)}^{(G_1)}(\underline a , \underline p) = 1,
\end{equation}
Finally we also require that
\begin{equation}
\label{eq:samevalues}
\forall i\in\llbracket 1,N-1\rrbracket \setminus s(E(G_1) \Delta E(G_2)), \ d_{i+1}=a^{(0)}_{i+1}-a^{(0)}_{i}.
\end{equation}
The map $s:(i,j)\in E(G_1) \Delta E(G_2)\mapsto i$ is injective by assumption \eqref{eq:boundaryAnalysisCondition}.
Thus the linear system consisting of \eqref{eq:conditionsIntersecLowerBoundary2}, \eqref{eq:reinjectedequivalentequality} and \eqref{eq:reinjectedequivalentequality} has $N$ equations and $N$ unknowns $\underline d$, and it is triangular with non-zero diagonal elements, as in the proof of Proposition \ref{proposition:interiorAndClosure}. Hence it has a unique solution for every choice of $p_N\geq0$ and this solution is a continuous function of $p_N$. When $p_N=0$, this solution is given by $d_i=a^{(0)}_{i}-a^{(0)}_{i-1}$ for every $i\in\llbracket 1, N\rrbracket$. Thus by continuity, we can find $(\underline a ,\underline p)\in U\cap P^{2N}$ such that $(G_1,\underline a , \underline p)$ satisfies $(C_{E(G_1)\Delta E(G_2)}^{\mathsmaller{=}})$ and \eqref{eq:conditionsIntersecLowerBoundary2}. 

Putting everything together, we have found $(\underline a ,\underline p)\in P^{2N}$ such that $(G_1,\underline a , \underline p)$ satisfies conditions $(C_{E(G_1)\Delta E(G_2)}^{\mathsmaller{=}})$, $(C_{m(G_1)\cap E(G_2)}^{\mathsmaller{>}})$, $(C_{M(G_1)\cap E(G_2)^c}^{\mathsmaller{<}})$ and \eqref{eq:conditionsIntersecLowerBoundary2}, which concludes the proof of the inductive step.

\subsubsection{Proof of Lemma \ref{lemma:step3}}

Recall that $\setFixedSpeed{1}$ denotes the set of parameters $(\underline a , \underline p)$ in $P^{2\nbParam}$ such that $z_1(\underline a , \underline p) = 1$. The map 
\begin{alignat*}{3}
h \colon &\RR_{>0} \times \setFixedSpeed{1}&\rightarrow {}& P^{2 \nbParam} &\\
           &(\lambda , (\underline a , \underline p))&\mapsto     {}& (\lambda \underline a , \underline p)&
\end{alignat*}
is a homeomorphism with inverse function
\begin{alignat*}{3}
h^{-1} \colon &P^{2 \nbParam}&\rightarrow {}& \RR_{>0} \times \setFixedSpeed{1} \\
           &(\underline a , \underline p)&\mapsto     {}&  \left(z_1(\underline a , \underline p), \left(\frac{\underline a}{z_1(\underline a , \underline p)},\underline p\right)\right).
\end{alignat*}

The fact that $h^{-1}$ is well-defined and is the inverse function of $h$ is a consequence of the following re-scaling property: for every $\lambda \in \RR_{>0}$, $(\underline a , \underline p) \in P^{2N}$ and $i \in \llbracket 1,\nbParam \rrbracket$, 
\[ z_i(\lambda\underline a, \underline p ) = \lambda z_i(\underline a, \underline p )\]
Moreover, $h$ and $h^{-1}$ are both continuous since $z_1$ is continuous by Proposition \ref{proposition:continuityInParameters} and positive.

    By Proposition \ref{proposition:step2and3}, $( \underline a , \underline p)$ is in $ (\partial P_{G_1} \cap \partial P_{G_2})\cap \setFixedSpeed{1}  $  if and only if \eqref{eq:=}-\eqref{eq:conditionsIntersecLowerBoundary2} are satisfied. Set $\delta:=|E(G_1) \Delta E(G_2)|$. As was the case in the end of the proof of Lemma \ref{lemma:step2}, \eqref{eq:conditionsIntersecLowerBoundary2} and \eqref{eq:=} can be replaced by $\delta+1$ equalities expressing $d_1$ and $(d_{i+1})_{i\in s(E(G_1) \Delta E(G_2))}$ in terms of the $2N- 1 - \delta$ free parameters $(d_i)_{i \in \llbracket 1 , \nbParam -1 \rrbracket \backslash s(E(G_1) \Delta E(G_2))}$ and $\underline p$. Thus
    \[ \dim  ((\partial P_{G_1} \cap \partial P_{G_2})\cap \setFixedSpeed{1}) \leq  2N - 1 - \delta. \]

    The dimension of $ (\partial P_{G_1} \cap \partial P_{G_2})\cap \setFixedSpeed{1}  $ is bounded below by the dimension of the set of parameters $ ( \underline a , \underline p) \in P^{2N} $ satisfying the following two conditions:
    \begin{itemize}
    \item $(G_1,\underline a,\underline p)$ satisfies $(C_{m(G_1)\cap E(G_2)}^{\mathsmaller{>}})$ and $(C_{M(G_1)\cap E(G_2)^c}^{\mathsmaller{<}})$
    \item $(\underline a,\underline p)$ satisfies the $\delta+1$ equalities expressing $d_1$ and $(d_{i+1})_{i\in s(E(G_1) \Delta E(G_2))}$ in terms of $(d_i)_{i \in \llbracket 1 , \nbParam -1 \rrbracket \backslash s(E(G_1) \Delta E(G_2))}$ and $\underline p$.
    \end{itemize}

    This set is non-empty by Lemma \ref{lemma:step2}.
    Moreover, by continuity of the $z_i^{(G_1)}$ and $Z_{i,j}^{(G_1)}$, the strict inequalities remain satisfied on an open neighborhood of the free parameters $(d_i)_{i \in \llbracket 1 , \nbParam -1 \rrbracket \backslash s(E(G_1) \Delta E(G_2))}$ and $\underline p$ around any point $( \underline a , \underline p)$ satisfying the above conditions. Therefore, this set contains an open subset of $ \RR^{N - 1 - \delta }$. As a consequence, the topological dimension of $(\partial P_{G_1} \cap \partial P_{G_2})\cap \setFixedSpeed{1}$ is equal to $N - 1 - \delta$.

    The conclusion follows from the fact that the homeomorphism $h$ maps $\partial P_{G_1} \cap \partial P_{G_2} $ to $\RR_{>0} \times (\partial P_{G_1} \cap \partial P_{G_2})\cap \setFixedSpeed{1}$. 

\subsection{Examples of boundary equations}

In this subsection, we derive from   Proposition \ref{proposition:caracPG} examples of formulas for boundary equations.

\begin{example}
\label{ex:completeboundary}
Continuing Example \ref{ex:completespeed}, when $G= K_{\nbParam}$, $m(K_N)$ consists of the single edge $(1,N)$ and $M(K_N)$ is the empty set. Thus the speed of the front is given by \eqref{eq:completespeed} as soon as $(\underline a,\underline p)$ satisfies $q_{\nbParam} d_1 > \sum_{j = 2}^{\nbParam}q_{j-1}d_j$.
\end{example}

\begin{example}
\label{ex:lineboundary}
Continuing Example \ref{ex:linespeed} for the line graph $L_N$, the speed of the front is given by \eqref{eq:linespeed} as soon as the following $2\nbParam - 3$ equations are satisfied: 
\begin{align*}
\forall i \in \llbracket 1, \nbParam - 1 \rrbracket, \ \frac{\sum_{j=1}^{\nbParam} d_j \left( \prod_{h=1}^j \frac{p_h}{q_h} \right)}{\sum_{j=0}^{\nbParam-1} p_{j+1} \left( \prod_{h=1}^j \frac{p_h}{q_h} \right)  }
&> 
\frac{\sum_{j=i+1}^{\nbParam} d_j \left( \prod_{h=i+1}^j \frac{p_h}{q_h} \right)}{\sum_{j=i}^{\nbParam-1} p_{j+1} \left( \prod_{h=i+1}^j \frac{p_h}{q_h} \right) }\\
\forall i \in \llbracket 1, \nbParam - 2 \rrbracket, \ \frac{\sum_{j=1}^{\nbParam} d_j \left( \prod_{h=1}^j \frac{p_h}{q_h} \right)}{\sum_{j=0}^{\nbParam-1} p_{j+1} \left( \prod_{h=1}^j \frac{p_h}{q_h} \right)  } 
&\leq 
\frac{d_{i+1} + \sum_{j=i+2}^{\nbParam} d_j  \left( \prod_{h=i+3}^j \frac{p_h}{q_h} \right)}{q_{i+2}+\sum_{j=i+2}^{\nbParam-1} p_{j+1}  \left( \prod_{h=i+3}^j \frac{p_h}{q_h} \right)}.
\end{align*}
\end{example}

\begin{example}
\label{ex:DC3}
In Figure \ref{fig:Hasse3}, we depict the Hasse diagram of the Stanley lattice, representing all the DC graphs with $3$ vertices. By Theorem \ref{theorem:boundaryAnalysis}, the boundaries of codimension $1$ between two regions correspond to the edges of this Hasse diagram. Next to each edge of the graph, we represent the numerator of the rational function $z^{(G)}_1 - Z_{i,j}^{(G)}$, where $G$ is the graph located at one end of the edge (it does not matter which one). 
By Proposition \ref{proposition:caracPG}, $P_G$ consists of all the parameters such that for every $G'$ adjacent to $G$ in the Hasse diagram, $z^{(G')}_1 - F_{(i,j)}^{(G')}$ is positive (resp. non-positive) if and only if the edge from $G$ and $G'$ is ascending (resp. descending) in the Hasse diagram. For instance, $d_1 q_3 - d_2 q_1 - d_3 q_2 = 0$ is the equation of the boundary between the two regions indexed by the complete graph $K_3$ and the line graph $L_3$. If $d_1 q_3 - d_2 q_1 - d_3 q_2 > 0$, the DC graph of the stationary solution is the complete graph $K_3$.
\end{example}

\section*{Acknowledgements}
We are grateful to Ksenia Chernysh, Bastien Mallein, Arvind Singh and Damien Thomine for many very fruitful discussions at various stages of this project. We also thank Régine Marchand and Marie Théret for comments on an early version of this manuscript. Part of this research was performed while the first author was visiting the Institute for Pure and Applied Mathematics (IPAM), which is supported by the National Science Foundation (Grant No. DMS-1925919). 

\bibliographystyle{alpha}
\bibliography{refs}

\begin{thebibliography}{GDHMY23}

\bibitem[AJVR20]{AJVR}
Arvind Ayyer, Matthieu Josuat-Verg\`es, and Sanjay Ramassamy.
\newblock Extensions of partial cyclic orders and consecutive coordinate
  polytopes.
\newblock {\em Ann. H. Lebesgue}, 3:275--297, 2020.

\bibitem[AP83]{AldousPitman}
David Aldous and Jim Pitman.
\newblock The asymptotic speed and shape of a particle system.
\newblock In {\em Probability, statistics and analysis}, volume~79 of {\em
  London Math. Soc. Lecture Note Ser.}, pages 1--23. Cambridge Univ. Press,
  Cambridge-New York, 1983.

\bibitem[BB09]{BernardiBonichon}
Olivier Bernardi and Nicolas Bonichon.
\newblock Intervals in {C}atalan lattices and realizers of triangulations.
\newblock {\em J. Combin. Theory Ser. A}, 116(1):55--75, 2009.

\bibitem[Bia25]{Biane}
Philippe Biane.
\newblock Combinatorics of the quantum symmetric simple exclusion process,
  associahedra and free cumulants.
\newblock {\em Ann. Inst. Henri Poincar\'{e} D}, 2025.
\newblock To appear.

\bibitem[BJ21]{BernardJin}
Denis Bernard and Tony Jin.
\newblock Solution to the quantum symmetric simple exclusion process: the
  continuous case.
\newblock {\em Comm. Math. Phys.}, 384(2):1141--1185, 2021.

\bibitem[CJM10]{CJM10}
Renzo Cavalieri, Paul Johnson, and Hannah Markwig.
\newblock Tropical {H}urwitz numbers.
\newblock {\em J. Algebraic Combin.}, 32(2):241--265, 2010.

\bibitem[CJM11]{CJM11}
Renzo Cavalieri, Paul Johnson, and Hannah Markwig.
\newblock Wall crossings for double {H}urwitz numbers.
\newblock {\em Adv. Math.}, 228(4):1894--1937, 2011.

\bibitem[FK03]{FossKonstantopoulos}
S.~Foss and T.~Konstantopoulos.
\newblock Extended renovation theory and limit theorems for stochastic ordered
  graphs.
\newblock {\em Markov Process. Related Fields}, 9(3):413--468, 2003.

\bibitem[FKMR24]{FKMR}
Sergey Foss, Takis Konstantopoulos, Bastien Mallein, and Sanjay Ramassamy.
\newblock Last passage percolation and limit theorems in {B}arak-{E}rdös
  directed random graphs and related models.
\newblock {\em Probab. Surv.}, 21:67--170, 2024.

\bibitem[GDHMY23]{GHMY}
Rafael~S. Gonz\'{a}lez~D'Le\'{o}n, Christopher R.~H. Hanusa, Alejandro~H.
  Morales, and Martha Yip.
\newblock Column-convex matrices, {$G$}-cyclic orders, and flow polytopes.
\newblock {\em Discrete Comput. Geom.}, 70(4):1593--1631, 2023.

\bibitem[GJV05]{GJV}
I.~P. Goulden, D.~M. Jackson, and R.~Vakil.
\newblock Towards the geometry of double {H}urwitz numbers.
\newblock {\em Adv. Math.}, 198(1):43--92, 2005.

\bibitem[HMSv19]{HMSS}
Hermen~Jan Hupkes, Leonardo Morelli, Petr Stehl\'{\i}k, and Vladim\'{\i}r
  \v{S}v\'{\i}gler.
\newblock Counting and ordering periodic stationary solutions of lattice
  {N}agumo equations.
\newblock {\em Appl. Math. Lett.}, 98:398--405, 2019.

\bibitem[Joh15]{Johnson}
Paul Johnson.
\newblock Double {H}urwitz numbers via the infinite wedge.
\newblock {\em Trans. Amer. Math. Soc.}, 367(9):6415--6440, 2015.

\bibitem[MR21]{MalleinRamassamy1}
Bastien Mallein and Sanjay Ramassamy.
\newblock Barak-{E}rdös graphs and the infinite-bin model.
\newblock {\em Ann. Inst. Henri Poincar\'{e} Probab. Stat.}, 57(4):1940--1967,
  2021.

\bibitem[PSBTW24]{PSBTW}
Matteo Parisi, Melissa Sherman-Bennett, Ran Tessler, and Lauren Williams.
\newblock The magic number conjecture for the $m=2$ amplituhedron and
  {P}arke-{T}aylor identities.
\newblock {\em arXiv preprint arXiv:2404.03026}, 2024.

\bibitem[Ram18]{Ramassamy}
Sanjay Ramassamy.
\newblock Extensions of partial cyclic orders, {E}uler numbers and
  multidimensional boustrophedons.
\newblock {\em Electron. J. Combin.}, 25(1):Paper No. 1.66, 20, 2018.

\bibitem[SSV08]{SSV}
S.~Shadrin, M.~Shapiro, and A.~Vainshtein.
\newblock Chamber behavior of double {H}urwitz numbers in genus 0.
\newblock {\em Adv. Math.}, 217(1):79--96, 2008.

\bibitem[Sta86]{Stanleyposet}
Richard~P. Stanley.
\newblock Two poset polytopes.
\newblock {\em Discrete Comput. Geom.}, 1(1):9--23, 1986.

\bibitem[Sta15]{Stanley}
Richard~P. Stanley.
\newblock {\em Catalan numbers}.
\newblock Cambridge University Press, New York, 2015.

\bibitem[Ter24]{Terlat}
Benjamin Terlat.
\newblock {\em Last passage percolation on complete directed acyclic graphs}.
\newblock Ph{D} thesis, Université Paris-Saclay, 2024.

\bibitem[Woo10]{Woodcock}
Jennifer Woodcock.
\newblock Properties of the poset of {D}yck paths ordered by inclusion.
\newblock {\em arXiv preprint arXiv:1011.5008}, 2010.

\end{thebibliography}

\end{document}